\tikzset{
  on each segment/.style={
    decorate,
    decoration={
      show path construction,
      moveto code={},
      lineto code={
        \path [#1]
        (\tikzinputsegmentfirst) -- (\tikzinputsegmentlast);
      },
      curveto code={
        \path [#1] (\tikzinputsegmentfirst)
        .. controls
        (\tikzinputsegmentsupporta) and (\tikzinputsegmentsupportb)
        ..
        (\tikzinputsegmentlast);
      },
      closepath code={
        \path [#1]
        (\tikzinputsegmentfirst) -- (\tikzinputsegmentlast);
      },
    },
  },
  mid arrow/.style={postaction={decorate,decoration={
        markings,
        mark=at position .5 with {\arrow[#1]{stealth}}
      }}},
}
\tikzset{
	on each segment/.style={
		decorate,
		decoration={
			show path construction,
			moveto code={},
			lineto code={
				\path [#1]
				(\tikzinputsegmentfirst) -- (\tikzinputsegmentlast);
			},
			curveto code={
				\path [#1] (\tikzinputsegmentfirst)
				.. controls
				(\tikzinputsegmentsupporta) and (\tikzinputsegmentsupportb)
				..
				(\tikzinputsegmentlast);
			},
			closepath code={
				\path [#1]
				(\tikzinputsegmentfirst) -- (\tikzinputsegmentlast);
			},
		},
	},
	mid arrow/.style={postaction={decorate,decoration={
				markings,
				mark=at position .7 with {\arrow[#1]{stealth}}
	}}},
	rmid arrow/.style={postaction={decorate,decoration={
				markings,
				mark=at position .3 with {\arrowreversed[#1]{stealth}}
	}}},
	%
	vmid arrow/.style 2 args={postaction={decorate,decoration={
				markings,
				mark=at position #2 with {\arrow[#1]{stealth}}
	}}},
	vrmid arrow/.style 2 args={postaction={decorate,decoration={
				markings,
				mark=at position #2 with {\arrowreversed[#1]{stealth}}
	}}},
}
\def\grd@save@target#1{%
  \def\grd@target{#1}}
\def\grd@save@start#1{%
  \def\grd@start{#1}}
\tikzset{
  grid with coordinates/.style={
    to path={%
      \pgfextra{%
        \edef\grd@@target{(\tikztotarget)}%
        \tikz@scan@one@point\grd@save@target\grd@@target\relax
        \edef\grd@@start{(\tikztostart)}%
        \tikz@scan@one@point\grd@save@start\grd@@start\relax
        \draw[minor help lines] (\tikztostart) grid (\tikztotarget);
        \draw[major help lines] (\tikztostart) grid (\tikztotarget);
        \grd@start
        \pgfmathsetmacro{\grd@xa}{\the\pgf@x/1cm}
        \pgfmathsetmacro{\grd@ya}{\the\pgf@y/1cm}
        \grd@target
        \pgfmathsetmacro{\grd@xb}{\the\pgf@x/1cm}
        \pgfmathsetmacro{\grd@yb}{\the\pgf@y/1cm}
        \pgfmathsetmacro{\grd@xc}{\grd@xa + \pgfkeysvalueof{/tikz/grid with coordinates/major step}}
        \pgfmathsetmacro{\grd@yc}{\grd@ya + \pgfkeysvalueof{/tikz/grid with coordinates/major step}}
        \foreach \x in {\grd@xa,\grd@xc,...,\grd@xb}
        \node[anchor=north] at (\x,\grd@ya) {\pgfmathprintnumber{\x}};
        \foreach \y in {\grd@ya,\grd@yc,...,\grd@yb}
        \node[anchor=east] at (\grd@xa,\y) {\pgfmathprintnumber{\y}};
      }
    }
  },
  minor help lines/.style={
    help lines,
    step=\pgfkeysvalueof{/tikz/grid with coordinates/minor step}
  },
  major help lines/.style={
    help lines,
    line width=\pgfkeysvalueof{/tikz/grid with coordinates/major line width},
    step=\pgfkeysvalueof{/tikz/grid with coordinates/major step}
  },
  grid with coordinates/.cd,
  minor step/.initial=.2,
  major step/.initial=1,
  major line width/.initial=2pt,
}
\def\l@subsection{\@tocline{2}{0pt}{2.5pc}{5pc}{}}
\DeclareMathOperator{\ai}{Ai}
\DeclareMathOperator{\re}{Re}
\DeclareMathOperator{\im}{Im}
\DeclareMathOperator{\ee}{\rm e}
\DeclareMathOperator{\supp}{supp}
\newcommand{\C}{\mathbb{C}}
\newcommand{\R}{\mathbb{R}}
\newcommand{\Z}{\mathbb{Z}}
\newcommand{\E}{\mathbb{E}}
\newcommand{\boh}{\mathit{o}}
\newcommand{\Boh}{\mathcal{O}}
\newcommand{\ii}{\mathrm{i}}
\newcommand{\dd}{\mathrm{d}}
\newcommand*{\deff}{\mathrel{\vcenter{\baselineskip0.5ex \lineskiplimit0pt
                     \hbox{\scriptsize.}\hbox{\scriptsize.}}}%
                     =}
\DeclareMathOperator{\Li}{Li}
\newcommand\mb[1]{\mathbb{#1}}
\renewcommand{\bm}{\mathbf}
\newcommand{\mcal}{\mathcal}
\newcommand{\msf}{\mathsf}
\newcommand{\wh}{\widehat}
\newcommand{\wt}{\widetilde}
\newcommand{\tp}{\mathrm T}
\newcommand{\ds}{\displaystyle}
\renewcommand{\sp}{\boldsymbol \sigma}
\newcommand\restr[1]{\raisebox{-.5ex}{$\big|$}_{#1}}
\newcommand{\hkpz}{\msf h^{\mathrm{\scriptscriptstyle (KPZ)}}}
\newcommand{\Phikpz}{{\bm \Phi}^{\mathrm{\scriptscriptstyle (KPZ)}}}
\newcommand{\Dkpz}{{\bm\Delta}^{\mathrm{\scriptscriptstyle (KPZ)}}}
\newcommand{\Tkpz}{T}
\newcommand{\skpz}{s}
\newcommand{\msfga}{\upgamma}
\newcommand{\fgue}{F_{\mathrm{GUE}}}
\newcommand{\hcc}{\msf h_0}
\newcommand{\Phicc}{{\bm \Phi}_0}
\newcommand{\Phiccp}{{\bm \Phi}_{0,+}}
\newcommand{\Phiccm}{{\bm \Phi}_{0,-}}
\newcommand{\Phiccpm}{{\bm \Phi}_{0,\pm}}
\newcommand{\lcc}{\lambda_0}
\newcommand{\Dcc}{{\bm\Delta}_0}
\newcommand{\Dccp}{{\bm\Delta}_{0,+}}
\newcommand{\Qcc}{\msf L^{\kern -0.15em \ai}}
\newcommand{\Psicc}{{\bm \Psi}_0}
\newcommand{\Psiccp}{{\bm \Psi}_{0,+}}
\newcommand{\Psiccm}{{\bm \Psi}_{0,-}}
\newcommand{\Psiccpm}{{\bm \Psi}_{0,\pm}}
\newcommand{\Phiai}{{\bm \Phi}_{\rm Ai}}
\newcommand{\cSig}{\msf \Sigma}
\newcommand{\tauad}{\tau}
\newcommand{\Sad}{\msf S}
\newcommand{\Tad}{\msf T}
\newcommand{\tad}{\msf t}
\newcommand{\sad}{\msf s}
\newcommand{\phig}{\upphi}
\newcommand{\phiad}{\Phi}
\newtheorem{theorem}{Theorem}[section]
\newtheorem{prop}[theorem]{Proposition}
\newtheorem{lemma}[theorem]{Lemma}
\newtheorem{corollary}[theorem]{Corollary}
\theoremstyle{definition}
\newtheorem{definition}[theorem]{Definition} 
\newtheorem{assumption}[theorem]{Assumptions} 
\theoremstyle{remark}
\newtheorem{remark}[theorem]{Remark} 
\numberwithin{equation}{section}
\begin{document}

\title[Universality of Hermitian Random matrices and the Int-Diff PII]{Universality for multiplicative statistics of Hermitian random matrices and the integro-differential Painlevé II equation}

\author[P.~Ghosal]{Promit Ghosal}
\address[PG]{Department of Mathematics, Massachusetts Institute of Technology, USA.}
\email{promit@mit.edu}

\author[G.~Silva]{Guilherme L.~F.~Silva}
\address[GS]{Instituto de Ciências Matemáticas e de Computação, Universidade de S\~ao Paulo (ICMC - USP), São Carlos, São Paulo, Brazil.}
\email{silvag@usp.br}

\date{}


\begin{abstract}
We study multiplicative statistics for the eigenvalues of unitarily-invariant Hermitian random matrix models. We consider one-cut regular polynomial potentials and a large class of multiplicative statistics. We show that in the large matrix limit several associated quantities converge to limits which are universal in both the potential and the family of multiplicative statistics considered. In turn, such universal limits are described by the integro-differential Painlevé II equation, and in particular they connect the random matrix models considered with the narrow wedge solution to the KPZ equation at any finite time.
\end{abstract}


\vspace*{-1.6cm}

\maketitle

\setcounter{tocdepth}{1}
\tableofcontents

\section{Introduction}

Random matrix theory has proven over time to be a powerful modern tool in mathematics and physics. With widespread applications in different areas such as engineering, statistical mechanics, probability, number theory, to mention only a few, its theory is rich and has been under intense development in the past thirty or so years. In a sense, much of the success of random matrix theory has been due to its exact solvability, or integrability, turning them into touchstones for predicting and confirming complex phenomenon in nature.

One of the most celebrated results in random matrix theory is the convergence of fluctuations of the largest eigenvalue towards the Tracy-Widom law $\fgue$. This result was first obtained by Tracy and Widom \cite{TW94} for matrices from the Gaussian Unitary Ensemble (GUE), who also showed that $\fgue$ is expressible in terms of a particular solution to the Painlevé II equation (shortly PII). Their findings sparked numerous advances in mathematics and physics, which began from the extension to several other matrix models but shortly afterwards widespread beyond the realm of random matrices.

 Starting with the celebrated Baik-Deift-Johansson Theorem \cite{baik_deift_johansson}, the distribution $\fgue$ has been identified as the limiting one-point distribution for the fluctuations of a wide range of different probabilistic models. One of the most ubiquitous of such models is the KPZ equation, introduced in the 1980s by Kardar, Parisi and Zhang. Despite numerous developments surrounding it, exactly solving it remained an outstanding open problem until the early 2010s, where four different groups of researchers \cite{AmirCorwinQuastel2011, SasamotoSpohn2010, Dotsenko2010, CalabreseLeDoussalRosso2010} independently found an exact solution for its so-called narrow wedge solution. Amongst these works, Amir, Corwin and Quastel \cite{AmirCorwinQuastel2011} found the one-point distribution for the height function of the KPZ solution, showing that it relates to a distribution found a little earlier by Johansson in a grand canonical Gaussian-type matrix model \cite{Johansson2007}, and further characterizing it in terms of the integro-differential Painlevé II equation. The latter is an extension of the PII differential equation, and almost as an immediate consequence the authors of \cite{AmirCorwinQuastel2011} also obtained that this one-point distribution, in the large time limit, converges to $\fgue$ itself.

In much inspired by \cite{baik_deift_johansson,johansson_2000} and later by \cite{AmirCorwinQuastel2011, SasamotoSpohn2010, Dotsenko2010, CalabreseLeDoussalRosso2010}, it has been realized that several stochastic growth models share an inherent connection with statistics of integrable point processes, in what is formally established as an identity between a transformation of the growth model and statistics for the point process. To our knowledge, the very first instance of such relation appears in the work of Borodin \cite{Borodin_2018} which connects higher spin vertex model with Macdonald measures. By taking appropriate limit of such connection, it was later found that the KPZ equation is connected to the Airy$_2$ point process \cite{BorodinGorin2016}, ASEP is related to the discrete Laguerre Ensemble, stochastic six vertex model is connected in the same way to the Meixner ensemble, or yet the Krawtchouk ensemble \cite{BorodinOlshanski2017}.

As a common feature to these connections, the underlying correspondences establish that the so-called $q$-Laplace transform of the associated height function coincides with some multiplicative statistics of the point process. The latter, in turn, admits exact solvability, and it is widely believed that a long list of new insights on the growth models can be obtained by studying the corresponding multiplicative statistics of the point processes. 

This program has already been taken to a great start for the KPZ equation, and exploring its connection with the Airy$_2$ point process Corwin and Ghosal \cite{CorwinGhosal2020} were able to obtain bounds for the lower tail of the KPZ equation. Shortly afterwards, such bounds were improved by Cafasso and Claeys \cite{CafassoClaeys2021} with Riemann-Hilbert methods common in random matrix theory.

Our major goal is to take on the program of understanding multiplicative statistics for random particle systems, and carry out its detailed asymptotic analysis for one of the most inspiring models, namely eigenvalues of random matrices.
 While fluctuations of linear statistics of the eigenvalues of random matrices has been extensively studied in the past, the study of multiplicative statistics has only been carried over in particular instances, remarkably when dealing with products and quotients of characteristic polynomials, see \cite{DesrosiersLiu2014, BaikDeiftStrahov2003, AkemannStrahovWurfel2020} and the references therein for a complete account. 

In this work, we consider the Hermitian matrix model with an arbitrary one-cut regular potential $V$, and associate to it a general family of multiplicative statistics on its eigenvalues, indexed by a function $Q$ satisfying certain regularity conditions. Our findings show that when the number of eigenvalues is large such multiplicative statistics become universal: in the large matrix limit they converge to a multiplicative statistics of the Airy$_2$ point process which is independent of $V$ and $Q$. This limiting statistics admits a characterization in terms of a particular solution to the integro-differential Painlevé II equation, and it  is the same quantity that connects the KPZ equation and the Airy$_2$ point process. So, in turn, we find that random matrix theory can recast the narrow wedge solution to KPZ equation for finite time in an universal way.

The random matrix statistics that we study are associated to a deformed orthogonal polynomial ensemble, also indexed by $Q$, which we analyze. As we learn from earlier work of Borodin and Rains \cite{Borodin_Rains_2005} which was recently rediscovered and greatly extended by Claeys and Glesner \cite{ClaeysGlesner2021} (and which we also briefly explain later on), this deformed ensemble is a conditional ensemble of a marked process associated to the original random matrix model. We show that the correlation kernel for this point process converges to a kernel constructed out of the same solution to the integro-differential PII equation that appeared in \cite{AmirCorwinQuastel2011}. This kernel is again universal in both $V$ and $Q$, and turns out to be the kernel of the induced conditional process on the marked Airy$_2$ point process. Naturally, there are orthogonal polynomials and their norming constants and recurrence coefficients associated to this deformed ensemble. With our approach we also obtain similar universality results for such quantities, showing that they are indeed universal in $V$ and $Q$ and also connect to the integro-differential PII in a neat way.

Beyond the concrete results, with this work we also hope to shed light into the rich structure underlying multiplicative statistics for eigenvalues of random matrices. Much of the recent relevance of Painlevé equations is due to its appearance in random matrix theory, see \cite{duits_painleve_kernels} for an overview of several of these connections. There has been a growing recent interest in integro-differential Painlevé-type equations \cite{BothnerCafassoTarricone2021,CafassoClaeysRuzza2021, Bothner2021, Krajenbrink2020, CharlierClaeysRuzza2021,LiechtyWang2020}, and our results place the integro-differential PII as a central universal object in random matrix theory as well.

We scale the multiplicative statistics to produce a critical behavior at a soft edge of the matrix model, and consequently the core of our asymptotic analysis lies within the construction of a local approximation to all the quantities near this critical point. Our main technical tool is the application of the Deift-Zhou nonlinear steepest descent method to the associated Riemann-Hilbert problem (shortly RHP), and the mentioned local approximation is the so-called construction of a local parametrix. In our case, a novel feat is that this local parametrix construction is performed in a two-step way, first with the construction of a model problem varying with large parameter, and second with the asymptotic analysis of this model problem. In the latter, a RHP recently studied by Cafasso and Claeys \cite{CafassoClaeys2021} (see also the subsequent works \cite{CafassoClaeysRuzza2021, CharlierClaeysRuzza2021}) which is related to the lower tail of the KPZ equation shows up, and it is this RHP that ultimately connects all of our considered quantities to the integro-differential PII.

The choice of scaling of our multiplicative statistics is natural, illustrative but not exhaustive. As we point out later, with our approach it becomes clear that other scalings could also be analyzed, say for instance scaling around a bulk point, or yet soft/hard edge points with critical potentials, and indicate that other integrable systems extending the integro-differential PII may emerge.

%
%

\section{Statement of main results}\label{sec:StatementResults}

Let $\Lambda^{(n)} \deff(\lambda_1<\ldots <\lambda_n)$ be a $n$-particle system with 
distribution
\begin{equation}\label{deff:classicalUE}
\frac{1}{Z_n}\prod_{1\leq j<k\leq n}(\lambda_k-\lambda_j)^2\prod_{j=1}^n \ee^{-nV(x)}\dd x,
\end{equation}
where $Z_n$ is the partition function
\begin{equation}\label{deff:nondeformedpartfunction}
Z_n\deff\int_{\R^n} \prod_{1\leq j<k\leq n}(\lambda_k-\lambda_j)^2\prod_{j=1}^n  \ee^{-n V(\lambda_j)}\dd \lambda_1\cdots \dd\lambda_n.
\end{equation}
%
The distribution \eqref{deff:classicalUE} is the eigenvalue distribution of the unitarily-invariant random matrix model with potential function $V$ \cite{deift_book,mehta_book}.

We associate to $\Lambda^{(n)}$ the multiplicative statistics
\begin{align}
\msf L_n^Q(\sad) \deff& \; \E\left(\prod_{j=1}^n \frac{1}{1+ \ee^{-\sad -n^{2/3} Q(\lambda_j)}} \right) \nonumber \\
 =& \; \frac{1}{Z_n}\int_{\R^n}  \prod_{1\leq j<k\leq n}(\lambda_k-\lambda_j)^2\prod_{j=1}^n  \sigma_{n}(\lambda_j)\ee^{-n V(\lambda_j)}\dd \lambda_1\cdots \dd\lambda_n = \frac{\msf Z^Q_n(\sad)}{Z_n},\quad \sad>0, \label{deff:Ln}
\end{align}
where
$\msf Z_n^Q(\sad)$ is the partition function for the deformed model
\begin{equation}\label{deff:DeformedPartFunction}
\msf Z_n^Q(\sad)\deff \int_{\R^n}  \prod_{1\leq j<k\leq n}(\lambda_k-\lambda_j)^2\prod_{j=1}^n  \sigma_{n}(\lambda_j)\ee^{-n V(\lambda_j)}\dd \lambda_1\cdots \dd\lambda_n
\end{equation}
and we denoted
\begin{equation*}
\sigma_{n}(z)=\sigma_{n}(z\mid \sad)\deff \left(1+ \ee^{-\sad-n^{2/3} Q(z) } \right)^{-1}.
\end{equation*}
When $Q$ is linear, with a straightforward change of parameters $\msf L^Q_n$ reduces to
\begin{equation}\label{eq:qlaplacegeneral}
\E\left(\prod_{x\in \mcal X}\frac{1}{1+\zeta q^{x}}\right),
\end{equation}
 where the expectation is over the set $\mcal X$ of configurations of points (that is, for us $\mcal X=\Lambda^{(n)}$) and $q\in (0,1)$ should be viewed as a parameter of the model and, in general, $\zeta\in \C$ is a free parameter. The expression \eqref{eq:qlaplacegeneral} may be viewed as a transformation of the point process, where $\zeta\in \C$ becomes the spectral variable of this transformation, and the matching $\zeta=\ee^{-\sad}$ motivates the distinguished role of $\sad$ in \eqref{deff:Ln}. In the context of random particle systems, this particular multiplicative statistics is associated to the notion of a $q$-Laplace transform \cite{BorodinCorwinMacProc,BorodinGorin2016,BorodinOlshanski2017,BorodinCorwinSasamoto2014} that we already mentioned in the Introduction, and it has been one of the key quantities in several outstanding recent progresses in asymptotics for random particle systems \cite{ImamuraSasamoto2019,CorwinGhosal2020,CafassoClaeys2021}.

 We work under the following assumptions.

\begin{assumption}\label{MainAssumption}\hfill 
\begin{enumerate}[(i)]
\item The potential $V$ is a nonconstant real polynomial of even degree and positive leading coefficient, and its equilibrium measure $\mu_V$ is one-cut regular, we refer to Section~\ref{sec:equilibriummeasure} below for the precise definitions. Performing a shift on the variable, without loss of generality we assume that the right-most endpoint of $\supp\mu_V$ is at the origin, so that
$$
\supp\mu_V=[-a,0],
$$
for some $a>0$.

\item The function $Q$ is real-valued over the real line, and analytic on a neighborhood of the real axis. We also assume that it changes sign at the right-most endpoint of $\supp\mu_V$, with 
\begin{equation}\label{eq:signchangeQ}
Q(x)>0 \text{ on }(-\infty,0), \quad Q(x)<0 \text{ on } (0,\infty),
\end{equation}
with $x=0$ being a simple zero of $Q$. A particular role is played by the negative value $Q'(0)$, so we set
\begin{equation}\label{deff:tQprime}
\msf t\deff -Q'(0)>0.
\end{equation}
%
%
\end{enumerate}
\end{assumption}

Although $\tad$ in Assumption~\ref{MainAssumption}--(ii) will have the interpretation of time, we stress that in this paper it will be kept fixed within a compact of $(0,+\infty)$ rather than being made large or small. 

For our results and throughout the whole work, we also talk about uniformity of several error terms with respect to $\tad$ in the sense that we now explain. Because $Q$ is analytic on a neighborhood of the real axis, analytic continuation shows that it is completely determined by its derivatives $Q^{(k)}(0)$, $k\geq 0$. When we say that some error is uniform in $\tad$ within a certain range, we mean uniform when we vary $Q$ as a function of $-Q'(0)=\tad$ while keeping all other derivatives $Q^{(k)}(0)$, $k\geq 2$, fixed.

The condition in Assumption~\ref{MainAssumption}--(i) is standard in random matrix theory and they are known to hold when, say, $V$ is a convex function \cite{Saff_book}. The one-cut assumption is made just for ease of presentation, as it simplifies the Riemann-Hilbert analysis at the technical level considerably. On the other hand, the regularity condition is used substantially in our arguments, but is standard in Random Matrix Theory literature and holds true generically \cite{KuijlaarsMcLaughlin2000}. Most of our results are of local nature near the right-most endpoint of $\supp\mu_V$ and could be shown to hold true for multi-cut potentials near regular endpoints as well, with appropriate but non-essential modifications.

Assumption~\ref{MainAssumption}--(ii) should be seen as specifying enough regularity on the multiplicative statistics, here indexed by this factor $Q$.  Because of condition (ii), we have the pointwise convergence
\begin{equation}\label{eq:limitingpert}
\sigma_n(x)\stackrel{n\to\infty}{\to}
\begin{cases}
0, & x>0, \\
1, & x<0,
\end{cases}
\end{equation}
which means that the introduction of the factor $\sigma_n$ in the original weight $\ee^{-nV}$ has the effect of producing an interpolation between this original weight and its cut-off version $\chi_{(-\infty,0)}\ee^{-nV}$, where from here onward $\chi_J$ is the characteristic version of a set $J$. Comparing the Euler-Lagrange conditions on the equilibrium problem induced by the weights $\ee^{-nV}$ and $\chi_{(-\infty,0)}\ee^{-nV}$, the observation we just made heuristically indicates that the factor $\sigma_n$ does not change the global behavior of eigenvalues. This may also be rigorously confirmed as an immediate consequence of our analysis, but we do not elaborate on this end.

On the other hand, introducing a local coordinate $u=-z/n^{2/3}$ near the origin, the approximation
$$
\ee^{-\sad-n^{2/3}Q(z)}\approx\ee^{-( \tad u+\sad)}
$$
goes through, and we see that there is a competition between the term $\sad$ and $Q(z)$ that affects the local behavior of the weight at the scale $\Boh(n^{-2/3})$ near the origin, which is the same scale for nontrivial fluctuations of eigenvalues around the same point. The main results that we are about to state concern obtaining the asymptotic behavior as $n\to \infty$ of several quantities of the model, and in particular they showcase how this term $Q$ affects the local scaling regime of the eigenvalues near the origin and leads to connections with the integro-differential Painlevé II equation as already mentioned.

A central object in this paper is the multiplicative statistics
\begin{equation}\label{deff:MultStatsAiry}
\Qcc(s,T)\deff\E \left(\prod_{j=1}^\infty \frac{1}{1+\ee^{T^{1/3}(s+\mathfrak{a}_j)}}\right),
\end{equation}
where the expectation is over the Airy$_2$ point process with random configuration of points $\{\mathfrak{a}_j\}$\cite{SpohnPrahofer2002}. The quantity $\Qcc$ admits two remarkable characterizations, which are also of particular interest to us. The first is the formulation via a Fredholm determinant, namely 
$$
\Qcc(s,T)=\det\left( \mb I-\mb K^{{\rm Ai}}_T \right)\restr{L^2(-s,\infty)},
$$
where $\mb K^{{\rm Ai}}_T$ is the integral operator on $L^2(-s,\infty)$ acting with the finite temperature (or fermi-type) deformation of the Airy kernel $\msf K_T^{{\rm Ai}}$, defined by
$$
\msf K^{\rm{Ai}}_T(u,v)\deff \int_{-\infty}^\infty \frac{\ee^{T^{1/3}\zeta}}{1+\ee^{T^{1/3}\zeta}}\ai(u+\zeta)\ai(\zeta+v)\dd \zeta, \quad u,v\in \R.
$$
The term `temperature' stems from the connection between the KPZ equation and the random polymer models. Despite the name finite temperature, the parameter $T$ here corresponds to the time in the KPZ equation, see \eqref{eq:KPZAiry} below. The Fredholm determinant $\det\left( \mb I-\mb K^{\rm{Ai}}_T \right)$ appeared for the first time in the work of Johansson \cite{Johansson2007} as the limiting process of a grand canonical (that is, when the number of particles/size of matrix is also random) version of a Gaussian random matrix model, and interpolates between the classical Airy kernel when $T\to +\infty$ and the Gumbel distribution when $T\to 0^+$ with $s$ scaled appropriately. In \cite[Remark~1.13]{Johansson2007} Johansson already raises the question on whether a related classical (that is, not grand canonical) matrix model has limiting local statistics that interpolate between Gumbel and Tracy-Widom, as a similar feature to $\det\left( \mb I-\mb K^{\rm{Ai}}_T \right)$. Since then, other works have found $\det\left( \mb I-\mb K^{\rm{Ai}}_T \right)$ to be the limiting distribution for fluctuations around the largest particle of a point process \cite{DeanLeDoussalMajumdarSchehr2015,CundenMezzadriOConnell2018,LiechtyWang2020,BeteaBouttier2019}. In common, these works consider specific models rather than obtaining $\det\left( \mb I-\mb K^{\rm{Ai}}_T \right)$ as the universal limit for a whole family of particle systems.

Another characterization of $\Qcc$ is via a Tracy-Widom type formula that relates it to the integro-differential PII. It reads
\begin{equation}\label{eq:intdiffRepLAiry}
\log \Qcc(-\Sad\Tad^{1/3},\Tad^{-2})=-\frac{1}{\Tad}\int_{\Sad}^{\infty}(v-\Sad) \left(\int_{-\infty}^\infty \phiad(r\mid v,\Tad)^2 \frac{\ee^{-r}}{(1+\ee^{-r})^2}\dd r - \frac{v}{2}\right) \dd v,
\end{equation}
where $\phiad$ solves the integro-differential Painlevé II equation
\begin{equation}\label{eq:intdiffPII}
\partial^2_\Sad \phiad(\xi\mid \Sad,\Tad)=\left(\xi+\frac{\Sad}{\Tad}+\frac{2}{\Tad}\int_{-\infty}^\infty \phiad(r\mid \Sad,\Tad)^2\frac{\ee^{-r}}{(1+\ee^{-r})^2}\dd r \right)\phiad(\xi\mid \Sad,\Tad)
\end{equation}
with boundary value
$$
\phiad(\xi\mid \Sad,\Tad)\sim \Tad^{1/6}\ai(\Tad^{2/3}\xi+\Sad\Tad^{-1/3}),\quad \text{as } \xi\to\infty \text{ with } |\arg \xi|<\pi-\delta,
$$
for any $\delta>0$. This characterization has been obtained in the already mentioned work by Amir, Corwin and Quastel \cite{AmirCorwinQuastel2011}, in connection with the narrow wedge solution to the KPZ equation, and following the work \cite{BorodinGorin2016} by Borodin and Gorin has the interpretation that we now describe. For $\mcal H(X,\Tkpz)$ being the Hopf-Cole solution to the KPZ equation with narrow wedge initial data at the space-time point $(X,T)$, introduce the rescaled random variable
$$
\Upsilon_\Tkpz=\frac{\mcal H(0,2\Tkpz)+\frac{\Tkpz}{12}}{\Tkpz^{1/3}}.
$$
Based on the previous works \cite{AmirCorwinQuastel2011,CalabreseLeDoussalRosso2010,SasamotoSpohn2010,Dotsenko2010}, in \cite{BorodinGorin2016} the identity
\begin{equation}\label{eq:KPZAiry}
\E_{\mathrm{KPZ}}\left(\ee^{-\ee^{\Tkpz^{1/3}(\Upsilon_\Tkpz+\skpz)}}\right)=\Qcc(s,T)
\end{equation}
between the height function of the KPZ equation and the multiplicative statistics $\Qcc(s,T)$ is identified. This is an instance of matching formulas relating growth processes with determinantal point processes that we already mentioned at the Introduction. One of the key aspects of this representation is that the Airy$_2$ point process is determinantal, and consequently its statistics can be studied using techniques from exactly solvable/integrable models. Indeed Equation~\ref{eq:KPZAiry} is the starting point taken by Cafasso and Claeys \cite{CafassoClaeys2021}, who then connected $\Qcc(\skpz,\Tkpz)$ to a RHP that will also play a major role for us. Recently, Cafasso, Claeys and Ruzza \cite{CafassoClaeysRuzza2021} also obtained an independent proof of the representation \eqref{eq:intdiffRepLAiry}, extending it to more general multiplicative statistics of the Airy$_2$ point process. Other proofs and extensions of this integro-differential equation have also been recently found in related contexts \cite{BothnerCafassoTarricone2021, Krajenbrink2020, Bothner2021}. Also, by exploring \eqref{eq:KPZAiry} the tail behavior of the KPZ equation has become rigorously accessible in various asymptotic regimes \cite{CorwinGhosal2020,CafassoClaeys2021,CafassoClaeysRuzza2021,CharlierClaeysRuzza2021}.

As a first result, we prove that the multiplicative statistics $\Qcc$ is the universal limit of $\msf L_n^Q(\sad)$.
\begin{theorem}\label{thm:asymptoticsqLaplacetransf}
Suppose that $V$ and $Q$ satisfy Assumptions~\ref{MainAssumption} and fix $\sad_0>0$ and $\tad_0\in (0,1)$. For a constant $\msf c_V>0$ that depends solely on $V$, and any $\nu\in (0,2/3)$, the asymptotic estimate
\begin{equation}\label{eq:AsymptFormLQLAi}
\log \msf L^Q_n\left(\sad\right)=\log\Qcc\left(-\frac{\sad\msf c_V}{\tad},\frac{\tad^3}{\msf c_V^3}\right)+\Boh(n^{-\nu}),\quad n\to\infty
\end{equation}%
holds true uniformly for $\sad\geq -\sad_0$ and $\tad_0\leq \tad \leq 1/\tad_0$.
\end{theorem}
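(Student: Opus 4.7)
The plan is to prove \eqref{eq:AsymptFormLQLAi} via a differentiation-integration scheme: rather than comparing both quantities directly, I compare their derivatives with respect to $\sad$ and then integrate back. Since $\partial_\sad \log \sigma_n(z) = 1 - \sigma_n(z)$, differentiating under the integral in \eqref{deff:DeformedPartFunction} and using that $Z_n$ is independent of $\sad$ yields the differential identity
\[
\partial_\sad \log \msf L^Q_n(\sad) = \int_\R (1-\sigma_n(\lambda))\,\msf K_n(\lambda,\lambda\mid \sad)\,\dd\lambda,
\]
where $\msf K_n(\cdot,\cdot\mid \sad)$ denotes the correlation kernel of the deformed determinantal ensemble with weight $\sigma_n\ee^{-nV}$. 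On the Airy side, the Fredholm determinant representation $\Qcc(s,T) = \det(\mb I - \mb K^{\rm Ai}_T)|_{L^2(-s,\infty)}$ gives an analogous formula for $\partial_s \log \Qcc(s,T)$ in terms of the resolvent of $\msf K^{\rm Ai}_T$, while the Amir-Corwin-Quastel representation \eqref{eq:intdiffRepLAiry} realizes this derivative in closed form through the integro-differential Painlevé II transcendent $\phiad$.

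The heart of the proof is the large-$n$ analysis of the deformed kernel $\msf K_n(\lambda,\lambda\mid\sad)$ via the Deift-Zhou nonlinear steepest descent method applied to the RHP for the orthogonal polynomials with weight $\sigma_n\ee^{-nV}$. Outside a neighborhood of the right endpoint $x=0$ of $\supp\mu_V$, the factor $\sigma_n$ is exponentially close to $1$ on $(-\infty,0)$ and to $0$ on $(0,\infty)$ by Assumption~\ref{MainAssumption}--(ii) and \eqref{eq:limitingpert}; accordingly the global parametrix and the left-edge Airy parametrix behave exactly as in the classical one-cut regular theory, and the integrand $(1-\sigma_n)\msf K_n(\cdot,\cdot\mid\sad)$ is exponentially small outside an edge window of radius $\Boh(n^{-2/3+\varepsilon})$ around the origin. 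The essential new ingredient is the local parametrix inside that window. Rescaling $\lambda = -u/(\msf c_V\, n^{2/3})$ with $\msf c_V$ the constant dictated by the density of $\mu_V$ at the endpoint, the local weight factor satisfies $\ee^{-\sad - n^{2/3}Q(\lambda)}\approx \ee^{-\sad - \tad u/\msf c_V}$, and the classical Airy parametrix must be replaced by a two-step construction: first, an $n$-dependent model RHP that absorbs the finite-temperature jump induced by $\sigma_n$ at this scale; second, the asymptotic analysis of this model problem, which is reduced to the RHP of Cafasso-Claeys \cite{CafassoClaeys2021} whose solution encodes $\phiad$ and naturally introduces the matching of parameters $(\Sad,\Tad) = (-\sad\msf c_V/\tad,\tad^3/\msf c_V^3)$.

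Combining these ingredients and performing the change of variables $\lambda = -u/(\msf c_V n^{2/3})$ in the differential identity yields, uniformly for $\sad \geq -\sad_0$ and $\tad_0 \leq \tad \leq 1/\tad_0$, the estimate
\[
\partial_\sad \log \msf L^Q_n(\sad) = \partial_\sad \log \Qcc\!\left(-\tfrac{\sad\msf c_V}{\tad},\tfrac{\tad^3}{\msf c_V^3}\right) + \Boh(n^{-\nu}).
\]
To recover \eqref{eq:AsymptFormLQLAi}, I integrate in $\sad$ from an auxiliary cutoff $\sad^\star = C\log n \to +\infty$ down to the target value. As $\sad\to +\infty$ one has $\sigma_n\to 1$ pointwise, so $\msf L^Q_n(\sad)\to 1$ and hence $\log \msf L^Q_n(\sad)\to 0$; likewise $\Qcc(-\sad\msf c_V/\tad,\tad^3/\msf c_V^3)\to 1$ because the finite-temperature Airy factor collapses to $1$ at its left tail. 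Quantitative versions of these two limits pin down the constant of integration at the required precision, and the logarithmic growth in $\sad^\star$ is absorbed at the cost of slightly shrinking $\nu$ within $(0,2/3)$.

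The main obstacle is the two-step construction and asymptotic analysis of the edge parametrix: one must produce a finite-$n$ model RHP whose jumps faithfully encode the finite-temperature factor at the edge scale, and then prove that its large-$n$ limit is governed by the Cafasso-Claeys RHP with all error estimates uniform in the parameters $(\sad,\tad)$. Secondary technicalities include the tail control of the $\sad$-integration up to the $n$-dependent cutoff and the uniform handling of the $\tad$-dependence, which by the convention recorded after Assumption~\ref{MainAssumption} amounts to varying $Q$ solely through $-Q'(0) = \tad$ while keeping higher Taylor coefficients at the origin frozen.
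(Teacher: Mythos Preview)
Your proposal is correct and follows essentially the same route as the paper: a differential identity for $\partial_\sad \log \msf L^Q_n$ (the paper records its integrated form as Proposition~\ref{prop:intreprLn}), the Deift--Zhou analysis with the new edge parametrix built from the Cafasso--Claeys RHP, and integration back in $\sad$. The only cosmetic difference is that the paper integrates all the way to $+\infty$ by tracking the $\ee^{-u}$-decay of the various error terms (Lemmas in Section~\ref{sec:technicalestimates}), whereas you cut off at $\sad^\star=C\log n$ and handle the tails of $\msf L_n^Q$ and $\Qcc$ separately; both lead to the same result after the harmless $\log n$ loss you already noted.
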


Findings on random matrix theory surrounding the Tracy-Widom distribution  have inspired an enormous development in the KPZ universality theory. For the KPZ equation one of the major developments can be phrased by saying that the fluctuations of the height function for the narrow wedge solution coincide, in the large time limit, with the $\beta=2$ Tracy-Widom law from random matrix theory. Theorem~\ref{thm:asymptoticsqLaplacetransf} is saying that the connection between random matrix theory and the KPZ equation can be recast already at any finite time, and not only for Gaussian models but also universally in $V$ and $Q$. Similar connection exists \cite{BBCW18} between the solution of the KPZ equation in half-space under the Robin boundary condition and Airy$_1$ point process which, in turn, in the large time limit relate this KPZ solution to GOE matrices.


The constant $\msf c_V$ is determined from \eqref{eq:spectralcurvehV} and \eqref{eq:derivativepsicpsi} below. It is the first derivative of a conformal map near the origin, which is constructed out of the equilibrium measure for $V$. Ultimately, we make a conformal change of variables of the form $\zeta\approx\msf c_V z/n^{2/3}$, which in turn identifies
$$
\frac{1}{1+\ee^{-\sad-n^{2/3}Q(z)}}\approx \frac{1}{1+\ee^{-\sad +\tad \zeta/\msf c_V}}.
$$
In light of \eqref{deff:MultStatsAiry}, this explains the evaluation $s=-\sad \msf c_V/\tad$ and $T=\tad^3/\msf c_V^3$ on the right-hand side of \eqref{eq:AsymptFormLQLAi}.

We emphasize that the error term in \eqref{eq:AsymptFormLQLAi} is not in sharp form. In Section~\ref{sec:issues} we explain how this term arises from our techniques. We do not have indications regarding whether the true optimal error would be $\Boh(n^{-2/3})$ (or of any polynomial order) or if it should involve, say, logarithmic corrections. 

Our next results concern limiting asymptotic formulas for the matrix model underlying $\msf L_n^Q$, starting with the partition function $\msf Z_n^Q$ from \eqref{deff:nondeformedpartfunction}. For a polynomial $V$, its deformation
\begin{equation}\label{eq:deformedGaussianpotential}
V_t(z)\deff \left(1-\frac{1}{t}\right)z^2+V(z/t),\quad 1\leq t\leq +\infty
\end{equation}
has the property that $V_1=V$ and $V_\infty=z^2$ is Gaussian. Under the assumption that $V_t$ is one-cut regular for every $t\geq 1$, Bleher and Its \cite{BleherIts2005} proved that an expansion of the form
\begin{equation}\label{eq:asymptPartBleherIts}
Z_n=Z_n^{\mathrm{GUE}}\ee^{n^2\bm e_0^V+\bm e_1^V}\left(1+\Boh(n^{-2})\right),\quad n\to\infty,
\end{equation}%
holds, where $Z_n^{\mathrm{GUE}}$ is the GUE partition function for $V_\infty=z^2$ and $\bm e_0^V$ and $\bm e_1^V$ are functions analytic on the coefficients of $V$. In fact, their result ensures a full asymptotic expansion in inverse powers of $n^2$, see also \cite{bessis_itzykson_zuber,ercolani_mclaughlin_partition_function,BleherEynard2003} for important earlier work obtaining similar results under different conditions, and also the more recent contributions \cite{bleher_deano_partition_function,GuionnetBorot2013}.

As an immediate corollary to Theorem~\ref{thm:asymptoticsqLaplacetransf} we obtain some terms in the asymptotic expansion of the deformed partition function \eqref{deff:DeformedPartFunction}.

\begin{corollary}
Suppose $V$ and $Q$ satisfy Assumptions~\ref{MainAssumption} and fix $\sad_0>0$ and $\tad_0>0$. In addition, assume that the potential $V_t$ in \eqref{eq:deformedGaussianpotential} is one-cut regular for any $t\geq 1$. For any $\nu\in (0,2/3)$, the partition function admits an expansion of the form
\begin{equation}\label{eq:DefPartFctionAsympt}
\msf Z_n^Q(\sad)=Z_n^{\mathrm{GUE}}\ee^{n^2 \bm e_1^V +\bm e_0^V}\Qcc\left(-\frac{\sad\msf c_V}{\tad},\frac{\tad^3}{\msf c_V^3}\right)\left(1+\Boh(n^{-\nu})\right),\quad n\to\infty
\end{equation}
which is valid uniformly for $\sad\geq -\sad_0$ and $\tad_0\leq \tad\leq 1/\tad_0$, and the coefficients $\bm e_0^V$ and $\bm e_1^V$ are as in \eqref{eq:asymptPartBleherIts}.
\end{corollary}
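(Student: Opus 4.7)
The plan is essentially to combine the statement of Theorem~\ref{thm:asymptoticsqLaplacetransf} with the Bleher–Its expansion \eqref{eq:asymptPartBleherIts}, using as the bridge the definition of $\msf L_n^Q$ in \eqref{deff:Ln}, which yields the trivial identity
\begin{equation*}
\msf Z_n^Q(\sad) = Z_n \cdot \msf L_n^Q(\sad).
\end{equation*}

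First, I would recast Theorem~\ref{thm:asymptoticsqLaplacetransf} in multiplicative rather than additive form. Exponentiating \eqref{eq:AsymptFormLQLAi} and using $\ee^{\Boh(n^{-\nu})}=1+\Boh(n^{-\nu})$ yields
\begin{equation*}
\msf L_n^Q(\sad) = \Qcc\left(-\frac{\sad\msf c_V}{\tad}, \frac{\tad^3}{\msf c_V^3}\right)(1+\Boh(n^{-\nu})),
\end{equation*}
uniformly over the claimed ranges of $\sad \geq -\sad_0$ and $\tad_0 \leq \tad \leq 1/\tad_0$. Here I would note that $\Qcc$ is a Fredholm determinant bounded between $0$ and $1$, so that this conversion from additive-in-the-log to multiplicative error is harmless: the output factor $\Qcc(\cdot,\cdot)$ stays bounded away from infinity, and the estimate can absorb its logarithm (however large in absolute value) into the multiplicative form without loss.

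Next, under the additional hypothesis that the deformed potential $V_t$ from \eqref{eq:deformedGaussianpotential} is one-cut regular for every $t\geq 1$, the Bleher–Its expansion \eqref{eq:asymptPartBleherIts} provides
\begin{equation*}
Z_n = Z_n^{\mathrm{GUE}}\,\ee^{n^2 \bm e_0^V + \bm e_1^V}(1+\Boh(n^{-2})).
\end{equation*}
Multiplying this with the multiplicative form of Theorem~\ref{thm:asymptoticsqLaplacetransf} and observing that $\nu < 2/3 < 2$, so that $\Boh(n^{-2})$ is absorbed into $\Boh(n^{-\nu})$, produces exactly \eqref{eq:DefPartFctionAsympt}.

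Since both ingredients are already in place — Theorem~\ref{thm:asymptoticsqLaplacetransf} as the principal result being stated, and \eqref{eq:asymptPartBleherIts} as a quoted fact from \cite{BleherIts2005} — there is no genuine obstacle here; the argument reduces to a one-line multiplication. The only technical care required is the conversion step from additive logarithmic error to multiplicative error, which needs the uniform boundedness of $\Qcc$ and of its reciprocal on compact subsets of the parameter region: the former is automatic from $\Qcc \in (0,1]$, and the latter follows from the continuity of $\Qcc$ together with the fact that on the region where $-\sad\msf c_V/\tad$ is bounded above and $\tad^3/\msf c_V^3$ stays in a compact subset of $(0,\infty)$, $\Qcc$ stays strictly positive.
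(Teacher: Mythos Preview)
Your proof is correct and follows essentially the same approach as the paper's own proof, which is a one-liner invoking \eqref{eq:asymptPartBleherIts}, Theorem~\ref{thm:asymptoticsqLaplacetransf} and \eqref{deff:Ln}. One small remark: the conversion from the additive estimate $\log \msf L_n^Q = \log \Qcc + \Boh(n^{-\nu})$ to the multiplicative form $\msf L_n^Q = \Qcc\,(1+\Boh(n^{-\nu}))$ is automatic upon exponentiation and does not require any lower bound on $\Qcc$, so the last paragraph of your argument is unnecessary (though not wrong).
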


\begin{proof}
Follows from the expansion \eqref{eq:asymptPartBleherIts}, Theorem~\ref{thm:asymptoticsqLaplacetransf} and \eqref{deff:Ln}.
\end{proof}

The order of error \eqref{eq:DefPartFctionAsympt} is not $n^{-2}$ as in \eqref{eq:asymptPartBleherIts} but weaker and not sharp. This phenomenon can be traced back to the fact that $\sigma_n$ has infinitely many poles accumulating on the real axis as $n\to \infty$, see the discussion in Section~\ref{sec:issues} below. A similar error order was obtained in \cite[Theorem~9.1 and Equation~(9.68)]{BleherIts2005}, in a transitional regime from a one-cut to two-cut potential, and where the role played here by $\Qcc$ is replaced by the GUE Tracy-Widom distribution itself.

From the general theory of unitarily invariant random matrix models, it is known that the density appearing in \eqref{deff:Ln} admits a determinantal form. Setting
\begin{equation}\label{def:perturbedweight}
\omega_{n}(z)=\omega^Q_n(z\mid \sad)\deff\sigma_{n}(z)\ee^{-n V(z)},\quad \text{where we recall}\quad \sigma_{n}(z)=\left(1+ \ee^{-\sad- n^{2/3}Q(z) } \right)^{-1},
\end{equation}
this means that the identity
$$
\frac{1}{\msf Z_n(\sad)}
\prod_{1\leq j<k\leq n}(\lambda_k-\lambda_j)^2\prod_{j=1}^n  \sigma_{n}(\lambda_j)\ee^{-n V(\lambda_j)}
=
\frac{1}{n!}\det\left(\omega_n(\lambda_j)^{1/2}\msf K^Q_n(\lambda_j,\lambda_k)\omega_n(\lambda_k)^{1/2}\right)_{j,k=1}^n
$$
holds true for a function of two variables $\msf K^Q_n(x,y)$ satisfying certain properties, known as the correlation kernel of the eigenvalue density on the left-hand side. The correlation kernel is not unique, but in the present setup it may be taken to be the Christoffel-Darboux kernel for the orthogonal polynomials for the weight $\omega_n$, as we introduce in detail in \eqref{deff:Kndefweight}, and whenever we talk about $\msf K_n^Q$ we mean this Christoffel-Darboux kernel. In particular, $\msf K^Q_n=\msf K^Q_n(\cdot\mid \sad)$ does depend on both $Q$ and $\sad$.

Our second result proves universality of the kernel $\msf K_n^Q$, showing that its limit depends solely on $\sad$ and $\tad=-Q'(0)$, but not on other aspects on $Q$, and relates to the integro-differential PII. For its statement, it is convenient to introduce the new set of variables
\begin{equation}\label{eq:corr_tad_Tad}
\Tad=\tad^{-3/2}\quad \text{and}\quad \Sad=\sad\tad^{-3/2}.
\end{equation}
With $\phiad(\xi)=\phiad(\xi\mid \Sad,\Tad)$ being the solution to the integro-differential Painlevé II equation in \eqref{eq:intdiffPII} and the variables $\sad,\tad$ and $\Sad,\Tad$ related by \eqref{eq:corr_tad_Tad},  we set
$$
\phig_1(\zeta\mid \sad,\tad)=\phiad(\xi(\zeta)\mid \Sad,\Tad),\quad \phig_2(\zeta\mid \sad,\tad)=(\partial_\Sad \phiad)(\xi(\zeta)\mid \Sad,\Tad),\quad \quad \xi(\zeta)\deff -\sad+\tad\zeta,
$$
and introduce the kernel
$$
\msf K_\infty(u,v\mid \sad,\tad)\deff  \frac{\phig_1(v\mid \sad,\tad)\phig_2(u\mid \sad,\tad)-\phig_1(u\mid \sad,\tad)\phig_2(v\mid \sad,\tad)}{u-v},\quad u,v\in \R.
$$

\begin{theorem}\label{thm:limitingkernel}
Assume that $V$ and $Q$ satisfy Assumptions~\ref{MainAssumption} and fix $\sad_0>0$ and $\tad_0\in (0,1)$. With
\begin{equation}\label{deff:scalingunvn}
u_n\deff\frac{u}{\msf c_V n^{2/3}},\quad v_n\deff\frac{v}{\msf c_V n^{2/3}},
\end{equation}
the estimate
\begin{equation}\label{eq:LimitKnQFinalForm}
\frac{\ee^{-\frac{n}{2}(V(u_n)+V(v_n))}}{\msf c_V n^{2/3}} \msf K^Q_n\left(u_n,v_n\mid \sad\right)= \msf K_\infty(u,v\mid \sad,\tad/\msf c_V)+ \Boh(n^{-1/3}),\quad n\to\infty,
\end{equation}
holds true uniformly for $u,v$ in compacts of $\R$, and uniformly for $\sad\geq -\sad_0$ and $\tad_0\leq \tad\leq 1/\tad_0$. 
\end{theorem}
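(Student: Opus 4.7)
The strategy is to apply the Deift-Zhou nonlinear steepest descent method to the Riemann-Hilbert problem (RHP) for the orthogonal polynomials associated with the deformed weight $\omega_n(z) = \sigma_n(z)\ee^{-nV(z)}$. I denote by $Y = Y_n$ the $2 \times 2$ matrix-valued solution to the standard Fokas-Its-Kitaev RHP, which is analytic in $\C \setminus \R$, has jump
\[
Y_+(x) = Y_-(x) \begin{pmatrix} 1 & \omega_n(x) \\ 0 & 1 \end{pmatrix}
\]
on $\R$, and behaves as $Y(z) \sim \mathrm{diag}(z^n, z^{-n})$ at infinity. The kernel $\msf K_n^Q(x,y)$ is expressed in terms of the first column of $Y$ through the Christoffel-Darboux identity, so uniform asymptotics for $Y$ near the right endpoint $0$ will translate directly into \eqref{eq:LimitKnQFinalForm} after the rescaling \eqref{deff:scalingunvn}.

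The first round of transformations $Y \mapsto T \mapsto S$ is classical. The $g$-function will be built from the log-potential of the equilibrium measure $\mu_V$ of the undeformed model, since $\sigma_n$ converges pointwise to $\chi_{(-\infty,0)}$ and this cutoff does not alter the Euler-Lagrange conditions. This reduces the problem to the usual unit-jump oscillatory one on $[-a, 0]$, and the lens-opening $S$-transformation proceeds as in the standard theory, using that $\sigma_n$ extends analytically to a thin strip around $\R$ because $Q$ is analytic there. The global parametrix $P^{(\infty)}$ and the local parametrix $P^{(-a)}$ at the left regular endpoint are then the standard ones, built from elementary functions and the Airy parametrix respectively.

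The heart of the argument is the construction of a local parametrix $P^{(0)}$ at the right endpoint, where $\sigma_n$ interacts with the edge scaling in a nontrivial way. Introducing the conformal variable $\zeta$ determined by the $g$-function and satisfying $\zeta = \msf c_V z + \Boh(z^2)$, the weight $\sigma_n$ becomes approximately $(1 + \ee^{-\sad + (\tad/\msf c_V)\zeta})^{-1}$ on a shrinking neighborhood of the origin, so it remains genuinely $n$-varying even after rescaling to the edge coordinate. My plan is therefore to build $P^{(0)}$ in two steps: first, I will set up a \emph{model RHP} posed on a fixed contour in the $\zeta$-plane whose jumps exactly incorporate the factor $(1 + \ee^{-\sad + (\tad/\msf c_V)\zeta})^{-1}$; second, I will identify this model problem, up to a gauge transformation, with the RHP introduced by Cafasso and Claeys in \cite{CafassoClaeys2021}, whose solution is built precisely from the functions $\phig_1, \phig_2$ associated with the integro-differential Painlevé II equation \eqref{eq:intdiffPII}. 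The matching condition with $P^{(\infty)}$ on the boundary of the disk then follows from the known large-$\zeta$ asymptotics of the Cafasso-Claeys solution.

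With these parametrices assembled, a final small-norm argument will yield the remainder $R = I + \Boh(n^{-\nu})$, and the explicit formula for $\msf K_n^Q$ in terms of $P^{(0)}$ will give \eqref{eq:LimitKnQFinalForm} by direct substitution, the identification of the limit with $\msf K_\infty$ being exactly the content of the Cafasso-Claeys representation together with the definitions of $\phig_1, \phig_2$. The hardest part will be the mismatch between $\sigma_n$ and the standard edge parametrix setup: the function $\sigma_n$ has infinitely many poles accumulating on $\R$ as $n \to \infty$, at points where $\sad + n^{2/3}Q(z) \in \ii\pi(2\Z+1)$, so one cannot open lenses of fixed size around $0$ without crossing them. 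This forces both the two-step parametrix construction and, ultimately, the suboptimal error $\Boh(n^{-1/3})$ in place of the $\Boh(n^{-2/3})$ one might naively expect; controlling it requires careful tracking of the parametrix jump norms on a disk whose radius must be allowed to shrink with $n$, together with uniform estimates on the Cafasso-Claeys RHP in the parameters $\sad, \tad$.
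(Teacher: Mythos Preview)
Your overall architecture matches the paper's: the same $g$-function, the same lens opening, Airy at $-a$, and a nonstandard local parametrix at $0$ tied to the Cafasso--Claeys RHP. Two points, however, diverge from the paper in ways that matter.

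First, your description of the local parametrix at the origin skips the essential step. You propose a model RHP whose jump involves the \emph{linearized} factor $(1+\ee^{-\sad+(\tad/\msf c_V)\zeta})^{-1}$ and then identify it with the Cafasso--Claeys problem $\Phicc$. But the jump of $\bm S$ inside $U_0$ involves the full $\sigma_n(z)=(1+\ee^{-\sad-n^{2/3}Q(z)})^{-1}$, and after $\zeta=n^{2/3}\psi(z)$ this becomes $(1+\ee^{-\msf h_n(\zeta)})^{-1}$ with $\msf h_n(\zeta)=\sad+n^{2/3}\msf H_Q(\zeta/n^{2/3})$, which agrees with $\hcc(\zeta)=\sad-(\tad/\msf c_V)\zeta$ only to leading order. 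On the image of a \emph{fixed} disk, $|\zeta|$ ranges up to order $n^{2/3}$, so the discrepancy $\msf h_n-\hcc=\Boh(\zeta^2/n^{2/3})$ is not uniformly small. The paper handles this by a genuine two-step construction: it first builds an $n$-dependent model solution $\bm\Phi_n=\bm\Phi(\cdot\mid\msf h_n)$ whose jump \emph{exactly} matches that of $\bm S$ (so the $\bm R$-RHP has no residual jump inside $U_0$), and then, as a separate small-norm problem (Theorem~\ref{thm:PhitauAsympt}), proves $\bm\Phi_n\to\Phicc$. Your ``identify up to a gauge transformation'' does not cover this; the convergence $\bm\Phi_n\to\Phicc$ requires its own RHP analysis (Sections~\ref{sec:boundsKPZRHP}--\ref{sec:asympanalymodelprobl}), exploiting pointwise decay estimates for $\Phiccp\bm E_{ij}\Phiccp^{-1}$ along $\cSig$.

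Second, your remedy of letting the disk radius shrink with $n$ is \emph{not} what the paper does and is delicate to make work: the global parametrix blows up like $z^{-1/4}$ at the origin, so on a shrinking boundary the matching error is conjugated by unbounded factors, and balancing this against the improved jump agreement requires care you have not outlined. The paper keeps $U_0$ fixed throughout. A smaller point: your global parametrix is not quite ``the standard one''---because the jump of $\bm S$ on $(-a,0)$ carries $\sigma_n$, the paper includes a Szeg\H{o}-type scalar factor $\ee^{\msf q(z)\sp_3}$ (see \eqref{eq:constructionGparam}); it is $\Boh(n^{-1/3})$ by Lemma~\ref{lem:estimate_q_q0}, so it can be absorbed into the final error for this particular theorem, but omitting it makes the matching condition $\bm P^{(0)}\bm G^{-1}\to\bm I$ fail at order $n^{-1/3}$.
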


In the recent work \cite{ClaeysGlesner2021}, Claeys and Glesner developed a general framework for certain conditional point processes, which in particular yields a probabilistic interpretation of the kernel $\msf K_n^Q$ as we now explain. For a point process $\Lambda$, we add a mark $0$ to a point $\lambda\in \Lambda$ with probability $\sigma_n(\lambda)$ and a mark $1$ with complementary probability $1-\sigma_n(\lambda)$. This induces a decomposition of the point process $\Lambda=\Lambda_0\cup \Lambda_1$, where $\Lambda_j$ is the set of eigenvalues with mark $j$. We then consider the induced point process $\wh\Lambda$ obtained from $\Lambda$ upon conditioning that $\Lambda_1=\emptyset$, that is, that all points have mark $0$. 

When applied to the eigenvalue point process $\Lambda=\Lambda^{(n)}$ induced by the distribution \eqref{deff:classicalUE}, the theory developed in \cite{ClaeysGlesner2021} shows that $\wh\Lambda^{(n)}$ is a determinantal point process with correlation kernel proportional to $\omega_n(x)^{1/2}\omega_n(y)^{1/2}\msf K_n^Q(x,y)$ which, in turn, generates the same point process as the left-hand side of \eqref{eq:LimitKnQFinalForm}, see \cite[Sections~4 and 5]{ClaeysGlesner2021}. A comparison of the RHP that characterizes the kernel $\msf K_\infty$ (see Section~\ref{sec:KPZRHPKPZ} below, in particular \eqref{eq:transfPhicckernelphicc}) with the discussion in \cite[Section~5.2]{ClaeysGlesner2021} shows that $\msf K_\infty$ is a (renormalized) correlation kernel for the marked point process $\wh{\{\mathfrak a_k\}}$ of the Airy$_2$ point process $\{\mathfrak a_k\}$ with the marking function $(1+\ee^{-\sad+\tad \lambda})^{-1}$. So Theorem~\ref{thm:limitingkernel} assures that the conditional process on the marked eigenvalues converges, at the level of rescaled correlation kernels, to the conditional process on the marked Airy$_2$ point process.

We also obtain asymptotics for the norming constant $\msfga^{(n,Q)}_{n-1}(\sad)$ for the ${(n-1)}$-th monic orthogonal polynomial for the weight $\omega_n(x\mid \sad)$ (see \eqref{def:normingdeformed} for the definition), showing that its first correction term depends again solely on $\sad,\tad$, and also relates to the integro-differential Painlevé II equation.

\begin{theorem}\label{thm:asymptoticsnormingconstant}
Suppose that $V$ and $Q$ satisfy Assumptions~\ref{MainAssumption} and fix $\sad_0>0$ and $\tad_0\in (0,1)$. The norming constant has asymptotic behavior
\begin{equation}\label{eq:asymptformNormingctt}
\msfga^{(n,Q)}_{n-1}(\sad)^2=\frac{a}{4\pi }\ee^{-2n\ell_V}\left(\frac{1}{2}-\frac{1}{n^{1/3}}\frac{\msf c_V^{1/2}}{\tad^{1/2}}
\left(\msf p(\sad,\tad/\msf c_V)-\frac{\sad^2\msf c_V^{3/2}}{4\tad^{3/2}}\right)+\Boh(n^{-2/3})
\right), \quad n\to \infty,
\end{equation}
uniformly for $\sad\geq -\sad_0$ and $\tad_0\leq \tad\leq 1/\tad_0$, where $\msf p(\sad,\tad)=\msf P(\Sad,\Tad)$, and the function $\msf P$ relates to the solution $\phiad$ from \eqref{eq:intdiffPII} via
\begin{equation}\label{eq:intdiffPIIPhip}
\partial_\Sad \msf P(\Sad,\Tad)=\frac{\Sad}{2\Tad}+\frac{1}{\Tad}\int_{-\infty}^\infty \phiad(r\mid \Sad,\Tad)^2\frac{\ee^{-r}}{(1+\ee^{-r})^2}\dd r.
\end{equation}
\end{theorem}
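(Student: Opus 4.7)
The plan is to read off $\msfga^{(n,Q)}_{n-1}(\sad)^2$ from the large$-z$ behavior of the solution $Y=Y_n$ of the Fokas--Its--Kitaev RHP for the orthogonal polynomials in the weight $\omega_n$. With the usual normalization, the expansion $Y(z)=\left(I+z^{-1}Y_1+O(z^{-2})\right)z^{n\sigma_3}$ identifies the square of the norming constant with a constant multiple of $(Y_1)_{21}$. Therefore the theorem reduces to computing the first two terms in the asymptotic expansion of that single matrix entry.

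The second step is to reuse the Deift--Zhou chain of transformations $Y\mapsto T\mapsto S\mapsto R$ that must already be in place for the proofs of Theorems \ref{thm:asymptoticsqLaplacetransf} and \ref{thm:limitingkernel}: the $g$-function normalization, the opening of lenses, the global parametrix built from the equilibrium measure $\mu_V$ and its Szeg\H{o} function, the soft-edge Airy parametrix at the left endpoint $-a$, and the two-step local parametrix near the origin built out of the Cafasso--Claeys RHP associated with integro-differential PII. Unwinding this chain, each factor contributes to $(Y_1)_{21}$: the $g$-function contributes the factor $\ee^{-2n\ell_V}$; the global parametrix produces the geometric prefactor $a/(4\pi)$ and the leading constant $1/2$ (the same constant that appears in the asymptotics of the GUE norming constants at the soft edge); the Airy parametrix at $-a$ contributes only to lower-order corrections that are absorbed into the $\Boh(n^{-2/3})$ term; and the local parametrix at the origin produces the new $n^{-1/3}$ term.

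The heart of the proof is thus the extraction of the first non-trivial coefficient in the matching between the local parametrix at the origin and the global one. Concretely, one shows that on the boundary of a small disk around $0$ the quotient of the two parametrices has an expansion of the form
$$
I+\frac{1}{n^{1/3}}\,\frac{\msf c_V^{1/2}}{\tad^{1/2}(\msf c_V z)}\,\Delta_1(\Sad,\Tad)\,+\,\Boh(n^{-2/3}),
$$
with $\Sad,\Tad$ related to $\sad,\tad/\msf c_V$ through \eqref{eq:corr_tad_Tad}. The coefficient $\Delta_1$ is read off from the sub-leading behavior at infinity of the solution to the Cafasso--Claeys model RHP, which by the results recalled around \eqref{eq:intdiffRepLAiry}--\eqref{eq:intdiffPII} (and the related computations in \cite{CafassoClaeys2021,CafassoClaeysRuzza2021,BothnerCafassoTarricone2021}) is controlled by the Hamiltonian-type quantity
$$
\frac{\Sad}{2\Tad}+\frac{1}{\Tad}\int_{-\infty}^\infty \phiad(r\mid \Sad,\Tad)^2\frac{\ee^{-r}}{(1+\ee^{-r})^2}\dd r,
$$
which is exactly $\partial_\Sad\msf P(\Sad,\Tad)$ by \eqref{eq:intdiffPIIPhip}. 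Integrating this from $\Sad$ to $+\infty$ and using the decay of $\phiad$ at $+\infty$ identifies the constant part of $\Delta_1$ with $\msf P(\Sad,\Tad)$, while the explicit summand $\Sad/(2\Tad)$ contributes the boundary correction $-\sad^2\msf c_V^{3/2}/(4\tad^{3/2})$ after switching back from $(\Sad,\Tad)$ to $(\sad,\tad/\msf c_V)$ with $\msf p(\sad,\tad/\msf c_V)=\msf P(\Sad,\Tad)$.

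The main obstacle will be precisely this last identification: matching the sub-leading coefficient of the Cafasso--Claeys RHP solution, written in terms of its own dressing and Stokes data, to the combination $\msf P(\Sad,\Tad)-\Sad^2/(4\Tad)$, and verifying that what propagates from the local parametrix into $(Y_1)_{21}$ is this exact combination and not merely one of its additive pieces. Once that identification is settled, the small-norm factor $R=I+\Boh(n^{-2/3})$ established in the steepest-descent analysis (uniformly for $\sad\geq-\sad_0$ and $\tad_0\leq\tad\leq 1/\tad_0$, by the same argument that yields Theorems \ref{thm:asymptoticsqLaplacetransf} and \ref{thm:limitingkernel}) absorbs everything else, and the expansion \eqref{eq:asymptformNormingctt} follows immediately together with the claimed uniformity.
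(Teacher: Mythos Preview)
Your outline follows the correct overall arc, but there are two concrete gaps that would prevent the argument from closing.

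First, the small-norm estimate for $\bm R$ here is $\bm R=\bm I+\Boh(n^{-1/3})$, not $\Boh(n^{-2/3})$: the matching error between $\bm P^{(0)}$ and $\bm G$ on $\partial U_0$ is only $\Boh(n^{-1/3})$, because the local parametrix is built from the $n$-dependent model problem $\bm\Phi_n$ and its matching with $\bm G$ carries an explicit $n^{-1/3}$ correction. Consequently the residue $\bm R_1$ contributes at the \emph{same} order $n^{-1/3}$ as the correction coming from $\bm G_1$, and you must compute it, not absorb it. What you are missing is a specific cancellation that the paper flags explicitly: the global parametrix carries a Szeg\H{o}-type function $\msf q(z)$ built from $\log\sigma_n$, and this produces an $n^{-1/3}$ contribution to $(\bm G_1)_{21}$ proportional to $F_{-1/2}(\sad)=\int_0^\infty v^{-1/2}\log(1+\ee^{-\sad-v})\,\dd v$. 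An equal and opposite term arises from the residue integral for $\bm R_1$ over $\partial U_0$ (coming from the $\msf q^{(1)}$ piece of $\bm J_{\bm R}-\bm I$). Only after this exact cancellation does the surviving $n^{-1/3}$ coefficient reduce to $\tfrac{a}{2\msf c_V}(\bm\Phi_n^{(1)})_{21}$. If you simply declare $\bm R=\bm I+\Boh(n^{-2/3})$ you will carry a spurious $F_{-1/2}(\sad)$ term through to the answer.

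Second, your mechanism for identifying the coefficient with $\msf p-\sad^2\msf c_V^{3/2}/(4\tad^{3/2})$ is not the right one. No integration in $\Sad$ is involved: $\msf p(\sad,\tad)$ appears \emph{directly} as (a multiple of) the $(2,1)$-entry of the residue matrix $\Phicc^{(1)}$ in the expansion $\Phicc(\zeta)=(\bm I+\zeta^{-1}\Phicc^{(1)}+\cdots)\zeta^{\sp_3/4}\bm U_0^{-1}\ee^{-\frac{2}{3}\zeta^{3/2}\sp_3}$, by the Cafasso--Claeys--Ruzza analysis of the id-PII RHP. The additive piece $-\sad^2/(4\tad^{3/2})$ is the constant $\msf c_1$ arising from the gauge transformation between $\Phicc$ and the $\Psicc$-RHP, not from integrating $\Sad/(2\Tad)$. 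The passage to the orthogonal-polynomial side is then the estimate $(\bm\Phi_n^{(1)})_{21}=(\Phicc^{(1)})_{21}+\Boh(n^{-\nu})$ coming from the asymptotic analysis $\bm\Phi_\tau\to\Phicc$. Your integration argument would be circular, since it presupposes the value of $\msf P$ at $\Sad=+\infty$, which is precisely what the RHP residue is supposed to determine.
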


Our approach also yields asymptotic formulas for the orthogonal polynomials and their recurrence coefficients, and relate them to the integro-differential Painlevé II equation as well, but for the sake of brevity we do not state them.

\section{About our approach: issues and extensions}

\subsection{Issues to be overcome}\label{sec:issues}\hfill 

Our main tool for obtaining all of our results is the Fokas-Its-Kitaev \cite{FokasItsKitaev92} Riemann-Hilbert Problem (RHP) for orthogonal polynomials (shortly OPs) that encodes the correlation kernel $\msf K_n^Q$, the norming constants $\msfga^{(n,Q)}_{n-1}(\sad)^2$ and ultimately also the multiplicative statistics $\msf L_n^Q$, and its asymptotic analysis via the Deift-Zhou nonlinear steepest descent method \cite{Deiftetal1999,DeiftZhou93mKdV}. The overall arch of this asymptotic analysis is the usual one, summarized in the diagram in Figure~\ref{fig:diagramRHP}, and we now comment on its major steps.
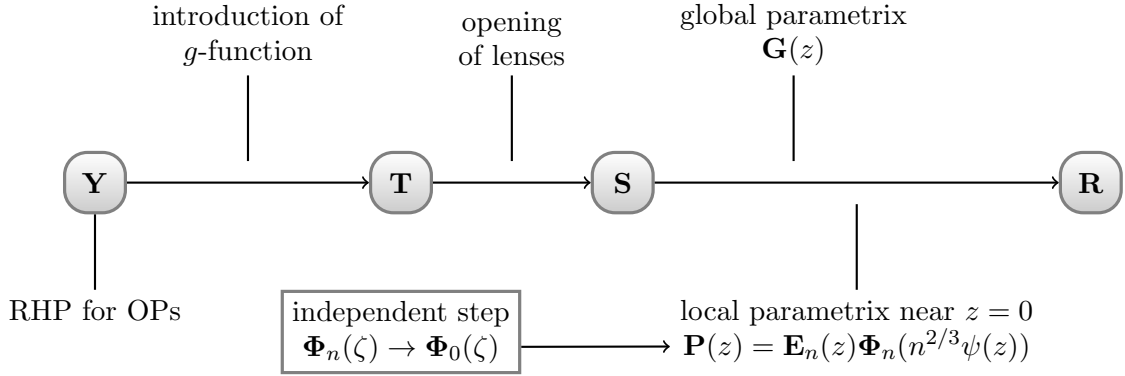
\begin{figure}[t]
\centering
		\begin{tikzpicture}[every text node part/.style={align=center}, node distance=10mm and 2mm,
                    terminal/.style={
                      rectangle,minimum size=8mm,rounded corners=3mm,
                      very thick,draw=black!50,
                      top color=white,bottom color=black!20},
                      interm/.style={
                      rectangle,minimum size=8mm,rounded corners=0mm,
                      very thick,draw=black!50,
                      top color=white,bottom color=white}
                      ]		
		
%
%
%
\node (Y) [terminal] {$\bm Y$};
\node (OP) [below=of Y] {RHP for OPs};
\node (g) [above right=of Y] {introduction of \\ $g$-function};
\node (YT) [below=of g]{};
\node (T) [terminal, below right=of g] {$\bm T$};
\node (l) [above right=of T] {opening \\ of lenses};
\node (TS) [below=of l] {};
\node (S) [terminal, below right=of l] {$\bm S$};
\node (lp) [below right=of S] {local parametrix near $z=0$\\ $\bm P(z)=\bm E_n(z)\bm \Phi_n(n^{2/3}\psi(z))$};
\node (SR) [above=of lp]{};
\node (gp) [above right=of S] {global parametrix\\ $\bm G(z)$};
\node (SRg) [below=of gp]{};
\node (SS) [interm, below=of T]{independent step\\ $\bm\Phi_n(\zeta)\to \Phicc(\zeta)$};
\node (R) [terminal, above right=of lp] {$\bm R$};
\path [thick] (Y)  edge[->] (T);
\path [thick] (g) edge[-] (YT.center);
\path [thick] (T)  edge[->] (S);
\path [thick] (l) edge[-] (TS.center);
\path [thick] (S)  edge[->] (R);
\path [thick] (lp) edge[-] (SR.center);
\path [thick] (gp) edge[-] (SRg.center);
\path [thick] (OP) edge[-] (Y);
\path [thick] (SS) edge[->,transform canvas={yshift=-2mm}] (lp);
%
\end{tikzpicture}
\caption{Schematic diagram for the steps in the asymptotic analysis of the RHP for orthogonal polynomials. There is also an Airy local parametrix used near $z=-a$ which we omit in this diagram.}\label{fig:diagramRHP}
\end{figure}

Starting with the RHP for OPs that we name $\bm Y$, in the first step we transform $\bm Y\mapsto \bm T$ with the introduction of the $g$-function (or, equivalently, the $\phi$-function), and this is done so with the help of the equilibrium measure for $V$ that accounts only for the part $\ee^{-nV}$ of the weight $\omega_n$. In the second step, we open lenses with a transformation $\bm T\mapsto \bm S$ as usual. 

The third step is the construction of the global parametrix $\bm G$. In our case, in this construction we also have to account for the perturbation $\sigma_n$ of the weight $\ee^{-nV}$, so a Szeg\"o function-type construction is used. 

The fourth step is the construction of local parametrices at the endpoints $z=-a,0$ of $\supp\mu_V$, with the goal of approximating all the quantities locally near these endpoints. This is accomplished by, first, considering a change of variables $z\mapsto \zeta=n^{2/3}\psi(z)$ after the conformal map $\psi$ chosen appropriately for each endpoint and, then, constructing the solution to a model RHP $\bm\Phi(\zeta)$ in the $\zeta$-plane. Following these steps, the local parametrix at the left edge $z=-a$ of $\supp\mu_V$ is standard and utilizes Airy functions.

The construction of the local parametrix at the right edge $z=0$ is, however, a lot more involved. As we mentioned earlier, the factor $\sigma_n$ affects asymptotics of local statistics near the origin. In fact, the weight $\sigma_n$ has singularities precisely at the points of the form
$$
-\frac{\sad}{n^{2/3}}+\frac{\pi \ii(2k+1)}{n^{2/3}},\quad k\in \Z.
$$
This means that for $\sad\ll n^{2/3}$ there are infinitely many poles of $\sigma_n$ accumulating near the real axis. As such, in this case for large $n$ the perturbed weight $\omega_n$ fails to be analytic in any fixed neighborhood of the origin. If we were to consider only $\sad\to +\infty$ fast enough, one could still push the standard RHP analysis further with the aid of Airy local parametrices, at the cost of a worse error estimate. However, when $\sad=\Boh(1)$ we have poles accumulating too fast to the real axis, and a different asymptotic analysis has to be accomplished, in particular a new local parametrix is needed.

When changing coordinates $z\mapsto \zeta$ near $z=0$, the model problem $\bm\Phi=\bm\Phi_n$ obtained is then $n$-dependent. This is so because the jump of the local parametrix involves $\sigma_n$, and consequently in the process of changing variables the resulting model problem has a jump that involves a transformation of $\sigma_n$ itself. This is in contrast with usual constructions with, say, Airy, Bessel or Painlevé-type parametrices, where the jumps can be turned into piecewise constant in the $z$-plane and, hence, also remain piecewise constant in the $\zeta$-plane. Another feature of the RHP for the model problem $\bm\Phi_n$ is that its jump is not analytic on the whole plane, and instead it is analytic only in a growing (with $n$) disk, and for a fixed $n$ we can only ensure that its jump matrix is $C^\infty$ in a neighborhood of the jump contour.

All in all, this means that carrying out the asymptotic analysis of $\bm \Phi_n(\zeta)$ as $n\to \infty$ is also needed. As we said, the jump for $\bm\Phi_n$ involves a transformation of $\sigma_n$, so ultimately also depends on the function $Q$ from \eqref{def:perturbedweight}. But it turns out that as $n\to\infty$, we have the convergence $\bm \Phi_n\to \bm\Phi_0$ in an appropriate sense, where $\bm \Phi_0$ is independent of $Q$. This limiting $\bm \Phi_0$ is the solution to a RHP that appeared recently in connection with the KPZ equation \cite{CafassoClaeys2021} and which was later shown to connect with the integro-differential PII in the recent work of Claeys, Cafasso and Ruzza \cite{CafassoClaeysRuzza2021,CharlierClaeysRuzza2021}. For this reason we term it the id-PII RHP.

With the construction of the global and local parametrices, the asymptotic analysis is concluded in the usual way, by patching them together and obtaining a new RHP for a matrix function $\bm R$. This matrix $\bm R$, in turn, solves a RHP whose jump is asymptotically close to the identity, and consequently $\bm R$ can be found perturbatively.

After concluding this asymptotic analysis, we undress the transformations $\bm R \mapsto \cdots \mapsto \bm Y$ and obtain asymptotic expressions for the wanted quantities. For the kernel $\msf K_n^Q$ and the norming constant $\msfga^{(n,Q)}_{n-1}(\sad)^2$, after this undressing Theorems~\ref{thm:limitingkernel} and \ref{thm:asymptoticsnormingconstant} follow in a standard manner.

However, to obtain \eqref{eq:AsymptFormLQLAi} quite some extra work is needed. When dealing with statistics of matrix models via OPs, one of the usual approaches is to extract the needed information via the partition function and its relation with the norming constants via a product formula, see for instance \eqref{eq:PartitionFctionNormingctt} below. Usually this is accomplished via some differential identity or with careful estimate of each term in the product formula, see for instance the works \cite{bleher_deano_partition_function, BleherIts2005, BaikBuckinghamDiFrancoIts2009,Krasovsky07} and their references for explorations along these lines. In virtue of the relation \eqref{deff:Ln} this was in fact our original attempt, but several technical issues arise. Instead, at the end we express $\msf L_n^Q$ directly as a weighted double integral of $\msf K_n^Q(x,x\mid \sad)$ in the variables in $x$ and $\sad$, this is done in Proposition~\ref{prop:intreprLn} below. The $x$-integral takes place over the whole real line, which means that when we undress $\bm R\mapsto \bm Y$ we obtain a formula for $\msf L_n^Q$ involving global and all local parametrices. The integral in $\sad$ extends to $+\infty$, which is one of the main reasons why in our main statements we also keep track of uniformity of errors when $\sad\to +\infty$. We then have to estimate the double integral, accounting for exponential decays of most of the terms but also exact cancellations of some other terms. Ultimately, the whole analysis leads to a leading contribution coming solely from a portion of the integral that arises from the model problem $\bm \Phi_n$. With a further asymptotic analysis of the later integral we obtain an integral solely of $\Phicc$ which then yields Theorem~\ref{thm:asymptoticsqLaplacetransf}.

The convergence $\bm \Phi_n\to \Phicc$ is treated as a separate issue, and to achieve it we need several information about this id-PII parametrix $\Phicc$. As a final outcome, we obtain that $\bm \Phi_n$ is close to $\Phicc$ with an error term of the form $\Boh(n^{-\nu})$, for any $\nu\in (0,2/3)$. But, in much due to the non-analyticity of the jump matrix for $\bm \Phi_n$, we are not able to achieve a sharp order $\Boh(n^{-2/3})$ unless further conditions were placed on $Q$. This non-optimal error explains the appearance of the same error order in \eqref{eq:AsymptFormLQLAi}. In the course of this asymptotic analysis we rely substantially in \cite{CafassoClaeysRuzza2021}. Among other needed info, we also borrow from the same work the connection of $\Phicc$ with the integro-differential PII. In the same work, the authors actually show that $\Phicc$ relates to particular solutions to the KdV equation that reduce to the integro-differential PII. As such, Theorems~\ref{thm:limitingkernel} and \ref{thm:asymptoticsnormingconstant} could be phrased in terms of a solution to the KdV rather than to the integro-differential PII. We opt to phrase them with the latter because this formulation encodes that all self-similarities have already been accounted for.

If we were to assume that the jump matrix for $\bm \Phi_n$ were piecewise analytic on the whole plane and not merely $C^\infty$, we could deform $\bm \Phi_n$ to a family of RHPs considered in \cite{CafassoClaeysRuzza2021}. With this in mind, the analysis of the convergence $\bm \Phi_n\to \Phicc$ is inspired by several aspects in this just mentioned work but, as we already said, here we are forced to work under different conditions on the jump matrix. In particular, one could adapt the methods in \cite{CafassoClaeysRuzza2021} to actually prove that $\bm \Phi_n$ does too relate to an $n$-dependent solution to the integro-differential PII. Consequently, with a careful inspection of our work one could show that Theorems~\ref{thm:asymptoticsqLaplacetransf}, \ref{thm:limitingkernel} and \ref{thm:asymptoticsnormingconstant} admit versions with $n$-dependent leading terms. For instance, relating the norming constant $\msfga^{(n,Q)}_{n-1}(\sad)^2$ with the model problem $\bm\Phi_n$ one could obtain an asymptotic formula of the form
$$
\msfga^{(n,Q)}_{n-1}(\sad)^2=\frac{a}{4\pi }\ee^{-2n\ell_V}\left(\frac{1}{2}-\frac{1}{n^{1/3}}\frac{\msf c_V^{1/2}}{\tad^{1/2}}
\left(\msf p_n(\sad,\tad/\msf c_V)-\frac{\sad^2\msf c_V^{3/2}}{4\tad^{3/2}}\right)+\Boh(n^{-2/3})
\right), \quad n\to \infty,
$$
where the $n$-dependent function $\msf p_n$ is obtained from $\bm \Phi_n$ and relates to a $n$-dependent solution $\phiad_n$ to the integro-differential PII. In fact, with standard arguments one could improve the formula above to a full asymptotic expansion in powers of $n^{-1/3}$, with bounded but $n$-dependent coefficients. Underlying our arguments there is the statement that $\msf p_n=\msf p+\Boh(n^{-\nu})$ for any $\nu\in (0,2/3)$, which then yields Theorem~\ref{thm:asymptoticsnormingconstant}. But as a drawback, although one could potentially improve \eqref{eq:asymptformNormingctt} and also obtain the term of order $n^{-2/3}$ explicitly, it is not possible to obtain the $\Boh(n^{-1})$ term in \eqref{eq:asymptformNormingctt} unless one improves the error $\Boh(n^{-\nu})$ in the convergence $\bm \Phi_n\to \Phicc$ to a sharp error $\Boh(n^{-2/3})$. 

\subsection{Possible extensions}\hfill

Most of our approach may be extended to potentials $V$ for which the equilibrium measure $\mu_V$ is critical, and also under different conditions on $Q$ as we now explain. 

Apart from technical adaptations in several steps of the RHP for OPs which are nowadays well understood, our analysis carries over to potentials $V$ for which the equilibrium measure $\mu_V$ is regular but multicut, with the same conditions on $Q$ when $\mu_V$ has the origin as its right-most endpoint. 

When, say, the density $\mu_V$ vanishes to a higher power at a soft edge and/or $Q$ changes sign with an arbitrary odd vanishing order at the same soft edge, we need to replace the power $n^{2/3}$ in $\sigma_n$ by another appropriate power to modify the local statistics near this point in a non-trivial critical manner. Once this is done, the asymptotic analysis of the RHP for OPs that we perform carries over mostly with minor modifications, and the only major issue to overcome is in the construction of a new local parametrix $\wt{\bm \Phi}_n$ near this soft edge point and its corresponding asymptotic analysis. In this case, we expect that $\wt{\bm \Phi}_n\to \wt{\bm \Phi}_0$ for a new function $\wt{\bm \Phi}_0$. It is relatively simple to write a RHP that should be satisfied by this $\wt{\bm \Phi}_0$, and we expect it to be related to the KdV hierarchy \cite{Claeys2012} but with nonstandard initial data. It would be interesting to see if the particular solutions obtained this way reduce to integro-differential hierarchies of Painlevé equations, in the same spirit of the recent works \cite{BothnerCafassoTarricone2021,Krajenbrink2020}.

One could also consider similar statistics to \eqref{deff:Ln} with a $Q$ that vanishes at a bulk point of $\supp\mu_V$. We do expect that most of our work carries through to this situation, at least when we impose $V$ to be again one-cut regular and $Q$ to vanish quadratically at a point inside $\supp\mu_V$. The main issue that should arise is again on the construction of the local parametrix near this point, and its corresponding asymptotic analysis. This model should lead to multiplicative statistics of the Sine kernel (and the higher order generalizations of it). Similar considerations go through to hard-edge models, leading to multiplicative statistics of the Bessel process. To our knowledge, such multiplicative statistics of Bessel and Sine have not been considered in the literature so far. However, finite temperature versions of the Sine and Bessel kernels do have appeared, see for instance \cite{Johansson2007,BeteaBouttier2019,BeteaOccelli2021,CundenMezzadriOConnell2018}.

\subsection{Organization of the paper}\hfill

The paper is organized in two parts. In the first part, we deal with a family of RHPs $\bm \Phi_\tauad$ that contains the model RHP $\bm \Phi_n$ needed in the asymptotic analysis of OPs. In Section \ref{sec:modelproblem} we introduce $\bm\Phi_\tauad$ formally. In Sections~\ref{sec:KPZRHP} and \ref{sec:boundsKPZRHP} we discuss the RHP $\Phicc$, which is a particular case of $\bm\Phi_\tauad$, and review several of its properties, translating results from \cite{CafassoClaeys2021,CafassoClaeysRuzza2021} to our notation and needs. In Section~\ref{sec:asympanalymodelprobl} we prove the convergence $\bm \Phi_\tauad\to \Phicc$ and of related quantities in the appropriate sense. The latter section contains all the needed results for the asymptotic analysis of the RHP for OPs, and concludes the first part of this paper.

The second part of the paper is focused on the asymptotic analysis of the RHP for OPs. In Section~\ref{sec:eqmeasuretau} we discuss several aspects that relate to the equilibrium measure. In Section~\ref{sec:rhpapproachops} we introduce the Christoffel-Darboux kernel $\msf K_n^Q$ and related quantities, and display how they relate to the RHP for OPs. In particular, in Proposition~\ref{prop:intreprLn} we write $\msf L_n^Q$ directly to as an integral of the kernel $\msf K_n^Q$, a result which may be of independent interest. In Section~\ref{sec:rhpanalysis} we perform the asymptotic analysis of the RHP for the OPs. In Section~\ref{sec:ProofMainResults} use the conclusions from Sections~\ref{sec:rhpanalysis} and \ref{sec:asympanalymodelprobl} and prove Theorems~\ref{thm:limitingkernel} and \ref{thm:asymptoticsnormingconstant}. Also from the results from Sections~\ref{sec:rhpanalysis} and \ref{sec:asympanalymodelprobl} and assuming additional technical estimates, the proof of Theorem~\ref{thm:asymptoticsqLaplacetransf} is given in Section~\ref{sec:ProofMainResults}. Such remaining technical estimates are also ultimately a consequence of the RHP analysis, but their proofs are rather cumbersome and postponed to Section~\ref{sec:technicalestimates}.

For the remainder of the paper it is convenient to denote 
\begin{equation}\label{deff:matrixnot1}
\bm e_1\deff\begin{pmatrix}
1 \\ 0
\end{pmatrix},\quad
\bm e_2\deff\begin{pmatrix}
0 \\ 1
\end{pmatrix},\quad 
\bm E_{jk}\deff\bm e_j\bm e_k^\tp,
\end{equation}
so $\bm E_{jk}$ is a $2\times 2$ matrix with the $(j,k)$-entry equals $1$ and all other entries zero. With this notation, the Pauli matrices, for instance, take the form
\begin{equation}\label{deff:matrixnot2}
\bm I\deff\bm E_{11}+\bm E_{22},\quad
\sp_1\deff \bm E_{12}+\bm E_{21}, \quad
\sp_2\deff -\ii \bm E_{12}+\ii \bm E_{21},\quad 
\sp_3\deff\bm E_{11}-\bm E_{22}.
\end{equation}
In particular, for any reasonably regular scalar function $f$, the spectral calculus yields
\begin{equation}\label{deff:matrixnot3}
f(z)^{\sp_3}=
\begin{pmatrix}
f(z) & 0 \\ 0 & 1/f(z)
\end{pmatrix}.
\end{equation}
These notations will be used extensively in the coming sections.

\subsection*{Acknowledgments}\hfill

P. G. wishes to thank Ivan Corwin and Alexandre Krajenbrink for many helpful conversations and Alexei Borodin for comments on the earlier version of this manuscript. G. S. is grateful to Jinho Baik, Tom Claeys, Mattia Cafasso, Lun Zhang and Alfredo Deaño for inspiring conversations, and Dan Betea for pointing us out to  relevant references. He also acknowledges his current support by São Paulo Research Foundation under grants \# 2019/16062-1 and \# 2020/02506-2, and by Brazilian National Council for Scientific and Technological Development (CNPq) under grant \# 315256/2020-6.
This work was partially developed while the authors participated in the program {\it Universality and Integrability in Random Matrix Theory and Interacting Particle Systems}, hosted
by the Mathematical Sciences Research Institute in Berkeley, California, during the Fall 2021 semester, supported by the National Science Foundation
under Grant No. DMS-1928930. We are grateful to the organizers for their effort in providing an excellent research atmosphere despite the uncertain times.

\section{A Model Riemann-Hilbert Problem}\label{sec:modelproblem}

In this section we discuss a model Riemann-Hilbert Problem that will be used in the construction of a local parametrix in the asymptotic analysis for the orthogonal polynomials. As such, this model problem plays a central role in obtaining all our major results.

\subsection{The model problem}\hfill

Set
\begin{equation}\label{def:contour_sigma}
\cSig_0\deff [0,+\infty), \; \cSig_1\deff [0,\ee^{2\pi \ii/3}),  \; \cSig_2\deff (-\infty,0], \; \cSig_3\deff [0,\ee^{-2\pi \ii/3}), \quad   \cSig\deff\bigcup_{j=0}^3 \cSig_j,
\end{equation}
orienting $\cSig_0$ from the origin to $\infty$, and the remaining arcs from $\infty$ to the origin, see Figure~\ref{fig:sigmaj}.

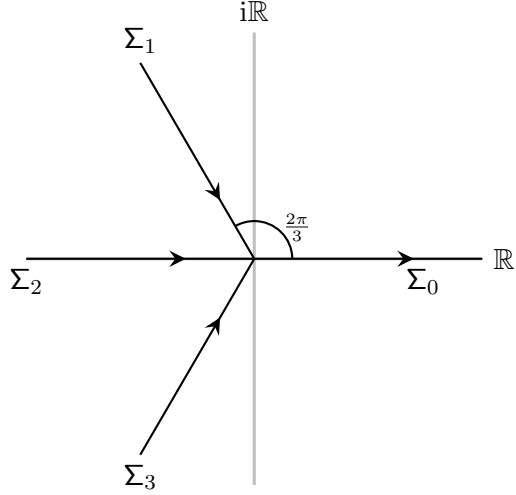
\begin{figure}[t]
\centering
		\begin{tikzpicture}[scale=1]
%
\draw [line width=0.4mm,lightgray] (-3,0)--(3,0) node [pos=1,right,black] {$\R$};
\draw [line width=0.4mm,lightgray] (0,-3)--(0,3) node [pos=1,above,black] {$\ii\R$};
\draw [thick,postaction={mid arrow={black,scale=1.5}}] (0,0) to (0:3) node [yshift=0pt,xshift=-12pt] (a) {};
\draw [thick,postaction={rmid arrow={black,scale=1.5}}] (0,0) to (120:3) node (b) {};	
\draw [thick,postaction={rmid arrow={black,scale=1.5}}] (0,0) to (180:3) node  (c) {};
\draw [thick,postaction={rmid arrow={black,scale=1.5}}] (0,0) to (-120:3) node (d) {};
\node at (a) [below left] {$\cSig_0$};
\node at (b) [above] {$\cSig_1$};
\node at (c) [below] {$\cSig_2$};
\node at (d) [below] {$\cSig_3$};
\node (0:60:0.5) [yshift=12pt,xshift=16pt] {$\scriptstyle \frac{2\pi}{3}$};
\draw[thick] (0.5,0) arc (0:120:0.5);
%
\end{tikzpicture}
\caption{The contours $\cSig_0,\cSig_1,\cSig_2$ and $\cSig_3$ in \eqref{def:contour_sigma} that constitute $\cSig$.}\label{fig:sigmaj}
\end{figure}

The model RHP we are about to introduce depends on a function $\msf h:\cSig\to \C$ used to describe its jump. For the moment we assume
$$\msf h\in C^\infty(\cSig), \quad \msf h(z)\in \R \text{ for }z\in \R,\quad \text{and}\quad \liminf_{\substack{z\to \infty \\ z\in \cSig}} \frac{\re \msf h(z)}{|z|}>0.
$$
These conditions are present only to ensure the RHP below is well posed and are far from optimal, but enough for our purposes. Later on we will impose more conditions on this function $\msf h$, these conditions will be tailored to our later needs regarding the asymptotic analysis of OPs.

The associated RHP asks for finding a $2\times 2$ matrix-valued function $\bm \Phi$ with the following properties.

\begin{enumerate}[\bf $\bm \Phi$-1.]
\item The matrix $\bm \Phi=\bm \Phi(\cdot \mid \msf h):\C\setminus \cSig\to \C^{2\times 2}$ is analytic.
\item Along the interior of the arcs of $\cSig$ the function $\bm \Phi$ admits continuous boundary values $\bm \Phi_\pm$
 related by $\bm \Phi_+(\zeta)=\bm \Phi_-(\zeta)\bm J_{\bm \Phi}(\zeta)$, $\zeta\in \cSig$, with
\begin{equation}\label{eq:jumpPhimodel}
\bm J_{\bm \Phi}(\zeta)\deff
\begin{dcases}
\bm I+\frac{1}{1+\ee^{-\msf h(\zeta)}} \bm E_{12}, & \zeta \in  \cSig_0 , \\
\bm I+(1+\ee^{-\msf h(\zeta)})\bm E_{21}, & \zeta \in \cSig_1\cup\cSig_3, \\
\frac{1}{1+\ee^{-\msf h(\zeta)}}\bm E_{12}-(1+\ee^{-\msf h(\zeta)})\bm E_{21}, & \zeta\in\cSig_2.
\end{dcases}
\end{equation}
\item As $\zeta\to\infty$,
\begin{equation}\label{eq:ModelRHPAsymp}
\bm \Phi(\zeta)=\left(\bm I+\frac{1}{\zeta}\bm \Phi^{(1)}+\Boh(1/\zeta^{2})\right)\zeta^{\sp_3/4}\bm U_0^{-1}
\ee^{-\frac{2}{3}\zeta^{3/2}\sp_3},
\end{equation}
where 
\begin{equation}\label{def:matrixU0modelprobl}
\bm U_0\deff \frac{1}{\sqrt{2}}
\begin{pmatrix}
1 & \ii \\ \ii  & 1
\end{pmatrix}
\end{equation}
and $\bm \Phi^{(1)}=\bm \Phi^{(1)}(\msf h)$ is a matrix that depends on the choice of function $\msf h$ but it is independent of $\zeta$.

\item The matrix $\bm \Phi$ remains bounded as $\zeta \to 0$.
\end{enumerate}

Given $\msf h$, it is not at all obvious that the RHP above has a solution and how to describe it. We study this model problem when $\msf h=\msf h_\tauad$ depends on an additional large parameter $\tauad$, in a way that appears naturally in the asymptotic analysis of the orthogonal polynomials mentioned earlier. For large values of $\tauad$, we then prove that the solution $\bm \Phi$ exists and is asymptotically close to a model RHP that appeared recently \cite{CafassoClaeys2021} and that we discuss in a moment.

\subsection{The model RHP with admissible data}\hfill 

For us, we need to consider the model problem $\bm \Phi=\bm \Phi(\cdot\mid \msf h)$ with functions $\msf h=\msf h_\tauad$ satisfying certain properties which are formally introduced in the next definition. 

\begin{definition}\label{deff:admissibleh}
We call a function $\msf h_\tauad:\Sigma\to \C$ {\it admissible} if it is of the form
$$
\msf h_\tauad(\zeta)=\msf h_\tauad(\zeta\mid \sad)=\sad+\tauad\msf H\left(\frac{\msf \zeta}{\tauad}\right),\quad \zeta\in \C,\quad \tauad>0,\;  \sad\in \R,
$$
where $\msf H$ is defined on a neighborhood $\mcal S$ of $\Sigma$ and satisfies the following properties.
\begin{enumerate}[(i)]
\item The function $\msf H$ is independent of $\tauad$ and $\sad$, of class $C^\infty$ on $\mcal S$ and real-valued along $\R$.
\item $\msf H$ is analytic on a disk $D_\delta(0)\subset\mcal S$ centered at the origin, and its unique zero on $D_\delta(0)$ is at $\zeta=0$, with 
$$
\tad\deff -\msf H'(0)>0.
$$
\item There exist constants $\eta,\wh\eta>0$ for which
$$
\re \msf H(w)>\eta|w| \quad \text{for } w\in \cSig_1\cup\cSig_2\cup\cSig_3,
$$
and
$$
-\wh\eta w^{3/2-\epsilon}<\msf H(w)<-\eta w  \quad \text{for } w\in \cSig_0,
$$
for some $\epsilon\in (0,1/2]$.
\end{enumerate}
\end{definition}

Conditions (i)--(ii), and also the bounds in (iii) involving $\eta$, are natural in our setup. The bound $\msf H(w)>-\wh\eta w^{3/2-\epsilon}$ is present for technical reasons, and it plays a role only for the proof of Lemma~\ref{lem:estJpsiadcSig0}, allowing us to write certain estimates in a cleaner matter. It could be removed, at the cost of slightly more complicated error terms in the mentioned Lemma. For our purposes, namely to use $\bm \Phi=\bm\Phi(\cdot\mid \msf h_\tauad)$ as a local parametrix with an appropriate $\msf h_{\tauad}$, this condition is satisfied anyway (this will be accomplished in Proposition~\ref{prop:ConformExtH}), so we include it in our definition here as well, as it simplifies our analysis.

In the course of the analysis for the RHP for the orthogonal polynomials discussed in Section~\ref{sec:rhpapproachops}, the function $\msf H$ will be a transformation of the function $Q$ appearing in the deformed weight \eqref{def:perturbedweight}, and the parameter $\tad$ that we defined here will play the same role as the one in the definition \eqref{deff:tQprime}.

Given an admissible $\msf h_\tauad$, we denote
\begin{equation}\label{deff:Phit}
\bm \Phi_\tauad(\zeta)\deff \bm \Phi(\zeta\mid \msf h=\msf h_\tauad(\cdot\mid \sad)).
\end{equation}
We are interested in the asymptotic analysis for $\bm \Phi_\tauad$ as $\tau\to +\infty$ and $\sad\geq -\sad_0$, for any $\sad_0>0$, and $\tad>0$ kept fixed within a compact of the positive axis.

We now explain in an ad hoc manner the appearance of a RHP for the integro-differential equation, which also relates to the KPZ equation.  Definition~\ref{deff:admissibleh}--(ii) gives that $\msf H$ has an expansion of the form
$$
\msf H(\zeta)=-\tad\zeta(1+\Boh(\zeta)),\quad |\zeta|\leq \delta,
$$
This means that any admissible function $\msf h_\tauad$ satisfies
$$
\msf h_\tauad(\zeta)=\sad-\tad\zeta\left(1+\Boh(\zeta \tauad^{-1})\right),\quad |\zeta|\leq \delta \tauad.
$$
In particular, the convergence
\begin{equation}\label{deff:h0}
\msf h_\tauad(\zeta)\to \msf h_0(\zeta)=\msf h_0(\zeta\mid \sad,\tad)\deff \sad-\tad\zeta,
\end{equation}
holds true uniformly in compacts as $\tauad\to\infty$. This indicates that the solution $\bm \Phi_\tauad$ should converge to the solution 
\begin{equation}\label{deff:Phicc}
\Phicc\deff \bm\Phi(\cdot\mid \msf h=\msf h_0)
\end{equation}
of the model problem obtained from $\msf h_0$. The RHP-{$\Phicc$} relates to the integro-differential PII and is a rescaled version of an RHP that appears in the description of the narrow wedge solution to the KPZ equation, as we discuss in the next section in detail.

\section{The RHP for the integro-differential RHP}\label{sec:KPZRHP}

For the choice
\begin{equation}\label{deff:hkpz}
\hkpz(\zeta)=\hkpz(\zeta\mid \skpz,\Tkpz)\deff -\Tkpz^{1/3}(\skpz+\zeta)
\end{equation}
the corresponding solution of the RHP--{$\bm\Phi$}
$$
\Phikpz(\zeta)=\Phikpz(\zeta\mid \skpz,\Tkpz)\deff \bm\Phi(\zeta\mid \msf h=\hkpz(\cdot \mid \skpz,\Tkpz))
$$
appeared for the first time in the work of Cafasso and Clayes \cite{CafassoClaeys2021} (this is the RHP-$\Psi$ in Section~2 therein) in connection with the narrow wedge solution to the KPZ equation as we explain in a moment, in Section~\ref{sec:KPZRHPKPZ}. To avoid confusion with the related quantities that we are about to introduce, we term it the KPZ RHP. In virtue of the identity
$$
\msf h_0(\zeta\mid \sad,\tad)=\hkpz(\zeta\mid \skpz=-\sad/\tad, \Tkpz=\tad^3)
$$
which follows from \eqref{deff:h0} and \eqref{deff:hkpz}, we also have the correspondence
\begin{equation}\label{eq:PhiccPhikpz}
\Phicc(\zeta\mid \sad,\tad)=\Phikpz(\zeta\mid s=-\sad/\tad,T=\tad^3),
\end{equation}
and we refer to $\Phicc$ as the id-PII RHP. For the record, we state the existence of $\Phicc$ formally as a result.

\begin{prop}\label{prop:existencePhicc}
For any $\sad\in \R$ and any $\tad>0$, the solution $\Phicc$ exists and is unique. Furthermore, for any fixed $\sad_0>0$ and $\tad_0\in (0,1)$, the solution $\Phiccp(\zeta)$ remains bounded for $\zeta$ in compacts of $\R$ and $\sad\geq -\sad_0$, $\tad_0\leq \tad\leq 1/\tad_0$.
\end{prop}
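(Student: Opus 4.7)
The plan is to reduce existence and uniqueness to the already-established theory of the KPZ RHP solution $\Phikpz$ from \cite{CafassoClaeys2021} via the identification \eqref{eq:PhiccPhikpz}, and then upgrade this pointwise statement to the claimed uniform boundedness by a two-step parameter argument.

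For existence, the correspondence \eqref{eq:PhiccPhikpz} sends $\Phicc(\cdot\mid\sad,\tad)$ to $\Phikpz(\cdot\mid -\sad/\tad,\tad^3)$ for every admissible pair $(\sad,\tad)$ with $\tad>0$, so the statement reduces to the known solvability of the KPZ RHP. Uniqueness is a standard Liouville argument: each jump matrix in \eqref{eq:jumpPhimodel} has determinant one, so $\det\bm\Phi$ has no jumps, remains bounded at $\zeta=0$ by condition $\bm\Phi$-4, and tends to $1$ at infinity by \eqref{eq:ModelRHPAsymp}, hence $\det\bm\Phi\equiv 1$. Given two solutions $\bm\Phi$ and $\wt{\bm\Phi}$, the matrix $\bm\Phi\wt{\bm\Phi}^{-1}$ then extends to an entire function tending to $\bm I$ at infinity, so it equals $\bm I$ identically.

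For the uniform boundedness of $\Phiccp(\zeta)$ I split the parameter range $\{\sad\geq-\sad_0,\,\tad_0\leq\tad\leq 1/\tad_0\}$ into a compact piece $\sad\in[-\sad_0,M]$ and the tail $\sad\geq M$, for some large $M>0$ to be fixed. On the compact piece, the jumps in \eqref{eq:jumpPhimodel} depend real-analytically on $(\sad,\tad)$, and pointwise solvability together with the small-norm stability of the associated singular integral operator yields joint continuity of $\Phiccpm(\zeta\mid\sad,\tad)$ in $(\zeta,\sad,\tad)$; compactness then gives the desired uniform bound. On the tail $\sad\geq M$, the quantity $\ee^{-\msf h_0(\zeta)}=\ee^{-\sad+\tad\zeta}$ tends to zero uniformly for $\zeta$ in any fixed compact, so the jumps on $\cSig_0,\cSig_1,\cSig_3$ approach $\bm I$ and the jump on $\cSig_2$ approaches $\bm E_{12}-\bm E_{21}$, which are precisely the jumps of the classical Airy parametrix $\Phiai$. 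A standard small-norm perturbation applied to $\Phicc\Phiai^{-1}$ then shows this quotient is close to $\bm I$ uniformly on compacts of $\C\setminus\cSig$, yielding the required bound on the tail.

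The main obstacle is quantifying the small-norm perturbation genuinely uniformly across the whole tail $\sad\to+\infty$. The potential concern are the poles of $(1+\ee^{-\msf h_0(\zeta)})^{\pm 1}$ located at $\zeta=(\sad+(2k+1)\pi\ii)/\tad$, but as $\sad$ grows these drift along the real direction and away from the jump contour $\cSig$, so the difference $\bm J_{\Phicc}-\bm J_{\Phiai}$ is exponentially small in weighted $L^2\cap L^\infty$ norms on $\cSig$; making this precise is somewhat technical but purely routine, and it feeds back into the small-norm estimate to produce a uniform bound. All other ingredients follow directly from the cited works.
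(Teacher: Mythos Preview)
Your existence and uniqueness argument matches the paper's: both invoke the correspondence \eqref{eq:PhiccPhikpz} and defer to \cite{CafassoClaeys2021}, with the Liouville argument for uniqueness being standard.

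For the boundedness claim your route is genuinely different. The paper does not split the parameter range. Instead it writes the Cauchy integral representation
\[
\Phicc(\zeta)=\bm I+\frac{1}{2\pi\ii}\int_\Gamma \frac{\Phiccm(s)(\bm J_{\Phicc}(s)-\bm I)}{s-\zeta}\,\dd s,
\]
observes that the jump matrix \eqref{eq:jumpPhicc} continues analytically to a fixed tubular neighborhood of $\cSig$ (the poles of $(1+\ee^{-\sad+\tad\zeta})^{\pm1}$ sit at $\im\zeta=(2k+1)\pi/\tad$, hence at distance $\geq \pi\tad_0$ from $\R$ uniformly in $\sad$), and then appeals to the contour-deformation/maximum-principle argument spelled out later around \eqref{eq:maxprincipleconseqPsitauad}. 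This handles all $\sad\geq -\sad_0$ in one stroke.

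Your two-piece argument is sound in outline, and the tail part is essentially the singular-regime analysis carried out in Section~\ref{sec:Phiccsingregime} (see \eqref{eq:IdentPhiccPhiaiY} and \eqref{eq:boundbmY}). There is, however, a slip in your description of the limiting jumps: as $\sad\to+\infty$ one has $\ee^{-\sad+\tad\zeta}\to 0$ on compacts, so $\frac{1}{1+\ee^{-\sad+\tad\zeta}}\to 1$ and $1+\ee^{-\sad+\tad\zeta}\to 1$, giving limiting jumps $\bm I+\bm E_{12}$ on $\cSig_0$, $\bm I+\bm E_{21}$ on $\cSig_1\cup\cSig_3$, and $\bm E_{12}-\bm E_{21}$ on $\cSig_2$ --- the standard Airy jumps, not $\bm I$. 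This does not break your argument (the quotient $\Phicc\cdot(\text{Airy})^{-1}$ still has near-identity jumps), but the small-norm control you need is global on $\cSig$, not just on compacts; on $\cSig_0$ the scalar difference $\frac{1}{1+\ee^{-\sad+\tad\zeta}}-1$ does not decay for large $\zeta$, and it is only after conjugation by the Airy factors that exponential decay kicks in. That is precisely the content of \eqref{eq:IdentPhiccPhiaiY} and the estimates following it, so your ``routine'' step is indeed routine, but it is the step done in Section~\ref{sec:Phiccsingregime}. The paper's single-argument approach avoids this splitting at the cost of a forward reference.
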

\begin{proof}
It is a consequence of \cite[Section~2]{CafassoClaeys2021} that the solution $\Phikpz(\cdot\mid \skpz,\Tkpz)$ exists and is unique, for any $\skpz\in \R$ and $\Tkpz>0$, and from the correspondence \eqref{eq:PhiccPhikpz} the existence and uniqueness of $\Phicc$ is thus granted.

For the boundedness, we start from the representation
$$
\Phicc(\zeta)=\bm I+\frac{1}{2\pi \ii}\int_{\Gamma}\frac{\Phiccm(s)(\bm J_{\Phicc}(s)-\bm I)}{s-\zeta}\dd s,\quad \zeta\in \C\setminus \Gamma,
$$
which follows from the $L^p$ theory of RHPs (see \cite{DeiftParkCity}). The jump matrix admits an analytic continuation to any neighborhood of the real axis, and this analytic continuation remains bounded in compacts, also uniformly for $\sad\geq -\sad_0$ and $\tad_0\leq \tad\leq 1/\tad_0$ (see for instance \eqref{eq:jumpPhicc} for the exact expression). With these observations in mind, the claimed boundedness follows from standard arguments. We skip additional details, but refer to the proof of Theorem~\ref{thm:PhitauAsympt}, in particular \eqref{eq:maxprincipleconseqPsitauad} {\it et seq.}, for similar arguments in a more involved context.
\end{proof}

In this section we collect several results on $\Phicc$ that were obtained in \cite{CafassoClaeys2021,CafassoClaeysRuzza2021} and which will be needed later. 

But before proceeding, a word of caution. As we said, the RHP--{$\Phikpz$} appeared first in \cite{CafassoClaeys2021}, but was also studied in the subsequent work \cite{CafassoClaeysRuzza2021}. The meanings for the variables $s,x$ and $t$ in these two works are not consistent, but we need results from both of them. Comparing to the work \cite{CafassoClaeys2021} by Cafasso and Claeys, the correspondence is
\begin{equation}\label{notation:cc}
s_{\rm CC}=-\frac{\sad}{\tad}\quad \text{and}\quad T_{\rm CC}=\tad^3.
\end{equation}
This correspondence is consistent with \eqref{eq:PhiccPhikpz}. On the other hand, when comparing to the subsequent work \cite{CafassoClaeysRuzza2021} by Cafasso, Claeys and Ruzza, the correspondence between notations is
\begin{equation}\label{notation:ccr}
t_{\rm CCR}=\frac{1}{\tad^{3/2}}, \quad x_{\rm CCR}=-\frac{\sad}{\tad^{3/2}},\qquad \text{that is}\qquad x_{\rm CCR}=-\Sad,\quad t_{\rm CCR}=\Tad,
\end{equation}
where $\Tad,\Sad$ are as in \eqref{eq:corr_tad_Tad}.

In our asymptotic analysis, the most convenient choice of variables to work with is the choice $(\sad,\tad)$ and the correspondence $(\Sad,\Tad)$ from \eqref{eq:corr_tad_Tad} that we have already been using, and which leads to the RHP $\Phicc$ as we introduced. Nevertheless, we will need to collect results from both mentioned works, and when the need arises we refer to the correspondences of variables \eqref{notation:cc}--\eqref{notation:ccr}. 

On the other hand, when making correspondence with integrable systems, in particular the integro-differential Painlevé II equation, it is more convenient to work with the variables $\Sad$ and $\Tad$ as in \eqref{eq:corr_tad_Tad}.

\subsection{Properties of the id-PII parametrix}\label{sec:KPZRHPKPZ}\hfill 

In this section we describe many of the findings from \cite{CafassoClaeys2021,CafassoClaeysRuzza2021}, in a way suitably adapted to our notation and needs. In particular the connection of $\Phicc$ introduced in \eqref{deff:Phicc} with the integro-differential Painlevé II equation is described in this section.

For
\begin{equation}\label{deff:Dkpz}
\Dkpz(\zeta)=\Dkpz(\zeta\mid \skpz,\Tkpz)\deff 
\begin{dcases}
\bm I, & \zeta>0, \\
\bm I+(1+\ee^{\Tkpz^{1/3}(\zeta+\skpz)})\bm E_{21}, & \zeta<0.
\end{dcases}
\end{equation}
the identity 
\begin{equation}\label{eq:QccIntRepPhikpz}
\partial_\skpz \log \Qcc(\skpz,\Tkpz)=\frac{\Tkpz^{1/3}}{2\pi \ii}\int_{-\infty}^\infty \frac{\ee^{\Tkpz^{1/3}(x+\skpz)}}{(1+\ee^{\Tkpz^{1/3}(x+\skpz)})^2}\left[(\Dkpz(x)^{-1}\Phikpz_+(x)^{-1}(\Phikpz_+ \Dkpz )'(x)\right]_{21} \dd x,
\end{equation}
was shown in \cite[Theorem~2.1]{CafassoClaeys2021} and will also be useful for us. With \eqref{eq:PhiccPhikpz} we now rewrite this identity in terms of $\Phicc$.
With the principal branch of the argument, set
\begin{equation}\label{deff:Dcc}
\Dcc(\zeta)=\Dcc(\zeta\mid \sad, \tad)\deff
\begin{cases}
\bm I, & |\arg\zeta|<\frac{2\pi}{3}, \\
\bm I+(1+\ee^{-\sad+\tad\zeta})\bm E_{21}, & |\arg\zeta|>\frac{2\pi}{3}.
\end{cases}
\end{equation}
This function relates to $\Dkpz$ in \eqref{deff:Dkpz} via
$$
\Dccp(\zeta)=\Dkpz(\zeta\mid \skpz=-\sad/\tad, \Tkpz=\tad^3),\quad \zeta\in \R,
$$
and \eqref{eq:QccIntRepPhikpz} rewrites as
\begin{equation}\label{eq:QccIntRepPhicc}
\partial_\skpz \log \Qcc(\skpz=-\sad/\tad,\Tkpz=\tad^3)=\frac{\tad}{2\pi \ii}\int_{-\infty}^\infty \frac{\ee^{\tad x-\sad}}{(1+\ee^{\tad x-\sad})^2}\left[(\Dcc(x)^{-1}\Phiccp(x)^{-1}(\Phiccp \Dcc )'(x)\right]_{21} \dd x.
\end{equation}

For further reference, it is now convenient to state the RHP for $\Phicc$ explicitly.
\begin{enumerate}[\bf $\Phicc$-1.]
\item The matrix $\Phicc:\C\setminus \cSig\to \C^{2\times 2}$ is analytic.
\item Along the interior of the arcs of $\cSig$ the function $\Phicc$ admits continuous boundary values $\Phiccpm$
 related by $\Phiccp(\zeta)=\Phiccm(\zeta)\bm J_{\Phicc}(\zeta)$, $\zeta\in \cSig$, with
\begin{equation}\label{eq:jumpPhicc}
\bm J_{\Phicc}(\zeta)\deff
\begin{dcases}
\bm I+\frac{1}{1+\ee^{-\sad+\tad \zeta}} \bm E_{12}, & \zeta \in  \cSig_0 , \\
\bm I+(1+\ee^{-\sad+\tad\zeta})\bm E_{21}, & \zeta \in \cSig_1\cup\cSig_3, \\
\frac{1}{1+\ee^{-\sad+\tad\zeta}}\bm E_{12}-(1+\ee^{-\sad+\tad\zeta})\bm E_{21}, & \zeta\in\cSig_2.
\end{dcases}
\end{equation}
\item As $\zeta\to\infty$,
\begin{equation}\label{eq:AsymptPhicc}
\Phicc(\zeta)=\left(\bm I+\Boh(1/\zeta)\right)\zeta^{\sp_3/4}\bm U_0^{-1}
\ee^{-\frac{2}{3}\zeta^{3/2}\sigma_3}.
\end{equation}
where we recall that $\bm U_0$ is given in \eqref{def:matrixU0modelprobl}.

\item The matrix $\Phicc$ remains bounded as $\zeta \to 0$.
\end{enumerate}

To compare with \cite{CafassoClaeysRuzza2021} we perform a transformation of this RHP. All the calculations that follow already take into account the correspondence \eqref{notation:ccr} between the notation in the mentioned work and our notation. 

With $\Dcc$ as in \eqref{deff:Dcc} and introducing
$$
\xi=\xi(\zeta)=-\sad+\tad\zeta,\quad \text{with inverse}\quad \zeta=\zeta(\xi)=\frac{\xi+\sad}{\tad},
$$
we transform
\begin{equation}\label{eq:transfPsiccPhicc}
\Psicc(\xi)=\left(\bm I+\frac{\ii \sad^2}{4\tad^{3/2}}\bm E_{12}\right)\tad^{\sp_3/4}\Phicc(\zeta(\xi))\times
\begin{cases}
\Dcc(\zeta(\xi)), & \im \xi>0, \; \arg(\zeta(\xi))\neq 2\pi/3,\\
\Dcc(\zeta(\xi))^{-1}, & \im \xi<0, \; \arg(\zeta(\xi))\neq -2\pi/3.
\end{cases}
\end{equation}
Then $\Psicc$ satisfies the following RHP.

\begin{enumerate}[\bf $\Psicc$-1.]
\item The matrix $\Psicc:\C\setminus \R\to \C^{2\times 2}$ is analytic.
\item Along $\R$ the function $\Psicc$ admits continuous boundary values $\Psiccpm$
 related by 
 $$
 \Psiccp(\xi)=\Psiccm(\xi)
\left(\bm I+\frac{1}{1+\ee^{\xi}} \bm E_{12}\right), \quad \xi\in \R.
$$
%
\item For any $\delta\in (0,2\pi/3)$, as $\xi\to\infty$ the matrix $\Psicc$ has the following asymptotic behavior,
\begin{equation}\label{eq:RHPAsympPsicc}
\Psicc(\xi)=\left(\bm I+\Boh(1/\xi)\right)\xi^{\sp_3/4}\bm U_0^{-1}
\ee^{-\tad^{-3/2}\left(\frac{2}{3}\xi^{3/2}+\sad \xi^{1/2}\right)\sigma_3}
\times
\begin{cases}
\bm I, & |\arg \xi|\leq \pi-\delta, \\
\bm I\pm \bm E_{21}, & \pi-\delta<\pm \arg \xi <\pi.
\end{cases}
\end{equation}
\end{enumerate}

This RHP is the same RHP considered in \cite[page~1120]{CafassoClaeysRuzza2021}\footnote{The keen reader will notice that there is a sign difference between the last term in the right-hand side of \eqref{eq:ModelRHPAsymp} and the corresponding term in \cite[page~1120]{CafassoClaeysRuzza2021}, but the latter is a typo.} with the choice $\sigma(r)=(1+\ee^{-r})^{-1}$ therein and the correspondence of variables \eqref{notation:ccr}.

As a consequence, and with the change of variables $(\sad,\tad)\mapsto (\Sad,\Tad)$ from \eqref{eq:corr_tad_Tad}, we obtain that for some functions $\msf Q=\msf Q(\Sad,\Tad),\msf R=\msf R(\Sad,\Tad), \msf P=\msf P(\Sad,\Tad)$ and
\begin{equation}\label{deff:coeffqrp}
\msf q=\msf q(\sad,\tad)=\msf Q(\Sad,\Tad),\quad \msf r=\msf r(\sad,\tad)=\msf R(\Sad,\Tad),\quad \msf p=\msf p(\sad,\tad)=\msf P(\Sad,\Tad),
\end{equation}
the asymptotic behavior \eqref{eq:RHPAsympPsicc} improves to
\begin{multline}\label{eq:AsympPsiccImprov}
\Psicc(\xi)=\left(\bm I+
\frac{1}{\xi}
\begin{pmatrix}
\msf q & -\ii \msf r \\ \ii \msf p & -\msf q
\end{pmatrix}+
\Boh(\xi^{-2})\right)\xi^{\sp_3/4}\bm U_0^{-1}\\
\times 
\ee^{-\tad^{-3/2}\left(\frac{2}{3}\xi^{3/2}+\sad \xi^{1/2}\right)\sigma_3}
\times
\begin{cases}
\bm I, & |\arg \xi|\leq \pi-\delta, \\
\bm I\pm \bm E_{21}, & \pi-\delta<\pm \arg \xi <\pi,
\end{cases}
\quad \xi\to\infty.
\end{multline}
Stressing that the correspondence \eqref{notation:ccr} is in place, the functions $\msf P$ and $\msf Q$ satisfy the relation \cite[Equation~(3.14)]{CafassoClaeysRuzza2021}
$$
\partial_\Sad \msf P(\Sad,\Tad)=2\msf Q(\Sad,\Tad)+\msf P(\Sad,\Tad)^2.
$$
Furthermore, from \cite[Equations~(3.12),(3.16), Theorem~1.3 and Corollary~1.4]{CafassoClaeysRuzza2021} we see that $\Psicc$ takes the form
\begin{equation}\label{eq:transfPsiccphiad}
\Psicc(\xi\mid \sad,\tad)=
\sqrt{2\pi}\ee^{-\frac{\pi\ii}{4}\sp_3}\left(\bm I- \msf p(\sad,\tad)\bm E_{12}\right)
\begin{pmatrix}
-\partial_\Sad \phiad (\xi\mid \Sad,\Tad) & \ast \\
-\phiad(\xi\mid \Sad,\Tad) & \ast
\end{pmatrix}
\ee^{\frac{\pi\ii}{4}\sp_3},
\end{equation}
where $\phiad=\phiad(\xi\mid \Sad,\Tad)$ solves the NLS equation with potential $2\partial_\Sad \msf P$,
$$
\partial_\Sad^2\phiad(\xi\mid \Sad,\Tad)=(\xi+2\partial_\Sad \msf P(\Sad,\Tad))\phiad(\xi\mid \Sad,\Tad).
$$
In addition, $\msf P$ and $\phiad$ are related through the identity \eqref{eq:intdiffPIIPhip} which, in turn, implies that $\phiad$ is the solution to the integro-differential Painlevé II equation in \eqref{eq:intdiffPII}.

It is convenient to write some quantities of $\Phicc$ directly in terms of the just introduced functions.
%
The first identity we need for later is
\begin{multline}\label{eq:transfPhicckernelphicc}
\left[\left(\Phicc(\zeta(v)\mid \sad,\tad)\Dcc(\zeta(v)\mid \sad,\tad)\right)^{-1}\Phicc(\zeta(u)\mid \sad,\tad)\Dcc(\zeta(u)\mid \sad,\tad)\right]_{21,+} \\ 
=-2\pi\ii \left(\phiad(u\mid \Sad,\Tad)(\partial_{\Sad}\phiad)(v\mid \Sad,\Tad)-\phiad(v\mid \Sad,\Tad)(\partial_{\Sad}\phiad)(u\mid \Sad,\Tad)\right)
\end{multline}
which follows from \eqref{eq:transfPsiccPhicc} and \eqref{eq:transfPsiccphiad} after a straightforward calculation, accounting also that $\det\Phicc=\det \Psicc\equiv 1$.

The second relation we need is an improvement of the asymptotics of $\Phicc$ in \eqref{eq:AsymptPhicc}. With the coefficients
$$
\msf c_1=\msf c_1(\sad,\tad)\deff -\frac{\sad^2}{4\tad^{3/2}},\quad \msf c_2=\msf c_2(\sad,\tad)\deff \frac{\sad^4}{32\tad^3},\quad
\msf c_3=\msf c_3(\sad,\tad)\deff -\frac{\sad^3(\sad^3-16\tad^3)}{384\tad^{9/2}},
$$
and the functions $\msf q,\msf r,\msf p$ in \eqref{deff:coeffqrp}, introduce
%
%
\begin{equation}\label{deff:Phikpz_residue}
\Phicc^{(1)}\deff \frac{1}{\tad} 
\begin{pmatrix}
-\dfrac{\sad}{4}+\msf q+\msf c_2-\msf c_1\msf p-\msf c_1^2 & \ii \tad^{-1/2}\left(-\msf r-2\msf q\msf c_1+\dfrac{\sad}{2}\msf c_1+\msf p\msf c_1^2+\msf c_1\msf c_2-\msf c_3\right) \\
\ii \tad^{1/2}(\msf p+\msf c_1) & \dfrac{\sad}{4}-\msf q+\msf p\msf c_1+\msf c_2
\end{pmatrix}.
\end{equation}
After some cumbersome but straightforward calculations, the asymptotics \eqref{eq:AsympPsiccImprov} improves \eqref{eq:AsymptPhicc} to
$$
\Phicc(\zeta)=\left(\bm I+\frac{\Phicc^{(1)}}{\zeta}+\Boh(\zeta^{-2})\right)\zeta^{\sp_3/4}\bm U_0^{-1}\ee^{-\frac{2}{3}\zeta^{3/2}\sp_3},\quad \zeta\to \infty.
$$

\section{Bounds on the id-PII RHP}\label{sec:boundsKPZRHP}

We need to obtain certain asymptotic bounds on $\Phicc$ in different regimes. These bounds will be used later to show that the model problem $\bm\Phi_\tauad$ converges, as $\tauad\to+\infty$, to $\Phicc$ as already indicated in \eqref{deff:Phit} {\it et seq.} We split these necessary estimates in the next subsections, depending on the regime we are.  

In what follows, for a matrix-valued function $\bm M=(\bm M_{jk})$ and a contour $\Sigma\subset \C$, we also use the pointwise matrix norm
\begin{equation}\label{deff:matrixnorm}
|\bm M(\zeta)|\deff \max_{j,k} |\bm M_{j,k}(\zeta)|,
\end{equation}
and the matrix $L^p$ norm (possibly also with $p=\infty$)
\begin{equation}\label{deff:matrixLpnorm}
\|\bm M\|_{L^p(\Sigma)}\deff \max_{j,k} \| \bm M_{j,k}\|_{L^p(\Sigma)},
\end{equation}
where the measure is always understood to be the arc-length measure. In particular, for any two given matrices $\bm M_1$ and $\bm M_2$ the inequality
$$
\|\bm M_1\bm M_2\|_{L^\infty(\Sigma)}\leq 2 \|\bm M_1\|_{L^\infty(\Sigma)}\|\bm M_2\|_{L^\infty(\Sigma)}
$$ 
is satisfied. Similar straightforward inequalities involving $L^1,L^2$ and $L^\infty$ and the pointwise norm \eqref{deff:matrixnorm} also hold, and will be used without further mention. Sometimes we also write
\begin{equation}\label{deff:matrixLpqnorm}
\|\bm M\|_{L^p\cap L^q(\Sigma)}\deff \max \left\{\| \bm M\|_{L^p(\Sigma)},  \| \bm M\|_{L^q(\Sigma)} \right\},
\end{equation}
to identify that possible convergences are taking place in various norms simultaneously. In a similar manner we define norms $\|\cdot\|_{L^{p_1}\cap L^{p_2}\cap L^{p_3}}$ involving three (or possibly more) function spaces.

\subsection{The singular regime}\label{sec:Phiccsingregime}\hfill 

The first asymptotic regime we consider is 
$$
\sad\geq \sad_0 \quad \text{and}\quad \tad_0\leq \tad \leq \frac{1}{\tad_0},
$$
where $\tad_0\in (0,1)$ is any given value, and $\sad_0=\sad_0(\tad_0)>0$ will be made sufficiently large depending on $\tad_0>0$, but independent of $\tad$ within the range above. With \eqref{notation:ccr} in mind, this is a particular case of the {\it singular regime} in \cite{CafassoClaeysRuzza2021}.

For this asymptotic regime, we need the following result.

\begin{prop}\label{prop:EstPhiccSing}
For any $\tad_0\in (0,1)$ there exists $\sad_0=\sad_0(\tad_0)>0$, $M=M(\tad_0)>0$ and $\eta=\eta(\tad_0)>0$ such that the inequalities
\begin{align*}
& \left|\Phiccp(\zeta)\bm E_{12}\Phiccp(\zeta)^{-1}\right|\leq M\ee^{-\eta \re(\zeta^{3/2})}, && \zeta\in \cSig_0, \\
& \left|\Phiccp(\zeta)\bm E_{21}\Phiccp(\zeta)^{-1}\right|\leq M\ee^{-\eta \re(\zeta^{3/2})}, && \zeta\in \cSig_1\cup\cSig_3, \quad \text{and}\\
& \left|\Phiccp(\zeta)\bm E_{22}\Phiccp(\zeta)^{-1}\right|\leq M|\zeta|^{1/2}, && \zeta\in \cSig_2,
\end{align*}
hold true for any $\sad\geq \sad_0$ and any $\tad\in [\tad_0,1/\tad_0]$.
\end{prop}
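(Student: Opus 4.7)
The plan is to combine the explicit representation of $\Phicc$ in terms of the integro-differential Painlev\'e II transcendent $\phiad$ provided by \eqref{eq:transfPsiccPhicc}--\eqref{eq:transfPsiccphiad}, with the Airy-type asymptotic behavior of $\phiad$ in the singular regime established in \cite{CafassoClaeysRuzza2021}, and deduce the stated bounds from matching explicit estimates for the Airy parametrix.

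\emph{Reduction to Airy.} Under the correspondence \eqref{eq:corr_tad_Tad}, the stated parameter range translates to $\Sad=\sad\tad^{-3/2}\to+\infty$ with $\Tad=\tad^{-3/2}$ bounded, which is the singular regime of \cite{CafassoClaeysRuzza2021}. The analysis therein gives the uniform Airy approximation
$$
\phiad(\xi\mid\Sad,\Tad)=\Tad^{1/6}\ai\!\bigl(\Tad^{2/3}\xi+\Sad\Tad^{-1/3}\bigr)\bigl(1+o(1)\bigr),
$$
together with the analogous formula for $\partial_\Sad\phiad$ involving $\ai'$. With $\xi=-\sad+\tad\zeta$ the argument $\Tad^{2/3}\xi+\Sad\Tad^{-1/3}$ simplifies to $\zeta$, so in the physical variable $\phiad\approx\tad^{-1/4}\ai(\zeta)$ and $\partial_\Sad\phiad\approx\tad^{1/4}\ai'(\zeta)$. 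Substituting into \eqref{eq:transfPsiccphiad} and then \eqref{eq:transfPsiccPhicc} produces a uniform approximation $\Phicc(\zeta)=\Phiai(\zeta)(\bm I+o(1))$, where $\Phiai$ is the standard Airy parametrix on $\cSig$ of the form $\Phiai(\zeta)=\bm M_{\rm Ai}(\zeta)\zeta^{\sp_3/4}\bm U_0^{-1}\ee^{-\frac{2}{3}\zeta^{3/2}\sp_3}$, with $\bm M_{\rm Ai}$ uniformly bounded on $\cSig$.

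\emph{Bounds via Airy.} Using the identities
$\ee^{-\frac{2}{3}\zeta^{3/2}\sp_3}\bm E_{12}\ee^{+\frac{2}{3}\zeta^{3/2}\sp_3}=\ee^{-\frac{4}{3}\zeta^{3/2}}\bm E_{12}$, $\ee^{-\frac{2}{3}\zeta^{3/2}\sp_3}\bm E_{21}\ee^{+\frac{2}{3}\zeta^{3/2}\sp_3}=\ee^{+\frac{4}{3}\zeta^{3/2}}\bm E_{21}$, and $\ee^{-\frac{2}{3}\zeta^{3/2}\sp_3}\bm E_{22}\ee^{+\frac{2}{3}\zeta^{3/2}\sp_3}=\bm E_{22}$, together with the boundedness of $\bm M_{\rm Ai}$ and the at-most-polynomial (in $|\zeta|^{1/2}$) growth of $\zeta^{\sp_3/4}\bm U_0^{-1}\bm E_{jk}\bm U_0\zeta^{-\sp_3/4}$, I would verify each of the three bounds directly with $\Phiai$ in place of $\Phicc$: the exponential factor $\ee^{-\frac{4}{3}\zeta^{3/2}}$ yields decay in $\re(\zeta^{3/2})$ on $\cSig_0$, on $\cSig_1\cup\cSig_3$ the factor $\ee^{+\frac{4}{3}\zeta^{3/2}}$ is itself bounded by $\ee^{-\eta\re(\zeta^{3/2})}$, and on $\cSig_2$ only the polynomial factor survives. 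The $(\bm I+o(1))$ correction then propagates through the conjugation $\Phiccp\bm E_{jk}\Phiccp^{-1}=(\bm I+o(1))\Phiaip\bm E_{jk}\Phiaip^{-1}(\bm I+o(1))^{-1}$, yielding the bounds on $\Phicc$ after possibly adjusting the constants $M,\eta$.

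\emph{Main obstacle.} The delicate point will be establishing that the Airy approximation of $\phiad$ holds uniformly across the full contour $\cSig$, including near the turning point $\zeta=0$ and the Stokes rays, and uniformly in $(\sad,\tad)$. The estimates of \cite{CafassoClaeysRuzza2021} are stated in specific Stokes sectors, so one must combine them carefully and track the uniformity of error terms. A cleaner alternative would be a direct small-norm comparison $\bm R=\Phicc\Phiai^{-1}$ along a deformed contour, shifting the portion of $\cSig_0$ past the transition point $\zeta=\sad/\tad$ slightly into the upper half-plane, staying away from the poles of the Fermi factor $(1+\ee^{-\sad+\tad\zeta})^{-1}$ at $\zeta=\sad/\tad+\ii\pi(2k+1)/\tad$; a standard small-norm argument would then give $\bm R=\bm I+\Boh(\ee^{-\eta\sad})$ uniformly, and the bounds would follow as in the previous paragraph.
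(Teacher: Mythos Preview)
Your ``cleaner alternative'' in the final paragraph is essentially what the paper does. The paper invokes the factorization $\Phicc(\zeta)=\bm Y(\zeta)\Phiai(\zeta)\times[\text{piecewise unipotent triangular factor}]$ coming from \cite[Equation~(2.8)]{CafassoClaeysRuzza2021} (this is \eqref{eq:IdentPhiccPhiaiY} in the paper), then cites the small-norm theory already carried out in \cite[Lemma~5.1 and Section~5.2]{CafassoClaeysRuzza2021} to obtain $|\bm Y(\zeta)^{\pm1}|\leq M$ uniformly in $\zeta,\sad,\tad$, and finally computes $\Phiccp\bm E_{jk}\Phiccp^{-1}=\bm Y_+\,[\Phiaip\bm E_{jk}\Phiaip^{-1}]\,\bm Y_+^{-1}$ explicitly with Airy functions on each ray. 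So rather than setting up a fresh small-norm problem along a deformed contour, the paper simply borrows the bound on $\bm Y$ from the literature; otherwise the mechanism is the one you outline.

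Your primary route via $\phiad$ has a genuine gap, however. First, formula \eqref{eq:transfPsiccphiad} only supplies the first column of $\Psicc$; the starred entries are not expressed through $\phiad$ and $\partial_\Sad\phiad$, so the Airy asymptotics of $\phiad$ alone do not reconstruct the full matrix $\Phicc$. Second, and more seriously, the side on which the error sits is decisive and you have it inconsistent. You write $\Phicc=\Phiai(\bm I+o(1))$ but then conjugate as though the error were on the left. If the error really sat on the right, say $\Phicc=\Phiai(\bm I+\bm\varepsilon)$, then
\[
\Phiccp\bm E_{12}\Phiccp^{-1}=\Phiaip(\bm I+\bm\varepsilon)\bm E_{12}(\bm I+\bm\varepsilon)^{-1}\Phiaip^{-1},
\]
and the perturbation leaks an $\Boh(\bm\varepsilon)\bm E_{21}$ contribution inside the conjugation, which $\Phiaip(\cdot)\Phiaip^{-1}$ amplifies like $\ee^{+\frac{4}{3}\zeta^{3/2}}$ on $\cSig_0$, destroying the bound. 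The paper's factorization places the bounded factor $\bm Y$ on the \emph{left} of $\Phiai$, which is precisely what makes the conjugation harmless; your small-norm alternative $\bm R=\Phicc\Phiai^{-1}$ would recover exactly this left-sided structure, at which point you are doing the paper's proof.
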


The proof of Proposition~\ref{prop:EstPhiccNonasymp} is a recollection of the analysis in \cite{CafassoClaeysRuzza2021}, so before going into the details we need to review some further notions from their work.

Introduce
\begin{equation}\label{deff:Phiai}
\Phiai(\zeta)\deff-\sqrt{2\pi}
\times 
\begin{cases}
\begin{pmatrix}
\ai'(\zeta) & -\ee^{2\pi \ii/3}\ai'(\ee^{-2\pi \ii/3}\zeta ) \\
\ii \ai(\zeta) & -\ii\ee^{-2\pi \ii/3}\ai(\ee^{-2\pi \ii/3}\zeta)
\end{pmatrix}, & \im \zeta>0, \\
\begin{pmatrix}
\ai'(\zeta) & \ee^{-2\pi \ii/3}\ai'(\ee^{2\pi \ii/3}\zeta ) \\
\ii \ai(\zeta) & \ii\ee^{2\pi \ii/3}\ai(\ee^{2\pi \ii/3}\zeta)
\end{pmatrix}, & \im \zeta<0.
\end{cases}
\end{equation}
This is the matrix appearing in \cite[Equation~(2.5)]{CafassoClaeysRuzza2021}. With the correspondence of variables \eqref{notation:ccr} in mind, when we combine our identity \eqref{eq:transfPsiccPhicc} with \cite[Equation~(2.8)]{CafassoClaeysRuzza2021}, we obtain the equality
\begin{equation}\label{eq:IdentPhiccPhiaiY}
\Phicc(\zeta)=\bm Y(\zeta)\Phiai(\zeta)
\times
\begin{cases}
\bm I, & -2\pi/3<\arg \zeta<2\pi/3,\\
\bm I\mp (1+\ee^{-\sad+\tad \zeta})\bm E_{21}, & 2\pi /3<\pm \arg \zeta <\pi.
\end{cases}
\end{equation}
The exact form of the matrix $\bm Y$ is not important for us, but we can interpret this last equality as a defining identity for $\bm Y$. What is important for us is that $\bm Y$ is analytic off the real axis, with a jump matrix $\bm J_{\bm Y}$ on $\R$ which admits an analytic continuation to a neighborhood of the axis. 

The small norm theory for $\bm Y$ in our regime of interest was carried out in \cite[Lemma~5.1 and Section~5.2]{CafassoClaeysRuzza2021}. As a consequence, we obtain that for any $\tad_0>0$ there exist $M=M(\tad_0)>0,\sad_0=\sad_0(\tad_0)>0,\eta=\eta(\tad_0)>0$ such that the inequalities
$$
\|\bm J_{\bm Y}-\bm I\|_{L^{2}\cap L^\infty\cap L^1(\R)}\leq M\ee^{-\eta s},\quad \text{and}\quad \|\bm Y_\pm-\bm I\|_{L^{2}\cap L^\infty\cap L^1(\R)}\leq M\ee^{-\eta s}
$$
hold for any $\sad\geq \sad_0$ and any $\tad \in [\tad_0,1/\tad_0]$. Also as a consequence of the small norm theory, we obtain the expression
$$
\bm Y(\zeta)=\bm I+\frac{1}{2\pi \ii}\int_\R \frac{\bm Y_-(x)(\bm J_{\bm Y}(x)-\bm I)}{x-\zeta} \dd x,\quad \zeta\in \C\setminus \R.
$$
We combine this last identity with the fact that $\bm J_{\bm Y}$ admits an analytic continuation in a neighborhood of $\R$, and learn that there exists $M=M(\msf t_0)>0$ for which
\begin{equation}\label{eq:boundbmY}
|\bm Y(\zeta)^{\pm 1}|\leq M,
\end{equation}
for every $\zeta\in \C$, $\sad\geq \sad_0$ and $\tad\in [\tad_0,1/\tad_0]$.

\begin{proof}[Proof of Proposition~\ref{prop:EstPhiccSing}]
For $\zeta\in \cSig_0=(0,\infty)$, we use \eqref{eq:IdentPhiccPhiaiY} and the definition of $\Phiai$ in 
$$
\Phiccp(\zeta)\bm E_{12}\Phiccp(\zeta)^{-1}=
2\pi \bm Y_+(\zeta)
\begin{pmatrix}
-\ii \ai(\zeta)\ai'(\zeta) & \ai'(\zeta)^2 \\ 
\ai(\zeta)^2 & \ii \ai(\zeta)\ai'(\zeta)
\end{pmatrix}
\bm Y_+(\zeta)^{-1}
$$
Using the bound \eqref{eq:boundbmY}, the continuity and the known asymptotics as $\zeta\to \infty$ of the Airy function and its derivative, the claim along $\cSig_0$ follows. 

The claim for $\zeta\in \cSig_j$ with $j=1,2,3$ follows in exactly the same explicit manner, we skip the details.
\end{proof}

\subsection{The non-asymptotic regime}\label{sec:Phiccnonasymreg}\hfill

In the non-asymptotic regime, we fix {\it any} $\tad_0\in (0,1)$ and $\sad_0>0$ and seek for bounds of certain entries of $\Phicc$ which are valid uniformly within the range
$$
|\sad|\leq \sad_0\quad \text{and}\quad \tad_0\leq \tad\leq \frac{1}{\tad_0}.
$$
For the next result, we recall the matrix norm introduced in \eqref{deff:matrixnorm}.

\begin{prop}\label{prop:EstPhiccNonasymp}
Fix any values $\tad_0\in (0,1)$ and $\sad_0>0$. There exist $M=M(\sad_0,\tad_0)>0$ and $\eta=\eta(\sad_0,\tad_0)>0$ for which the estimates
\begin{align*}
& \left|\Phiccp(\zeta)\bm E_{12}\Phiccp(\zeta)^{-1}\right|\leq M\ee^{-\eta \re(\zeta^{3/2})}, && \zeta\in \cSig_0, \\
& \left|\Phiccp(\zeta)\bm E_{21}\Phiccp(\zeta)^{-1}\right|\leq M\ee^{-\eta \re(\zeta^{3/2})}, && \zeta\in \cSig_1\cup\cSig_3, \quad \text{and}\\
& \left|\Phiccp(\zeta)\bm E_{22}\Phiccp(\zeta)^{-1}\right|\leq M|\zeta|^{1/2}, && \zeta\in \cSig_2.
\end{align*}
hold true uniformly for $|\sad|\leq \sad_0$ and $\tad_0\leq \tad \leq \tad_0^{-1}$.
\end{prop}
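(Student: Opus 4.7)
\medskip

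\noindent\textbf{Proof plan.} The strategy mirrors that of Proposition~\ref{prop:EstPhiccSing}: I would use the factorization \eqref{eq:IdentPhiccPhiaiY} to express $\Phicc$ in terms of the explicit Airy matrix $\Phiai$ and an auxiliary matrix $\bm Y$ analytic off $\R$, but the small-norm theory invoked in the singular regime has to be replaced by a compactness argument in the parameters $(\sad,\tad)$.

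The central step is to establish the uniform bound
\[
\sup_{\zeta\in\C}\,|\bm Y(\zeta)^{\pm 1}| \leq M(\sad_0,\tad_0), \qquad (\sad,\tad)\in [-\sad_0,\sad_0]\times[\tad_0,\tad_0^{-1}].
\]
For each fixed $(\sad,\tad)$, Proposition~\ref{prop:existencePhicc} grants the existence and uniqueness of $\Phicc$, and hence via \eqref{eq:IdentPhiccPhiaiY} of $\bm Y$, which is analytic off $\R$ and tends to $\bm I$ at infinity. Its jump matrix $\bm J_{\bm Y}$ extends analytically across $\R$ and depends continuously on $(\sad,\tad)$ in $L^1\cap L^2\cap L^\infty(\R)$. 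By standard $L^p$ stability for Riemann--Hilbert problems together with uniqueness of $\bm Y$, the map $(\sad,\tad)\mapsto \bm Y_\pm\in L^2(\R)\cap L^\infty(\R)$ is continuous, and then the Cauchy integral representation
\[
\bm Y(\zeta) = \bm I + \frac{1}{2\pi \ii}\int_\R \frac{\bm Y_-(x)(\bm J_{\bm Y}(x)-\bm I)}{x-\zeta}\,\dd x
\]
gives $|\bm Y(\zeta)|$ as a continuous function of $(\sad,\tad)$, uniform in $\zeta$ thanks to the analytic continuation of the kernel across $\R$. Compactness of the parameter set yields the constant $M$.

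Once $\bm Y^{\pm 1}$ is uniformly bounded, each of the three claimed estimates reduces to a bound on a conjugate $\Phiai(\zeta)\bm N(\zeta)\Phiai(\zeta)^{-1}$ on the corresponding arc $\cSig_j$, where $\bm N$ is either an elementary matrix $\bm E_{jk}$ or incorporates the triangular factor $\bm I\mp(1+\ee^{-\sad+\tad\zeta})\bm E_{21}$ appearing in \eqref{eq:IdentPhiccPhiaiY}. The required decay $\ee^{-\eta\re(\zeta^{3/2})}$ on $\cSig_0,\cSig_1,\cSig_3$ and the polynomial growth $|\zeta|^{1/2}$ on $\cSig_2$ then follow from the classical asymptotics of $\ai$ and $\ai'$ on the respective Stokes sectors, exactly as in the proof of Proposition~\ref{prop:EstPhiccSing}. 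The main obstacle is the uniform bound on $\bm Y$: since the non-asymptotic regime rules out the perturbative estimate of Section~\ref{sec:Phiccsingregime}, one must pass through continuity in parameters and compactness of $[-\sad_0,\sad_0]\times[\tad_0,\tad_0^{-1}]$ to extract the constant $M$.
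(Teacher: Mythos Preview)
Your approach is correct but takes a more elaborate route than the paper. You mirror the proof of Proposition~\ref{prop:EstPhiccSing} by going through the factorization \eqref{eq:IdentPhiccPhiaiY} and replacing the small-norm estimate on $\bm Y$ by a compactness argument in $(\sad,\tad)$; this works, and the reduction to Airy asymptotics then proceeds exactly as you describe.

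The paper bypasses the factorization entirely. It simply invokes the asymptotic condition $\Phicc$-3 (that is, \eqref{eq:AsymptPhicc}) directly: since the expansion $\Phicc(\zeta)=(\bm I+\Boh(\zeta^{-1}))\zeta^{\sp_3/4}\bm U_0^{-1}\ee^{-\frac{2}{3}\zeta^{3/2}\sp_3}$ holds uniformly up to the boundary values and uniformly for $(\sad,\tad)$ in compacts, one reads off immediately that, for instance, $\Phiccp(\zeta)\bm E_{12}\Phiccp(\zeta)^{-1}$ behaves like $\ee^{-\frac{4}{3}\re(\zeta^{3/2})}\zeta^{\sp_3/4}\bm U_0^{-1}\bm E_{12}\bm U_0\zeta^{-\sp_3/4}$ as $\zeta\to\infty$, which already gives the exponential decay with $\eta=4/3$. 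For bounded $\zeta$, continuity of $\Phiccp$ jointly in $(\zeta,\sad,\tad)$ over compacts finishes the job. This is shorter because it does not need the $\bm Y$-machinery or RHP stability in parameters; your route, on the other hand, makes the parallel with the singular regime explicit and gives a uniform treatment of both propositions.
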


\begin{proof}

The asymptotic behavior as $\zeta\to\infty$ in the {\bf RHP}--{$\Phicc$} is valid uniformly up to the boundary values $\Phiccpm$ as well, and also uniformly when the parameters $\sad$ and $\tad$ vary within compact sets, implying that
$$
|\Phiccp(\zeta)\bm E_{12}\Phiccp(\zeta)^{-1}|\leq \ee^{-\frac{4}{3}\re (\zeta^{3/2})} | \zeta^{\sp_3/4}\bm U_0^{-1}\bm E_{12}\bm U_0\zeta^{-\sp_3/4} |(1+\Boh(\zeta^{-1})),\quad \zeta\to\infty.
$$
Combined with the continuity of the boundary value $\Phiccp$ with respect to both $\zeta$ and also $\sad,\tad$, the first estimate follows. The remaining estimates are completely analogous.
\end{proof}

\section{Asymptotic analysis for the model problem with admissible data}\label{sec:asympanalymodelprobl}

We now carry out the asymptotic analysis as $\tauad\to +\infty$ of $\bm\Phi_\tauad$ introduced in \eqref{deff:Phit}. For that, we fix $\sad_0>0$ and $\tad_0\in (0,1)$ and work under the assumption that
\begin{equation}\label{scaling:UTregime}
\tauad\to +\infty\quad \text{with}\quad \sad \geq -\sad_0 \quad \text{and}\quad \tad_0\leq \tad \leq \frac{1}{\tad_0}.
\end{equation}
During this section, $\msf h_\tauad$ always denotes an admissible function in the sense of Definition~\ref{deff:admissibleh}, and $\bm\Phi_\tauad$ is the solution to the associated RHP. 

We also talk about uniformity of error terms in the parameter $\tad$ ranging on a compact interval $K\subset (0,\infty)$, and by this we mean the following. The solution $\bm\Phi_\tauad$ depends on the parameter $\tad$ via the derivative $\msf H'(0)=-\tad$, see Definition~\ref{deff:admissibleh}. We view $\msf H=\msf H_\tad$ as varying with $\tad$ while keeping all the remaining derivatives $\msf H^{(k)}(0)$, $k\neq 1$ fixed. By analyticity this determines $\msf H$ uniquely at $D_\delta(0)$, but not outside this disk. We then consider $\msf H_\tad$ outside $D_\delta(0)$ to be any extension from $D_\delta(0)$ that satisfies Definition~\ref{deff:admissibleh} with the additional requirement that the constants $\eta, \widehat\eta$ and $\epsilon$ in (iii) may depend on $K$ but are independent of $\tad\in K$. Of course, for each $\msf H_\tad$ extended this way there corresponds a solution $\bm\Phi_\tauad$ of the associated RHP. By uniformity in $\tad\in K$ we mean that the error may depend on $K$ and the corresponding values $\eta, \widehat\eta$ and $\epsilon$, but is valid for any $\bm\Phi_\tauad$ obtained with an extension $\msf H_\tad$ constructed with the explained requirement.

The asymptotic analysis itself makes use of somewhat standard arguments and objects in the RHP literature. Some consequences of this asymptotic analysis will be needed later, and we now state them. 

The first such consequence is the existence of a solution with asymptotic formulas relating quantities of interest with the corresponding quantities in the id-PII RHP.

\begin{theorem}\label{thm:PhitauAsympt}
Fix an admissible function $\msf h_\tauad$ in the sense of Definition~\ref{deff:admissibleh}. There exists $\tauad_0=\tauad_0(\sad_0,\tad_0)>0$ for which for any $\tauad\geq \tauad_0$ and any $\sad,\tad$ as in \eqref{scaling:UTregime}, the RHP for $\bm\Phi(\cdot\mid \msf h_\tauad)$ admits a unique solution $\bm\Phi=\bm\Phi_\tauad$ as in \eqref{deff:Phit}.

Furthermore, for any $\kappa\in (0,1)$, the coefficient $\bm\Phi^{(1)}=\bm\Phi^{(1)}_\tauad $ in the asymptotic condition \eqref{eq:ModelRHPAsymp} satisfies
\begin{equation}\label{eq:asympPhin1Phi01}
\bm\Phi^{(1)}_\tauad=\Phicc^{(1)}+\Boh(\tauad^{-\kappa}),\quad \tau\to +\infty,
\end{equation}
where $\bm\Phi^{(1)}_0$ is as in \eqref{deff:Phikpz_residue} and the error term is uniform for $\sad,\tad$ as in \eqref{scaling:UTregime}. Also, still for $\kappa\in (0,1)$ the asymptotic formula
\begin{equation}\label{eq:asympPhitauPhi0}
\bm\Phi_{\tauad,+}(x)=\left(\bm I+\Boh\left(\frac{1}{\tauad^{\kappa}(1+|x|)}\right)\right)\Phiccp(x),\quad \tauad \to +\infty
\end{equation}
holds true uniformly for $x\in \cSig$ with $|x|\leq \tau^{(1-\kappa)/2}$, and uniformly for $\sad,\tad$ as in \eqref{scaling:UTregime}. 
\end{theorem}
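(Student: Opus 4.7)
\textbf{Proof proposal for Theorem~\ref{thm:PhitauAsympt}.} The plan is to compare $\bm\Phi_\tauad$ with $\Phicc$ directly through a ratio and apply the standard small-norm Riemann--Hilbert framework. Since we do not yet know that $\bm\Phi_\tauad$ exists, I would reverse the logic: formally setting $\bm R_\tauad \deff \bm\Phi_\tauad \Phicc^{-1}$, one computes that $\bm R_\tauad$ would have no jump at infinity (indeed $\bm R_\tauad = \bm I + \Boh(1/\zeta)$, since $\bm\Phi_\tauad$ and $\Phicc$ share the leading asymptotic \eqref{eq:ModelRHPAsymp}) and, across $\cSig$, would satisfy a jump of the form $\bm J_{\bm R_\tauad} - \bm I = \Phiccp\bigl(\bm J_{\Phicc}^{-1}\bm J_{\bm\Phi_\tauad} - \bm I\bigr)\Phiccp^{-1}$. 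This expression depends only on $\msf h_\tauad,\msf h_0$ and the known matrix $\Phicc$, so I would use it to \emph{define} the RHP for $\bm R_\tauad$, solve it by small-norm theory, and then set $\bm\Phi_\tauad \deff \bm R_\tauad\Phicc$.

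The core of the argument lies in establishing $\|\bm J_{\bm R_\tauad}-\bm I\|_{L^1\cap L^2\cap L^\infty(\cSig)} = \Boh(\tauad^{-\kappa})$. First I would compute $\bm J_{\Phicc}^{-1}\bm J_{\bm\Phi_\tauad}-\bm I$ arc by arc, finding that on $\cSig_0$ it is $\bigl(\tfrac{1}{1+\ee^{-\msf h_\tauad}}-\tfrac{1}{1+\ee^{-\msf h_0}}\bigr)\bm E_{12}$, on $\cSig_1\cup\cSig_3$ it is $\bigl(\ee^{-\msf h_\tauad}-\ee^{-\msf h_0}\bigr)\bm E_{21}$, and on $\cSig_2$ it is a diagonal combination of $\bm E_{11}$ and $\bm E_{22}$ with entries controlled by $|\ee^{-\msf h_\tauad}-\ee^{-\msf h_0}|$. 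In each case, the $\Phiccp$-conjugate is precisely the kind of quantity estimated in Propositions~\ref{prop:EstPhiccSing} and \ref{prop:EstPhiccNonasymp} (these two propositions together cover the full range $\sad\geq -\sad_0$), with the $\cSig_2$ contribution handled by writing $\bm E_{11}=\bm I-\bm E_{22}$.

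To control the scalar differences of the $\msf h$-functions, I would split $\cSig$ into an inner region $\{|\zeta|\leq \tauad^{(1-\kappa)/2}\}$ and an outer region. On the inner piece, the analyticity of $\msf H$ near the origin (Definition~\ref{deff:admissibleh}-(ii)) and the normalization $\msf H'(0)=-\tad$ produce the Taylor estimate
\begin{equation*}
\msf h_\tauad(\zeta)-\msf h_0(\zeta)=\tauad\msf H(\zeta/\tauad)+\tad\zeta = \Boh(\zeta^2/\tauad),
\end{equation*}
so $|\zeta|^2/\tauad \leq \tauad^{-\kappa}$ and differentiability of $x\mapsto 1/(1+\ee^{-x})$ convert this into $\tauad^{-\kappa}$-smallness of the scalar coefficients, while the $\Phiccp$-conjugation factors remain bounded (possibly by $|\zeta|^{1/2}$ on $\cSig_2$, still integrable against the exponential decay). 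On the outer piece, the exponential lower bounds in Definition~\ref{deff:admissibleh}-(iii) and the explicit linearity of $\msf h_0$ give super-polynomial pointwise decay of $|\bm J_{\bm R_\tauad}-\bm I|$, which is more than enough. Combining the two regions yields the claimed norm estimate uniformly in $\sad\geq-\sad_0,\ \tad_0\leq\tad\leq 1/\tad_0$.

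Once the small-norm estimate is in place, the standard Cauchy-integral representation $\bm R_\tauad(\zeta)=\bm I + \frac{1}{2\pi\ii}\int_\cSig \bm R_{\tauad,-}(s)(\bm J_{\bm R_\tauad}(s)-\bm I)(s-\zeta)^{-1}\dd s$ yields both the existence/uniqueness of $\bm R_\tauad$ and the pointwise bound $|\bm R_\tauad(\zeta)-\bm I|\leq C\tauad^{-\kappa}/(1+|\zeta|)$. Reading off the $1/\zeta$ coefficient at infinity gives $\bm\Phi^{(1)}_\tauad-\Phicc^{(1)} = \Boh(\tauad^{-\kappa})$, which is \eqref{eq:asympPhin1Phi01}; the boundary-value identity $\bm\Phi_{\tauad,+}(x)=\bm R_{\tauad,+}(x)\Phiccp(x)$, together with the $L^\infty$ control on $\bm R_{\tauad,+}-\bm I$ extracted from the same representation (with a little care near $\cSig$), yields \eqref{eq:asympPhitauPhi0}. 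The main obstacle is the absence of analyticity of the jump contour off $\cSig$: we cannot deform the contour to put the exponential $\ee^{-\msf h_\tauad}$ in a more convenient location, so all decay estimates must be carried out directly on $\cSig$. This forces the delicate inner/outer splitting at $\tauad^{(1-\kappa)/2}$ and is also the root reason the error cannot be pushed from $\Boh(\tauad^{-\kappa})$ to $\Boh(\tauad^{-2/3})$ without extra hypotheses on $\msf H$.
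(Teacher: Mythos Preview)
Your proposal is correct and follows essentially the same route as the paper: the ratio $\bm R_\tauad=\bm\Phi_\tauad\Phicc^{-1}$ is exactly the paper's $\bm\Psi_\tauad$ from \eqref{eq:transfPhitauPsitau}, the arc-by-arc jump computation matches \eqref{eq:jumpPsimodel}, the inner/outer splitting at radius $\tauad^{(1-\kappa)/2}=\tauad^\nu$ together with the Taylor bound $\msf h_\tauad-\hcc=\Boh(\zeta^2/\tauad)$ is precisely the content of Lemmas~\ref{lem:convhtauhcc}--\ref{lem:estJpsiadcSig2}, and the extraction of \eqref{eq:asympPhin1Phi01}--\eqref{eq:asympPhitauPhi0} from the Cauchy representation is carried out in Theorem~\ref{thm:PsitauSmallnorm} and the subsequent proof. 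The only point where the paper adds detail beyond your sketch is the $(1+|x|)^{-1}$ decay in \eqref{eq:asympPhitauPhi0}: there one deforms $\cSig$ locally to small arcs $C^\pm_\varepsilon(x)$ and further splits the Cauchy integral over $\{|s-x|\gtrless |x|/2\}$, which is the ``little care near $\cSig$'' you allude to.
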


The second consequence connects the solution $\bm\Phi_\tauad$ directly with the statistics $\msf Q$. For its statement, set
$$
\bm\Delta_\tau(x)\deff \bm I+(1+\ee^{-\msf h_\tau(x)})\chi_{(-\infty,0)}(x)\bm E_{21},\quad x\in \R.
$$

\begin{theorem}\label{thm:integralPhitauPhicc}
Fix $a,b>0$, $\sad_0>0$ and $\tad_0\in (0,1)$. For any $\kappa\in (0,1)$, the estimate
\begin{multline*}
\frac{1}{2\pi \ii}
\int_{\sad}^{\infty}\int_{-\tauad a}^{\tauad b}\frac{\ee^{\msf h_\tauad(x\mid u)}}{\left(1+\ee^{\msf h_\tauad(x\mid u)}\right)^2}\left[ \bm \Delta_\tauad(x\mid u)^{-1}\bm \Phi_{\tauad,+}(x\mid u)^{-1}\left(\bm \Phi_{\tauad,+} \bm\Delta_\tauad\right)'(x\mid u)\right]_{21}\dd x \dd u
=\\
-\log\Qcc(-\sad/\tad,\tad^3)
+\Boh(\tau^{-\kappa})
\end{multline*}
holds as $\tau\to +\infty$, uniformly for $\sad\geq -\sad_0$ and $\tad_0\leq \tad\leq 1/\tad_0$.
\end{theorem}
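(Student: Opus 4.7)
The plan is to identify the left-hand side as a $\tauad\to\infty$ approximation to a closed-form double integral for $-\log\Qcc$ obtained from the Cafasso--Claeys identity \eqref{eq:QccIntRepPhicc}, and then control the error using Theorem~\ref{thm:PhitauAsympt}. Rewriting \eqref{eq:QccIntRepPhicc} in $(\sad,\tad)$-coordinates with $\skpz=-\sad/\tad$ and applying the chain rule gives
\begin{equation*}
\partial_\sad \log\Qcc(-\sad/\tad,\tad^3) = -\frac{1}{2\pi\ii}\int_{-\infty}^\infty \frac{\ee^{\msf h_0(x\mid \sad,\tad)}}{\bigl(1+\ee^{\msf h_0(x\mid \sad,\tad)}\bigr)^2}\bigl[\Dcc^{-1}\Phiccp^{-1}(\Phiccp\Dcc)'\bigr]_{21}(x)\dd x.
\end{equation*}
Integrating this identity in $\sad$ from $\sad$ to $+\infty$, and using that $\Qcc(-u/\tad,\tad^3)\to 1$ as $u\to+\infty$ (which is immediate from the Fredholm determinant representation on $L^2(-s,\infty)$ as $s\to-\infty$), expresses $-\log\Qcc(-\sad/\tad,\tad^3)$ exactly as the analogous double integral built from $(\Phicc,\Dcc,\msf h_0)$ on $[\sad,\infty)\times\R$. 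Hence it suffices to show that this limiting double integral differs from the one in the statement by $\Boh(\tauad^{-\kappa})$.

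I would split the $x$-integration into a bulk range $\{|x|\leq R_\tauad\}$, with $R_\tauad\deff\tauad^{(1-\kappa)/2}$ as dictated by Theorem~\ref{thm:PhitauAsympt}, and a tail. On the bulk that theorem supplies the estimate $\bm\Phi_{\tauad,+}(x)=(\bm I+\Boh(\tauad^{-\kappa}(1+|x|)^{-1}))\Phiccp(x)$, while the expansion $\msf H(w)=-\tad w+\Boh(w^2)$ from Definition~\ref{deff:admissibleh}(ii) yields $\msf h_\tauad(x\mid u)=u-\tad x+\Boh(x^2/\tauad)$, so that both $\bm\Delta_\tauad-\Dcc$ and the difference of fermi factors are $\Boh(x^2/\tauad)$ on the bulk. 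To upgrade this pointwise data into an estimate on the whole integrand one needs control on the derivative $(\bm\Phi_{\tauad,+}\bm\Delta_\tauad)'$, which I would obtain by applying Cauchy's integral formula to the analytic continuation of $\bm\Phi_\tauad\bm\Delta_\tauad$ into one of the fixed sectors bounded by consecutive arcs of $\cSig$, on circles whose radius shrinks like $(1+|x|)^{-1}$ to remain inside the sector. Weighted by the fermi factor, the resulting pointwise errors integrate to $\Boh(\tauad^{-\kappa})$ uniformly for $u\in[\sad,\infty)$.

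The tail $R_\tauad<|x|\leq\tauad\max\{a,b\}$ and the outer integration in $u$ are handled via the exponential decay of the fermi factor: Definition~\ref{deff:admissibleh}(iii) forces $\re\msf h_\tauad(x\mid u)\geq u+\eta|x|$ on $\cSig_1\cup\cSig_2\cup\cSig_3$ and $\msf h_\tauad(x\mid u)\leq u-\eta x$ on $\cSig_0$, from which an elementary analysis shows that $\int_\sad^{\infty}\ee^{\msf h_\tauad}\bigl(1+\ee^{\msf h_\tauad}\bigr)^{-2}\dd u=\Boh(\ee^{-\eta|x|})$ uniformly in $x$. Combined with the polynomial-growth bound on $\bm\Phi_{\tauad,+}$ coming from the asymptotic condition \eqref{eq:ModelRHPAsymp}, this produces a $\Boh(\ee^{-\eta R_\tauad})=\boh(\tauad^{-\kappa})$ contribution in the tail, and the analogous tail of the limiting double integral is handled identically via the decay estimates of Propositions~\ref{prop:EstPhiccSing}--\ref{prop:EstPhiccNonasymp}. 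The principal obstacle I anticipate is in the derivative step: Theorem~\ref{thm:PhitauAsympt} supplies pointwise information only on $\bm\Phi_{\tauad,+}$ itself, so the Cauchy argument for $(\bm\Phi_{\tauad,+}\bm\Delta_\tauad)'$ must be calibrated so that shrinking the contour with $|x|$ does not spoil the $\tauad^{-\kappa}$ rate once all errors are integrated over a range of length $R_\tauad$ against the fermi factor.
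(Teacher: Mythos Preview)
Your approach is essentially the same as the paper's: both split at $R_\tauad=\tauad^{(1-\kappa)/2}$, invoke Theorem~\ref{thm:PhitauAsympt} on the bulk, kill the tails via the exponential decay built into Definition~\ref{deff:admissibleh}(iii) together with the (differentiable) asymptotic expansion \eqref{eq:ModelRHPAsymp}, and then integrate in $u$ using \eqref{eq:QccIntRepPhicc} and $\Qcc(s,T)\to 1$ as $s\to-\infty$. The paper packages the tail bounds into a separate preparatory lemma (giving $\Boh(\ee^{-u-\eta\tauad^\nu})$ for both integrands), which also cleanly furnishes the $\ee^{-u}$ factor needed for integrability in $u$; your Fubini-style bound $\int_\sad^\infty \ee^{\msf h_\tauad}(1+\ee^{\msf h_\tauad})^{-2}\dd u=\Boh(\ee^{-\eta|x|})$ achieves the same end.

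On the derivative issue you flag: the paper's bulk comparison is indeed terse on this point, but the natural route within its framework is to recall that $\bm\Phi_\tauad=\bm\Psi_\tauad\Phicc$ with $\bm\Psi_\tauad$ given by the small-norm integral representation \eqref{eq:IntReprPsitau}, which can be differentiated directly to yield $\bm\Psi_\tauad'=\Boh(\tauad^{-\kappa})$ with the same uniformity. Your Cauchy-formula argument with radii shrinking like $(1+|x|)^{-1}$ would also work and gives the same order, since the resulting $\Boh(\tauad^{-\kappa})$ error (even without extra decay in $|x|$) is integrable against the fermi factor; so your stated concern does not actually obstruct the proof.
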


For the proof of these results, we compare $\bm\Phi_\tauad$ with the solution $\Phicc$ of the id-PII RHP via the Deift-Zhou nonlinear steepest descent method. The required asymptotic analysis itself is carried out Section~\ref{sec:AsymAnalidRHP}, and the proofs of Theorems~\ref{thm:PhitauAsympt} and \ref{thm:integralPhitauPhicc} are completed in Section~\ref{sec:proofadmRHPtakeI}.

\subsection{Asymptotic analysis}\label{sec:AsymAnalidRHP}\hfill

For $\Phicc$ as introduced in \eqref{eq:PhiccPhikpz} and whose properties were discussed in Section~\ref{sec:KPZRHPKPZ}, we perform the transformation
\begin{equation}\label{eq:transfPhitauPsitau}
\bm\Psi_\tauad(\zeta)=\bm\Phi_\tauad(\zeta)\Phicc(\zeta)^{-1},\quad \zeta\in\C\setminus\cSig.
\end{equation}
Then $\bm\Psi_\tauad$ satisfies the following RHP.

\begin{enumerate}[\bf $\bm \Psi_\tauad$-1.]
\item The matrix $\bm \Psi_\tauad:\C\setminus \cSig\to \C^{2\times 2}$ is analytic.
\item Along the interior of the arcs of $\cSig$ the function $\bm \Psi_\tauad$ admits continuous boundary values $\bm \Psi_{\tauad,\pm}$
 related by $\bm \Psi_{\tauad,+}(\zeta)=\bm \Psi_{\tauad,-}(\zeta)\bm J_{\bm \Psi_\tauad}(\zeta)$, $\zeta\in \cSig$. With
 $$
\lcc(\zeta)\deff \frac{1}{1+\ee^{-\hcc(\zeta)}}, \quad \lambda_\tauad(\zeta)\deff \frac{1}{1+\ee^{-\msf h_\tauad(\zeta)}},
 $$
 where $\hcc$ is as in \eqref{deff:h0}, the jump matrix $\bm J_{\bm \Psi_\tauad}$ is
\begin{equation}\label{eq:jumpPsimodel}
\bm J_{\bm \Psi_\tauad}(\zeta)\deff 
\begin{dcases}
\bm I+\left(\lambda_\tauad(\zeta)-\lcc(\zeta)\right) \Phiccp(\zeta)\bm E_{12}\Phiccp(\zeta)^{-1}, & \zeta \in  \cSig_0 , \\
\bm I+\left(\frac{1}{\lambda_\tauad(\zeta)}-\frac{1}{\lcc(\zeta)}\right)\Phiccp(\zeta)\bm E_{21}\Phiccp(\zeta)^{-1}, & \zeta \in \cSig_1\cup\cSig_3, \\
\frac{\lambda_\tauad(\zeta)}{\lcc(\zeta)}\Phiccp(\zeta)\bm E_{11}\Phiccp(\zeta)^{-1}+\frac{\lcc(\zeta)}{\lambda_\tauad(\zeta)}\Phiccp(\zeta)\bm E_{22}\Phiccp(\zeta)^{-1}, & \zeta\in \cSig_2.
\end{dcases}
\end{equation}
\item For $\bm \Phi_\tauad^{(1)}$ and $\Phicc^{(1)}$ the residues at $\infty$ of $\bm \Phi_\tauad$ and $\Phicc$, respectively, the matrix $\bm\Psi_\tauad$ has the asymptotic behavior
\begin{equation}\label{eq:Phitauad1Phicc1}
\bm \Psi_\tauad(\zeta)=\bm I+\frac{1}{\zeta}(\bm \Phi_\tauad^{(1)}-\Phicc^{(1)})+\Boh(1/\zeta^2)\qquad \text{as}\quad \zeta\to\infty.
\end{equation}

\item The matrix $\bm \Psi_\tauad$ remains bounded as $\zeta \to 0$.
\end{enumerate}

The next step is to verify that the jump matrix decays to the identity in the appropriate norms. The terms in the jump that come from $\Phicc$ are precisely the ones we already estimated in Sections~\ref{sec:Phiccsingregime} and \ref{sec:Phiccnonasymreg}, so it remains to estimate the terms involving the $\lambda$-functions. The basic needed estimate is the following lemma.

\begin{lemma}\label{lem:convhtauhcc}
Fix $\nu\in (0,1/2)$ and $\tad_0\in (0,1)$. The estimate
$$
\ee^{\hcc(\zeta)-\msf h_\tauad(\zeta)}=1+\Boh(\zeta^2/\tau), \quad \tauad\to\infty,
$$
holds true uniformly for $|\zeta|\leq \tauad^\nu$ and uniformly for $\tad_0\leq \tad\leq 1/\tad_0$, where the error term is independent of $\sad\in \R$.
\end{lemma}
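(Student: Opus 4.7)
The plan is a direct Taylor expansion argument. By the definitions \eqref{deff:h0} and the form of $\msf h_\tauad$ in Definition~\ref{deff:admissibleh}, we have
\begin{equation*}
\hcc(\zeta)-\msf h_\tauad(\zeta) = -\tad\zeta-\tauad \msf H(\zeta/\tauad),
\end{equation*}
so the variable $\sad$ cancels, which gives uniformity in $\sad$ for free. It therefore suffices to control $-\tad w-\msf H(w)$ near $w=0$.

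The key step is the analyticity of $\msf H$ on the disk $D_\delta(0)$ (item (ii) of Definition~\ref{deff:admissibleh}) together with $\msf H(0)=0$ and $\msf H'(0)=-\tad$. Writing the Taylor expansion, there is $R(w)=\msf H(w)+\tad w$ with $R(0)=R'(0)=0$, and on any closed subdisk of radius $\delta'<\delta$ the Cauchy estimates yield a constant $M=M(\delta')>0$ with $|R(w)|\leq M|w|^2$. Because the higher derivatives $\msf H^{(k)}(0)$ for $k\geq 2$ are held fixed as $\tad$ varies (this is precisely the convention for uniformity in $\tad$ described at the start of Section~\ref{sec:asympanalymodelprobl}), the constant $M$ can be chosen independent of $\tad\in[\tad_0,1/\tad_0]$.

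Substituting $w=\zeta/\tauad$ and multiplying by $\tauad$ gives
\begin{equation*}
\tauad \msf H(\zeta/\tauad)=-\tad\zeta+\tauad R(\zeta/\tauad),\qquad |\tauad R(\zeta/\tauad)|\leq \frac{M|\zeta|^2}{\tauad},
\end{equation*}
valid as long as $|\zeta/\tauad|\leq \delta'$. Since $\nu<1/2<1$, for $|\zeta|\leq \tauad^\nu$ and $\tauad$ large enough we have both $|\zeta/\tauad|\leq \tauad^{\nu-1}\leq \delta'$ and $|\zeta^2/\tauad|\leq \tauad^{2\nu-1}\to 0$. Combining these,
\begin{equation*}
\hcc(\zeta)-\msf h_\tauad(\zeta)=-\tauad R(\zeta/\tauad)=\Boh(\zeta^2/\tauad),
\end{equation*}
uniformly in the stated ranges of $\sad$ and $\tad$.

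Finally, exponentiating and using $\ee^{x}=1+\Boh(x)$ for $x\to 0$, we obtain $\ee^{\hcc(\zeta)-\msf h_\tauad(\zeta)}=1+\Boh(\zeta^2/\tauad)$, completing the proof. There is no real obstacle: the only point worth stressing is that the restriction $\nu<1/2$ is used twice, once to guarantee that $\zeta/\tauad$ stays in the disk of analyticity of $\msf H$, and once to ensure the error $\zeta^2/\tauad$ itself is $o(1)$ so that the linearization of the exponential is justified.
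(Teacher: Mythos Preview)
Your proof is correct and follows essentially the same approach as the paper: Taylor expand $\msf H$ at the origin using its analyticity on $D_\delta(0)$ and the values $\msf H(0)=0$, $\msf H'(0)=-\tad$, then exponentiate. You have simply made explicit several points the paper leaves implicit (the cancellation of $\sad$, the two roles of $\nu<1/2$, the uniformity in $\tad$ via the fixed higher derivatives, and the final exponentiation), but the argument is the same.
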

\begin{proof}
The Definition~\eqref{deff:admissibleh} of admissibility of $\msf h_\tau$ ensures that for $\tauad$ sufficiently large, we can expand the term $\msf H(\zeta/\tau)$ in power series near the origin and obtain the expansion
$$
\msf h_\tauad(\zeta)=\sad -\tad \zeta +\Boh(\zeta^2/\tauad),\quad \tauad\to +\infty,
$$
valid uniformly for $|\zeta|\leq \tauad^\nu$, $\tad_0\leq \tad\leq 1/\tad_0$, and with error independent of $\sad\in \R$. Recalling that $\hcc(\zeta)=\sad-\tad\zeta$, the proof is complete.
\end{proof}

We are now able to prove the appropriate convergence of $\bm J_{\bm \Psi_\tauad}$ to the identity matrix. We split the analysis into three lemmas, corresponding to different pieces of the contour $\cSig$. In the results that follow we use the matrix norm notations introduced in \eqref{deff:matrixnorm}--\eqref{deff:matrixLpqnorm}.

\begin{lemma}\label{lem:estJpsiadcSig0}
Fix $\tad_0\in (0,1)$, $\sad_0>0$ and $\nu\in (0,1/2)$. There exist $\tauad_0=\tauad_0(\tad_0,\sad_0,\nu)>0$, $M=M(\tad_0,\sad_0,\nu)>0$ and $\eta=\eta(\tad_0,\sad_0,\nu)>0$ for which the inequality
$$
\| \bm J_{\bm\Psi_\tauad}-\bm I \|_{L^1\cap L^2\cap L^\infty(\cSig_0)}\leq M\ee^{-\sad} \max \left\{ \tauad^{-1+2\nu},\ee^{-\eta \tauad^{3\nu/2}} \right\}
$$
holds true for any $\tauad\geq \tauad_0$, $\sad\geq -\sad_0$ and $\tad\in [\tad_0,1/\tad_0]$.
\end{lemma}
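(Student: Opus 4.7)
The starting point is to make the jump on $\cSig_0$ explicit,
\[
\bm J_{\bm\Psi_\tauad}(\zeta)-\bm I = \bigl(\lambda_\tauad(\zeta)-\lcc(\zeta)\bigr)\,\Phiccp(\zeta)\bm E_{12}\Phiccp(\zeta)^{-1},
\]
and to factor the scalar prefactor through the identity
\[
\lambda_\tauad(\zeta)-\lcc(\zeta)=\frac{\ee^{-\sad}\bigl(\ee^{\tad\zeta}-\ee^{-\tauad\msf H(\zeta/\tauad)}\bigr)}{(1+\ee^{-\msf h_\tauad(\zeta)})(1+\ee^{-\hcc(\zeta)})},
\]
valid along $\cSig_0\subset\R$, whose denominator is bounded below by $1$. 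Combined with the estimate $|\Phiccp\bm E_{12}\Phiccp^{-1}|\leq M\ee^{-\eta\zeta^{3/2}}$ on $\cSig_0$, obtained by patching Propositions~\ref{prop:EstPhiccSing} and \ref{prop:EstPhiccNonasymp} so as to cover the whole range $\sad\geq -\sad_0$, this reduces matters to controlling the scalar $\ee^{-\sad}\bigl|\ee^{\tad\zeta}-\ee^{-\tauad\msf H(\zeta/\tauad)}\bigr|\,\ee^{-\eta\zeta^{3/2}}$ in $L^1\cap L^2\cap L^\infty(\cSig_0)$. I would then split the integration as $\cSig_0=[0,\tauad^\nu]\cup[\tauad^\nu,\infty)$.

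On the bounded piece $[0,\tauad^\nu]$, Lemma~\ref{lem:convhtauhcc} provides $\ee^{-\tauad\msf H(\zeta/\tauad)}=\ee^{\tad\zeta}\bigl(1+\Boh(\zeta^2/\tauad)\bigr)$ uniformly, leading to the pointwise bound
\[
|\bm J_{\bm\Psi_\tauad}(\zeta)-\bm I|\leq C\,\ee^{-\sad}\,\frac{\zeta^2}{\tauad}\,\ee^{\tad\zeta-\eta\zeta^{3/2}}.
\]
The function $\zeta\mapsto \zeta^2\ee^{\tad\zeta-\eta\zeta^{3/2}}$ is uniformly bounded and integrable on $[0,\infty)$ when $\tad\in[\tad_0,1/\tad_0]$, so each of the three norms on this piece is at most $C\ee^{-\sad}/\tauad$, which in particular is $\leq C\ee^{-\sad}\tauad^{2\nu-1}$.

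On the far piece $[\tauad^\nu,\infty)$, the admissibility bounds in Definition~\ref{deff:admissibleh}(iii) yield $\eta\zeta\leq -\tauad\msf H(\zeta/\tauad)\leq \wh\eta\,\tauad^{\epsilon-1/2}\zeta^{3/2-\epsilon}$, so
\[
\bigl|\ee^{\tad\zeta}-\ee^{-\tauad\msf H(\zeta/\tauad)}\bigr|\leq \ee^{\tad\zeta}+\ee^{\wh\eta\,\tauad^{\epsilon-1/2}\zeta^{3/2-\epsilon}}.
\]
The decisive observation is that, for $\zeta\geq\tauad^\nu$, one has $\wh\eta\,\tauad^{\epsilon-1/2}\zeta^{3/2-\epsilon}\leq \wh\eta\,\tauad^{\epsilon(1-\nu)-1/2}\zeta^{3/2}$, and the prefactor tends to $0$ as $\tauad\to\infty$ since $\nu<1/2$ and $\epsilon\in(0,1/2]$ make the exponent $\epsilon(1-\nu)-1/2$ strictly negative. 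Choosing $\tauad_0$ large enough that this prefactor is at most $\eta/2$, the growth is absorbed into the Airy-type decay: $\ee^{\wh\eta\,\tauad^{\epsilon-1/2}\zeta^{3/2-\epsilon}-\eta\zeta^{3/2}}\leq \ee^{-\eta\zeta^{3/2}/2}$, and similarly $\ee^{\tad\zeta-\eta\zeta^{3/2}}\leq \ee^{-\eta\zeta^{3/2}/2}$ on $[\tauad^\nu,\infty)$ once $\tauad_0$ is large. Hence $|\bm J_{\bm\Psi_\tauad}-\bm I|(\zeta)\leq C\ee^{-\sad}\ee^{-\eta\zeta^{3/2}/2}$ on the far piece, and a standard Laplace-type estimate at $\zeta=\tauad^\nu$ yields $L^1$, $L^2$, and $L^\infty$ bounds all of order $C\ee^{-\sad}\ee^{-\eta'\tauad^{3\nu/2}}$ for some $\eta'>0$. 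Adding the two pieces and relabeling $\eta$ produces the claim.

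The main obstacle is the far-range estimate. Without the sub-cubic lower bound $\msf H(w)>-\wh\eta\,w^{3/2-\epsilon}$ with some $\epsilon>0$, the growth of $\ee^{-\tauad\msf H(\zeta/\tauad)}$ could overtake the Airy-type decay $\ee^{-\eta\zeta^{3/2}}$ extracted from $\Phiccp\bm E_{12}\Phiccp^{-1}$; this is exactly where Definition~\ref{deff:admissibleh}(iii) is used, and the quantitative trade-off it provides pins down both the cutoff regime $\nu<1/2$ and the upper bound $\epsilon\leq 1/2$ on the sub-cubic exponent.
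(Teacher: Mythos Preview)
Your proof is correct and follows essentially the same route as the paper: split $\cSig_0$ at $\tauad^\nu$, use Lemma~\ref{lem:convhtauhcc} on the bounded piece, invoke the admissibility bound $\msf H(w)>-\wh\eta\,w^{3/2-\epsilon}$ on the far piece, and combine with the $\Phiccp\bm E_{12}\Phiccp^{-1}$ decay from Propositions~\ref{prop:EstPhiccSing}--\ref{prop:EstPhiccNonasymp}. The only cosmetic differences are that the paper keeps the cruder $\tauad^{2\nu-1}$ bound on the near piece (rather than your sharper $\tauad^{-1}$), and on the far piece simply writes the combined growth exponent as $\alpha=\max\{1,3/2-\epsilon\}<3/2$ and absorbs it directly into $\ee^{-\eta\zeta^{3/2}}$, instead of your explicit reduction via $\tauad^{\epsilon(1-\nu)-1/2}$.
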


\begin{proof}
Because both $\msf h_\tauad$ and $\hcc$ are real-valued along the real line, the inequality
$$
\left|\lambda_\tauad(\zeta)-\lcc(\zeta)\right|=\frac{|\ee^{-\msf h_\tauad(\zeta)}-\ee^{-\hcc(\zeta)}|}{(1+\ee^{-\msf h_\tauad(\zeta)})(1+\ee^{-\hcc(\zeta)})}\leq 
|\ee^{-\msf h_\tauad(\zeta)}-\ee^{-\hcc(\zeta)}|
$$
is immediate. For $0\leq \zeta\leq \tauad^\nu$, we then use Lemma~\ref{lem:convhtauhcc} and the explicit expression for $\hcc$ in \eqref{deff:h0} and obtain
\begin{equation}\label{eq:estlambalcccSig01}
\left|\lambda_\tauad(\zeta)-\lcc(\zeta)\right|=\Boh\left(\frac{\ee^{-\sad+\tad \zeta}}{\tauad^{1-2\nu}}  \right).
\end{equation}
For $\zeta\geq \tauad^\nu$, we instead use that both $\msf h_\tauad$ and $\hcc$ are real-valued along the positive axis and write
$$
|\lambda_\tauad(\zeta)-\lcc(\zeta)| \leq \left|\lambda_\tauad(\zeta)-1  \right| +\left|\lambda_\tauad(\zeta)-1  \right| =\frac{\ee^{-\sad+\tad \zeta}}{1+\ee^{-\sad+\tad\zeta}}+\frac{\ee^{-\sad-\tauad\msf H(\zeta/\tauad)}}{1+\ee^{-\sad-\tauad\msf H(\zeta/\tauad)}}\leq \ee^{-\sad}\left(\ee^{\tad \zeta}+\ee^{-\tauad\msf H(\zeta/\tauad)}\right).
$$
From Definition~\ref{deff:admissibleh}--(iii) we bound $\ee^{-\tauad\msf H(\zeta/\tauad)}\leq \ee^{\wh \eta\zeta^{3/2-\epsilon}}$ and simplify the last inequality to
\begin{equation}\label{eq:estlambalcccSig02}
|\lambda_\tauad(\zeta)-\lcc(\zeta)| \leq \ee^{-\sad}\ee^{\tilde \eta \zeta^{\alpha}},\quad \zeta\geq \tauad^\nu, \; \alpha \deff \max \{1,3/2-\epsilon\}<\frac{3}{2} ,
\end{equation}
for a new value $\tilde \eta>0$.

Recall that $\bm J_{\bm\Psi_\tauad}$ was given in \eqref{eq:jumpPsimodel}. We use \eqref{eq:estlambalcccSig01} and \eqref{eq:estlambalcccSig02} in combination with Propositions~\ref{prop:EstPhiccSing} and \ref{prop:EstPhiccNonasymp} to get the existence of a value $\tauad_0>0$ for which
\begin{equation}\label{eq:PtwEstPhitauadcSig0}
|\bm J_{{\bm \Psi}_\tauad}(\zeta)-\bm I| \leq M \ee^{-\sad}
\ee^{-\eta \zeta^{3/2}} \left( \chi_{(0,\tauad^{\nu})}(\zeta) \frac{\ee^{\tad \zeta}}{\tauad^{1-2\nu}}+\ee^{\tilde \eta \zeta^\alpha}\chi_{(\tauad^{\nu},+\infty)}(\zeta)\right),\quad \tauad\geq \tauad_0,
\end{equation}
where $\eta>0,M>0$ may depend on $\sad_0,\tad_0$ and $\nu\in [0,1/2)$, but are independent of $\sad\geq -\sad_0$ and $\tad_0\leq \tad\leq 1/\tad_0$. After appropriately changing the values of $\tilde\eta,\eta,M$, and having in mind that $\alpha<3/2$, the result follows from this inequality.
\end{proof}

Next, we prove the equivalent result along the pieces of $\cSig$ which are not on the real line.

\begin{lemma}\label{lem:estJpsiadcSig13}
Fix $\tad_0\in (0,1)$, $\sad_0>0$ and $\nu\in (0,1/2)$. There exist $\tauad_0=\tauad_0(\tad_0,\sad_0,\nu)>0$, $M=M(\tad_0,\sad_0,\nu)>0$ and $\eta=\eta(\tad_0,\sad_0,\nu)>0$, for which the inequality
$$
\| \bm J_{\bm\Psi_\tauad}-\bm I \|_{L^1\cap L^2\cap L^\infty(\cSig_1\cup \cSig_3)}\leq M\ee^{-\sad} \max \left\{ \tauad^{-1+2\nu},\ee^{-\eta \tauad^{3\nu/2}} \right\}
$$
holds true for any $\tauad\geq \tauad_0$, $\sad\geq -\sad_0$ and $\tad\in [\tad_0,1/\tad_0]$.
\end{lemma}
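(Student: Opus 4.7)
The plan is to mirror the argument for Lemma~\ref{lem:estJpsiadcSig0}: split the contour $\cSig_1\cup\cSig_3$ at radius $\tauad^\nu$, control the scalar prefactor $\frac{1}{\lambda_\tauad}-\frac{1}{\lcc}=\ee^{-\msf h_\tauad}-\ee^{-\hcc}$ in the inner and outer regimes, combine it with the matrix bound for $\Phiccp\bm E_{21}\Phiccp^{-1}$ given by Propositions~\ref{prop:EstPhiccSing} and \ref{prop:EstPhiccNonasymp}, and finally pass to the $L^1\cap L^2\cap L^\infty$ norm.

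First I would note that on $\cSig_1\cup\cSig_3$ one has $\re\zeta=-|\zeta|/2$, so that $\re \hcc(\zeta)=\sad+\tad|\zeta|/2$ and, in particular, $|\ee^{-\hcc(\zeta)}|\leq \ee^{-\sad-\tad|\zeta|/2}$. For $|\zeta|\leq \tauad^{\nu}$, Lemma~\ref{lem:convhtauhcc} then gives
$$
\ee^{-\msf h_\tauad(\zeta)}-\ee^{-\hcc(\zeta)} = \ee^{-\hcc(\zeta)}\bigl(\ee^{\hcc(\zeta)-\msf h_\tauad(\zeta)}-1\bigr) = \Boh\!\left(\frac{\ee^{-\sad-\tad|\zeta|/2}|\zeta|^2}{\tauad}\right),
$$
uniformly in the parameters in \eqref{scaling:UTregime}. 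For $|\zeta|\geq \tauad^{\nu}$ I would instead bound each term individually, using Definition~\ref{deff:admissibleh}(iii) to get $|\ee^{-\msf h_\tauad(\zeta)}|\leq \ee^{-\sad-\eta|\zeta|}$ together with the bound just recorded for $|\ee^{-\hcc(\zeta)}|$; this yields exponential decay at rate at least $\min\{\eta,\tad_0/2\}$, uniformly in the allowed $\sad,\tad$.

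Multiplying these two scalar estimates by the matrix bound $|\Phiccp(\zeta)\bm E_{21}\Phiccp(\zeta)^{-1}|\leq M\ee^{-\eta\re(\zeta^{3/2})}$ and allowing for a small reduction of the constant $\eta$, one obtains a pointwise estimate of the same shape as \eqref{eq:PtwEstPhitauadcSig0}: a cubic decay factor times either $\ee^{\tad|\zeta|/2}/\tauad^{1-2\nu}$ (inside) or $\ee^{\tilde\eta|\zeta|}$ (outside). Integration in $L^1$, $L^2$ and $L^\infty$ along the two rays then produces the claimed bound, with the first term in the maximum coming from the inner contribution (of order $\tauad^{-1+2\nu}$, after absorbing the polynomial $|\zeta|^2$ into the cubic exponential) and the second from the tail (of order $\ee^{-\eta\tauad^{3\nu/2}}$). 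Because both $\ee^{-\msf h_\tauad}$ and $\ee^{-\hcc}$ already decay exponentially in $|\zeta|$ along $\cSig_1\cup\cSig_3$, the situation is in fact simpler than on $\cSig_0$; the only point that requires care is verifying that the various constants $\eta$ can be chosen uniformly in $\sad\geq -\sad_0$ and $\tad\in[\tad_0,1/\tad_0]$, which follows from the uniform character of the admissibility assumption and of Propositions~\ref{prop:EstPhiccSing}--\ref{prop:EstPhiccNonasymp}.
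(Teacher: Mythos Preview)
Your proposal is correct and follows essentially the same approach as the paper: split $\cSig_1\cup\cSig_3$ at radius $\tauad^{\nu}$, use Lemma~\ref{lem:convhtauhcc} in the inner region and the admissibility bound from Definition~\ref{deff:admissibleh}(iii) together with the explicit form of $\hcc$ in the outer region, multiply by the matrix bound from Propositions~\ref{prop:EstPhiccSing}--\ref{prop:EstPhiccNonasymp}, and integrate. The only cosmetic difference is that you write $\re\zeta=-|\zeta|/2$ explicitly while the paper keeps $\re\zeta$; note also that the linear exponential factor you obtain from the scalar prefactor is \emph{decaying} (it is $\ee^{-\tad|\zeta|/2}$, not $\ee^{\tad|\zeta|/2}$), which only makes the final estimate easier, as you correctly remark.
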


\begin{proof}
Write
$$
\frac{1}{\lambda_\tauad(\zeta)}-\frac{1}{\lcc(\zeta)}=\ee^{-\msf h_\tauad(\zeta)}-\ee^{-\hcc(\zeta)}=-\ee^{-\sad +\tad \zeta}(1-\ee^{\hcc(\zeta)-\msf h_\tauad(\zeta)}).
$$
From Lemma~\ref{lem:convhtauhcc}, we estimate for $0\leq |\zeta|\leq \tauad^\nu$,
$$
\frac{1}{\lambda_\tauad(\zeta)}-\frac{1}{\lcc(\zeta)}=\Boh\left(\frac{\ee^{-\sad +\tad \re \zeta}}{\tau^{1-2\nu}}\right),\quad \tau\to \infty,
$$
where the implicit error term is independent of $\sad$ and uniform for $\tad\in [\tad_0,1/\tad_0]$. On the other hand, from the explicit form of $\hcc$ and Definition~\ref{deff:admissibleh}--(iii),
$$
\left|\frac{1}{\lambda_\tauad(\zeta)}-\frac{1}{\lcc(\zeta)}\right|
\leq 
\ee^{-\sad}\left(\ee^{\tad\re\zeta}+\ee^{-\eta|\zeta|}\right).
$$
We combine this inequality with Propositions~\ref{prop:EstPhiccSing} and \ref{prop:EstPhiccNonasymp} and use them on \eqref{eq:jumpPsimodel}. The conclusion is that there exist $M>0$, $\eta_1,\eta_2>0$ and $\tauad_0>0$, depending on $\nu,\tad_0,\sad_0$, for which the inequality
\begin{equation}\label{eq:PtwEstPhitauadcSig13}
\left| \bm J_{\bm \Psi_\tauad}(\zeta)-\bm I\right|\leq M \ee^{-\sad} \ee^{-\eta_1\re \zeta^{3/2}+\eta_2\re \zeta}
\left(  \frac{1}{\tau^{1-2\nu}}\chi_{\{|\zeta|\leq \tauad^\nu\}}(\zeta)+\chi_{\{|\zeta|\geq \tauad^\nu\}}(\zeta)\right),\quad \zeta\in \cSig_1\cup \cSig_3,
\end{equation}
is valid for every $\tauad\geq \tauad_0,$ $\sad\geq -\sad_0$ and $\tad\in [\tad_0,1/\tad_0]$. The definition \eqref{def:contour_sigma} of the contours $\cSig_1$ and $\cSig_3$ assure us that $\re \zeta^{3/2}>0$ and $\re \zeta<0$ on these contours. After possibly changing the values of the constants $\eta_1,\eta_2$ and $M$, the result follows.
\end{proof}

Finally, we now handle the jump on the negative axis.

\begin{lemma}\label{lem:estJpsiadcSig2}
Fix $\tad_0\in (0,1)$, $\sad_0>0$ and $\nu\in (0,1/2)$. There exist $\tauad_0=\tauad_0(\tad_0,\sad_0,\nu)>0, M=M(\tad_0,\sad_0,\nu)>0$ and $\eta=\eta(\tad_0,\sad_0,\nu)>0$ for which the inequality
$$
\| \bm J_{\bm\Psi_\tauad}-\bm I \|_{L^1\cap L^2\cap L^\infty(\cSig_2)}\leq M \ee^{-\sad} \max\left\{ \tauad^{-1+2\nu},\ee^{-\eta\tauad^{\nu}} \right\}
$$
holds true for any $\tauad\geq \tauad_0$, $\sad\geq -\sad_0$ and $\tad\in [\tad_0,1/\tad_0]$.
\end{lemma}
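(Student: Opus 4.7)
My plan is to mirror the strategy used in Lemmas~\ref{lem:estJpsiadcSig0} and \ref{lem:estJpsiadcSig13}: decompose the jump matrix into scalar prefactors multiplied by conjugated matrices of $\Phiccp$, and split $\cSig_2$ into the inner part $|\zeta|\leq\tauad^\nu$, where Lemma~\ref{lem:convhtauhcc} applies, and the outer part $|\zeta|>\tauad^\nu$, where the linear lower bounds of Definition~\ref{deff:admissibleh}--(iii) take over. Using $\bm E_{11}=\bm I-\bm E_{22}$ to rewrite \eqref{eq:jumpPsimodel} on $\cSig_2$ in the form
\begin{equation*}
\bm J_{\bm\Psi_\tauad}(\zeta)-\bm I=\left(\frac{\lambda_\tauad(\zeta)}{\lcc(\zeta)}-1\right)\Phiccp(\zeta)\bm E_{11}\Phiccp(\zeta)^{-1}+\left(\frac{\lcc(\zeta)}{\lambda_\tauad(\zeta)}-1\right)\Phiccp(\zeta)\bm E_{22}\Phiccp(\zeta)^{-1},
\end{equation*}
I first note the key structural simplification over the previous two lemmas: since $\hcc$ and $\msf h_\tauad$ are real along $\R$, both denominators $1+\ee^{-\hcc}$ and $1+\ee^{-\msf h_\tauad}$ stay bounded below by $1$ on $\cSig_2$.

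On the inner portion, Lemma~\ref{lem:convhtauhcc} yields $\ee^{-\msf h_\tauad(\zeta)}=\ee^{-\hcc(\zeta)}(1+\Boh(\zeta^2/\tauad))$, and combining with $\hcc(\zeta)=\sad+\tad|\zeta|$ on $\cSig_2$ I would deduce
\begin{equation*}
\left|\frac{\lambda_\tauad(\zeta)}{\lcc(\zeta)}-1\right|+\left|\frac{\lcc(\zeta)}{\lambda_\tauad(\zeta)}-1\right|\leq M\ee^{-\sad}\ee^{-\tad|\zeta|}\frac{\zeta^2}{\tauad}\leq M\ee^{-\sad}\ee^{-\tad|\zeta|}\tauad^{-1+2\nu}.
\end{equation*}
On the outer portion, Definition~\ref{deff:admissibleh}--(iii) gives $|\ee^{-\msf h_\tauad(\zeta)}|\leq \ee^{-\sad-\eta|\zeta|}$ while $\ee^{-\hcc(\zeta)}=\ee^{-\sad-\tad|\zeta|}$, so both scalar prefactors are bounded by $M\ee^{-\sad}\ee^{-c|\zeta|}$ with $c\deff\min\{\tad,\eta\}>0$.

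Combining these scalar estimates with the bound $|\Phiccp(\zeta)\bm E_{22}\Phiccp(\zeta)^{-1}|\leq M|\zeta|^{1/2}$ from Propositions~\ref{prop:EstPhiccSing}--\ref{prop:EstPhiccNonasymp} (and with $|\Phiccp\bm E_{11}\Phiccp^{-1}|\leq 1+M|\zeta|^{1/2}$ obtained from $\bm E_{11}=\bm I-\bm E_{22}$), I would obtain a pointwise bound of the form
\begin{equation*}
|\bm J_{\bm\Psi_\tauad}(\zeta)-\bm I|\leq M\ee^{-\sad}\bigl(1+|\zeta|^{1/2}\bigr)\ee^{-c|\zeta|}\Bigl(\tauad^{-1+2\nu}\chi_{[-\tauad^\nu,0]}(\zeta)+\chi_{(-\infty,-\tauad^\nu)}(\zeta)\Bigr).
\end{equation*}
The conclusion in $L^1\cap L^2\cap L^\infty(\cSig_2)$ then follows by integration: the factor $(1+|\zeta|^{1/2})\ee^{-c|\zeta|}$ is bounded with integrable tails, the inner region contributes $M\ee^{-\sad}\tauad^{-1+2\nu}$ (also in $L^\infty$, since the $(1+|\zeta|^{1/2})\ee^{-c|\zeta|}$ attains its maximum at a bounded location), and the outer region contributes $M\ee^{-\sad}\ee^{-\eta\tauad^\nu}$ after absorbing a slightly smaller decay rate $\eta<c$.

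The main obstacle is really only careful bookkeeping: ensuring that the constants $M,\eta$ depend solely on $\sad_0,\tad_0,\nu$ and not on $\sad\geq-\sad_0$ itself, and that Propositions~\ref{prop:EstPhiccSing}--\ref{prop:EstPhiccNonasymp} are each applied in their respective regimes (with the boundary value $\sad_0(\tad_0)$ of Proposition~\ref{prop:EstPhiccSing} setting the transition scale). The argument is in fact the mildest of the three lemma proofs, precisely because reality of the phases on the negative axis removes the subtle imaginary-part interplay between $\hcc$ and $\msf h_\tauad$ that complicates the analysis along $\cSig_1\cup\cSig_3$.
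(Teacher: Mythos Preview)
Your proposal is correct and follows essentially the same approach as the paper. The only cosmetic difference is that the paper rewrites the jump on $\cSig_2$ as $(\lambda_\tauad/\lcc-1)\bm I+(\lcc/\lambda_\tauad-\lambda_\tauad/\lcc)\,\Phiccp\bm E_{22}\Phiccp^{-1}$, whereas you keep it as a sum of $\bm E_{11}$- and $\bm E_{22}$-conjugates; these are the same identity via $\bm E_{11}=\bm I-\bm E_{22}$, and the ensuing scalar estimates, use of Lemma~\ref{lem:convhtauhcc} on $|\zeta|\leq\tauad^\nu$, Definition~\ref{deff:admissibleh}--(iii) on $|\zeta|>\tauad^\nu$, and Propositions~\ref{prop:EstPhiccSing}--\ref{prop:EstPhiccNonasymp} are identical to the paper's.
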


\begin{proof}
The initial step is to rewrite the last line of \eqref{eq:jumpPsimodel} as
\begin{equation}\label{eq:estJpsiadcSig1}
\bm J_{\bm\Psi_{\tauad}}(\zeta)-\bm I=\left(\frac{\lambda_\tauad(\zeta)}{\lcc(\zeta)}-1\right)\bm I+\left(\frac{\lcc(\zeta)}{\lambda_\tauad(\zeta)}-\frac{\lambda_\tauad(\zeta)}{\lcc(\zeta)}\right)\Phiccp(\zeta)\bm E_{22}\Phiccp(\zeta)^{-1},\quad \zeta\in \cSig_2.
\end{equation}
The identities
$$
\frac{\lambda_\tauad(\zeta)}{\lcc(\zeta)}-1=\frac{\ee^{-\hcc(\zeta)}-\ee^{-\msf h_\tauad(\zeta)}}{1+\ee^{-\msf h_\tauad(\zeta)}},\qquad 
\frac{\lcc(\zeta)}{\lambda_\tauad(\zeta)}-1=-\frac{\ee^{-\hcc(\zeta)}-\ee^{-\msf h_\tauad(\zeta)}}{1+\ee^{-\hcc(\zeta)}},
$$
are trivial, and because $\hcc$ and $\msf h_\tauad$ are real-valued along $\cSig_2=(-\infty,0)$, these equalities give
$$
\left|\frac{\lambda_\tauad(\zeta)}{\lcc(\zeta)}-1\right|+\left| \frac{\lcc(\zeta)}{\lambda_\tauad(\zeta)}-1\right|
\leq 
2\left|\ee^{-\hcc(\zeta)}-\ee^{-\msf h_\tauad(\zeta)}\right|,\quad \zeta<0.
$$
For $|\zeta|\leq \tauad^\nu$ we use Lemma~\ref{lem:convhtauhcc} and estimate
$$
\left|\frac{\lambda_\tauad(\zeta)}{\lcc(\zeta)}-1\right|+\left| \frac{\lcc(\zeta)}{\lambda_\tauad(\zeta)}-1\right|=\Boh\left(\frac{\ee^{-\sad -\tad |\zeta|}}{\tau^{1-2\nu}}\right),\quad \tauad\to +\infty,\quad -\tau^{\nu}\leq \zeta\leq 0,
$$
whereas for $\zeta \leq -\tauad^\nu$ we use instead the definition of $\hcc$ in \eqref{deff:h0} and Definition~\ref{deff:admissibleh}--(iii) and write
$$
\left|\frac{\lambda_\tauad(\zeta)}{\lcc(\zeta)}-1\right|+\left| \frac{\lcc(\zeta)}{\lambda_\tauad(\zeta)}-1\right|
\leq 2\ee^{-\sad -(\tad+\eta)|\zeta|}.
$$
We combine these two inequalities with Propositions~\ref{prop:EstPhiccSing} and \ref{prop:EstPhiccNonasymp}, and apply them to \eqref{eq:estJpsiadcSig1}. As a result, we learn that there exist $M>0,\eta>0,\tauad_0>0$ for which the estimate
\begin{equation}\label{eq:EstJPsitauadcSig2}
\left| \bm J_{\bm \Psi_\tauad}(\zeta)-\bm I \right|\leq M|\zeta|^{1/2}\ee^{-\sad-\eta |\zeta|} \left(\frac{1}{\tau^{1-2\nu}}\chi_{(-\tauad^\nu,0]}(\zeta)+\chi_{(-\infty,-\tauad^\nu)}(\zeta)\right),\quad \zeta\leq 0,
\end{equation}
is valid for any $\sad\geq -\sad_0,$ $\tad\in [\tad_0,1/\tad_0]$, $\tauad\geq \tauad_0$. After possibly changing the values of $\eta,M$, the result follows from standard arguments.
\end{proof}

Now that we controlled the asymptotic behavior for the jump matrix $\bm J_{\bm\Psi_\tauad}$, we are ready to obtain small norm estimates for $\bm \Psi_{\tauad}$ itself. We summarize these estimates in the next result. For that, we recall the matrix norm notations introduced in \eqref{deff:matrixnorm},\eqref{deff:matrixLpnorm},\eqref{deff:matrixLpqnorm}.

\begin{theorem}\label{thm:PsitauSmallnorm}
Fix $\tad_0\in (0,1)$ and $\sad_0>0$. There exists $\tauad_0=\tauad_0(\tad_0,\sad_0)>0$ for which the solution $\bm \Psi_\tauad$ uniquely exists for any $\tauad\geq \tauad_0$ and any $\sad \geq -\sad_0, \tad\in [\tad_0,1/\tad_0]$. Furthermore, it satisfies the following asymptotic properties. 

Its boundary value $\bm \Psi_{\tauad,-}$ exists along $\cSig$, and satisfies the estimate
$$
\|\bm \Psi_{\tauad,-}-\bm I \|_{L^2(\cSig)}=\Boh\left(\tauad^{-\kappa}\right),\quad \tau \to +\infty,
$$
for any $\kappa\in (0,1)$, where the error term, for a given $\kappa$, is uniform for $\sad \geq -\sad_0$ and $\tad\in [\tad_0,1/\tad_0]$.

For $\tau$ sufficiently large, the solution $\bm \Psi_\tauad$ admits the representation
\begin{equation}\label{eq:IntReprPsitau}
\bm \Psi_\tauad(\zeta)=\bm I+\frac{1}{2\pi \ii}\int_{\cSig}\frac{\bm \Psi_{\tauad,-}(w)(\bm J_{\bm \Psi_\tauad}(w)-\bm I)}{w-\zeta}\dd w,\quad \zeta\in \C\setminus \cSig.
\end{equation}
Still for $\tau$ sufficiently large, $\bm \Psi_\tauad$ satisfies
\begin{equation}\label{eq:Psitauad1}
\bm \Psi_\tauad(\zeta)=\bm I+\bm\Psi_{\tauad}^{(1)}\frac{1}{\zeta}+\Boh(\zeta^{-2}),\quad \zeta\to \infty,\quad \text{with} \quad 
\bm\Psi_{\tauad}^{(1)}\deff -\frac{1}{2\pi \ii}\int_{\cSig} \bm\Psi_{\tauad,-}(\xi)(\bm J_{\bm \Psi_{\tauad}}(\xi)-\bm I)\dd \xi.
\end{equation}

\end{theorem}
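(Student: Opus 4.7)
The plan is to invoke the standard small-norm theory for Riemann--Hilbert problems, the key input being the estimates on $\bm J_{\bm\Psi_\tauad}-\bm I$ already established in Lemmas~\ref{lem:estJpsiadcSig0}, \ref{lem:estJpsiadcSig13} and \ref{lem:estJpsiadcSig2}.

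First I would combine the three lemmas. For any fixed $\nu\in(0,1/2)$, together they yield a constant $M=M(\tad_0,\sad_0,\nu)>0$ and $\eta=\eta(\tad_0,\sad_0,\nu)>0$ such that, for $\tauad\geq\tauad_0$,
$$
\|\bm J_{\bm\Psi_\tauad}-\bm I\|_{L^1\cap L^2\cap L^\infty(\cSig)}\leq M\ee^{-\sad}\max\left\{\tauad^{-1+2\nu},\ee^{-\eta\tauad^{\nu}}\right\},
$$
uniformly for $\sad\geq-\sad_0$ and $\tad\in[\tad_0,1/\tad_0]$. Given any target $\kappa\in(0,1)$, choose $\nu\in(0,(1-\kappa)/2)$. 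Then $1-2\nu>\kappa$ and the exponential term is negligible, so the right-hand side is $\Boh(\tauad^{-\kappa})$, where the implicit constant is uniform in $\sad\geq-\sad_0$ and $\tad\in[\tad_0,1/\tad_0]$.

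Next, I apply the standard $L^2$ small-norm theory (as in \cite{DeiftParkCity}). Consider the singular-integral operator $\mcal C_{-}^{\bm J_{\bm\Psi_\tauad}-\bm I}$ defined on $L^2(\cSig)$ by $\bm f\mapsto\mcal C_{-}(\bm f(\bm J_{\bm\Psi_\tauad}-\bm I))$, where $\mcal C_{-}$ is the Cauchy operator taking boundary values from the minus side of $\cSig$. Its operator norm is bounded by $C\|\bm J_{\bm\Psi_\tauad}-\bm I\|_{L^2\cap L^\infty(\cSig)}$, which tends to $0$ as $\tauad\to\infty$. Therefore, for $\tauad$ sufficiently large, $\mathbf 1-\mcal C_{-}^{\bm J_{\bm\Psi_\tauad}-\bm I}$ is invertible on $L^2(\cSig)$; the solution $\bm\Psi_\tauad$ of the {\bf RHP}-$\bm\Psi_\tauad$ exists and is unique, and its minus boundary value is represented through
$$
\bm\Psi_{\tauad,-}-\bm I=\left(\mathbf 1-\mcal C_{-}^{\bm J_{\bm\Psi_\tauad}-\bm I}\right)^{-1}\mcal C_{-}(\bm J_{\bm\Psi_\tauad}-\bm I).
$$
Taking $L^2$-norms yields $\|\bm\Psi_{\tauad,-}-\bm I\|_{L^2(\cSig)}\leq C'\|\bm J_{\bm\Psi_\tauad}-\bm I\|_{L^2(\cSig)}=\Boh(\tauad^{-\kappa})$, as claimed.

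With $\bm\Psi_\tauad$ in hand, the integral representation \eqref{eq:IntReprPsitau} is a direct consequence of the Plemelj--Sokhotski formulas together with the normalization $\bm\Psi_\tauad(\zeta)\to\bm I$ as $\zeta\to\infty$: the right-hand side of \eqref{eq:IntReprPsitau} is analytic off $\cSig$, tends to $\bm I$ at infinity, and has the same jump as $\bm\Psi_\tauad$, so the two coincide by uniqueness. The identity is meaningful because $\bm\Psi_{\tauad,-}(\bm J_{\bm\Psi_\tauad}-\bm I)\in L^1(\cSig)$, which follows from $\bm\Psi_{\tauad,-}-\bm I\in L^2(\cSig)$ and $\bm J_{\bm\Psi_\tauad}-\bm I\in L^1\cap L^2(\cSig)$ via Cauchy--Schwarz. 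Finally, the asymptotic expansion \eqref{eq:Psitauad1} follows by inserting the geometric expansion $(w-\zeta)^{-1}=-\zeta^{-1}\sum_{k\geq 0}(w/\zeta)^k$ into \eqref{eq:IntReprPsitau} for $|\zeta|$ large; the first moment is finite thanks to the integrability in the tail of $\cSig$ visible from the pointwise bounds \eqref{eq:PtwEstPhitauadcSig0}, \eqref{eq:PtwEstPhitauadcSig13} and \eqref{eq:EstJPsitauadcSig2}. I expect the only mildly delicate step to be justifying convergence and uniformity of the integrals on the unbounded contour $\cSig$, but since all three pieces of $\cSig$ carry exponential tail decay for $\bm J_{\bm\Psi_\tauad}-\bm I$ and $\bm\Psi_{\tauad,-}-\bm I$ is controlled in $L^2$, this goes through without further work.
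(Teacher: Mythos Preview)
Your proposal is correct and follows essentially the same approach as the paper: the paper's proof simply invokes the small-norm theory for RHPs (citing \cite{DeiftParkCity, DeiftZhou93mKdV}) on the strength of Lemmas~\ref{lem:estJpsiadcSig0}--\ref{lem:estJpsiadcSig2}, exactly as you do, and notes that only the $L^2$ and $L^\infty$ estimates are needed here while the $L^1$ bound is reserved for later use. Your write-up spells out the operator-theoretic details that the paper leaves implicit, which is fine.
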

\begin{proof}
The small norm estimates provided by Lemmas~\ref{lem:estJpsiadcSig0}, \ref{lem:estJpsiadcSig13} and \ref{lem:estJpsiadcSig2} allow us to apply the small norm theory for Riemann-Hilbert problems (see for instance \cite{DeiftParkCity, DeiftZhou93mKdV}), and the claims follow with standard methods. We stress that for this statement we only need the $L^2$ and $L^\infty$ estimates from the aforementioned lemmas, but the $L^1$ estimates provided by them will be useful later.
\end{proof}

\subsection{Proof of main results of the section}\label{sec:proofadmRHPtakeI}\hfill 

We are ready to prove the main results of this section. 

\begin{proof}[Proof of Theorem~\ref{thm:PhitauAsympt}]
During the whole proof we identify $\nu=(1-\kappa)/2$.

The matrix $\Phicc$ always exists, whereas Theorem~\ref{thm:PsitauSmallnorm} provides the existence of $\bm\Psi_{\tauad}$ for $\tauad$ sufficiently large. From the relation \eqref{eq:transfPhitauPsitau} we obtain the claimed existence of $\bm\Phi_\tauad$.

Comparing \eqref{eq:Phitauad1Phicc1} with \eqref{eq:Psitauad1} we obtain the identity
$$
\bm\Phi_\tauad^{(1)}=\Phicc^{(1)}+\bm\Psi_{\tauad}^{(1)}.
$$%
Writing 
\begin{equation}\label{eq:EstPsi1tauad}
\bm\Psi_{\tauad}^{(1)}=-\frac{1}{2\pi \ii}\int_{\cSig} (\bm\Psi_{\tauad,-}(\xi)-\bm I)(\bm J_{\bm \Psi_{\tauad}}(\xi)-\bm I)\dd \xi -\frac{1}{2\pi \ii}\int_{\cSig} \bm(\bm J_{\bm \Psi_{\tauad}}(\xi)-\bm I)\dd \xi,
\end{equation}
and using Cauchy-Schwartz,
$$
|\bm\Psi_{\tauad}^{(1)}|\leq \|\bm\Psi_{\tauad,-}-\bm I\|_{L^2(\cSig)} \|\bm J_{\bm \Psi_{\tauad}}-\bm I\|_{L^2(\cSig)}+\|\bm J_{\bm \Psi_{\tauad}}-\bm I\|_{L^1(\cSig)},
$$
and from Lemmas~\ref{lem:estJpsiadcSig0}, \ref{lem:estJpsiadcSig13}, \ref{lem:estJpsiadcSig2} and Theorem~\ref{thm:PsitauSmallnorm}, the right-hand side above is $\Boh(\tau^{-\kappa})$, for any $\kappa\in (0,1)$ and uniformly for $\sad,\tad$ as in \eqref{scaling:UTregime}, proving \eqref{eq:asympPhin1Phi01}.

To prove the asymptotic formula \eqref{eq:asympPhitauPhi0} we follow arguments presented in \cite[Theorem~3.1]{Kuijlaars2003lecnotes} and \cite[Lemma~2]{Aptekarev2002sharp}, with minor modifications to handle the uniformity on the unbounded set $|x|\leq \tauad^{-\nu}$ as claimed.

First off, the jump matrix $\bm J_{\bm \Psi_\tauad}$ is $C^\infty$ on $\cSig$, in particular H\"older continuous, implying that $\bm \Psi_\tauad$ extends continuously to its boundary values $\bm\Psi_{\tauad,\pm}$. Accounting also for the behavior of $\bm\Psi_{\tauad}$ at $\infty$ and combining with the maximum principle, 
\begin{equation}\label{eq:maxprincipleconseqPsitauad}
\|\bm \Psi_{\tauad}\|_{L^\infty(\C\setminus \cSig)}\leq  M_\tauad\deff \max\left\{ \|\bm\Psi_{\tauad,+}\|_{L^\infty(\cSig)},\|\bm\Psi_{\tauad,-}\|_{L^\infty(\cSig)} \right\},
\end{equation}
where the constant $M_\tauad$ is finite.
For a point $x\in \cSig\setminus \{0\}$ and $\varepsilon>0$, we consider the arcs $C^\pm_\varepsilon(x)$ of the disk centered at $x$ and radius $\varepsilon$ which are on the $\pm$ side of $\cSig$. We then set
$$
\cSig^\pm\deff \left(\cSig\setminus D_\varepsilon(x)\right)\cup C^\pm_\varepsilon(x),
$$
with the orientation induced from $\cSig$. We deform contour in the integral representation \eqref{eq:IntReprKernel} and then send $z\to x$, obtaining that
\begin{equation}\label{eq:intreprPsitauadpm}
\bm\Psi_{\tauad,\pm}(x)=\bm I+\frac{1}{2\pi\ii}\int_{\cSig^\mp}\frac{\bm\Psi_{\tauad,-}(s)(\bm J_{\bm \Psi_{\tauad}}(s)-\bm I)}{s-x}\dd s.
\end{equation}
From standard estimates and using \eqref{eq:maxprincipleconseqPsitauad}, the just written equation yields
$$
|\bm \Psi_{\tauad,\pm}(x)|\leq 1+\frac{1}{\pi \varepsilon}\|\bm \Psi_{\tauad,-}\|_{L^\infty(\cSig^\mp)} \|\bm J_{\bm \Psi_\tauad}-\bm I\|_{L^1(\cSig^\mp)}\leq 1+\frac{1}{\pi \varepsilon}M_\tauad \|\bm J_{\bm \Psi_\tauad}-\bm I\|_{L^1(\cSig^\mp)},
$$
and therefore
\begin{equation}\label{eq:estimateMtauad}
M_\tauad\leq \left(1-\frac{1}{\pi\varepsilon}\|\bm J_{\bm \Psi_\tauad}-\bm I\|_{L^1(\cSig^\mp)}\right)^{-1}.
\end{equation}
Lemmas~\ref{lem:estJpsiadcSig0}, \ref{lem:estJpsiadcSig13} and \ref{lem:estJpsiadcSig2} provide $L^p$ estimates for $\bm J_{\bm \Psi_\tauad}-\bm I$ along $\cSig$. Exploring that $\cSig^\pm$ is obtained from $\cSig$ after a small deformation around the point $x=\Boh(\tauad^{-\nu})$, it is straightforward to see that the same estimates hold in $\cSig^\pm$, which can be summarized as
\begin{equation}\label{eq:estJPsicSigpm}
|\bm J_{\bm \Psi_\tauad}(\zeta)-\bm I| \leq M\ee^{-\eta \min\{|\zeta|,|\zeta|^{3/2}\}}\left(\frac{1}{\tau^{\kappa}}\chi_{\{|\zeta|\leq \tauad^\nu\}}(\zeta)+\chi_{\{|\zeta|> \tauad^\nu\}}(\zeta)\right),\quad \zeta\in \cSig^\pm,
\end{equation}
for some constants $\eta,M>0$ which may depend on $\sad_0,\tad_0,\tauad_0$ but are independent of $\sad\geq \sad_0,\tad\in [\tad_0,1/\tad_0],\tauad\geq \tauad_0$, see \eqref{eq:PtwEstPhitauadcSig0}, \eqref{eq:PtwEstPhitauadcSig13} and \eqref{eq:EstJPsitauadcSig2}. Combining with \eqref{eq:estimateMtauad}, we conclude in particular that $M_\tauad\leq 2$, for $\sad,\tad,\tauad$ in the same range of values. Having in mind \eqref{eq:transfPhitauPsitau}, this bound on $M_\tauad$ applied to \eqref{eq:intreprPsitauadpm} is enough to ensure that $\bm \Psi_{\tauad,\pm}-\bm I=\Boh(\tauad^{-\nu})$, but to obtain the decay in $x$ for the error claimed in \eqref{eq:asympPhitauPhi0} a little more care is needed as follows.

First off, we split the integral in \eqref{eq:intreprPsitauadpm} into two, namely along 
$$
J_x\deff \{\zeta\in \cSig^-\mid |\zeta-x|\geq |x|/2\}\quad \text{and}\quad \cSig^-\setminus J_x.
$$
For the integral over $J_x$, we estimate as
$$
\left| \int_{J_x}\frac{\bm\Psi_{\tauad,-}(s)(\bm J_{\bm \Psi_{\tauad}}(s)-\bm I)}{s-x}\dd s \right|\leq 2M_\tau \sup_{\zeta\in J_x}\frac{1}{|\zeta-x|} \|\bm J_{\bm \Psi_{\tauad}}-\bm I \|_{L^1(J_x)}=\Boh(\tau^{-\kappa}|x|^{-1}),
$$
where we used \eqref{eq:estJPsicSigpm}, the fact that $x=\Boh(\tau^{-\kappa})$ and again the bound $M_\tauad\leq 2$. Observing that $|\zeta-x|\geq \varepsilon$ along the remaining piece, a similar argument yields
$$
\left| \frac{\bm\Psi_{\tauad,-}(s)(\bm J_{\bm \Psi_{\tauad}}(s)-\bm I)}{s-x} \right|\leq \frac{4}{\varepsilon} \left|\bm J_{\bm \Psi_{\tauad}}(s)-\bm I \right|,\quad s\in \cSig^-\setminus J_x,
$$
and again from \eqref{eq:estJPsicSigpm} we see that the right-hand side above decays exponentially in $x$ when $x\to \infty$ and is $\Boh(\tau^{-\kappa})$ when $\tau\to \infty$. From \eqref{eq:intreprPsitauadpm} we thus obtain
$$
\left|\bm\Psi_{\tauad,+}(x)-\bm I\right|=\Boh\left(\frac{1}{(1+|x|)\tauad^{\kappa}}\right),\quad \tauad\to \infty,
$$
uniformly for $x\in \cSig, |x|\leq \tauad^{-\nu}$, and uniformly for $\sad\geq -\sad_0, \tad_0\leq \tad\leq 1/\tad_0$. In virtue of \eqref{eq:transfPhitauPsitau}, this proves \eqref{eq:asympPhitauPhi0}. 
\end{proof}

\begin{remark}\label{remark:UnifAsympPhitauad}
For admissible functions $\msf h_{\tauad}$, the asymptotics \eqref{eq:ModelRHPAsymp} as $\zeta\to \infty$ of $\bm \Phi_\tauad$ is valid uniformly in $\tauad, \tad$ and $\sad$, in the sense that for any $\tad_0\in (0,1)$ and any $\sad_0>0$, there exist $K>0$ and $R>0$ such that
$$
\left|
\bm \Phi_\tauad(\zeta)\ee^{\frac{2}{3}\zeta^{2/3}\sp_3}\bm U_0\zeta^{\sp_3/4}-\bm I\right|\leq \frac{K}{|\zeta|},\quad \text{whenever } |\zeta|\geq R, \; \sad\geq -\sad_0, \; \tad_0\leq \tad\leq 1/\tad_0,
$$
and we emphasize that $K$ and $R$ are independent of $\sad,\tad$. To see that this is true, in virtue of \eqref{eq:transfPhitauPsitau} it is enough to show that the asymptotics \eqref{eq:AsymptPhicc} and \eqref{eq:Phitauad1Phicc1} are uniform in the same sense, we indicate the proof for the latter and the former is analogous.

Using the trivial identity $1/(w-\zeta)=s/(s(s-\zeta))-1/\zeta$, we express \eqref{eq:IntReprPsitau} as
$$
\bm\Psi_\tauad(\zeta)=\bm I-\frac{1}{2\pi \ii \zeta}\int_{\cSig}\bm \Psi_{\tauad,-}(w)(\bm J_{\bm \Psi_\tauad}(w)-\bm I)\dd w+\frac{1}{2\pi \ii\zeta}\int_{\cSig}\frac{ w\bm \Psi_{\tauad,-}(w)(\bm J_{\bm \Psi_\tauad}(w)-\bm I)}{w-\zeta}\dd w.
$$
Because $\bm \Psi_{\tauad,-}\in\bm I+ L^1(\cSig)$ and $\bm J_{\bm \Psi_{\tauad}}-\bm I$ decays pointwise exponentially fast (and uniformly in $\sad,\tad$ as claimed), the two integrals can be bounded uniformly in $\zeta,\tau,\sad$ as claimed, and the uniform decay for $\bm \Psi_{\tauad}$ as claimed follows.
\end{remark}

To finish this section, it remains to prove Theorem~\ref{thm:integralPhitauPhicc}, and for that end we first establish a lemma.

\begin{lemma}\label{eq:lemtailsintPhis}
Fix $\sad\in \R$ and $\tad_0\in (0,1)$. For any $\nu\in (0,1/2)$, there exists $\eta=\eta(\nu)>0$ independent of $\sad,\tad,\tauad$ for which the estimates
\begin{equation}\label{eq:EstPhiccinttailest}
\int_{|x|\geq \tauad^{\nu}} \frac{\ee^{\tad x-u}}{(1+\ee^{\tad x-u})^2}\left|\left[(\Dcc(x\mid u,\tad)^{-1}\Phiccp(x\mid u,\tad)^{-1}(\Phiccp \Dcc )'(x\mid u,\tad)\right]_{21} \right|\dd x=\Boh(\ee^{-u-\eta \tauad^{\nu}})
\end{equation}
and
\begin{equation}\label{eq:EstPhitauadinttailest}
\int_{|x|\geq \tauad^{\nu}}  \frac{\ee^{\msf h_\tauad(x\mid u)}}{\left(1+\ee^{\msf h_\tauad(x\mid u)}\right)^2}\left|\left[ \bm \Delta_\tauad(x\mid u)^{-1}\bm \Phi_{\tauad,+}(x\mid u)^{-1}\left(\bm \Phi_{\tauad,+} \bm\Delta_\tauad\right)'(x\mid u)\right]_{21}\right|\dd x =\Boh(\ee^{-u-\eta \tauad^{\nu}})
\end{equation}
are valid as $\tauad\to \infty$, uniformly for $u\geq \sad$ and $\tad_0\leq \tad\leq 1/\tad_0$. In particular, both integrands are integrable over $\R$.
\end{lemma}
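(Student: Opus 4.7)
The plan is to decompose both integrals into contributions from $\{x \geq \tauad^\nu\}\subset \cSig_0$ and from $\{x \leq -\tauad^\nu\}\subset \cSig_2$, and in each subregion estimate the matrix factor against the exponential decay of the Fermi-type weight. The $\bm \Phi_\tauad$ estimate \eqref{eq:EstPhitauadinttailest} will be reduced to the $\Phicc$ estimate \eqref{eq:EstPhiccinttailest} in two steps: the matrix bounds are transferred using the uniform asymptotic of $\bm \Phi_\tauad$ at infinity from Remark~\ref{remark:UnifAsympPhitauad}, while the Fermi weight is controlled using the pointwise bounds on $\msf H$ from Definition~\ref{deff:admissibleh}--(iii) in place of the explicit formula for $\hcc$.

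The first step is to establish matrix-entry bounds
\[
\bigl|[\Phiccp^{-1}\Phiccp']_{21}(x)\bigr| \leq M x^{1/2}\ee^{-\frac{4}{3}x^{3/2}},\quad x\in \cSig_0,\; x\geq 1,
\]
and
\[
\bigl|[\Dcc^{-1}\Phiccp^{-1}(\Phiccp\Dcc)']_{21}(x)\bigr| \leq M|x|^{1/2},\quad x\in \cSig_2,\; x\leq -1,
\]
uniformly in $u\geq \sad$ and $\tad\in [\tad_0,1/\tad_0]$. These follow from the asymptotic \eqref{eq:AsymptPhicc}, the uniform boundedness encoded by Proposition~\ref{prop:existencePhicc}, and a Cauchy-type estimate to control the derivative of the analytic remainder in that asymptotic; the weaker polynomial bound on $\cSig_2$ reflects that the exponentials $\ee^{\pm\frac{2}{3}\zeta^{3/2}}$ in the leading term $\zeta^{\sp_3/4}\bm U_0^{-1}\ee^{-\frac{2}{3}\zeta^{3/2}\sp_3}$ become oscillatory on the negative axis. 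With these bounds the negative tail is immediate: using $\ee^{\hcc}/(1+\ee^{\hcc})^2\leq \ee^{-\hcc(x)} = \ee^{-u+\tad x}$ gives
\[
\int_{-\infty}^{-\tauad^\nu}|x|^{1/2}\ee^{-u+\tad x}\,\dd x = \Boh(\ee^{-u-\eta\tauad^\nu})
\]
for any $\eta<\tad_0$, and the $\bm \Phi_\tauad$ version is handled identically using $\tauad\re\msf H(x/\tauad)\geq \eta|x|$ on $\cSig_2$ from Definition~\ref{deff:admissibleh}--(iii).

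The positive tail is the main technical point. Using $\ee^{\hcc}/(1+\ee^{\hcc})^2\leq \ee^{-|\tad x-u|}$ one obtains an integrand bounded by $Mx^{1/2}\ee^{-\frac{4}{3}x^{3/2}-|\tad x-u|}$ on $\{x\geq \tauad^\nu\}$, but the crude estimate produces a prefactor $\ee^u$ rather than the required $\ee^{-u}$. To extract $\ee^{-u}$ one splits cases according to the position of the Fermi peak $x=u/\tad$. When $u\leq \tad\tauad^\nu$ the peak lies outside the integration region and a direct estimate yields $\Boh(\ee^{u - \frac{4}{3}\tauad^{3\nu/2}})$, which is converted to $\Boh(\ee^{-u-\eta\tauad^\nu})$ using $\tauad^{3\nu/2}\gg \tauad^\nu$ for large $\tauad$. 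When $u\geq \tad\tauad^\nu$ the peak lies inside the integration region, and one uses the elementary inequality $\frac{4}{3}(u/\tad)^{3/2}\geq u+\eta\tauad^\nu$, which holds for such $u$ and $\tauad$ large (since then automatically $u\geq \tad^3$), to extract the required $\ee^{-u}$ from the Airy-type matrix-entry decay evaluated at the peak. The same scheme works for $\bm \Phi_\tauad$ using $\tauad\msf H(x/\tauad)\leq -\eta x$ on $\cSig_0$ in place of $\hcc$. Thus the main obstacle is precisely this case analysis, which ultimately rests on the super-exponential decay built into the leading order of $\Phicc$ at infinity dominating both the $\ee^u$ factor and the $u/\tad$ shift of the Fermi peak.
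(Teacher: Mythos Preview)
Your approach is correct for the $\Phicc$ case, but the case analysis on the positive tail is unnecessary and the paper's route is considerably shorter. The key simplification you miss: on $\cSig_0$ one should bound the Fermi weight by the \emph{one-sided} estimate
\[
\frac{\ee^{\tad x-u}}{(1+\ee^{\tad x-u})^2}\leq \ee^{\tad x-u}=\ee^{-u}\,\ee^{\tad x},
\]
rather than the two-sided $\ee^{-|\tad x-u|}$. This immediately factors out $\ee^{-u}$, and the residual growth $\ee^{\tad x}$ (linear in the exponent) is dominated by the matrix decay $\ee^{-\frac{4}{3}x^{3/2}}$ for $x\geq\tauad^\nu$ large, yielding $\Boh(\ee^{-u-\eta\tauad^{3\nu/2}})$ with no reference to the location of the Fermi peak. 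For the $\bm\Phi_\tauad$ integral the paper does the same thing: $\frac{\ee^{\msf h_\tauad}}{(1+\ee^{\msf h_\tauad})^2}\leq \ee^{-\msf h_\tauad}=\ee^{-u}\ee^{-\tauad\msf H(x/\tauad)}\leq \ee^{-u}\ee^{\wh\eta x^{3/2-\epsilon}}$, using the lower bound $\msf H(w)>-\wh\eta w^{3/2-\epsilon}$ from Definition~\ref{deff:admissibleh}--(iii); since $3/2-\epsilon<3/2$ this is again absorbed by the matrix decay.

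Your one-line claim that ``the same scheme works for $\bm\Phi_\tauad$ using $\tauad\msf H(x/\tauad)\leq -\eta x$'' is underspecified if you insist on your peak-location argument. The zero of $\msf h_\tauad$ is not at $u/\tad$ but at some $x_0$ satisfying $u=-\tauad\msf H(x_0/\tauad)$; locating it well enough to show $\tfrac{4}{3}x_0^{3/2}\geq u+\eta\tauad^\nu$ requires \emph{both} bounds on $\msf H$ from Definition~\ref{deff:admissibleh}--(iii), not just the upper one you cite. This can be made to work, but the paper's direct bound sidesteps it entirely.
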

\begin{proof}
We prove \eqref{eq:EstPhitauadinttailest} which is slightly more technical because the integrand depends on $\tauad$, the estimate \eqref{eq:EstPhiccinttailest} follows in a similar manner.

Having in mind Remark~\ref{remark:UnifAsympPhitauad}, we use the expansion \eqref{eq:ModelRHPAsymp} to estimate
$$
\bm\Phi_{\tauad,+}(x)\bm\Delta_{\tauad}(x)=\left(1+\Boh(x^{-1})\right)x^{\sp_3/4}\bm U_0^{-1}\left(\bm I+\chi_{(-\infty,0)}(x)\ee^{\frac{4}{3}x^{3/2}_+}(1-\ee^{-\msf h_{\tauad}(x)})\bm E_{21}\right)\ee^{-\frac{2}{3}x_+^{3/2}\sp_3},\quad x\in E_\tauad.
$$
Observe that the factor $\chi_{(-\infty,0)}(x)\ee^{\frac{4}{3}x^{3/2}_+}(1-\ee^{-\msf h_{\tauad}(x)})$ is bounded, thanks to the facts that $x_+^{3/2}\in \ii\R$ and $\re H>0$ for $x<0$ (see Definition~\ref{deff:admissibleh}--(iii)). The identity above can be differentiated, and using it we obtain the crude bound
$$
\left[ \bm \Delta_\tauad(x\mid u)^{-1}\bm \Phi_{\tauad,+}(x\mid u)^{-1}\left(\bm \Phi_{\tauad,+} \bm\Delta_\tauad\right)'(x\mid u)\right]_{21}=\ee^{-\frac{4}{3}x_+^{3/2}}\Boh(|x|^{3}),\quad x\in E_\tauad,
$$
which is non-optimal in $x$, but will be enough for the coming estimates, and which is valid uniformly for $u\geq -\sad_0,\tad\in [\tad_0,1/\tad_0]$ as $\tauad\to\infty$. Thus, to conclude the result it is enough to estimate each of the integrals
$$
I_-\deff\int_{-\infty}^{-\tauad^\nu}|x|^3\frac{\ee^{\msf h_\tauad(x\mid u)}}{(1+\ee^{\msf h_\tauad(x\mid u)})^2}\dd x\quad \text{and}\quad 
I_+\deff\int_{\tauad^{\nu}}^{+\infty}|x|^3\frac{\ee^{-\frac{4}{3}x^{3/2}}\ee^{\msf h_\tauad(x\mid u)}}{(1+\ee^{\msf h_\tauad(x\mid u)})^2}\dd x.
$$
To estimate $I_-$, we use the inequalities $v/(1+v)\leq 1$ and $1/(1+v)\leq 1/v$, valid for $v>0$, to estimate
$$
\frac{\ee^{\msf h_\tauad(x\mid u)}}{(1+\ee^{\msf h_\tauad(x\mid u)})^2}\leq \ee^{-\msf h_{\tauad}(x\mid u)},
$$
and using now the inequality from Definition~\ref{deff:admissibleh}--(iii) along $\cSig_2=(-\infty,0)$, we obtain
$$
I_-\leq \ee^{-u}\int_{-\infty}^{-\tauad^\nu}|x|^3\ee^{-\eta |x|}\dd x=\Boh(\ee^{-u-\wt\eta\tauad^{\nu}/2})
$$
for some $\wt\eta>0$. In a similar manner, we also obtain that
$$
I_+\leq \ee^{-u} \int_{\tau^\nu}^{+\infty} |x|^3 \ee^{-\frac{4}{3}x^{3/2}+\wh\eta x^{2/3-\varepsilon} }\dd x=\Boh(\ee^{-u-\tauad^{3\nu/2}}),
$$
where now for the last equality we used Definition~\ref{deff:admissibleh}--(iii) along $\cSig_0=(0,\infty)$.

\end{proof}

\begin{proof}[Proof of Theorem~\ref{thm:integralPhitauPhicc}]
As in the previous proof, we identify $\kappa\in (0,1)$ from Theorem~\ref{thm:PhitauAsympt} with $\nu=(1-\kappa)/2\in (0,1/2)$. Thanks to \eqref{eq:EstPhitauadinttailest},
\begin{multline*}
\int_{-a\tau}^{b\tau}  \frac{\ee^{\msf h_\tauad(x\mid u)}}{\left(1+\ee^{\msf h_\tauad(x\mid u)}\right)^2}\left[ \bm \Delta_\tauad(x\mid u)^{-1}\bm \Phi_{\tauad,+}(x\mid u)^{-1}\left(\bm \Phi_{\tauad,+} \bm\Delta_\tauad\right)'(x\mid u)\right]_{21}\dd x\\
=
\int_{-\tauad^\nu}^{\tauad^\nu}  \frac{\ee^{\msf h_\tauad(x\mid u)}}{\left(1+\ee^{\msf h_\tauad(x\mid u)}\right)^2}\left[ \bm \Delta_\tauad(x\mid u)^{-1}\bm \Phi_{\tauad,+}(x\mid u)^{-1}\left(\bm \Phi_{\tauad,+} \bm\Delta_\tauad\right)'(x\mid u)\right]_{21}\dd x +
\Boh(\ee^{-u-\eta \tauad^{\nu}}),
\end{multline*}
valid as $\tauad\to \infty$ and uniformly for $u\geq \sad$ and $\tad_0\leq \tad\leq 1/\tad_0$. Next, we use Lemma~\ref{lem:convhtauhcc} and \eqref{eq:asympPhitauPhi0} to ensure that
\begin{multline*}
\int_{-\tauad^\nu}^{\tauad^\nu}  \frac{\ee^{\msf h_\tauad(x\mid u)}}{\left(1+\ee^{\msf h_\tauad(x\mid u)}\right)^2}\left[ \bm \Delta_\tauad(x\mid u)^{-1}\bm \Phi_{\tauad,+}(x\mid u)^{-1}\left(\bm \Phi_{\tauad,+} \bm\Delta_\tauad\right)'(x\mid u)\right]_{21}\dd x=\\
(1+\Boh(\tau^{-\kappa})) \int_{-\tauad^\nu}^{\tauad^\nu}  \frac{\ee^{\hcc(x)}}{(1+\ee^{\hcc(x)})^2}\left[(\Dcc(x\mid u,\tad)^{-1}\Phiccp(x\mid u,\tad)^{-1}(\Phiccp \Dcc )'(x\mid u,\tad)\right]_{21} \dd x.
\end{multline*}
With the help of the calculation
$$
\frac{\ee^{\hcc(x)}}{(1+\ee^{\hcc(x)})^2}=\frac{1}{(1+\ee^{\hcc(x)})(1+\ee^{-\hcc(x)})}=\frac{\ee^{-\hcc(x)}}{(1+\ee^{-\hcc(x)})^2}=\frac{\ee^{\tad x-u}}{(1+\ee^{\tad x-u})^2}
$$
we recognize the integrand from \eqref{eq:EstPhiccinttailest}, and then conclude that
\begin{multline*}
\int_{-a\tau}^{b\tau}  \frac{\ee^{\msf h_\tauad(x\mid u)}}{\left(1+\ee^{\msf h_\tauad(x\mid u)}\right)^2}\left[ \bm \Delta_\tauad(x\mid u)^{-1}\bm \Phi_{\tauad,+}(x\mid u)^{-1}\left(\bm \Phi_{\tauad,+} \bm\Delta_\tauad\right)'(x\mid u)\right]_{21}\dd x =\\
(1+\Boh(\tau^{-\kappa})) \int_{-\infty}^{\infty}  \frac{\ee^{\tad x-u}}{(1+\ee^{\tad x-u})^2}\left[(\Dcc(x\mid u,\tad)^{-1}\Phiccp(x\mid u,\tad)^{-1}(\Phiccp \Dcc )'(x\mid u,\tad)\right]_{21} \dd x + \Boh(\ee^{-u-\eta\tau^{\nu}}).
\end{multline*}
Finally, with arguments very similar to the ones used in the proof of Lemma~\ref{eq:lemtailsintPhis}, we see that this remaining integral on the right-hand side is $\Boh(\ee^{-u})$. Everything combined, we just proved that
\begin{multline*}
\int_{-\tauad a}^{\tauad b}\frac{\ee^{\msf h_\tauad(x\mid u)}}{\left(1+\ee^{\msf h_\tauad(x\mid u)}\right)^2}\left[ \bm \Delta_\tauad(x\mid u)^{-1}\bm \Phi_{\tauad,+}(x\mid u)^{-1}\left(\bm \Phi_{\tauad,+} \bm\Delta_\tauad\right)'(x\mid u)\right]_{21}\dd x 
=\\
\int_{-\infty}^\infty \frac{\ee^{\tad x-u}}{(1+\ee^{\tad x-u})^2}\left[(\Phiccp(x\mid u,\tad)\Dcc(x\mid u,\tad))^{-1}(\Phiccp \Dcc )'(x\mid u,\tad)\right]_{21} \dd x
+\Boh(\tau^{-\kappa}\ee^{-u}),\quad \tau\to +\infty.
\end{multline*}
We now integrate in $u\in [-\sad,+\infty)$ and use Equation~\eqref{eq:QccIntRepPhicc} and the limit
\begin{equation*}
\lim_{s\to -\infty}\Qcc(s,T)=1,
\end{equation*}
which is valid by dominated convergence since $|1+\ee^\alpha|^{-1}\leq 1$ for every $\alpha\in \R$, to conclude the proof.
\end{proof}

\section{The underlying equilibrium measure and related quantities}\label{sec:eqmeasuretau}

As we mentioned earlier, one of the main objects we will analyze is the RHP for the orthogonal polynomials associated to \eqref{deff:Ln}--\eqref{deff:DeformedPartFunction}. In this analysis, the model problem we discussed in the last few sections plays a central role, and another important quantity is the associated equilibrium measure and related objects that we discuss in this section.

\subsection{The equilibrium measure}\label{sec:equilibriummeasure}\hfill 

A major role in our calculations is played by the equilibrium measure for the polynomial potential $V$, which is the unique probability measure $\mu_V$ on $\R$ for which the quantity
$$
\iint \log \frac{1}{|x-y|}\dd \mu(x) \dd\mu(y)+\int V(x)\dd \mu(x)
$$
attains its minimum over all Borel probability measures $\mu$ supported on $\R$. Its existence and uniqueness is assured by standard results, see for instance \cite{Saff_book}, and its regularity that we now discuss is of particular relevance. 

The measure $\mu_V$ is supported on a finite union of bounded intervals and is absolutely continuous with respect to the Lebesgue measure \cite{deift_kriecherbauer_mclaughlin}. Following Assumption~\ref{MainAssumption}--(i), we assume that $\mu_V$ is one-cut, that is, it is supported on a single interval that we take to be of the form
$$
\supp\mu_V=[-a,0],\quad a>0,
$$
and regular, meaning that the density of $\mu_V$ vanishes as a square-root at a neighborhood of the endpoints, does not vanish on $(-a,0)$, and the Euler-Lagrange equations are valid with strict inequality outside the support,
\begin{equation}\label{eq:eulerlagrange}
\int \log\frac{1}{|x-y|}\dd \mu_V(y)+\frac{1}{2} V(x)+\ell_V \;
\begin{cases}
> 0, & x\in \R\setminus \supp\mu_V, \\
=0, & x\in \supp\mu_V,
\end{cases}
\end{equation}
for some constant $\ell_V\in \R$. The notions just introduced are consistent with the notions and notations that we already introduced and used in Section~\ref{sec:StatementResults}. 

A transformation of the equilibrium measure of particular interest is its {\it Cauchy transform},
$$
C^{\mu_V}(z)\deff \int \frac{\dd \mu_V(x)}{x-z},\quad z\in \C\setminus \supp\mu_V.
$$
Using the Euler-Lagrange identity, it can be shown that $C^{\mu_V}$ satisfies an algebraic equation of the form
\begin{equation}\label{eq:spectralcurvehV}
\left(C^{\mu_V}(z)+\frac{V'(z)}{2}\right)^2=\frac{1}{4}z(z+a)h_V(z)^2,\quad z\in \C\setminus \supp\mu_V,
\end{equation}
for some polynomial $h_V$ which does not vanish on $[-a,0]$  \cite{kuijlaars_silva_s_curves, MartinezOriveRakhmanov2015}.

We also associate to the equilibrium measure its $\phi$ function
\begin{equation}\label{eq:functionphitau}
\phi(z)\deff \int_{0}^z\left(C^{\mu_V}(s)+\frac{1}{2} V'(s)\right)\dd s,\quad z\in \C\setminus (-\infty,0].
\end{equation}
The next result summarizes some properties of $\phi$ that will be needed later.

\begin{prop}\label{prop:functionphitau}
The function $\phi$ has the following properties.
\begin{enumerate}[(i)]
\item The function $\phi$ is analytic on $\C\setminus (-\infty,0]$.

\item For $x\in (-a,0)$, 
$$
\phi_{+}(x)+\phi_{-}(x)=0, \quad \text{and}\quad \phi_{+}(x)-\phi_{-}(x)=-2\pi\ii \mu_V((x,0)).
$$

\item For $x\in (-\infty,-a)$,
$$\phi_{+}(x)-\phi_{-}(x)=-2\pi \ii.$$

\item For $x\in \R\setminus [-a,0]$, 
$$\re \phi_{+}(x)=\re \phi_{-}(x) >0.$$

\item As $z\to \infty$ and some constant $\phi_\infty$,
$$
\phi(z)=\frac{V(z)}{2}+\ell_V-\log z+\frac{\phi_\infty}{z}+\Boh(z^{-1}),
$$
where $\ell_V$ is as in \eqref{eq:eulerlagrange}.
\item The function $\phi$ satisfies the estimate
\begin{equation}\label{eq:phi_local_beh}
\phi(z)= \frac{1}{3} h_V(0)a^{1/2} z^{3/2}(1+\Boh(z)), \quad z\to 0.
\end{equation}
\end{enumerate}
\end{prop}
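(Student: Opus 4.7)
The strategy is to convert the integral definition of $\phi$ into a closed-form expression involving only the potential $V$, the Robin constant $\ell_V$, and the complex logarithmic potential of $\mu_V$, and then extract each of the six properties from standard facts about that potential. Concretely, I would introduce the auxiliary function
$$
G(z) \deff \int_{-a}^{0} \log(z-y)\,\dd\mu_V(y),
$$
using the principal branch of the logarithm, so that each integrand is analytic off $(-\infty,y]$ and hence $G$ is analytic on $\C\setminus(-\infty,0]$. Since $G'(z) = -C^{\mu_V}(z)$, we have $\phi'(z) = V'(z)/2 - G'(z)$, and integrating, with the constant determined from $\phi(0)=0$ together with the Euler-Lagrange equality in \eqref{eq:eulerlagrange} applied at $x=0 \in \supp\mu_V$, produces the identity
\begin{equation*}
\phi(z) \;=\; \frac{V(z)}{2} + \ell_V - G(z), \qquad z \in \C\setminus(-\infty,0].
\end{equation*}
Property (i) follows at once. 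For (v), I would expand $\log(z-y) = \log z - \sum_{k\ge 1} y^k/(k z^k)$ uniformly for $y\in[-a,0]$ as $|z|\to\infty$, integrate against $\mu_V$, and substitute, reading off $\phi_\infty = \int y\,\dd\mu_V(y)$.

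For (ii) and (iii), I would exploit that $\log(z-y)$ jumps by $2\pi\ii$ across $(-\infty,y]$, so the only $y$ contributing to the jump of $G$ at a real point $x$ are those with $y>x$. Hence $G_+(x)-G_-(x) = 2\pi\ii\,\mu_V((x,\infty))$, which specializes to $2\pi\ii\,\mu_V((x,0))$ on $(-a,0)$ and to $2\pi\ii$ on $(-\infty,-a)$. The difference identity in (ii) and the whole of (iii) then follow since $V/2+\ell_V$ has no jump on $\R$. For the sum identity in (ii), note the decomposition $G_\pm(x) = \int\log|x-y|\,\dd\mu_V(y) \pm \ii\pi\,\mu_V((x,\infty))$ on the whole real line, from which $G_+(x)+G_-(x) = 2\int \log|x-y|\,\dd\mu_V(y) = V(x)+2\ell_V$ for $x\in\supp\mu_V$ by the Euler-Lagrange equality, giving $\phi_++\phi_-=0$ on $(-a,0)$.

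Item (iv) is immediate from the same real/imaginary decomposition: $\re\phi_\pm(x) = V(x)/2 + \ell_V - \int\log|x-y|\,\dd\mu_V(y)$ everywhere on $\R$, and this is strictly positive off $\supp\mu_V$ by the strict Euler-Lagrange inequality in \eqref{eq:eulerlagrange}, while the equality $\re\phi_+ = \re\phi_-$ there is automatic because the jump is purely imaginary.

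Finally, for (vi), the spectral-curve relation \eqref{eq:spectralcurvehV} reads $\phi'(z)^2 = \tfrac14 z(z+a) h_V(z)^2$. Taking principal square roots gives $\phi'(z) = \pm\tfrac12 \sqrt{a}\,h_V(0)\,z^{1/2}(1+\Boh(z))$ as $z\to 0$, and the sign is fixed by the positivity of $\phi$ on $(0,\infty)$ established in (iv)—which simultaneously forces $h_V(0)>0$. Integrating from $0$ using $\phi(0)=0$ produces the claimed expansion. The only step requiring genuine care is this last one, where the principal branch of $\sqrt{z(z+a)}$ near the soft edge and the sign of $h_V(0)$ must be pinned down consistently from (iv); every other item reduces to direct bookkeeping on the log-potential side.
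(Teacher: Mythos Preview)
Your proof is correct and is precisely the standard argument the paper alludes to (the paper's own proof consists solely of the sentence ``The proof is standard using the properties of the equilibrium measure, see for instance \cite{deift_book}''); you have simply supplied the details that the paper omits, via the identity $\phi(z)=\tfrac{V(z)}{2}+\ell_V-G(z)$ with $G$ the complex logarithmic potential of $\mu_V$, which is exactly the route taken in the cited reference.
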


\begin{proof}
The proof is standard using the properties of the equilibrium measure, see for instance \cite{deift_book}.
\end{proof}

\subsection{The conformal map}\hfill 

Finally, using $\phi$ we construct a conformal map $\psi$, introduced formally with the next result.

\begin{prop}\label{prop:conformalmappsi}
The function
$$
\psi(z)\deff \left(\frac{3}{2}\phi(z)\right)^{2/3},
$$
is a conformal map from a neighborhood of the origin to a disk $D_{2\delta}(0)$, with $U_0\deff \psi^{-1}(D_\delta(0))$, and admits an expansion of the form
\begin{equation}\label{eq:derivativepsicpsi}
\psi(z)=\msf c_V z(1+\Boh(z)),\quad z\to 0,\qquad \msf c_V\deff  2^{-2/3}h_V(0)^{2/3}a^{1/3}>0.
\end{equation}
\end{prop}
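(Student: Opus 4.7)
The plan is to reduce the statement to an application of the inverse function theorem, after extracting the precise leading term of $\psi$ at the origin.

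The key point is to upgrade the expansion \eqref{eq:phi_local_beh} to an analytic factorization of the form $\phi(z) = \tfrac{2}{3} z^{3/2} F(z)$, with $F$ analytic on a disk around the origin and $F(0) = \tfrac{1}{2} h_V(0) a^{1/2}$. This is visible from \eqref{eq:spectralcurvehV}: near the origin one has $C^{\mu_V}(z) + V'(z)/2 = z^{1/2} f(z)$ with $f$ analytic and $f(0) = \tfrac{1}{2} a^{1/2} h_V(0)$, and the substitution $s = zt$ in \eqref{eq:functionphitau} gives
$$\phi(z) = z^{3/2} \int_0^1 t^{1/2} f(zt)\, dt,$$
where the remaining integral is manifestly analytic in $z$ on a full neighborhood of the origin.

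With this factorization in hand, the definition of $\psi$ yields
$$\psi(z) = \left(\tfrac{3}{2} \phi(z)\right)^{2/3} = z\, F(z)^{2/3}.$$
Since $F(0) > 0$, an analytic branch $G \deff F^{2/3}$ is well defined on a neighborhood of the origin with $G(0) = F(0)^{2/3} = \msf c_V$, and thus $\psi(z) = z G(z)$ extends analytically across the branch cut of $\phi$ and admits the claimed expansion \eqref{eq:derivativepsicpsi}. Since $\psi(0) = 0$ and $\psi'(0) = \msf c_V \neq 0$, the inverse function theorem shows that $\psi$ is conformal on a small open set $U \ni 0$; it then suffices to pick $\delta > 0$ small enough that $D_{2\delta}(0) \subset \psi(U)$ and set $U_0 \deff \psi^{-1}(D_\delta(0))$.

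The only nontrivial point is the single-valuedness of $\psi$ across the cut of $\phi$ at the origin: $\phi$ carries a branch singularity of order $3/2$ at $z = 0$, which is precisely compensated by the outer $2/3$-power, as the factorization $\phi = \tfrac{2}{3} z^{3/2} F$ with analytic $F$ makes explicit. Everything else reduces to routine bookkeeping.
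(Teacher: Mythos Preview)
Your proof is correct and follows essentially the same standard route the paper alludes to (the paper simply declares the result standard and omits the details, pointing to the local expansion of $\phi$ in Proposition~\ref{prop:functionphitau}). Your explicit factorization $\phi(z)=z^{3/2}\int_0^1 t^{1/2}f(zt)\,dt$ via the substitution $s=zt$ is a clean way to promote the asymptotic statement \eqref{eq:phi_local_beh} to an honest analytic factorization, which is exactly what is needed to see that the $2/3$-power cancels the branch singularity and that $\psi$ extends holomorphically through the cut; the paper's $\Boh(z)$ notation in \eqref{eq:phi_local_beh} does not by itself make this visible, so your argument is in fact more complete than the paper's sketch.
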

\begin{proof}
The proof is also standard, and follows essentially from Proposition~\ref{prop:functionphitau}-(iv). We omit the details.
\end{proof}

In the previous proposition, the factor $2\delta$ instead of $\delta$ is chosen just for later convenience, for the statement of Proposition~\ref{prop:ConformExtH}. Later on, we use $\psi$ only over the smaller neighborhood $D_\delta(0)$.

As it is customary in RHP analysis, at a later stage we will need to glue the model problem as a local parametrix for the original RHP for orthogonal polynomials. This gluing procedure is done, in our case, using the conformal map $\psi$. In usual situations, the jump matrices of the model local problem are piecewise constant or yet homogeneous, and as such this procedure of using the conformal map does not significantly alter them. However, in our situation the jump involves the function $Q$, and consequently the jump will be altered by the conformal map in a nontrivial way. 

With the next result we introduce the necessary quantities needed to keep track of this transformation. Recall the half rays $\cSig_j$, $j=0,1,2,3,$ which were introduced in \eqref{def:contour_sigma}. For the next statement, we talk about {\it neighborhoods} of $\cSig_j$, by which we mean open connected sets that contain $\cSig_j\setminus \{0\}$ in their interior.

\begin{prop}\label{prop:ConformExtH}
There exist neighborhoods $\mcal S_j$ of $\cSig_j$ and a function
$$
\msf H_Q:\mcal S\to \C,\quad \mcal S\deff \bigcup_{j=0}^3 \mcal S_j
$$
with the following properties.

\begin{enumerate}[(i)]
\item For the value $\delta>0$ in Proposition~\ref{prop:conformalmappsi}, the inclusions
$$
D_\delta(0)\subset \mcal S\quad \text{and}\quad \mcal S_j\cap \mcal S_k\subset D_\delta(0),
$$
hold true for any $j\neq k$.

\item The function $\msf H_Q$ is $C^\infty$ on $\mcal S$, and is an extension of $Q\circ \psi^{-1}$ from $D_\delta(0)$, that is,
\begin{equation}\label{eq:RelHQQpsi}
\msf H_Q(w)=Q(\psi^{-1}(w)),\quad |w|<\delta.
\end{equation}

\item The function $\msf H_Q$ is analytic on $D_\delta(0)$, extends continuously up to the boundary of $D_\delta(0)$ and satisfies
\begin{equation}\label{eq:expansionHQ}
\msf H_Q(w)=-\msf c_{\msf H}w+\Boh(w),\quad \msf c_{\msf H}\deff \frac{\tad}{\msf c_V},
\end{equation}
uniformly for $|w|\leq \delta$, where we recall that $\tad$ and $\msf c_V$ are as in \eqref{deff:tQprime} and \eqref{eq:derivativepsicpsi}, respectively.

\item For some constants $\wh\eta>\eta>0$, the function $\msf H_Q$ satisfies the estimates
$$
-\wh\eta |w|\leq \re \msf H_Q(w)\leq -\eta |w| \qquad \text{for every }w\in \mcal S_0\setminus D_\delta(0),
$$
and
$$
\re \msf H_Q(w)\geq \eta |w| \qquad \text{for every } w\in \left(\mcal S_1\cup\mcal S_2\cup \mcal S_3\right)\setminus D_\delta(0).
$$

\end{enumerate}
\end{prop}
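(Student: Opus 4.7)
The strategy is to define $\msf H_Q$ first on the disk $D_\delta(0)$ by the composition $Q\circ \psi^{-1}$, and then to extend it smoothly along each ray to a function whose leading behavior is governed by the linear term $-\msf c_{\msf H} w$. The proof therefore splits into three steps: an analytic construction inside $D_\delta(0)$, a smooth extension outside, and a verification of the sign/growth conditions in (iv).

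\textbf{Step 1 (interior disk).} By Proposition~\ref{prop:conformalmappsi}, $\psi$ is a conformal bijection from $U_0$ onto $D_\delta(0)$ (with the slightly larger inverse domain $D_{2\delta}(0)$), so $\psi^{-1}$ is analytic on $D_\delta(0)$. Since $Q$ is analytic on a neighborhood of the real axis (Assumption~\ref{MainAssumption}--(ii)) and $\psi^{-1}(D_\delta(0))=U_0$ is a neighborhood of the origin, by shrinking $\delta$ if necessary we may assume $U_0$ is contained in the domain of analyticity of $Q$. Then $\msf H_Q(w)\deff Q(\psi^{-1}(w))$ is analytic on $D_\delta(0)$ and extends continuously to its boundary, which establishes \eqref{eq:RelHQQpsi} and the analytic part of (ii). For (iii), combine $Q(z)=-\tad z+\Boh(z^2)$ (from $Q(0)=0$, $Q'(0)=-\tad$) with $\psi^{-1}(w)=w/\msf c_V+\Boh(w^2)$ (inverted from \eqref{eq:derivativepsicpsi}) to obtain
$$
\msf H_Q(w)=-\frac{\tad}{\msf c_V}w+\Boh(w^2)=-\msf c_{\msf H}w+\Boh(w^2),\qquad w\to 0.
$$

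\textbf{Step 2 (extension and gluing).} Fix narrow tubular neighborhoods $\mcal S_j$ of the rays $\cSig_j$ whose widths shrink with $|w|\to\infty$ and are chosen so that $\mcal S_j\cap\mcal S_k\subset D_\delta(0)$ for $j\neq k$; this handles (i). On each $\mcal S_j\setminus D_\delta(0)$, define the extension by means of a radial $C^\infty$ cutoff $\chi:[0,\infty)\to[0,1]$ with $\chi(r)=1$ for $r\leq \delta$ and $\chi(r)=0$ for $r\geq 3\delta/2$, setting
$$
\msf H_Q(w)\deff \chi(|w|)\,Q(\psi^{-1}(w))+\bigl(1-\chi(|w|)\bigr)\bigl(-\msf c_{\msf H}w\bigr),\qquad w\in\mcal S_j.
$$
(Here we ensure $3\delta/2<2\delta$ so that the first term is defined on the support of $\chi$.) By construction the two defining formulas agree on $D_\delta(0)$, namely both equal the analytic $Q\circ\psi^{-1}$ up to an $\Boh(w^2)$ correction, and both pieces are $C^\infty$, so $\msf H_Q$ is $C^\infty$ on $\mcal S$, completing (ii). Outside $D_{3\delta/2}(0)$ the formula reduces to the entire linear function $-\msf c_{\msf H}w$, which is clearly smooth and whose behavior on $\mcal S$ is now easy to analyze.

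\textbf{Step 3 (sign and growth bounds in (iv)).} On $\cSig_0=[0,\infty)$ the linear part satisfies $\re(-\msf c_{\msf H}w)=-\msf c_{\msf H}|w|<0$, while on the other three rays $\cSig_1,\cSig_2,\cSig_3$ the angles $2\pi/3,\pi,-2\pi/3$ give $\re(-\msf c_{\msf H}w)\in\{\tfrac{\msf c_{\msf H}}{2}|w|,\msf c_{\msf H}|w|,\tfrac{\msf c_{\msf H}}{2}|w|\}>0$. By continuity, each sign persists on a sufficiently narrow tube around the corresponding ray, with a linear lower bound $|{\re}\,(-\msf c_{\msf H}w)|\geq \eta|w|$ for some $\eta>0$ independent of $|w|$; this fixes the choice of width for $\mcal S_j$. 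In the transition annulus $\delta\leq|w|\leq 3\delta/2$, $\msf H_Q(w)=-\msf c_{\msf H}w+\Boh(w^2)=-\msf c_{\msf H}w+\Boh(\delta^2)$ by Step~1, and the linear leading part (of size $\asymp \delta$ on the rays) dominates the $\Boh(\delta^2)$ correction provided $\delta$ is small, yielding the same sign and linear bounds up to adjusting $\eta,\wh\eta$. Outside $D_{3\delta/2}(0)$ the bounds are exact with $\eta=\wh\eta=\msf c_{\msf H}/2$ (or $\msf c_{\msf H}$ on $\cSig_0,\cSig_2$). Patching these three regions gives (iv).

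\textbf{Main obstacle.} The one delicate point is simultaneously fulfilling conditions (i) and (iv): the tubes $\mcal S_j$ must be wide enough to be genuine neighborhoods of the $\cSig_j$ yet narrow enough that $\re(-\msf c_{\msf H}w)$ retains the correct sign with a linear lower bound throughout, and at the same time the overlaps $\mcal S_j\cap\mcal S_k$ must be confined to $D_\delta(0)$. This is resolved by taking the tubes as cones of small aperture around each ray intersected with $\{|w|>\delta/2\}$, plus the disk $D_\delta(0)$; the aperture is then fixed small enough (independently of $\tad\in[\tad_0,1/\tad_0]$) so that the real part of $-\msf c_{\msf H}w$ has the required sign uniformly. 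The analysis in the transition annulus, where both the cutoff and the $\Boh(w^2)$ remainder of $Q\circ\psi^{-1}$ are active, then goes through by shrinking $\delta$ further if necessary.
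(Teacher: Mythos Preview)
Your proof is correct and follows essentially the same approach as the paper's: define $\msf H_Q=Q\circ\psi^{-1}$ on $D_\delta(0)$, take narrow tubular neighborhoods of the rays for the $\mcal S_j$, and extend smoothly via a partition-of-unity/cutoff argument so that the required sign conditions hold. The paper's proof is in fact much terser than yours---it simply invokes ``a standard argument using partitions of unity'' without naming the target function---whereas you make the construction explicit by interpolating to the linear map $-\msf c_{\msf H}w$ and then verifying (iv) directly on the rays and in the transition annulus; this extra concreteness is a plus, not a deviation.
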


\begin{proof}
We construct the set $\mcal S_j$ as tubular neighborhoods of $\cSig_j$ away from the origin, and as disks near the origin, namely
$$
\mcal S_j=\mcal S_j(\delta')=\left\{ w\in \C \mid \inf_{w'\in \cSig_j}|w-w'|< \delta' \right\}\cup D_\delta(0).
$$
By choosing $\delta'>0$ sufficiently small, in particular smaller than $\delta>0$, property (i) is immediate. 

The function
$$
\msf H_Q(w)\deff Q(\psi^{-1}(w)),\quad |w|<\delta,
$$
is obviously analytic on $D_{\delta}(0)$, satisfies (iii) and admits an extension to the larger open set $D_{2\delta}(0)$. A standard argument using partitions of unity allows us to extend it to the sets $\mcal S_j$'s as claimed by (ii), also making sure that it satisfies (iv).
\end{proof}

\section{Associated orthogonal polynomials}\label{sec:rhpapproachops}

The first and arguably major step towards understanding $\msf L^Q_n(\sad)$ is to study several quantities related to the orthogonal polynomials for the varying weight \eqref{def:perturbedweight}, as we introduce next.

\subsection{Orthogonal polynomials and related quantities}\hfill

Denote by $\msf P_k=\msf P_k^{(n,\sad)}$ the monic orthogonal polynomial of degree $k$ for the weight $\omega_n$ in \eqref{def:perturbedweight},
\begin{equation}\label{def:opsdeformed}
\msf P_k^{(n,\sad)}(x)=x^k+\text{(lower degree terms)},\qquad \int_{\R}\msf P_k^{(n,\sad)}(x)x^j\omega_n(x)\dd x=0, \quad j=0,\hdots, k-1.
\end{equation}
These polynomials depend on $\omega_n$, so ultimately also on $Q$, but we refrain from stressing this dependence in the notation. We also denote by $\msfga_k^{(n,Q)}=\msfga_k^{(n,Q)}(\sad)>0$ the corresponding norming constant, determined by
\begin{equation}\label{def:normingdeformed}
\frac{1}{\upgamma_k^{(n)}(\sad)^2}=\int_\R \msf P_k^{(n,\sad)}(x)^2\omega_n(x)\dd x.
\end{equation}

We associate to the orthogonal polynomials their Christoffel-Darboux kernel,
\begin{equation}\label{deff:Kndefweight}
\msf K^Q_n(x,y)=\msf K^Q_n(x,y\mid \sad)\deff \sum_{k=0}^{n-1} \msfga_k^{(n)}(\sad)^2\msf P_k^{(n,\sad)}(x)\msf P_k^{(n,\sad)}(y),
\end{equation}
stressing that we are not including the weight $\omega_n$ in this definition. 
In particular, the identity
\begin{equation}\label{eq:IntReprKernel}
\int_{-\infty}^\infty \msf K_n(x,x\mid \sad)\omega_n(x\mid \sad)\dd x=n
\end{equation}
holds true for any $\sad \in \R$ and follows immediately from \eqref{def:normingdeformed}.


In a similar manner, we introduce the related quantities for the undeformed weight $\ee^{-nV}$. The partition function $Z_n$ already appeared in \eqref{deff:nondeformedpartfunction}, the orthogonal polynomials $P_k=P_k^{(n)}$ are determined by
\begin{equation*}
P_k^{(n)}(x)=x^k+\text{(lower degree terms)},\qquad \int_{\R}P_k^{(n)}(x)x^j\ee^{-nV(x)}\dd x=0, \quad j=0,\hdots, k-1,
\end{equation*}
and the norming constants and Christoffel-Darboux kernel are determined from
$$
\frac{1}{(\gamma_k^{(n)})^2}=\int_\R P_k^{(n)}(x)^2\ee^{-nV(x)}\dd x,\quad K_n(x,y)\deff \sum_{k=0}^{n-1}(\gamma_k^{(n)})^2P_k^{(n)}(x)P_k^{(n)}(y).
$$

The orthogonal polynomials $\msf P_k^{(n,\sad)}$ vary continuously with $\sad$, which is a consequence of Heine's formula \cite[Equation~(3.10)]{deift_book}. In particular, when taking the limit $\sad\to+\infty$ we have that $x^k\omega_n(x)\to x^k\ee^{-nV}$ both uniformly in compacts and also in $L^1$, and $|x|^k\omega_n(x)\leq |x|^k\ee^{-nV(x)}$. Thus, dominated convergence then gives that all the just introduced undeformed quantities are recovered from their deformed versions in the limit $s\to+\infty$. This means, for instance, that the Christoffel-Darboux kernel $K_n$ and the partition function $Z_n$ are recovered via
\begin{equation}\label{eq:KernelContin}
K_n(x,y)=\msf K_n(x,y\mid \sad=+\infty)\quad \text{and}\quad Z_n=\msf Z_n(\sad=+\infty).
\end{equation}
The next result will be key into transforming asymptotics for the orthogonal polynomials to asymptotics for $\msf L^Q_n(s)$ itself.

\begin{prop}\label{prop:intreprLn}
The identity
\begin{equation}\label{eq:PartitionFctionNormingctt}
\log\msf L_n^Q(\sad) = -\int_\sad^{\infty}\int_{-\infty}^{\infty} \msf K^Q_n(x,x\mid u)\frac{\omega_n(x\mid u)}{1+\ee^{u+n^{2/3}Q(x)}}\dd x\; \dd u
\end{equation}
holds true for every $\sad\in \R$.
\end{prop}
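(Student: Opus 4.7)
The proof proposal goes through a differential identity in the parameter $\sad$, coupled with a boundary evaluation at $\sad = +\infty$. Let me sketch the steps.

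\textbf{Step 1: Differentiate in $\sad$.} Since $Z_n$ is independent of $\sad$, it suffices to compute $\partial_\sad \log \msf Z_n^Q(\sad)$. A direct calculation gives
\[
\partial_\sad \log \sigma_n(x\mid \sad) = \partial_\sad \log \frac{1}{1+\ee^{-\sad-n^{2/3}Q(x)}} = \frac{1}{1+\ee^{\sad+n^{2/3}Q(x)}},
\]
so that, differentiating under the integral sign in \eqref{deff:DeformedPartFunction} (justified by dominated convergence, since $\sigma_n\leq 1$ and all factors are smooth in $\sad$),
\[
\partial_\sad \log \msf Z_n^Q(\sad) = \E_{\sad}\!\left[\sum_{j=1}^n \frac{1}{1+\ee^{\sad+n^{2/3}Q(\lambda_j)}}\right],
\]
where $\E_\sad$ is the expectation with respect to the deformed ensemble with density proportional to $\prod_{j<k}(\lambda_k-\lambda_j)^2\prod_j \omega_n(\lambda_j\mid \sad)$.

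\textbf{Step 2: Reduce to the kernel via the one-point function.} The deformed ensemble is determinantal with correlation kernel $\omega_n(x\mid \sad)^{1/2}\omega_n(y\mid \sad)^{1/2}\msf K_n^Q(x,y\mid \sad)$, hence its one-point density is $\omega_n(x\mid \sad)\msf K_n^Q(x,x\mid \sad)$. Applying this to the linear statistic above yields
\[
\partial_\sad \log \msf L_n^Q(\sad) = \int_{-\infty}^{\infty} \frac{\omega_n(x\mid \sad)\,\msf K_n^Q(x,x\mid \sad)}{1+\ee^{\sad+n^{2/3}Q(x)}}\,\dd x.
\]

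\textbf{Step 3: Boundary condition at $\sad=+\infty$.} From \eqref{deff:Ln} and dominated convergence (using $\sigma_n(\lambda\mid \sad)\to 1$ pointwise and $0\leq \sigma_n\leq 1$ as $\sad\to+\infty$), we have $\msf L_n^Q(\sad)\to 1$, so $\log \msf L_n^Q(\sad)\to 0$. Integrating the identity in Step 2 from $\sad$ to $+\infty$ then yields
\[
-\log \msf L_n^Q(\sad) = \int_\sad^{\infty}\!\int_{-\infty}^{\infty} \frac{\omega_n(x\mid u)\,\msf K_n^Q(x,x\mid u)}{1+\ee^{u+n^{2/3}Q(x)}}\,\dd x \,\dd u,
\]
which is the claimed formula \eqref{eq:PartitionFctionNormingctt}.

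\textbf{Anticipated difficulty.} The computation itself is elementary; the only genuinely technical point is justifying interchange of differentiation/integration and, in particular, the Fubini step when combining Steps 2 and 3. Both are handled by noting the pointwise bound $(1+\ee^{u+n^{2/3}Q(x)})^{-1}\leq \ee^{-u-n^{2/3}Q(x)}$ when this factor is small, combined with the super-exponential decay of $\ee^{-nV(x)}$ in $x$ and of $(1+\ee^{u+\cdots})^{-1}$ in $u$ on the region where $Q(x)>0$ (which is $(-\infty,0)$ by Assumption~\ref{MainAssumption}--(ii)); on $(0,\infty)$ one uses instead that $\omega_n(x\mid u) \leq \ee^{-u-n^{2/3}Q(x)}\ee^{-nV(x)}$ so the $u$-integral converges uniformly. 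The determinantal identity used in Step 2 is standard for unitarily-invariant ensembles (see \cite{deift_book}) and requires no adaptation.
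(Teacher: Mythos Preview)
Your proof is correct and in fact more direct than the paper's. The paper starts from the product formula $\msf Z_n^Q(\sad)=n!\prod_{k=0}^{n-1}\msfga_k^{(n)}(\sad)^{-2}$ and differentiates to obtain Krasovsky's identity
\[
\partial_\sad\log\msf Z_n^Q(\sad)=-\int_\R \partial_\sad\msf K_n^Q(x,x\mid\sad)\,\omega_n(x\mid\sad)\,\dd x,
\]
which places the $\sad$-derivative on the kernel rather than on the weight. After integrating in $\sad$ over $[\sad,L]$, the paper must then interchange the order of integration and perform an integration by parts in $u$; the resulting boundary terms $\int\msf K_n^Q(x,x\mid\cdot)\omega_n(x\mid\cdot)\,\dd x$ at $u=\sad$ and $u=L$ are each equal to $n$ by \eqref{eq:IntReprKernel} and cancel, after which one sends $L\to\infty$.

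Your route bypasses all of this: by differentiating $\log\msf Z_n^Q$ directly from its integral definition and recognising the result as a linear statistic for the deformed ensemble, you land immediately on the correct integrand and need only integrate once over $u\in[\sad,\infty)$ using the elementary boundary value $\msf L_n^Q(+\infty)=1$. The Fubini/Tonelli issue you flag is in fact automatic here, since the integrand is nonnegative (because $\msf K_n^Q(x,x\mid u)=\sum_k\msfga_k^2\msf P_k(x)^2\ge 0$) and the iterated integral is finite. What the paper's detour buys is a self-contained derivation from the norming-constant product formula, which fits its orthogonal-polynomial viewpoint; what your argument buys is brevity and a cleaner justification.
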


\begin{remark}
While we were finishing this manuscript, the work \cite{ClaeysGlesner2021} was posted to the ArXiv. Therein, they also derive the formula \eqref{eq:PartitionFctionNormingctt} in more general terms, using the underlying RHP for IIKS-type integrable operators, see the first displayed formula in page 28 therein. Similar formulas play a fundamental role in the recent works \cite{CafassoClaeys2021,CafassoClaeysRuzza2021}, see for instance \eqref{eq:QccIntRepPhikpz} above. Our proof of \eqref{eq:PartitionFctionNormingctt} relies solely on orthogonality properties, so we decided to present it nevertheless.
\end{remark}

\begin{proof}
The equality
$$
\msf Z^Q_n(\sad)=n!\prod_{k=0}^{n-1}\msfga_k^{(n)}(\sad)^{-2}
$$
is standard in random matrix theory. From this identity, \eqref{def:normingdeformed} and the orthogonality relations we derive the deformation formula
\begin{equation*}
\partial_\sad \log \msf Z^Q_n(\sad)  =-\int_\R \partial_\sad\msf K^Q_n(x,x\mid \sad)\omega_n(x\mid \sad)\dd x,
\end{equation*}
which in fact is valid for general weights depending on an additional parameter $\sad$. To our knowledge, this last identity was first observed by Krasovsky \cite[Equation~(14)]{Krasovsky07}. We fix constants $L>\sad>0$ and integrate the identity above, 
\begin{equation}\label{eq:propLndef1}
\log\msf Z^Q_n(\sad) =\log\msf Z^Q_n(L)+\int_{\sad}^L \int_{-\infty}^\infty \msf \partial_\sad \msf K^Q_n(x,x\mid u)\omega_n(x\mid u)\dd x\dd u.
\end{equation}
We want to interchange the order of integration in the above. The derivative $\partial_\sad\msf K^Q_n$ is a polynomial of degree at most $2n-2$ in $x$, and from Heine's formula for orthogonal polynomials we see that the polynomial coefficients of $\msf K_n$ are continuous functions of $\sad$. Therefore, for given $\sad,L$ there exists a constant $M=M(\sad,L)>0$ for which the pointwise bound
$$
\left|\partial_\sad \msf K^Q_n(x,x\mid u)\right| \leq M \sup_{0\leq k\leq 2n-2}|x|^k,\quad x\in \R,
$$
is valid for every $u\in [\sad,L]$. Together with the inequalities $0\leq \omega_n(x)\leq \ee^{-nV(x)}$, this bound ensures that we can interchange order of integration in \eqref{eq:propLndef1}. After integration by parts, we then obtain
\begin{multline*}
\log\msf Z^Q_n(\sad) =\log\msf Z^Q_n(L)+\int_{-\infty}^\infty \msf K^Q_n(x,x\mid L)\omega_n(x\mid L)\dd x-\int_{-\infty}^\infty \msf K^Q_n(x,x\mid \sad)\omega_n(x\mid \sad)\dd x \\ 
-\int_{-\infty}^{\infty} \int_\sad^L \msf K^Q_n(x,x\mid u)\frac{\omega_n(x\mid u)}{1+\ee^{u+n^{2/3}Q(x)}}\dd u
\end{multline*}
From the identity \eqref{eq:IntReprKernel} the two single integrals cancel one another. The integrand of the double integral is positive, so by Tonelli's Theorem we can interchange order of integration. After this interchange, we take the limit $L\to +\infty$ and use \eqref{eq:KernelContin} and \eqref{deff:Ln} to conclude the proof.
\end{proof}

\subsection{The Riemann-Hilbert Problem for orthogonal polynomials}\hfill 

We are ready to introduce the RHP for orthogonal polynomials for the weight $\omega_n$ in \eqref{def:perturbedweight}. During this section, we keep using the matrix notation that was already used in previous sections, recall for instance \eqref{deff:matrixnot1},\eqref{deff:matrixnot2} and \eqref{deff:matrixnot3}.

The RHP for orthogonal polynomials for the weight \eqref{def:perturbedweight} asks for finding a $2\times 2$ matrix-valued function $\bm Y$ with the following properties.
\begin{enumerate}[\bf Y-1.]
\item The matrix $\bm Y:\C\setminus \R\to \C^{2\times 2}$ is analytic.
\item The function $\bm Y$ has continuous boundary values
$$
\bm Y_\pm(x)\deff \lim_{\varepsilon\searrow 0}\bm Y(x\pm \ii \varepsilon),\quad x\in \R,
$$
which are related by the jump condition $\bm Y_+(x)=\bm Y_-(x)\bm J_{\bm Y}(x)$, $x\in \R$, with
$$
\bm J_{\bm Y}(x)\deff \bm I+\omega_n(x)\bm E_{12}.
$$
\item As $z\to \infty$,
$$
\bm Y(z)=\left(\bm I+\Boh(z^{-1})\right)z^{n\sp_3}.
$$
\end{enumerate}

Observe that $\bm Y=\bm Y^{(n)}(\cdot \mid \sad,Q)$ depends on the index $n$ and also on $\sad$ and $Q$, although we do not make this dependence explicit in our notation. As shown by Fokas, Its and Kitaev \cite{FokasItsKitaev92}, for each $n$ the RHP above has a unique solution, which is explicitly given by
$$
\bm Y(z)=
\begin{pmatrix}
\msf P^{(n,\sad)}_{n}(z) &\ds{ \frac{1}{2\pi \ii}\int_{\R} \frac{P_n^{(n,\sad)}(x)}{x-z}\omega_n(x)\dd x }\\[10pt]
-2\pi \ii \msfga^{(n,Q)}_{n-1}(\sad)^2 \msf P^{(n,\sad)}_{n-1}(z) & \ds{-\msfga^{(n,Q)}_{n-1}(\sad)^2\dfrac{1}{2\pi \ii}\int_{\R} \frac{\msf P^{(n,\sad)}_{n-1}(x)}{x-z}\omega_n(x)\dd x}
\end{pmatrix},
$$
where $\msfga_k^{(n,Q)}(\sad)$ and $\msf P_k^{(n,\sad)}$ are as in \eqref{def:opsdeformed} and \eqref{def:normingdeformed}.

In particular, from this identity we obtain the relation
\begin{equation}\label{eq:relationnormingY1}
\msfga_{n-1}^{(n,Q)}(\sad)^2=-\frac{1}{2\pi \ii}\left(\bm Y^{(n,1)}\right)_{21},
\end{equation}
where $\bm Y^{(n,1)}=\bm Y^{(n,1)}(\sad,Q)$ is the matrix determined from the more detailed expansion
\begin{equation}\label{eq:ExpYRHPInfinity}
\bm Y(z)=\bm Y^{(n)}(z)=\left(\bm I+\frac{1}{z}\bm Y^{(n,1)}+\frac{1}{z^2}\bm Y^{(n,2)}+\Boh(z^{-3})\right)z^{n\sp_3},\quad z\to \infty.
\end{equation}
Also, the Christoffel-Darboux kernel \eqref{deff:Kndefweight} can be recast directly from $\bm Y$ from the identity
\begin{equation}\label{eq:KnYxy}
\msf K^Q_n(x,y\mid \sad)=\frac{1}{2\pi\ii}\frac{1}{x-y}\bm e_2^\tp \bm Y_+(y)^{-1}\bm Y_+(x) \bm e_1,\quad x,y\in \R, \; x\neq y.
\end{equation}
In the confluent limit $x=y$, this formula yields
\begin{equation}\label{eq:KnYxx}
\msf K^Q_n(x,x\mid \sad)=\frac{1}{2\pi\ii}\bm e_2^\tp \bm Y_+(x)^{-1}\bm Y'_+(x) \bm e_1,\quad x\in \R.
\end{equation}
%

The remainder of this paper is dedicated to applying the Deift-Zhou method for this RHP and collecting its consequences, analysis which will ultimately lead to the proofs of our main results.

\section{The RHP analysis for the orthogonal polynomials}\label{sec:rhpanalysis}

With all the preliminary work completed, we are finally at the stage of performing the asymptotic analysis for the {\bf RHP-$\bm Y$} for orthogonal polynomials that was introduced in Section~\ref{sec:rhpapproachops}. Most of the transformations are standard, so we go over them quickly and without much detail. Care will be taken in the construction of the parametrices, which are the steps where the introduction of the factor $\sigma_n$ plays a major role. We also remind the reader that the function $V$ and $Q$ are always assumed to satisfy Assumptions~\ref{MainAssumption}.

The function $\sigma_n=\sigma_n(z\mid \sad,Q)$ depends on $\sad\in \R$ and $Q$, and as such in all the steps below several quantities will also depend on these parameters. Nevertheless, in most of the work that follows the parameter $\sad$ and the function $Q$ do not play a major role so we omit them in our notations unless when needed to avoid confusion.

\subsection{First transformation: normalization at infinity}\hfill

Recall the function $\phi$ introduced in \eqref{eq:functionphitau}. The first transformation, which has the effect of normalizing the RHP as $z\to \infty$, takes the form
\begin{equation}\label{eq:transfYT}
\bm T(z)\deff \ee^{-n\ell_V \sp_3} \bm Y(z)\ee^{n\left(\phi(z)-\frac{1}{2}V(z)\right)\sp_3},\quad z\in \C\setminus \R.
\end{equation}

From the RHP for $\bm Y$ and the properties from Proposition~\ref{prop:functionphitau}, we obtain that $\bm T$ satisfies the following RHP.

\begin{enumerate}[\bf T-1.]
\item The matrix $\bm T:\C\setminus \R\to \C^{2\times 2}$ is analytic.
\item For $z\in \R$, it satisfies the jump $\bm T_+(z)=\bm T_-(z)\bm J_{\bm T}(z)$, with
$$
\bm J_{\bm T}(z)
\deff
\begin{pmatrix}
\ee^{n(\phi_{+}(z)-\phi_{-}(z))} & \sigma_{n}(z)\ee^{-n(\phi_{+}(z)+\phi_{-}(z))} \\
0 & \ee^{-n(\phi_{+}(z)-\phi_{-}(z))}
\end{pmatrix},\quad z\in \R.
$$
\item As $z\to \infty$,
$$
\bm T(z)=\bm I+\frac{1}{z}\bm T_1+\Boh(z^{-2}),
$$
where the coefficient $\bm T_1$ is 
$$
\bm T_1\deff \ee^{-n\ell_V \sp_3}\bm Y^{(n,1)}\ee^{n\ell_V \sp_3}+\phi_\infty \sp_3,
$$
and we recall that $\bm Y^{(n,1)}$ and $\phi_\infty$ were introduced in \eqref{eq:ExpYRHPInfinity} and in Proposition~\ref{prop:functionphitau}--(v), respectively.

\end{enumerate}

From the properties (ii) of Proposition~\ref{prop:functionphitau}, the jump matrix for $\bm T$ simplifies in convenient ways.
For $-a<z<0$,
\begin{align*}
\bm J_\bm T(z) & =
\begin{pmatrix}
\ee^{2n\phi_{+}(z)} & \sigma_n(z) \\ 
0 & \ee^{-2n\phi_{+}(z)}
\end{pmatrix} \\
 & =\left(\bm I+\frac{1}{\sigma_n(z)} \ee^{-2n\phi_{+}(z)}\bm E_{21}\right)\left(\sigma_n(z)\bm E_{12}-\frac{1}{\sigma_n(z)}\bm E_{21}\right)\left(\bm I+\frac{1}{\sigma_n(z)} \ee^{2n\phi_{+}(z)}\bm E_{21}\right) \\
& =\left(\bm I+\frac{1}{\sigma_n(z)}\ee^{2n\phi_{-}(z)}\bm E_{21}\right)\left(\sigma_n(z)\bm E_{12}-\frac{1}{\sigma_n(z)}\bm E_{21}\right)\left(\bm I+\frac{1}{\sigma_n(z)} \ee^{2n\phi_{+}(z)}\bm E_{21}\right),
\end{align*}
and for $z\in \R\setminus [-a,0]$,
\begin{align*}
\bm J_\bm T(z)=\bm I+\sigma_n(z) \ee^{-2n\phi_{+}(z)}\bm E_{12}.
\end{align*}

\subsection{Second transformation: opening of lenses}\hfill

From the identities just written for $\bm J_{\bm T}$ and Proposition~\ref{prop:functionphitau}-(ii), it follows that the diagonal entries of $\bm J_{\bm T}$ are highly oscillatory on $(-a,0)$ as $n\to\infty$. In the second transformation of the RHP we perform the so-called opening of lenses, which has the effect of moving this oscillatory behavior to a region where it becomes exponentially decaying. 

Define regions $\mcal G^\pm$ on the $\pm$-side of $(-a,0)$ (the lenses, see Figure~\ref{fig:openinglenses}), assuming in addition that for $U_0$ as in Proposition~\ref{prop:conformalmappsi} these regions satisfy
\begin{equation}\label{eq:transformationlenses}
\psi(\partial \mcal G^\pm \cap U_0 )\subset (0,\ee^{\pm 2\pi \ii /3}\infty)\cup (0,\infty),
\end{equation}
which can always be achieved because $\psi:$ is conformal from a neighborhood of $U_0$ to a neighborhood of $D_\delta(0)$.

The function $\sigma_n$ has no zeros and may have singularities, but these are all poles due to the analyticity of $Q$ in a neighborhood of the real axis. Therefore, the fraction $1/\sigma_n$ is analytic on a neighborhood of the real axis. We use this fraction to transform
$$
\bm S(z)\deff
\begin{dcases}
\bm T(z)\left(\bm I\mp \frac{1}{\sigma_n(z)} \ee^{2n\phi(z)}\bm E_{21}\right),& z\in \mcal G^\pm, \\
\bm T(z), & \text{elsewhere}.
\end{dcases}
$$
\begin{figure}[t]
		\centering
		\begin{tikzpicture}[scale=1]
%

\node at (0,0) (a) {};
\node at (5,0) (b) {};
\node at (6,0) (b1) {};
\fill (a) circle[radius=2.5pt] node [above left] {$-a$};
\fill (b1) circle[radius=2.5pt] node [above right] {$0$};

\draw [thick, postaction={vmid arrow={black,scale=1.5}{.7}}] (a.center) to (b.center);
\draw [thick, postaction={vmid arrow={black,scale=1.5}{.6}}] (b.center) to (b1.center);
\draw [thick, postaction={vmid arrow={black,scale=1.5}{.5}}] ($(a)+(-2,0)$) to (a.center);
\draw [thick, postaction={vmid arrow={black,scale=1.5}{.5}}] (b1.center) to ($(b1)+(2,0)$);
\draw [thick, postaction={vmid arrow={black,scale=1.5}{.6}}] (a.center) to [out=45,in=180-45,edge node={node [pos=0.5,below,yshift=-8pt,xshift=-5pt] {$\mcal G^+$}}]  (b1.center);
\draw [thick, postaction={vmid arrow={black,scale=1.5}{.6}}] (a.center) to [out=-45,in=180+45,edge node={node [pos=0.5,above,yshift=8pt,xshift=-5pt] {$\mcal G^-$}}]  (b1.center);
		\end{tikzpicture}
	\caption{The regions used for the opening of lenses in the transformation $\bm T\mapsto \bm S$.}\label{fig:openinglenses}
\end{figure}
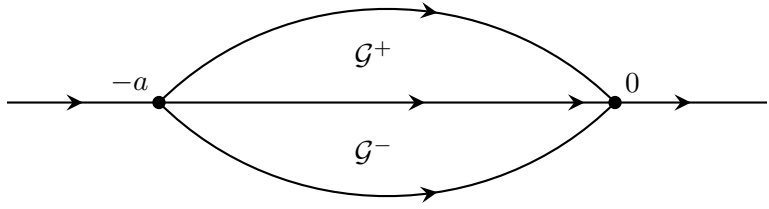

With
$$
\Gamma_{\bm S}\deff\R\cup \partial \mcal G^+\cup \partial \mcal G^-,
$$
and using the jump properties of $\phi$ listed in Proposition~\ref{prop:functionphitau}, the matrix $\bm S$ satisfies the following RHP.
\begin{enumerate}[\bf S-1.]
\item The matrix $\bm S:\C\setminus \Gamma_{\bm S}\to \C^{2\times 2}$ is analytic.
\item For $z\in \Gamma_{\bm S}$, it satisfies the jump $\bm S_+(z)=\bm S_-(z)\bm J_{\bm S}(z)$, with
$$
\bm J_{\bm S}(z)\deff
\begin{dcases}
\sigma_n(z)\bm E_{12}-\frac{1}{\sigma_n(z)}\bm E_{21}, & -a<z<0, \\
\bm I+ \frac{1}{\sigma_n(z)}\ee^{2n\phi(z)}\bm E_{21}, & z\in \partial \mcal G^\pm\setminus \R, \\
\bm I+\sigma_n(z)\ee^{-2n\phi_{+}(z)}\bm E_{12}, & z\in \R\setminus (-a,0).
\end{dcases}
$$
\item As $z\to \infty$,
$$
\bm S(z)=\bm I+\frac{\bm S_1}{z}+\Boh(z^{-2}),\qquad \text{with}\quad \bm S_1\deff \bm T_1.
$$
\item The matrix $\bm S$ remains bounded near the points $z=-a,0$.
\end{enumerate}

Before moving to the construction of the mentioned parametrices, we conclude this section with the needed estimate for the jump matrix $\bm J_{\bm S}$ away from $[-a,0]$. For that, recall the matrix norm notation introduced in \eqref{deff:matrixnorm},\eqref{deff:matrixLpnorm},\eqref{deff:matrixLpqnorm}. 

\begin{prop}\label{prop:decayJS}
For $U_0$ as in Proposition~\ref{prop:conformalmappsi}, introduce the set
$$
\Gamma_\varepsilon\deff \Gamma_{\bm S}\setminus \left([-a,0]\cup U_0\cup D_\varepsilon(-a)\right)
$$
For some $\varepsilon>0$, and possibly reducing $U_0$ if necessary, there is an $\eta>0$ such that
$$
\|\bm J_{\bm S}-\bm I\|_{L^1\cap L^2\cap L^\infty(\Gamma_\varepsilon)}=\Boh(\ee^{-\eta n}),
$$
as $n\to \infty$.
\end{prop}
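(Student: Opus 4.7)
The plan is to obtain uniform pointwise exponential decay of $\bm J_{\bm S} - \bm I$ on $\Gamma_\varepsilon$ and combine it with Gaussian-type decay at infinity on $\R$ to secure the $L^1 \cap L^2 \cap L^\infty$ bounds simultaneously. The estimates split naturally over the bounded lens arcs $\partial\mcal G^\pm \cap \Gamma_\varepsilon$ and the two real tails in $\R \cap \Gamma_\varepsilon$.

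For the $\phi$-function ingredients, I would invoke the standard properties from Proposition~\ref{prop:functionphitau}. Part (iv) gives $\re \phi_+(x) > 0$ strictly on $\R \setminus [-a,0]$, and part (v) gives $\re \phi_+(x) \sim V(x)/2$ as $|x| \to \infty$, with $V$ of even degree and positive leading coefficient. By compactness, there exist $R > 0$ and $c_1 > 0$ such that $\re \phi_+(x) \geq c_1$ on $\{x \in \R \cap \Gamma_\varepsilon : |x| \leq R\}$ and $\re \phi_+(x) \geq V(x)/4$ for $|x| \geq R$. On the lens side, the one-cut regularity of $\mu_V$ gives strict positivity of its density on $(-a, 0)$, so by Proposition~\ref{prop:functionphitau}(ii) one has $\phi_+'(x) - \phi_-'(x) = -2\pi\ii\, \mu_V'(x)$ with $\mu_V'>0$; combined with the Cauchy--Riemann equations this forces $\re \phi$ to strictly decrease as one moves vertically off the support. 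Hence for a sufficiently narrow choice of lens $\mcal G^\pm$ (and after possibly shrinking $U_0$ and $\varepsilon$), one has $\re \phi(z) \leq -c_2 < 0$ uniformly on $\partial\mcal G^\pm \cap \Gamma_\varepsilon$.

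Next I would control $\sigma_n$ and $\sigma_n^{-1}$. Since $Q$ is real on $\R$, $|\sigma_n(x)| \leq 1$ trivially. On the lens arcs, Assumption~\ref{MainAssumption}(ii) gives $Q > 0$ on $(-a, 0)$, hence $Q$ is bounded below by a positive constant $c > 0$ on the compact set $\{x \in \R : -a + \varepsilon \leq x \leq -\delta'\}$, where $D_{\delta'}(0) \subset U_0$. By analyticity of $Q$ in a complex neighborhood of this set, shrinking the lens width if necessary yields $\re Q(z) \geq c/2$ uniformly on $\partial\mcal G^\pm \cap \Gamma_\varepsilon$. Consequently
\[
|\sigma_n(z)^{-1}| = |1 + \ee^{-\sad - n^{2/3} Q(z)}| \leq 1 + \ee^{-\sad - c n^{2/3}/2} \leq 2
\]
for $n$ large, uniformly on that arc.

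Putting the pieces together, on $\partial\mcal G^\pm \cap \Gamma_\varepsilon$ one has $|\bm J_{\bm S}(z) - \bm I| \leq 2\ee^{-2 c_2 n}$; since this arc is a bounded rectifiable curve, all three norms are $\Boh(\ee^{-2 c_2 n})$. On $\R \cap \Gamma_\varepsilon$, the pointwise bound $|\bm J_{\bm S}(x) - \bm I| \leq \ee^{-2n \re \phi_+(x)}$ yields $\Boh(\ee^{-2 c_1 n})$ on $\{|x| \leq R\}$, and for the tail $\{|x| \geq R\}$ one uses $\ee^{-2n\re\phi_+(x)} \leq \ee^{-nV(x)/2} \leq \ee^{-V(x)/2}\ee^{-(n-1)V_R/2}$, where $V_R \deff \inf_{|x|\geq R} V(x)>0$. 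Since $\int_\R \ee^{-V(x)/2}\dd x < \infty$ (and the square of the integrand is also integrable), the tail $L^p$ norms are $\Boh(\ee^{-\eta_0 n})$ for some $\eta_0>0$. Taking $\eta$ to be the smallest of the resulting exponential rates completes the scheme. No step presents genuine technical difficulty; the main care goes into the geometric choices of $\mcal G^\pm$, $U_0$ and $\varepsilon$ so that both the sign condition on $\re \phi$ and the positivity of $\re Q$ hold simultaneously.
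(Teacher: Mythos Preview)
Your proof is correct and follows essentially the same approach as the paper: both use Proposition~\ref{prop:functionphitau}(iv) for the strict positivity of $\re\phi_+$ on the real tails, Proposition~\ref{prop:functionphitau}(ii) combined with the Cauchy--Riemann equations to get $\re\phi<0$ on the lens lips away from the endpoints, and the growth of $\re\phi_+$ at infinity for the $L^1$ and $L^2$ integrability. The only minor difference is that the paper bounds $1/\sigma_n$ crudely by $\Boh(\ee^{cn^{2/3}})$ on compacts (which is still dominated by the $\ee^{-\eta n}$ from $\phi$), whereas you exploit $\re Q>0$ on the lens arcs to get $|\sigma_n^{-1}|\leq 2$ directly; your version is slightly sharper but both reach the same conclusion.
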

\begin{proof}
From Proposition~\ref{prop:functionphitau}-(iv) we obtain that for any $\varepsilon>0$ there is a constant $\eta'>0$ for which
$$
\re \phi_{+}(x)\geq \eta',\quad \text{for every } x\in (-a-\varepsilon,\varepsilon).
$$
On the other hand, the jump conditions on Proposition~\ref{prop:functionphitau}-(ii) combined with Cauchy-Riemann equations imply in a standard way that $\re \phi\leq - \eta$ along the lipses of the lenses and away from the endpoints $-a$ and $0$, as long as the lens stay within a positive but small distance from the interval $[-a,0]$. From these pointwise estimates, the growth of $\re \phi$ as $z\to \pm\infty$ and the fact that $\sigma_n$ remains bounded on $\R$ and $1/\sigma_n$ grows at most with $\Boh(\ee^{c n^{2/3}})$ on compacts of $\C$, the claimed $L^p$ estimates follow in a standard manner.
\end{proof}

\subsection{Global parametrix}\hfill

The global parametrix problem, obtained after neglecting the jumps of $\bm S$ that are exponentially close to the identity, is the following RHP.

\begin{enumerate}[\bf G-1.]
\item $\bm G:\C\setminus [-a,0] \to \C^{2\times 2}$ is analytic.
\item For $z\in (-a,0)$, it satisfies the jump 
$$
\bm G_+(z)=\bm G_-(z)\left(\sigma_n(z)\bm E_{12}-\frac{1}{\sigma_n(z)}\bm E_{21}\right).
$$
\item As $z\to \infty$,
$$
\bm G(z)=\bm I+\Boh(z^{-1}).
$$

\item $\bm G$ has square-integrable singularities at $z=-a,0$.
\end{enumerate}

The construction of the global parametrix follows standard techniques. First, one introduces a function that we denote $\msf q(z)$, with the aim at transforming the RHP for $\bm G$ to a RHP with constant jumps. Then, by diagonalizing the resulting jump matrix, we further reduce the problem to two scalar-valued RHPs. With the help of Plemelj's formula, we then solve these scalar RHPs, and by tracing back all the transformations we recover the matrix $\bm G$ itself.

The procedure just described is standard in RHP literature, see for instance \cite[Appendix~A.1]{BKMMbook}, so we refrain from completing it in detail and instead only describe the final form of the solution.

The function $\sigma_n$ does not vanish and is real and positive over the real axis, so its real logarithm over the real axis is well defined. With this in mind, introduce
$$
\msf q(z)\deff \frac{((z+a)z)^{1/2}}{2\pi }\int_{-a}^{0} \frac{\log \sigma_n(x)}{\sqrt{|x|(x+a)}} \frac{\dd x}{x-z},\quad z\in \C\setminus [-a,0],
$$
where $(\cdot)^{1/2}$ stands for the principal branch of the square root and $\sqrt{\cdot}$ is reserved for the standard positive real root of positive real numbers.
This function $\msf q$ depends on $n$ but we do not make this dependence explicit for ease of notation. It is analytic on $\C\setminus [-a,0]$, and it is chosen to satisfy the jump condition
$$
\msf q_+(x)+\msf q_-(x)=-\log\sigma_n(x),\quad -a<x<0.
$$
Furthermore, standard calculations show that
$$
\msf q(z)=\Boh(1),\quad z\to -a,0,\qquad \text{and} \qquad \msf q(z)=\msf q_0+\frac{\msf q_1}{z} + \Boh(z^{-2}), \quad z\to\infty,
$$
with coefficients given by
%
%
\begin{equation}\label{eq:def_q0q1}
\msf q_0=\msf q_0(n)\deff-\frac{1}{2\pi}\int_{-a}^{0} \frac{\log\sigma_n(x)}{\sqrt{|x|(x+a)}}\dd x \quad \text{and}\quad 
\msf q_1=\msf q_1(n)\deff-\frac{1}{2\pi}\int_{-a}^{0} \frac{x\log\sigma_n(x)}{\sqrt{|x|(x+a)}}\dd x +\frac{a\msf q_0}{2}.
\end{equation}
Next, set
\begin{equation}\label{eq:def_U0}
\bm U_0\deff\frac{1}{\sqrt{2}}
\begin{pmatrix}
1 & \ii \\ \ii & 1
\end{pmatrix},
\quad 
\msf m(z)\deff \frac{z}{z+a},
\end{equation}
which is consistent with \eqref{def:matrixU0modelprobl}, and introduce
\begin{equation}\label{eq:def_globalparM}
\bm M(z)\deff
\bm U_0
\msf m(z)^{\sp_3/4}
\bm U_0^{-1}
=
\frac{1}{2}
\begin{pmatrix}
\msf m(z)+\dfrac{1}{\msf m(z)} & -\ii \left(\msf m(z)-\dfrac{1}{\msf m(z)}\right) \\
\ii \left(\msf m(z)-\dfrac{1}{\msf m(z)}\right) & \msf m(z)+\dfrac{1}{\msf m(z)}
\end{pmatrix}
\end{equation}

This matrix $\bm M$ satisfies
$$
\bm M_+(z)=\bm M_-(z)\left(\bm E_{12}-\bm E_{21}\right), \; -a<z<0, \quad \text{and}\quad \bm M(z)=\bm I-\frac{a}{4z}\sp_2+\Boh(z^{-2}),
$$
where $\sp_2$ is the second Pauli matrix (recall \eqref{deff:matrixnot2}).

Then the solution to the global parametrix {\bf RHP-}{$\bm G$} is
\begin{equation}\label{eq:constructionGparam}
\bm G(z) = \ee^{-\msf q_0\sp_3}\bm M(z) \ee^{\msf q(z)\sp_3},\quad z\in \C\setminus [-a,0].
\end{equation}
This solution $\bm G$ satisfies
\begin{equation}\label{eq:definitionG1}
\bm G(z)=\msf I+\frac{\bm G_1}{z}+\Boh(z^{-2}),\; z\to \infty,\quad \text{with}\quad \bm G_1\deff
\begin{pmatrix}
\msf q_1 & \dfrac{\ii a}{4}\ee^{-2\msf q_0} \\ -\dfrac{\ii a }{4}\ee^{2\msf q_0} & -\msf q_1
\end{pmatrix}.
\end{equation}

Recall that $U_0$ denotes the neighborhood of the origin given in Proposition~\ref{prop:conformalmappsi}.  We will also need some control on $\msf q$ inside $U_0$.

For the next result, set
\begin{equation}\label{deff:msfqo1}
\begin{aligned}
&  F_{\beta}(\sad)\deff \int_0^\infty v^{\beta}\log(1+\ee^{-\msf s-v})\dd v, \\
&\msf q_0^{(1)}=\msf q_0^{(1)}(\sad,\tad)\deff  \frac{\tad^{1/2}}{2\pi a^{1/2}}F_{-1/2}(\sad), \\
& \msf q^{(1)}(z)=\msf q^{(1)}(z\mid \sad,\tad)\deff \frac{\msf t^{1/2}}{2\pi a^{1/2}\msf m(z)^{1/2}}F_{-1/2}(\sad),
\end{aligned}
\end{equation}
which are $n$-independent quantities. The index $\beta$ does not have any specific meaning for what comes later, but it arises naturally from the asymptotic analysis resulting in the following result.
\begin{lemma}\label{lem:estimate_q_q0}
For any fixed $\sad_0>0$, the estimate
$$
\msf q_0=\frac{1}{n^{1/3}}\msf q_0^{(1)}+\Boh(n^{-2/3}),\quad n\to \infty,
$$
is valid uniformly for $\sad \geq -\sad_0$. In addition, the estimates
$$
\msf q(z)=\frac{1}{n^{1/3}}\msf q^{(1)}(z)+\Boh(n^{-2/3}),\quad \text{and}\quad \msf q'(z)=\Boh(n^{-1/3}),
$$
are valid uniformly for $z$ on compacts of $\C\setminus [-a,0]$ (in particular on $\partial U_0$) and uniformly for $\sad \geq -\sad_0$, and carry through to boundary values $\msf q_\pm(x)$ for $x$ along $\R\setminus \{-a,0\}$.

Finally,
$$\bm M(z)=\left(\bm I+\frac{1}{n^{1/3}}\left( \msf q^{(1)}_0\sp_3 -\msf q^{(1)}(z) \bm M(z)\sp_3\bm M(z)^{-1}\right)+ \Boh(n^{-2/3})  \right)\bm G(z),\quad n\to\infty,$$
uniformly for $z\in \partial U_0$ and $s\geq -\sad_0$.
\end{lemma}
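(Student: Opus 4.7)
The strategy is to handle the three claims in the order listed, since the estimate on $\bm M(z)$ is a direct algebraic consequence of the estimates on $\msf q_0$ and $\msf q(z)$.

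For $\msf q_0$, the integrand of \eqref{eq:def_q0q1} equals $\log(1+\ee^{-\sad-n^{2/3}Q(x)})/\sqrt{|x|(x+a)}$, which by Assumption~\ref{MainAssumption}-(ii) concentrates near $x=0$ as $n\to\infty$. I would split the integral at some small fixed $-\delta<0$: on $[-a,-\delta]$ the factor $\ee^{-n^{2/3}Q}$ is exponentially small, producing an $\Boh(\ee^{-c n^{2/3}})$ contribution that is uniform in $\sad\geq-\sad_0$. On $[-\delta,0]$ I would change variables $u=-n^{2/3}x$, using the expansions $n^{2/3}Q(x)=\tad u+\Boh(u^2/n^{2/3})$ and $\sqrt{|x|(x+a)}=\sqrt{ua}\,n^{-1/3}(1+\Boh(u/n^{2/3}))$ to reduce the integrand to its limiting form. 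Extending the upper integration limit to $+\infty$ at the cost of another exponentially small correction and then rescaling $v=\tad u$ matches the leading integral with the appropriate multiple of $F_{-1/2}(\sad)$, yielding $\msf q_0=\msf q_0^{(1)}/n^{1/3}+\Boh(n^{-2/3})$. The uniformity of the error in $\sad\geq-\sad_0$ follows from dominating $\log(1+\ee^{-\sad-\tad u})$ by $\ee^{\sad_0-\tad u}$ for large $u$.

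For $\msf q(z)$ and $\msf q'(z)$, the same localization would be applied to the Cauchy-type integral. For $z$ in compacts of $\C\setminus[-a,0]$ the kernel $1/(x-z)$ is uniformly bounded for $x\in[-a,0]$ and admits the Taylor expansion $1/(x-z)=-1/z+\Boh(x/z^2)$ near $x=0$. Inserting this into the rescaled integral and using the identity $((z+a)z)^{1/2}/z=\msf m(z)^{-1/2}$ produces $\msf q^{(1)}(z)/n^{1/3}$ as leading term with $\Boh(n^{-2/3})$ correction. The bound $\msf q'(z)=\Boh(n^{-1/3})$ follows by differentiating under the integral sign, which merely replaces $1/(x-z)$ by a still uniformly bounded $1/(x-z)^2$. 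For boundary values along $\R\setminus\{-a,0\}$ the Plemelj--Sokhotski formula separates the jump (a smooth multiple of $\log\sigma_n(x)$, itself $\Boh(n^{-1/3})$ on $(-a,0)$ by the same argument) from a principal-value integral amenable to the same treatment.

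For $\bm M(z)$, I would start from $\bm M(z)=\ee^{\msf q_0\sp_3}\bm G(z)\ee^{-\msf q(z)\sp_3}$ implied by \eqref{eq:constructionGparam}. Expanding both exponentials using $\msf q_0,\msf q(z)=\Boh(n^{-1/3})$ gives
$$
\bm M(z)=\bm G(z)+\msf q_0\sp_3\bm G(z)-\msf q(z)\bm G(z)\sp_3+\Boh(n^{-2/3}),
$$
which in particular implies $\bm M(z)=\bm G(z)+\Boh(n^{-1/3})$. Using the latter together with the boundedness of $\bm G^{\pm1}$ on $\partial U_0$, I would substitute $\bm G(z)\sp_3=\bm M(z)\sp_3\bm M(z)^{-1}\bm G(z)+\Boh(n^{-1/3})$ and absorb the extra $\msf q(z)=\Boh(n^{-1/3})$ factor into the error; replacing finally $\msf q_0,\msf q(z)$ by their leading terms yields the claimed identity. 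The main obstacle throughout is the bookkeeping of errors: one must verify that each $\Boh(n^{-2/3})$ correction is uniform in $\sad\geq-\sad_0$, which reduces to applying dominated convergence in the rescaled integrals against the $\sad$-independent majorant $\ee^{\sad_0-\tad u}$ for large $u$, and controlling lower-order Taylor terms for $u$ bounded.
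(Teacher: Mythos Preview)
Your proposal is correct and follows essentially the same approach as the paper: the paper packages your Watson-type localization argument as Proposition~\ref{prop:LaplaceLogIntApp} in the appendix and simply cites it, and for the boundary values it deforms the integration contour slightly off the real axis (using analyticity of $\log\sigma_n$) rather than invoking Plemelj--Sokhotski as you do. For the final claim the paper computes $\bm M\bm G^{-1}=\bm M\ee^{-\msf q\sp_3}\bm M^{-1}\ee^{\msf q_0\sp_3}$ directly, so the term $\bm M\sp_3\bm M^{-1}$ appears immediately without your intermediate substitution $\bm G\sp_3\bm G^{-1}=\bm M\sp_3\bm M^{-1}+\Boh(n^{-1/3})$; this is a slightly cleaner bookkeeping but your route reaches the same conclusion.
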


\begin{proof}
The estimate for $\msf q_0(n)$ follows immediately from an application of Proposition~\ref{prop:LaplaceLogIntApp}. The estimates for $\msf q(z)$ and $\msf q'(z)$ also follow from Proposition~\ref{prop:LaplaceLogIntApp}, once we observe that the integrals defining them can be slightly deformed to the upper/lower half plane in a neighborhood of the unique point in the intersection $\partial U_0\cap (-a,0)$.

Finally, using the first part of the statement and the fact that $\bm M$ is bounded for $z\in\partial U_0$ and independent of $n$, we expand the exponentials in series and write
\begin{align*}
\bm M(z)\bm G(z)^{-1} & =\bm M(z)\ee^{-\msf q(z)\sp_3}\bm M(z)^{-1}\ee^{\msf q_0\sp_3}\\ 
& =\left(\bm I-\msf q(z)\bm M(z)\sp_3\bm M(z)^{-1}+\Boh(n^{-2/3})\right)\left(\bm I+\msf q_0\sp_3+\Boh(n^{-2/3})\right),
\end{align*}
and the last claim follows after rearranging the terms in this expansion.
\end{proof}

\subsection{Local Parametrix near $-a$}\hfill

The local parametrix $\bm P=\bm P^{(a)}$ near $z=-a$ is constructed in a neighborhood of $z=-a$ which without loss of generality can be taken to be the disk $D_\delta(-a)$ of radius $\delta$ around $a$, and it is the solution to the following RHP.
\begin{enumerate}[\bf $\bm P^{({a})}$-1.]
\item The matrix $\bm P^{(a)}:D_\delta(-a)\setminus \Gamma_{\bm S}\to \C^{2\times 2}$ is analytic.
\item For $z\in \Gamma_{\bm S}\cap \partial D_\delta(-a)$, it satisfies the jump $\bm P^{(a)}_+(z)=\bm P^{(a)}_-(z)\bm J_{\bm S}(z)$.
\item Uniformly for $z\in \partial D_\delta(-a)$,
$$
\bm P^{(a)}(z)=\left(\bm I+\boh(1)\right)\bm G(z),\quad n\to\infty.
$$
\item The matrix $\bm P^{(a)}$ remains bounded as $z\to -a$.
\end{enumerate}

The asymptotic condition {\bf $\bm P^{({a})}$-3.} above will be improved to \eqref{eq:decayparametrixa} below.

From the conditions on $Q$, we know that there exists a value $\eta>0$ for which
$$
\re Q(z)\geq 2\eta,\quad |z+a|<\delta.
$$
This value is uniform for $\tad\in [\tad_0,1/\tad_0],$ for any $\tad_0\in (0,1)$ fixed, and it is independent of $\sad\in \R$.
In particular, once we fix $\sad_0>0$ and assume that $\sad\geq -\sad_0$, from this inequality we obtain
$$
|\ee^{-\sad-n^{2/3}Q(z)}|\leq \ee^{-n^{2/3}\eta},\quad |z+a|<\delta,\quad \text{for large enough } n.
$$
This way, for $n>0$ sufficiently large the function $\sigma_n$ admits an analytic continuation to the whole disk $D_\delta(-a)$, and this continuation does not have zeros on the same disk. Thus, a branch of $\log\sigma_n$ is well defined in a neighborhood of $z=-a$, and the just mentioned estimate also shows that
\begin{equation}\label{eq:EstLogSigman}
\log\sigma_n(z)=\Boh(\ee^{-\eta n^{2/3}}),\quad n\to\infty,
\end{equation}
uniformly for $z$ in a neighborhood of $z=-a$ and $\sad\geq -\sad_0$.

With this in mind, the parametrix $\bm P^{(a)}$ can be constructed explicitly out of Airy functions in a standard way, see for instance \cite[Section~7.6]{deift_book}. Since it involves a somewhat nonstandard matching analytic prefactor that accounts for $\sigma_n$, we briefly go over this construction. 

Recall the contour $\cSig$ introduced in \eqref{def:contour_sigma}. With appropriate Airy functions, we construct a $2\times 2$ matrix $\bm\Psi_{\ai}$, which is analytic on $\C\setminus \cSig$ and satisfies
$$
\bm \Psi_{\ai,+}(\zeta)=\bm \Psi_{\ai,-}(\zeta)\times 
\begin{cases}
\bm I-\bm E_{12}, & \zeta \in \cSig_0, \\
\bm I-\bm E_{21}, & \zeta \in \cSig_1\cup\cSig_3, \\
-\bm E_{12}+\bm E_{21}, & \zeta \in \cSig_2, \\
\end{cases}
$$
and
\begin{equation}\label{eq:AsympAiryPar}
\bm\Psi_{\ai}(\zeta)=\zeta^{\sp_3/4}\bm U_0\left(\bm I+\Boh(\zeta^{-3/2})\right)\ee^{-\frac{2}{3}\zeta^{3/2}\sp_3},\quad \zeta\to \infty.
\end{equation}
In fact, $\bm\Psi_{\ai}$ can be obtained with a modification of the matrix $\Phiai$ which we previously used in \eqref{deff:Phiai}. We will not need its explicit form, so we do not write it down explicitly.

Using the properties of $\phi$ we construct a conformal map $\varphi$ from a neighborhood of $-a$ to a neighborhood of the origin, with
$$
\varphi(-a)=0, \quad \varphi'(-a)<0\quad \text{and}\quad \frac{2}{3}\varphi(z)^{3/2}=\phi(z)+2\pi \ii \Z, \; z\in D_\delta(-a)\setminus \cSig_{\bm S}.
$$
With standard arguments (see for instance the proof of Proposition~\ref{prop:estimateL} below for similar arguments), one shows that the matrix
\begin{equation}\label{eq:FFactorRegPar}
\bm F^{(a)}(z)\deff \bm G(z)\ee^{\frac{1}{2}\log\sigma_n(z)\sp_3}\bm U_0^{-1}(n^{2/3}\varphi(z))^{-\sp_3/4}
\end{equation}
is analytic on a neighborhood of $z=-a$. The local parametrix then takes the form
\begin{equation}\label{deff:LocalParRegEdge}
\bm P^{(a)}(z)=\bm F^{(a)}(z)\bm \Psi_{\ai}(n^{2/3}\varphi(z))\ee^{-\frac{1}{2}\log\sigma_n(z)\sp_3}\ee^{n\phi(z)\sp_3},\quad z\in D_\delta(-a)\setminus \Gamma_{\bm S}.
\end{equation}
As a result, the error term in fact takes on the stronger form
\begin{equation}\label{eq:decayparametrixa}
\bm P^{(a)}(z)=\bm G(z)\left(\bm I+\Boh(n^{-1})\right),\quad n\to\infty,
\end{equation}
which is valid uniformly for $z\in \partial D_\delta(-a)$ and uniformly for $\sad\geq -\sad_0$ and $\tad\in [\tad_0,1/\tad_0]$, for any $\sad_0>0$ and $\tad_0\in (0,1)$ fixed. 

\subsection{Local Parametrix near the origin}\hfill

The local parametrix near the origin requires the model problem from Section~\ref{sec:modelproblem}.

Recall the neighborhood $U_0$ of the origin introduced in Proposition~\ref{prop:conformalmappsi}. The initial local parametrix we seek for should be the solution to the following RHP.
\begin{enumerate}[\bf $\bm P^{(0)}$-1.]
\item The matrix $\bm P^{(0)}:U_0\setminus \Gamma_{\bm S}\to \C^{2\times 2}$ is analytic.
\item For $z\in \Gamma_{\bm S}\cap U_0$, it satisfies the jump $\bm P^{(0)}_+(z)=\bm P^{(0)}_-(z)\bm J_{\bm S}(z)$.
\item Uniformly for $z\in \partial U_0$,
$$
\bm P^{(0)}(z)=\left(\bm I+\boh(1)\right)\bm G(z),\quad n\to\infty.
$$
\item $\bm P^{(0)}$ remains bounded as $z\to 0$.
\end{enumerate}

To construct the solution $\bm P^{(0)}$ required above, some work is needed. Aiming at removing $\phi$ from the jump of $\bm P^{(0)}$, we change this {\bf RHP-$\bm P$} with the transformation
\begin{equation}\label{local_param_LP_rel}
\bm L(z)=\bm P^{(0)}(z)\ee^{-n\phi(z)\sp_3 },\quad z\in U_0\setminus \Gamma_{\bm S}.
\end{equation}
Then the matrix $\bm L$, should it exist, must satisfy the following RHP.
\begin{enumerate}[\bf $\bm L$-1.]
\item The matrix $\bm L:U_0\setminus \Gamma_{\bm S}\to \C^{2\times 2}$ is analytic.
\item For $z\in \Gamma_{\bm S}\cap U_0$, it satisfies the jump $\bm L_+(z)=\bm L_-(z)\bm J_{\bm L}(z)$,
with
\begin{equation}\label{eq:jumpJL}
\bm J_{\bm L}(z)\deff
\begin{dcases}
\sigma_n(z)\bm E_{12}-\frac{1}{\sigma_n(z)}\bm E_{21}, & z\in U_0\cap (-a,0), \\
\bm I+\frac{1}{\sigma_n(z)}\bm E_{21}, & z\in \partial\mcal G^\pm\cap U_0,\\
\bm I+\sigma_n(z)\bm E_{12}, & z\in U_0\cap (0,\infty).
\end{dcases}
\end{equation}
\item Uniformly for $z\in \partial U_0$,
$$
\bm L(z)=\left(\bm I+\boh(1)\right)\bm G(z)\ee^{-n\phi(z)\sp_3 },\quad n\to\infty.
$$
\item The matrix $\bm L$ remains bounded as $z\to 0$.
\end{enumerate}

Based on the usual way of matching the local parametrix with a model problem, one is tempted to moving the non-constant part of the jump - namely $\sigma_n$ - to the behavior at $\partial U_0$ as well. This would be done so including a term of the form $\sigma_n^{\sp_3/2}=\ee^{\sp_3\log\sigma_n/2}$ into the transformation $\bm P\mapsto \bm L$, in much the same way we did in \eqref{deff:LocalParRegEdge}. However, as we discussed in Section~\ref{sec:issues}, for any $s\in \R$ fixed there are poles of $\sigma_n$ accumulating too fast near the origin, so $\sigma_n$ fails to be analytic in any small neighborhood of the origin and we have to stick to the non-constant jumps as above.

The {\bf RHP-$\bm L$} has a solution if, and only if, {\bf RHP-$\bm P^{(0)}$} has a solution. Such solutions need not be unique, as one could possibly improve on the asymptotic matching conditions on $\partial U_0$. The goal of the rest of this section is to describe a solution $\bm L$, and consequently a solution $\bm P^{(0)}$ related by \eqref{local_param_LP_rel}, with a more explicit control of the error term in {\bf RHP-$\bm L$-3}. For that, we use the model problem thoroughly studied in Sections~\ref{sec:modelproblem} and \ref{sec:asympanalymodelprobl}.

The construction that follows needs several quantities that appeared before. These are the conformal map $\psi$ appearing in Proposition~\ref{prop:conformalmappsi}, the function $\msf H_Q$ introduced with the help of Proposition~\ref{prop:ConformExtH}, the model RHP solution $\bm \Phi=\bm \Phi(\cdot \mid \msf h)$ introduced in Section~\ref{sec:modelproblem} and further discussed in Section~\ref{sec:asympanalymodelprobl}, and the constant $\msf q_0$ and matrices $\bm U_0$ and $\bm M(z)$ from \eqref{eq:def_q0q1}--\eqref{eq:def_globalparM}. With all these quantities at hand, we set
\begin{equation}\label{local_par_L_Phi}
\begin{aligned}
& \bm L(z)\deff \bm F_n(z)\bm \Phi_n(z),\quad z\in U_0\setminus \cSig_{\bm S},\qquad \text{with the choices}  \\
& \wh{\bm\Phi}_n(\zeta)=\bm \Phi\left(\zeta \mid \msf h=\msf h_n\right), \quad \msf h_n(\zeta)\deff \sad+n^{2/3}\msf H_Q(\zeta/n^{2/3}), \\
& {\bm \Phi}_n(z)\deff \wh{\bm \Phi}_n\left(\zeta=n^{2/3}\psi(z)\right),\\
& {\bm F}_n(z)\deff \bm M(z)\bm U_0(n^{2/3}\psi(z))^{-\sp_3/4}=\bm U_0\msf m(z)^{\sp_3/4}(n^{2/3}\psi(z))^{-\sp_3/4}.
\end{aligned}
\end{equation}
With the identification $\tauad=n^{2/3}$ and thanks to Proposition~\ref{prop:ConformExtH}, the function $\msf h_n$ becomes admissible in the sense of Definition~\ref{deff:admissibleh}, so the notation for the corresponding solution $\bm\Phi_n=\bm \Phi(\cdot\mid \msf h_n)$ chosen above is consistent with the solution $\bm \Phi_\tauad(\zeta)\big|_{\tauad=n^{2/3}}$ in \eqref{deff:Phit}. For later reference, we keep track of the expansion
\begin{equation}\label{eq:expansionmsfhnorigin}
\msf h_n(\zeta)=\sad-\msf c_{\msf H}\zeta+\Boh(\zeta^{-2}),\quad \zeta\to 0,\quad \text{where we recall that}\quad \msf c_{\msf H}=\frac{\tad}{\msf c_V},
\end{equation}
compare \eqref{eq:expansionHQ} with Definition~\ref{deff:admissibleh}--(ii).

\begin{prop}\label{prop:propertieshatPhi}
Fix $\sad_0>0$ and $\tad_0\in (0,1)$. There exists $n_0=n_0(\sad_0,\tad_0)$ for which for any $s\geq -\sad_0$ and any $\tad\in [\tad_0,1/\tad_0]$ the matrix ${\bm \Phi}_n$ exists for every $n\geq n_0$. This matrix satisfies the jump
$$
{\bm \Phi}_{n,+}(z)={\bm \Phi}_{n,-}(z)\bm J_{\bm L}(z),\quad z\in U_0\cap \Gamma_{\bm S}.
$$
Furthermore, for the matrix
$$
{\bm \Phi}^{(1)}_n\deff \bm \Phi^{(1)}(\msf h=\msf h_n),\quad \text{with }\bm \Phi^{(1)}(\msf h) \text{ as in \eqref{eq:ModelRHPAsymp}},
$$
the asymptotic expansion
\begin{equation*}
{\bm \Phi}_n(z)=\left(\bm I+\frac{1}{n^{2/3}}\frac{1}{\psi(z)}{\bm \Phi}_n^{(1)}+\Boh(n^{-4/3})\right)(n^{2/3}\psi(z))^{\sp_3/4} \bm U_0^{-1}  
\ee^{-n\phi(z)\sp_3} ,\quad n\to \infty,
\end{equation*}
holds true uniformly for $z\in \partial U_0$ and uniformly for $\sad\geq -\sad_0$ and $\tad\in [\tad_0,1/\tad_0]$. 

\end{prop}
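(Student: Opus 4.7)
The plan is to obtain the three claims as consequences of, respectively, Theorem~\ref{thm:PhitauAsympt}, the definition of $\msf h_n$ together with Proposition~\ref{prop:ConformExtH}, and a direct substitution into the asymptotic condition at infinity from {\bf $\bm\Phi$-3}.

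For existence, I would set $\tauad=n^{2/3}$ and check that $\msf h_n(\zeta)=\sad+\tauad\msf H_Q(\zeta/\tauad)$ is an admissible function in the sense of Definition~\ref{deff:admissibleh} (taking $\msf H=\msf H_Q$, with the parameter $\tad$ of that definition equal to $\msf c_{\msf H}=\tad/\msf c_V$ in virtue of \eqref{eq:expansionHQ}). All three required properties of $\msf H_Q$ are precisely the content of Proposition~\ref{prop:ConformExtH}, where the $C^\infty$ extension is made so that the constants $\eta,\wh\eta,\epsilon$ depend only on $\tad_0$. Since $\msf c_{\msf H}$ stays in a compact of the positive axis when $\tad\in[\tad_0,1/\tad_0]$, Theorem~\ref{thm:PhitauAsympt} provides $n_0=n_0(\sad_0,\tad_0)$ such that $\wh{\bm\Phi}_n=\bm\Phi(\cdot\mid \msf h_n)$ uniquely exists for $n\geq n_0$, and by composition with the conformal map the same holds for $\bm\Phi_n$.

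For the jump relation, I would pull back the jump of $\wh{\bm\Phi}_n$ on $\cSig$ via $z\mapsto \zeta=n^{2/3}\psi(z)$. By \eqref{eq:transformationlenses} the contour $\Gamma_{\bm S}\cap U_0$ is mapped into $\cSig$ with matching orientations, and by \eqref{eq:RelHQQpsi} one has the key identity $\msf h_n(n^{2/3}\psi(z))=\sad+n^{2/3}Q(z)$, so that $1/(1+\ee^{-\msf h_n(n^{2/3}\psi(z))})=\sigma_n(z)$. Comparing the three arcs in \eqref{eq:jumpPhimodel} with the three arcs in \eqref{eq:jumpJL} the jumps coincide (including the factorization $\sigma_n\bm E_{12}-\sigma_n^{-1}\bm E_{21}$ coming from $\cSig_2$).

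For the asymptotic expansion, I would substitute $\zeta=n^{2/3}\psi(z)$ into the expansion at infinity for $\wh{\bm\Phi}_n$. Two ingredients are needed: first, the identity $\tfrac{2}{3}\zeta^{3/2}=n\cdot\tfrac{2}{3}\psi(z)^{3/2}=n\phi(z)$ on $\partial U_0$, which follows from the definition of $\psi$ in Proposition~\ref{prop:conformalmappsi}; second, an \emph{a priori} expansion of the form $\wh{\bm\Phi}_n(\zeta)=\bigl(\bm I+\zeta^{-1}\bm\Phi^{(1)}_n+\Boh(\zeta^{-2})\bigr)\zeta^{\sp_3/4}\bm U_0^{-1}\ee^{-\frac{2}{3}\zeta^{3/2}\sp_3}$ with the $\Boh(\zeta^{-2})$ term uniform in $n,\sad,\tad$ within the given ranges. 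The latter uniformity is precisely what Remark~\ref{remark:UnifAsympPhitauad} provides (and may be refined by one more order by iterating the small-norm representation \eqref{eq:IntReprPsitau}, writing $1/(w-\zeta)=\zeta^{-1}+w/[\zeta(w-\zeta)]$ twice, then using the uniform $L^1$ decay of $\bm J_{\bm \Psi_\tauad}-\bm I$ from Lemmas~\ref{lem:estJpsiadcSig0}, \ref{lem:estJpsiadcSig13} and \ref{lem:estJpsiadcSig2}). On $\partial U_0$ the conformal map $\psi$ is bounded and bounded away from zero, so $|\zeta|^{-2}=\Boh(n^{-4/3})$ uniformly, while the boundedness of $\bm\Phi^{(1)}_n$ follows from \eqref{eq:asympPhin1Phi01} and the boundedness of $\Phicc^{(1)}$ that results from Proposition~\ref{prop:existencePhicc}. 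Collecting these, the claimed expansion is obtained.

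The main obstacle I anticipate is the last step: extending the first-order expansion in $1/\zeta$ from {\bf $\bm\Phi$-3} to a genuine two-term expansion with the $\Boh(\zeta^{-2})$ error controlled uniformly in $(n,\sad,\tad)$ and not merely for each fixed set of parameters. This is not formally stated in Theorem~\ref{thm:PhitauAsympt} but is forced by the uniform $L^1$ decay estimates for $\bm J_{\bm \Psi_\tauad}-\bm I$ noted in Theorem~\ref{thm:PsitauSmallnorm}, together with the contour-deformation trick sketched in Remark~\ref{remark:UnifAsympPhitauad}.
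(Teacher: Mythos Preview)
Your proposal is correct and follows essentially the same approach as the paper's proof, which simply invokes Theorem~\ref{thm:PhitauAsympt} for existence, the correspondence \eqref{eq:RelHQQpsi} and conformality of $\zeta=n^{2/3}\psi(z)$ for the jump, and substitution into \eqref{eq:ModelRHPAsymp} for the asymptotic expansion. You are in fact more careful than the paper on the third point: the paper declares the expansion ``immediate from \eqref{eq:ModelRHPAsymp}'', whereas you correctly flag that the $\Boh(\zeta^{-2})$ error must be uniform in $(n,\sad,\tad)$ and point to Remark~\ref{remark:UnifAsympPhitauad} and an iteration of the small-norm representation as the mechanism---this is exactly the right justification, and is implicit in the paper's treatment.
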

\begin{proof}
The existence of ${\bm \Phi}_n$ is granted by the first claim of Theorem~\ref{thm:PhitauAsympt}. The jumps for ${\bm \Phi}_{n,+}$ follow from the jumps in \eqref{eq:jumpPhimodel}, the definition of $\msf h_n$ taken in \eqref{local_par_L_Phi}, the correspondence \eqref{eq:RelHQQpsi} and the conformality of the change of variables $\zeta=n^{2/3}\psi(z)$. The asymptotic expansion for ${\bm \Phi}_n$ is immediate from \eqref{eq:ModelRHPAsymp}
\end{proof}

The introduction of the additional notation $\wh{\bm \Phi}_n$, which plays the role of the local parametrix in the variable $\zeta$, is convenient for later calculations. At that moment, some of its properties will be needed, and we keep track of these properties with the next result. For the formal statement, we recall that $\phiad(\xi\mid \Sad,\Tad)$ is the solution to the integro-differential PII that already appeared in \eqref{eq:intdiffPII} and \eqref{eq:transfPsiccphiad}, and $\bm\Phi_0^{(1)}$ is the residue matrix from \eqref{deff:Phikpz_residue} that collects the functions $\msf P(\Sad,\Tad)$ and $\msf Q(\Sad,\Tad)$ which, in turn, are related to $\phiad$ as explained in \eqref{deff:coeffqrp} {\it et seq.}.

\begin{prop}
Fix $\sad_0>0$ and $\tad_0\in (0,1)$ and $\nu\in (0,2/3)$, and let $\msf c_{\msf H}$ be as in \eqref{eq:expansionHQ}. The following asymptotic formulas hold true uniformly for $\sad\geq -\sad_0$ and $\tad\in [\tad_0,1/\tad_0]$.

The matrix $\bm\Phi_n^{(1)}$ from Proposition~\ref{prop:propertieshatPhi} satisfies
\begin{equation}\label{eq:asympPhin1Phi01residue}
\left({\bm \Phi}_n^{(1)}\right)_{21}=\frac{\ii }{\msf c_{\msf H}^{1/2}}\left(\msf p(\sad,\msf c_{\msf H})-\frac{\sad^2}{4\msf c_{\msf H}^{3/2}}\right)+\Boh(n^{-\nu}),\quad n\to\infty.
\end{equation}
Furthermore, for $\Phicc$ being the solution to the RHP in \eqref{eq:jumpPhicc}--\eqref{eq:AsymptPhicc}, the estimate
\begin{equation}\label{eq:asympPhinPhi0}
\wh{\bm \Phi}_{n,+}(\zeta)=(\bm I+\Boh(n^{-\nu}))\Phiccp(\zeta\mid \sad,\msf c_{\msf H}),\quad n\to\infty,
\end{equation}
holds true uniformly for $\zeta\in \cSig$ with $|\zeta|\leq n^{\frac{1}{3}-\frac{1}{2}\nu}$. 
\end{prop}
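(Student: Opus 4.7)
The plan is to reduce both asymptotic statements to a direct application of Theorem~\ref{thm:PhitauAsympt}, with the identification $\tauad=n^{2/3}$ and a careful matching of the admissibility parameters.

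First, I would verify that the family $\msf h_n(\zeta)=\sad+n^{2/3}\msf H_Q(\zeta/n^{2/3})$ is admissible in the sense of Definition~\ref{deff:admissibleh} with $\tauad=n^{2/3}$. This is essentially tautological once one unpacks Proposition~\ref{prop:ConformExtH}: item~(ii) there gives the required analyticity on $D_\delta(0)$ together with the expansion \eqref{eq:expansionHQ}, item~(iv) supplies the linear bounds on $\re\msf H_Q$ needed in Definition~\ref{deff:admissibleh}--(iii) (with $\epsilon=1/2$, say, which is stronger than needed), and item~(ii) guarantees $C^\infty$-regularity on $\mcal S$ and real-valuedness along $\R$. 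In particular, the derivative encoding the admissible parameter $\tad$ satisfies $-\msf H_Q'(0)=\msf c_{\msf H}=\tad/\msf c_V$, so the effective value of $\tad$ seen by the model problem is $\msf c_{\msf H}$, and the corresponding limiting id-PII parametrix is $\Phicc(\cdot\mid \sad,\msf c_{\msf H})$.

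Next, given $\nu\in (0,2/3)$, I would set $\kappa\deff 3\nu/2\in (0,1)$ so that $\tauad^{-\kappa}=n^{-\nu}$ with $\tauad=n^{2/3}$. Applying \eqref{eq:asympPhin1Phi01} from Theorem~\ref{thm:PhitauAsympt} to $\bm\Phi_{\tauad}=\wh{\bm\Phi}_n$ yields
\[
\bm\Phi_n^{(1)}=\Phicc^{(1)}(\sad,\msf c_{\msf H})+\Boh(n^{-\nu}),\qquad n\to\infty.
\]
Reading off the $(2,1)$-entry from \eqref{deff:Phikpz_residue}, one gets $(\Phicc^{(1)})_{21}=\ii \msf c_{\msf H}^{-1/2}(\msf p(\sad,\msf c_{\msf H})+\msf c_1)$ with $\msf c_1=-\sad^2/(4\msf c_{\msf H}^{3/2})$, which is exactly \eqref{eq:asympPhin1Phi01residue}. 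The claimed uniformity in $\sad\geq -\sad_0$ and $\tad\in[\tad_0,1/\tad_0]$ is inherited from the uniformity in Theorem~\ref{thm:PhitauAsympt}, since our range $\tad\in[\tad_0,1/\tad_0]$ translates into a compact range for $\msf c_{\msf H}=\tad/\msf c_V$.

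For the second formula \eqref{eq:asympPhinPhi0}, the same choice of $\kappa=3\nu/2$ gives $\tauad^{(1-\kappa)/2}=n^{(1-\kappa)/3}=n^{1/3-\nu/2}$, which is precisely the range of $|\zeta|$ in the statement. Direct invocation of \eqref{eq:asympPhitauPhi0} then produces $\wh{\bm\Phi}_{n,+}(\zeta)=(\bm I+\Boh(n^{-\nu}(1+|\zeta|)^{-1}))\Phiccp(\zeta\mid\sad,\msf c_{\msf H})$ on that range, which yields the weaker form \eqref{eq:asympPhinPhi0} after dropping the $(1+|\zeta|)^{-1}$ factor.

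In short, there is no substantial new work in the proof: the entire argument is bookkeeping — checking admissibility of $\msf h_n$ from Proposition~\ref{prop:ConformExtH}, matching parameters ($\tauad=n^{2/3}$, $\tad=\msf c_{\msf H}$, $\kappa=3\nu/2$), and reading off the $(2,1)$-entry of the explicit residue formula \eqref{deff:Phikpz_residue}. The only mild point worth verifying explicitly is that the constants $\eta,\wh\eta,\epsilon$ supplied by Proposition~\ref{prop:ConformExtH}--(iv) can be chosen uniformly as $\tad$ ranges over the compact interval $[\tad_0,1/\tad_0]$, so that the uniformity in Theorem~\ref{thm:PhitauAsympt} applies; this is essentially built into the construction of $\msf H_Q$ via partitions of unity in that proposition, and matches the convention on $\tad$-uniformity discussed at the start of Section~\ref{sec:asympanalymodelprobl}.
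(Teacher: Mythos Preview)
Your proof is correct and follows exactly the paper's approach: the paper's own proof is a one-liner stating that the result is immediate from the identification $\tauad=n^{2/3}$, Equations~\eqref{local_par_L_Phi}, and Theorem~\ref{thm:PhitauAsympt}, and your argument simply spells out this bookkeeping (admissibility via Proposition~\ref{prop:ConformExtH}, the parameter match $\kappa=3\nu/2$, and reading off the $(2,1)$-entry of \eqref{deff:Phikpz_residue}).
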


\begin{proof}
It is immediate from the identification $\tauad=n^{2/3}$, Equations~\eqref{local_par_L_Phi} and Theorem~\ref{thm:PhitauAsympt}.
\end{proof}

Next, we now verify that $\bm L$ given as in \eqref{local_par_L_Phi} indeed solves the {\bf RHP-{$\bm L$}}.
\begin{prop}\label{prop:estimateL}
The matrix $\bm L$ solves the {\bf RHP-{$\bm L$}}. 

Furthermore, setting
\begin{equation}\label{eq:AsymptLlocalpar}
 \bm L_1(z)\deff \frac{\left({\bm \Phi}^{(1)}_n\right)_{21}}{\psi(z)^{1/2}}\bm M(z)\bm U_0 \bm E_{21} \bm U_0^{-1}\bm M(z)^{-1}, \quad z\in \partial U_0,
\end{equation}
the condition $\bm L${\bf -3.} is improved to
\begin{equation}\label{eq:estimateL_proof1} 
\bm L(z)=\left(\bm I+\frac{1}{n^{1/3}}\bm L_1(z) + \Boh(n^{-2/3})\right)\bm M(z)\ee^{-n\phi(z)\sp_3},\quad n\to \infty,
\end{equation}
uniformly for $z\in \partial U_0$ and $\sad\geq -\sad_0$, $\tad\in [\tad_0,1/\tad_0]$, for any $\sad_0>0,\tad_0\in (0,1)$.
\end{prop}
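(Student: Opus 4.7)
The plan is to verify the four conditions in RHP-$\bm L$ and then derive the refined matching \eqref{eq:estimateL_proof1}. Writing $\bm L=\bm F_n\bm\Phi_n$ as in \eqref{local_par_L_Phi}, the strategy is to reduce all the jump and analyticity conditions for $\bm L$ to a short analysis of the prefactor $\bm F_n(z)=\bm M(z)\bm U_0(n^{2/3}\psi(z))^{-\sp_3/4}$: once I show that $\bm F_n$ extends analytically across $\Gamma_{\bm S}\cap U_0$, the jumps of $\bm L$ will coincide with those of $\bm\Phi_n$, which by Proposition~\ref{prop:propertieshatPhi} agree with $\bm J_{\bm L}$ in \eqref{eq:jumpJL}.

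On $(0,\infty)\cap U_0$ and on $\partial\mcal G^\pm\cap U_0$, both factors making up $\bm F_n$ are manifestly analytic, so it suffices to handle $(-a,0)\cap U_0$, where $\psi$ takes negative values. Using the principal-branch identity $(n^{2/3}\psi_+)^{-\sp_3/4}(n^{2/3}\psi_-)^{\sp_3/4}=\ee^{-\ii\pi\sp_3/2}$ together with $\bm U_0\sp_3\bm U_0^{-1}=\sp_2$, I compute $\bm U_0\ee^{-\ii\pi\sp_3/2}\bm U_0^{-1}=-\ii\sp_2=\bm E_{21}-\bm E_{12}$. Combining with the jump $\bm M_+=\bm M_-(\bm E_{12}-\bm E_{21})$ from \eqref{eq:def_globalparM} and the identity $(\bm E_{12}-\bm E_{21})(\bm E_{21}-\bm E_{12})=\bm I$ yields $\bm F_{n,+}\bm F_{n,-}^{-1}=\bm I$, so $\bm F_n$ extends analytically across $(-a,0)\cap U_0$. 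For boundedness at $z=0$, I will rewrite $\bm F_n(z)=\bm U_0(\msf m(z)/(n^{2/3}\psi(z)))^{\sp_3/4}$ and use $\msf m(z)\sim z/a$ and $\psi(z)\sim\msf c_V z$ from Proposition~\ref{prop:conformalmappsi}, so the diagonal entries have finite nonzero limits; property $\bm\Phi$-4 then controls $\bm\Phi_n$ near the origin and gives condition $\bm L$-4.

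For the refined matching \eqref{eq:estimateL_proof1}, on $\partial U_0$ we have $n^{2/3}\psi(z)\to\infty$ uniformly, so the expansion of $\bm\Phi_n$ from Proposition~\ref{prop:propertieshatPhi} applies and gives
\begin{equation*}
\bm L(z)=\bm M(z)\bm U_0\left[(n^{2/3}\psi(z))^{-\sp_3/4}\left(\bm I+\frac{\bm\Phi_n^{(1)}}{n^{2/3}\psi(z)}+\Boh(n^{-4/3})\right)(n^{2/3}\psi(z))^{\sp_3/4}\right]\bm U_0^{-1}\ee^{-n\phi(z)\sp_3}.
\end{equation*}
The inner conjugation by $(n^{2/3}\psi)^{\sp_3/4}$ preserves the diagonal entries of $\bm\Phi_n^{(1)}$ and scales its $(1,2)$ and $(2,1)$ entries by $(n^{2/3}\psi)^{-1/2}$ and $(n^{2/3}\psi)^{1/2}$, respectively; after dividing by $n^{2/3}\psi(z)$, only the $(2,1)$ entry survives at order $n^{-1/3}$, contributing $(\bm\Phi_n^{(1)})_{21}/\psi(z)^{1/2}\,\bm E_{21}$, while the diagonal and $(1,2)$ contributions and the error term all become $\Boh(n^{-2/3})$ since $\psi(z)$ is bounded away from zero on $\partial U_0$. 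Inserting $\bm M(z)\bm U_0\bm U_0^{-1}\bm M(z)^{-1}=\bm I$ on the left of $\bm E_{21}$ then reproduces $\bm L_1(z)$ as in \eqref{eq:AsymptLlocalpar}. The genuine technical content was absorbed in Proposition~\ref{prop:propertieshatPhi} and Theorem~\ref{thm:PhitauAsympt}; the main subtlety here is solely the jump cancellation on $(-a,0)\cap U_0$, after which the matching estimate is a bookkeeping exercise in how the $(n^{2/3}\psi)^{\sp_3/4}$-conjugation reshuffles entries of the residue matrix.
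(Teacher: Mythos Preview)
Your proof is correct and follows essentially the same approach as the paper's own proof: show that the prefactor $\bm F_n$ is analytic across $(-a,0)\cap U_0$ and bounded at the origin, then read off the matching condition from the large-$\zeta$ expansion of $\bm\Phi_n$ and track how the conjugation by $(n^{2/3}\psi)^{\sp_3/4}$ redistributes the entries of $\bm\Phi_n^{(1)}$. Your treatment of the removable singularity at $z=0$, via the rewriting $\bm F_n(z)=\bm U_0\bigl(\msf m(z)/(n^{2/3}\psi(z))\bigr)^{\sp_3/4}$, is in fact a little cleaner than the paper's (which invokes a somewhat imprecise $\Boh(z^{1/4})$ bound on $\bm M$).
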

\begin{proof}
First we prove that ${\bm F}$ is analytic. For that, notice that a jump for it may come only from the factors $\bm M$ and $\psi^{1/4}$, and therefore only possibly in the interval $(-a,0)\cap U_0$. However, along this interval it is simple to compute that $(\psi^{1/4})_\pm = \ee^{\pm \pi \ii/4}|\psi|^{1/4}$, and also that 
$$
\bm U_0^{-1}\bm J_{\bm M}(x)\bm U_0=\ii \sp_3,
$$
from which we obtain
$$
{\bm F}_-(x)^{-1}{\bm F}_+(x)=n^{\sp_3/6}|\psi|^{\sp_3/4}\ee^{-\pi\ii\sp_3/4}\ii \sp_3 \ee^{-\pi\ii\sp_3/4}|\psi|^{-\sp_3/4}n^{-\sp_3/6}=\bm I,
$$
so ${\bm F}$ is indeed analytic across $(-a,0)\cap U_0$. In principle, ${\bm F}$ may have an isolated singularity at $0$, but because 
$$
\bm M(z),\bm \varphi(z)^{1/4}=\Boh(z^{1/4})\quad \text{as }z\to 0,
$$
we see that $z=0$ is a removable singularity. In virtue of the definition of $\bm L$ in \eqref{local_par_L_Phi} and the jump for $\bm \Phi_n$ from Proposition~\ref{prop:propertieshatPhi}, the analyticity of $\bm F$ is enough to conclude that $\bm L$ satisfies {\bf RHP.}$\bm L${\bf -1}.

Knowing that ${\bm F}$ is analytic, the jump for $\bm L$ is precisely the same as the jump for ${\bm \Phi}_n$, and by Proposition~\ref{prop:propertieshatPhi} we thus have that $\bm L$ satisfies {\bf RHP.}$\bm L${\bf -2}.

Finally, to the asymptotic condition \eqref{eq:AsymptLlocalpar}. For that, we use the asymptotic condition for ${\bm \Phi}_n$ given by Proposition~\ref{prop:propertieshatPhi} and the definition of ${\bm F}$ and write
\begin{multline*}
\bm L(z)=\bm M(z)\bm U_0n^{-\sp_3/6}\psi(z)^{-\sp_3/4}\left(\bm I+ \frac{1}{n^{2/3}\psi(z)}{\bm \Phi}^{(1)}_n+\Boh(n^{-4/3}) \right) \\
\times 
n^{\sp_3/6}\psi(z)^{\sp_3/4}\bm U_0^{-1} \ee^{-n\phi(z)\sp_3},\quad n\to\infty,
\end{multline*}
and where the error is uniform for $z\in \partial U_0$ and $\sad,\tad$ as claimed. Since $\partial U_0$ remains within a positive distance from the unique zero $z=0$ of $\psi$, the function $|\psi^{1/4}|$ remains bounded from below away from zero, so the corresponding conjugation of the error by the term $\psi^{\sp_3/4}$ does not change the order of the error. Next, the conjugation by $n^{\sp_3/6}$ contributes at most to an error of order $n^{1/3}$, and only in the $(2,1)$-entry. The remaining term $\bm M$ is bounded along $\partial U_0$, so it can be commuted with the error term above without changing its order, leading to \eqref{eq:estimateL_proof1}. And \eqref{eq:estimateL_proof1} is indeed an improvement of the asymptotic condition {\bf RHP.}$\bm L${\bf -3}, because from Lemma~\ref{lem:estimate_q_q0} we know that $\bm M=(\bm I+\Boh(n^{-1/3}))\bm G$ uniformly as claimed.
\end{proof}

We now trace back the transformation $\bm L\mapsto \bm P^{(0)}$ and construct the latter as
\begin{equation}\label{deff:ParamOrigin}
\bm P^{(0)}(z)\deff {\bm F}_n(z){\bm \Phi}_n(z)\ee^{n\phi(z)\sp_3},\quad z\in U_0\setminus \cSig_{\bm S},
\end{equation}
keeping track that ${\bm \Phi}_n$ and $\bm F_n$ are introduced in \eqref{local_par_L_Phi}. With this construction and thanks to Lemma~\ref{lem:estimate_q_q0} and \eqref{eq:estimateL_proof1}, the matching condition {\bf RHP.}$\bm P^{(0)}${\bf -3} is 
\begin{equation}\label{eq:errorParOriginfinalform}
\bm P^{(0)}(z) =\left(\bm I+\frac{1}{n^{1/3}}\left(\bm L_1(z)+\msf q^{(1)}_0\sp_3 -\msf q^{(1)}(z) \bm M(z)\sp_3\bm M(z)^{-1}\right)+\Boh(n^{-2/3})\right)\bm G(z),\quad n\to \infty,
\end{equation}
and the errors are uniform for $\sad\geq -\sad_0$ and $\tad_0\leq \tad\leq 1/\tad_0$.

\subsection{Final transformation}\label{sec:finalRops}\hfill

The final transformation combines the local and global parametrices to remove the non-decaying jumps from $\bm S$. 

Set
$$
U\deff U_0\cup D_\delta(-a),
$$
orienting $\partial U_0$ and $\partial D_\delta(-a)$ in the clockwise direction. With $\bm P^{(a)}$ being the local parametrix near $-a$, $\bm P^{(0)}$ the local parametrix near the origin and $\bm G$ the global parametrix, we introduce a last parametrix $\bm P$ with unified notation as
$$
\bm P(z)=
\begin{cases}
\bm P^{(a)}(z),& z\in D_\delta(-a)\setminus \Gamma_{\bm S}, \\ 
\bm P^{(0)}(z), & z\in U_0\setminus \Gamma_{\bm S}, \\
\bm G(z), & \text{elsewhere on } \C\setminus (\Gamma_{\bm S}\cup \partial U).
\end{cases}
$$
The final transformation $\bm S\mapsto \bm R$ is then
$$
\bm R(z)\deff 
\bm S(z)\bm P(z)^{-1},\quad z\in \C\setminus (\Gamma_{\bm S}\cup \partial U).
$$
With this transformation, the jumps that $\bm S$ has in common with the parametrices $\bm G$ and $\bm P$ get canceled, and we remain with jumps only away from $[-a,0]$ and $U$. With
$$
\Gamma_{\bm R}\deff \partial U\cup \Gamma_{\bm S}\setminus \left([-a,0]\cup U\right),
$$
the matrix $\bm R$ satisfies the following RHP.

\begin{enumerate}[\bf R-1.]
\item The matrix $\bm R:\C\setminus \Gamma_{\bm R}\to \C^{2\times 2}$ is analytic.
\item For $z\in \Gamma_{\bm R}$, it satisfies the jump $\bm R_+(z)=\bm R_-(z)\bm J_{\bm R}(z)$,
with
\begin{equation}\label{eq:jumpJR}
\bm J_{\bm R}(z)\deff
\begin{dcases}
\bm G(z)\bm J_{\bm S}(z)\bm G(z)^{-1}, & z\in \Gamma_{\bm R}\setminus \partial U, \\
\bm P^{(0)}(z)\bm G(z)^{-1}, & z\in \partial U_0,  \\
\bm P^{(a)}(z)\bm G(z)^{-1}, & z\in \partial D_\delta(-a), 
\end{dcases}
\end{equation}
\item With $\bm R_1\deff \bm S_1-\bm G_1$,
\begin{equation}\label{eq:expansionR1infinity}
\bm R(z)=\bm I+\frac{1}{z}\bm R_1+\Boh(1/z^2),\quad z\to\infty.
\end{equation}
\end{enumerate}

To conclude that $\bm R$ is asymptotically close to the identity, we control its jumps. We use the matrix norm notation introduced in \eqref{deff:matrixnorm}--\eqref{deff:matrixLpqnorm}.

\begin{prop}\label{prop:estimateJR}
Fix $\sad_0>0$ and $\tad_0\in (0,1)$. There is $\eta>0$ for which the jump matrix for $\bm R$ satisfies the estimates
$$
\|\bm J_{\bm R}-\bm I\|_{L^1\cap L^2\cap L^\infty(\Gamma_{\bm R}\setminus \partial U)}=\Boh(\ee^{-\eta n})
$$
and
$$
\|\bm J_{\bm R}-\bm I\|_{L^1\cap L^2\cap L^\infty(\partial U_0)}=\Boh(n^{-1/3}),\quad 
\|\bm J_{\bm R}-\bm I\|_{L^1\cap L^2\cap L^\infty(\partial D_\delta(-a))}=\Boh(n^{-1})
$$
as $n\to\infty$, uniformly for $\sad\geq -\sad_0$ and $\tad_0\leq \tad\leq 1/\tad_0$.
\end{prop}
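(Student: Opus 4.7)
The plan is to bound $\bm J_{\bm R} - \bm I$ separately on each of the three pieces of $\Gamma_{\bm R}$ appearing in \eqref{eq:jumpJR}, using estimates already at hand from Proposition~\ref{prop:decayJS}, from the matching condition \eqref{eq:errorParOriginfinalform} for the local parametrix at $z=0$, and from the matching \eqref{eq:decayparametrixa} for the Airy parametrix at $z=-a$.

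First I would treat the portion $\Gamma_{\bm R}\setminus \partial U$. Here $\bm J_{\bm R}-\bm I = \bm G(\bm J_{\bm S}-\bm I)\bm G^{-1}$, so that
\[
|\bm J_{\bm R}-\bm I|\leq 4\,\|\bm G\|_{L^\infty(\Gamma_{\bm R}\setminus \partial U)}\|\bm G^{-1}\|_{L^\infty(\Gamma_{\bm R}\setminus \partial U)}|\bm J_{\bm S}-\bm I|.
\]
The set $\Gamma_{\bm R}\setminus \partial U$ is precisely the set $\Gamma_\varepsilon$ of Proposition~\ref{prop:decayJS} (after possibly shrinking $U_0$), so that proposition supplies the $\Boh(\ee^{-\eta n})$ decay of $\bm J_{\bm S}-\bm I$ in $L^1\cap L^2\cap L^\infty$, uniformly in $\sad\geq -\sad_0$ and $\tad\in[\tad_0,1/\tad_0]$. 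To conclude, one needs $\bm G^{\pm 1}\in L^\infty$ on the same set, uniformly in the parameters. But \eqref{eq:constructionGparam}--\eqref{eq:def_globalparM} give $\bm G = \ee^{-\msf q_0\sp_3}\bm M\ee^{\msf q\sp_3}$: the matrix $\bm M$ is bounded on compacts of $\C\setminus\{-a,0\}$ and tends to $\bm I$ at infinity, while $\msf q_0$ and $\msf q(z)$ are uniformly bounded for $\sad\geq -\sad_0$ in virtue of Lemma~\ref{lem:estimate_q_q0}, and this boundedness is inherited on the remaining pieces (the tails of $\R$ and the lens boundaries) since these stay at positive distance from $\{-a,0\}$.

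For the piece $\partial U_0$, which is a closed curve of finite length, the three $L^p$ norms are mutually controlled by the $L^\infty$ norm. So it suffices to extract a uniform $\Boh(n^{-1/3})$ pointwise bound from \eqref{eq:errorParOriginfinalform}, which directly yields
\[
\bm J_{\bm R}(z)-\bm I = \frac{1}{n^{1/3}}\Bigl(\bm L_1(z)+\msf q_0^{(1)}\sp_3 - \msf q^{(1)}(z)\bm M(z)\sp_3\bm M(z)^{-1}\Bigr)+\Boh(n^{-2/3}),\quad z\in\partial U_0.
\]
The coefficient is bounded on $\partial U_0$: $\bm M(z)^{\pm 1}$ is bounded there, $\msf q_0^{(1)}$ and $\msf q^{(1)}(z)$ are bounded uniformly for $\sad\geq -\sad_0$, $\tad\in[\tad_0,1/\tad_0]$ by inspection of \eqref{deff:msfqo1}, and the factor $(\bm\Phi_n^{(1)})_{21}$ appearing in $\bm L_1$ is controlled through \eqref{eq:asympPhin1Phi01residue}. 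The estimate on $\partial D_\delta(-a)$ is entirely analogous, only simpler: the improved matching \eqref{eq:decayparametrixa} directly gives $\bm J_{\bm R}-\bm I=\Boh(n^{-1})$ in $L^\infty$, and finiteness of the length of $\partial D_\delta(-a)$ upgrades this to $L^1\cap L^2\cap L^\infty$.

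I do not expect any genuine obstacle: every ingredient has been assembled in the preceding sections and the remaining work is an assembly of pieces. The only point requiring slight care is verifying that the constants in the exponential decay on the unbounded components of $\Gamma_{\bm R}\setminus \partial U$ are uniform in $\sad\geq -\sad_0$ and $\tad\in[\tad_0,1/\tad_0]$. This follows because $\sigma_n(x)\in(0,1)$ on $\R$ so it contributes no growth to $\bm J_{\bm S}$ on the tails of the real axis, while on the lens boundaries $1/\sigma_n$ grows at most like $\ee^{cn^{2/3}}$ uniformly in the parameters, and is dominated by the decay $\ee^{2n\re\phi}$ coming from $\re\phi\leq -\eta$ on the lenses away from the endpoints.
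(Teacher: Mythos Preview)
Your proposal is correct and follows essentially the same approach as the paper: both treat the three pieces of $\Gamma_{\bm R}$ separately, using Proposition~\ref{prop:decayJS} together with the boundedness of $\bm G^{\pm 1}$ away from $[-a,0]$ for the unbounded part, and the matching conditions \eqref{eq:errorParOriginfinalform} and \eqref{eq:decayparametrixa} on the two boundary circles. Your write-up is in fact more detailed than the paper's short proof, in particular in justifying the uniform boundedness of $\bm G^{\pm 1}$ via Lemma~\ref{lem:estimate_q_q0} and the boundedness of the $n^{-1/3}$ coefficient on $\partial U_0$ via \eqref{eq:asympPhin1Phi01residue}.
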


\begin{proof}
The function $\bm G$ remains uniformly bounded away from the interval $[-a,0]$, and the first claimed decay then follows from \eqref{eq:jumpJR} and Proposition~\ref{prop:decayJS}.

Using \eqref{eq:decayparametrixa} and \eqref{eq:jumpJR}, we obtain the uniform estimate of the jump along $\partial D_\delta(-a)$ directly. Since $\partial D_\delta(-a)$ is bounded, the $L^1$ and $L^2$ estimates also follow. Finally, estimate for the jump along $\partial U_0$ follows similarly, once we recall \eqref{eq:errorParOriginfinalform}.
\end{proof}

With Proposition~\ref{prop:estimateJR} at hand, the small norm theory of Riemann-Hilbert problems yields
\begin{theorem}\label{thm:Rasymptops}
Fix $\sad_0>0$ and $\tad_0>0$. The matrix $\bm R$ satisfies
$$
\bm R(z)= \bm I+\Boh(n^{-1/3}),\quad \bm R'(z)=\Boh(n^{-1/3}) \quad \text{and}\quad  \bm R(w)^{-1}\bm R(z)=\bm I+\Boh\left(\frac{z-w}{n^{1/3}}\right),\quad n\to\infty,
$$
and the residue matrix $\bm R_1$ satisfies
\begin{equation}\label{eq:intreprresidueR1}
\bm R_1=-\frac{1}{2\pi \ii}\int_{\partial U_0}\bm R_-(s)(\bm J_{\bm R}(s)-\bm I)\dd s+\Boh(n^{-1}),
\end{equation}
where the error terms are all uniform for $z,w$ on the same connected components of $\C\setminus \Gamma_{\bm R}$, and they are also uniform for $\sad\geq -\sad_0$ and $\tad_0\leq \tad\leq 1/\tad_0$.
\end{theorem}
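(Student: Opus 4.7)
The plan is a standard application of the small-norm theory for RHPs, driven entirely by the estimates of Proposition~\ref{prop:estimateJR}. Write $\bm W\deff \bm J_{\bm R}-\bm I$ and split $\Gamma_{\bm R}=\partial U_0\cup \partial D_\delta(-a)\cup \Gamma_{\bm R}^{\mathrm{out}}$, so that by Proposition~\ref{prop:estimateJR}
$$
\|\bm W\|_{L^1\cap L^2\cap L^\infty(\partial U_0)}=\Boh(n^{-1/3}),\quad \|\bm W\|_{L^1\cap L^2\cap L^\infty(\partial D_\delta(-a))}=\Boh(n^{-1}),\quad \|\bm W\|_{L^1\cap L^2\cap L^\infty(\Gamma_{\bm R}^{\mathrm{out}})}=\Boh(\ee^{-\eta n}),
$$
all of which are uniform in $\sad\geq -\sad_0$ and $\tad\in [\tad_0,1/\tad_0]$. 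In particular the total $L^2\cap L^\infty$ norm of $\bm W$ on $\Gamma_{\bm R}$ is $\Boh(n^{-1/3})$, so the Cauchy operator $\mcal C_{-\bm W}:\bm f\mapsto \mcal C_-(\bm f\bm W)$, acting on $L^2(\Gamma_{\bm R})$, has operator norm that is $\Boh(n^{-1/3})$. Thus $\bm I-\mcal C_{-\bm W}$ is invertible on $L^2(\Gamma_{\bm R})$ for $n$ large, and the equation $(\bm I-\mcal C_{-\bm W})(\bm R_--\bm I)=\mcal C_-\bm W$ has a unique solution $\bm R_-\in \bm I+L^2(\Gamma_{\bm R})$ with $\|\bm R_--\bm I\|_{L^2(\Gamma_{\bm R})}=\Boh(n^{-1/3})$, uniformly in the parameters.

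The solution to {\bf RHP-}$\bm R$ is then recovered by the Cauchy integral
\begin{equation}\label{eq:RintreprCauchyRsketch}
\bm R(z)=\bm I+\frac{1}{2\pi \ii}\int_{\Gamma_{\bm R}}\frac{\bm R_-(s)\bm W(s)}{s-z}\dd s,\quad z\in \C\setminus \Gamma_{\bm R}.
\end{equation}
For $z$ at a fixed positive distance from $\Gamma_{\bm R}$, Cauchy-Schwarz combined with $\|\bm R_-\|_{L^2}\leq \|\bm R_--\bm I\|_{L^2}+\|\bm I\|_{L^2}=\Boh(1)$ (observing that $\Gamma_{\bm R}$ has finite length after removing the unbounded exponentially small piece, which contributes negligibly) yields $\bm R(z)=\bm I+\Boh(n^{-1/3})$. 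Upon approaching $\Gamma_{\bm R}$ from one side, the same bound transfers to $\bm R_\pm$ by a Sokhotski-Plemelj argument at fixed positive distance from self-intersection points; the bounds on $\bm R^{\pm 1}$ then extend uniformly to the closure of each connected component by the maximum modulus principle applied in the usual way to $(\bm R-\bm I)$ and its inverse. Differentiating \eqref{eq:RintreprCauchyRsketch} in $z$ gives
$$
\bm R'(z)=\frac{1}{2\pi \ii}\int_{\Gamma_{\bm R}}\frac{\bm R_-(s)\bm W(s)}{(s-z)^2}\dd s,
$$
which, by the same Cauchy-Schwarz estimate on a fixed connected component, is $\Boh(n^{-1/3})$ uniformly in the parameters. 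For $w,z$ on the same connected component, we write $\bm R(w)^{-1}\bm R(z)=\bm I+\bm R(w)^{-1}(\bm R(z)-\bm R(w))$ and bound $\bm R(z)-\bm R(w)$ by an integral of $\bm R'$ along a path of length $\Boh(|z-w|)$ inside the component, yielding the claimed $\Boh((z-w)/n^{1/3})$ estimate.

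For the integral representation of $\bm R_1$, we match the expansion of \eqref{eq:RintreprCauchyRsketch} at infinity with \eqref{eq:expansionR1infinity} to obtain
$$
\bm R_1=-\frac{1}{2\pi \ii}\int_{\Gamma_{\bm R}}\bm R_-(s)\bm W(s)\dd s.
$$
Splitting the contour into its three pieces and using $\|\bm R_-\|_{L^\infty}=\Boh(1)$ together with the $L^1$ estimates of Proposition~\ref{prop:estimateJR}, the contribution from $\partial D_\delta(-a)$ is $\Boh(n^{-1})$ and the one from $\Gamma_{\bm R}^{\mathrm{out}}$ is exponentially small, which leaves the $\partial U_0$ piece and proves \eqref{eq:intreprresidueR1}. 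The only mildly delicate step is ensuring the uniformity in $(\sad,\tad)$ of all the estimates above, but this is inherited directly from the uniformity already built into Proposition~\ref{prop:estimateJR}; no new difficulty arises.
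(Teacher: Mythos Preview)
Your proof is correct and follows essentially the same route as the paper: both invoke the small-norm RHP theory driven by Proposition~\ref{prop:estimateJR}, derive the Cauchy integral representation for $\bm R$, and read off the estimates for $\bm R$, $\bm R'$ and $\bm R_1$ by splitting the contour. The only cosmetic difference is in the $\bm R(w)^{-1}\bm R(z)$ estimate, where the paper uses Cauchy's formula to write $\bm R(z)-\bm R(w)=\frac{z-w}{2\pi\ii}\oint \bm R(s)/((s-w)(s-z))\dd s$ over a small loop in $\C\setminus\Gamma_{\bm R}$, while you integrate $\bm R'$ along a path; and for extending bounds to $\Gamma_{\bm R}$ the paper simply invokes analyticity of the jump matrix in a neighborhood of the contour rather than your Sokhotski--Plemelj plus maximum-modulus argument.
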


\begin{proof}
The arguments are standard, so we only sketch them. The small norm theory for RHPs ensures the representation
\begin{equation}\label{eq:spectralformR}
\bm R(z)=\bm I+\frac{1}{2\pi \ii}\int_{\Gamma_{\bm R}}\frac{\bm R_-(s)(\bm J_{\bm R}(s)-\bm I)}{s-z}\dd s,\quad z\in \C\setminus \Gamma_{\bm R}.
\end{equation}
This ensures the estimates for $\bm R$ and $\bm R'$ away from $\Gamma_{\bm R}$ and, because the jump matrix is analytic on a neighborhood of $\Gamma_{\bm R}$, we are able to extend these estimates also to $\Gamma_{\bm R}$. To obtain the estimate involving $z$ and $w$, we then write, with the help of Cauchy's formula,
\begin{equation}\label{eq:EstRwz}
\bm R(w)^{-1}\bm R(z)=\bm I+\bm R(w)^{-1}\left(\bm R(w)-\bm R(z)\right)=\bm I+\bm R(w)^{-1}\frac{w-z}{2\pi \ii}\oint \frac{\bm R(s)}{(s-w)(s-z)}\dd s,
\end{equation}
where the integral is over a contour on $\bm C\setminus \Gamma_{\bm R}$ encircling both $w$ and $z$. The right-hand side is now $\bm I+\Boh((w-z)n^{-1/3})$ by the estimate on $\bm R$ already proven.

Finally, the estimate for $\bm R_1$ follows expanding \eqref{eq:spectralformR} as $z\to\infty$, and then using that the jump matrix decays to the identity at least as $\Boh(n^{-1})$ away from $\partial U_0$.
\end{proof}

Later on, we also need the first term in the asymptotic expansion for $\bm R$, we state it as a separate result.

\begin{prop}
Fix $\sad_0>0$ and $\tad_0>0$. Setting
\begin{equation}\label{deff:whR1}
\wh{\bm R}_1(z)\deff \frac{1}{2\pi \ii} \int_{\partial U_0}\left(\frac{(\wh{\bm \Phi}_n^{(1)})_{21}}{\psi(s)^{1/2}\msf m(s)^{1/2}}\bm E_{21}-\msf q^{(1)}(s)\sp_3\right)\frac{\dd s}{s-z}+\msf q_0^{(1)}\sp_2,\quad z\in U_0,
\end{equation}
the matrix $\bm R$ has the expansion
\begin{equation}\label{eq:ExpRwhR1}
\bm R(z)= \bm I+\frac{1}{n^{1/3}}\bm U_0 \wh{\bm R}_1(z)\bm U_0^{-1} + \Boh(n^{-2/3}),\quad n\to \infty
\end{equation}
where the error term is uniform for $z\in U_0$, and it is also uniform for $\sad\geq -\sad_0$ and $\tad_0\leq \tad\leq 1/\tad_0$.
\end{prop}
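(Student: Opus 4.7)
The plan is to start from the Cauchy-type integral representation of $\bm R$ supplied by the small-norm theory underlying Theorem~\ref{thm:Rasymptops},
\begin{equation*}
\bm R(z) = \bm I + \frac{1}{2\pi\ii}\int_{\Gamma_{\bm R}}\frac{\bm R_-(s)(\bm J_{\bm R}(s) - \bm I)}{s-z}\,\dd s, \qquad z\in \C\setminus \Gamma_{\bm R},
\end{equation*}
and extract its first correction term. First I would localize the integral to $\partial U_0$: by Proposition~\ref{prop:estimateJR}, $\|\bm J_{\bm R}-\bm I\|_{L^1(\Gamma_{\bm R}\setminus\partial U)} = \Boh(\ee^{-\eta n})$ and $\|\bm J_{\bm R}-\bm I\|_{L^1(\partial D_\delta(-a))} = \Boh(n^{-1})$, while $\bm R_-$ is uniformly bounded by Theorem~\ref{thm:Rasymptops}. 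Since $z\in U_0$ stays at a positive distance from $\Gamma_{\bm R}\setminus \partial U_0$, the contributions of the remaining pieces to the integral are at most $\Boh(n^{-1})$, which is subdominant to the target order $n^{-1/3}$ and is absorbed into $\Boh(n^{-2/3})$.

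Next I would expand the integrand on $\partial U_0$ to first order in $n^{-1/3}$. Substituting the improved matching \eqref{eq:errorParOriginfinalform} for $\bm J_{\bm R}(s) = \bm P^{(0)}(s)\bm G(s)^{-1}$ and using $\bm R_-(s) = \bm I + \Boh(n^{-1/3})$ from Theorem~\ref{thm:Rasymptops}, we get $\bm R_-(s)(\bm J_{\bm R}(s)-\bm I) = \frac{1}{n^{1/3}}\bm\Delta_1(s) + \Boh(n^{-2/3})$ uniformly for $s\in\partial U_0$, where $\bm\Delta_1(s) := \bm L_1(s) + \msf q_0^{(1)}\sp_3 - \msf q^{(1)}(s)\bm M(s)\sp_3\bm M(s)^{-1}$. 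This yields
\begin{equation*}
\bm R(z) = \bm I + \frac{1}{n^{1/3}}\cdot\frac{1}{2\pi\ii}\int_{\partial U_0}\frac{\bm\Delta_1(s)}{s-z}\,\dd s + \Boh(n^{-2/3}), \qquad z\in U_0.
\end{equation*}

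The final step is to identify the resulting contour integral with $\bm U_0\wh{\bm R}_1(z)\bm U_0^{-1}$. This uses three ingredients: (i) the algebraic identity $\bm M(s)\bm U_0\bm E_{21}\bm U_0^{-1}\bm M(s)^{-1} = \msf m(s)^{-1/2}\bm U_0\bm E_{21}\bm U_0^{-1}$ already established in the proof of Proposition~\ref{prop:estimateL}, which recasts $\bm L_1(s)$ as $\frac{(\wh{\bm\Phi}_n^{(1)})_{21}}{\psi(s)^{1/2}\msf m(s)^{1/2}}\bm U_0\bm E_{21}\bm U_0^{-1}$ and accounts for the $\bm E_{21}$ term inside the integral in \eqref{deff:whR1}; (ii) the Pauli-matrix conjugations $\bm U_0\sp_3\bm U_0^{-1} = \sp_2$ and $\bm U_0\sp_2\bm U_0^{-1} = -\sp_3$, which convert the $-\msf q^{(1)}(s)\sp_3$ integrand in \eqref{deff:whR1} into the correct contribution in the $\sp_2$ basis; and (iii) a residue computation at $s=0$ exploiting the cancellation between the $\msf m(s)^{-1/2}$ factor in $\msf q^{(1)}(s)$ and the $\msf m(s)^{\pm 1/2}$ factors in $\bm M(s)\sp_3\bm M(s)^{-1}$ from \eqref{eq:def_globalparM}, which renders $\msf q^{(1)}(s)\bm M(s)\sp_3\bm M(s)^{-1}$ meromorphic inside $U_0$ with only a simple pole at $s=0$. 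The residue contribution at $s=0$ combined with the constant $\msf q_0^{(1)}\sp_3$ term collapses, after $\bm U_0$-conjugation, to the boundary term $\msf q_0^{(1)}\sp_2$ that appears outside the integral in \eqref{deff:whR1}.

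The main obstacle is this last algebraic identification: one must carefully tease apart the meromorphic and branch-cut components of $\msf q^{(1)}(s)\bm M(s)\sp_3\bm M(s)^{-1}$ by exploiting the precise structure of $\bm M$, $\msf m$, and $\msf q^{(1)}$ from \eqref{eq:def_globalparM}--\eqref{deff:msfqo1}, and match residues at $s=0$ with the $\bm U_0$-conjugated Pauli identities in just the right way. Everything else reduces to the standard small-norm machinery together with the expansions already established for $\bm P^{(0)}$, $\bm M$, $\bm G$ and $\bm R_-$.
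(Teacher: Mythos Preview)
Your overall strategy is correct and is exactly the paper's approach: localize the Cauchy integral to $\partial U_0$, replace $\bm R_-$ by $\bm I$, and then insert the matching \eqref{eq:errorParOriginfinalform} to read off the $n^{-1/3}$ coefficient.

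However, the ``main obstacle'' you identify in step (iii) is not there, and your description of that step is based on a miscalculation. Since $\sp_3$ is diagonal it commutes with $\msf m(s)^{\sp_3/4}$, so from \eqref{eq:def_globalparM} one has the exact identity
\[
\bm M(s)\sp_3\bm M(s)^{-1}=\bm U_0\,\msf m(s)^{\sp_3/4}\sp_3\,\msf m(s)^{-\sp_3/4}\bm U_0^{-1}=\bm U_0\sp_3\bm U_0^{-1}=\sp_2,
\]
with no $\msf m^{\pm 1/2}$ factors, no branch cut, and no pole at $s=0$. Thus $-\msf q^{(1)}(s)\bm M(s)\sp_3\bm M(s)^{-1}=-\msf q^{(1)}(s)\,\bm U_0\sp_3\bm U_0^{-1}$ directly produces the $-\msf q^{(1)}(s)\sp_3$ term inside the integral in \eqref{deff:whR1} after $\bm U_0$-conjugation. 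The constant term $\msf q_0^{(1)}\sp_3$ requires no residue either: since $\partial U_0$ is oriented clockwise and $z\in U_0$, the Cauchy integral of the constant $\msf q_0^{(1)}\sp_3$ over $\partial U_0$ equals $-\msf q_0^{(1)}\sp_3=\msf q_0^{(1)}\,\bm U_0\sp_2\bm U_0^{-1}$, which is precisely the boundary term $\msf q_0^{(1)}\sp_2$ after conjugation. Together with your (correct) identity for $\bm L_1(s)$, the identification with $\bm U_0\wh{\bm R}_1(z)\bm U_0^{-1}$ is then a one-line algebraic check, which is why the paper calls it ``straightforward calculations.''
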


\begin{proof}
Starting from the representation \eqref{eq:spectralformR}, we write
$$
\bm R(z)-\bm I=\frac{1}{2\pi \ii}\int_{\partial U_0}\frac{\bm R_-(s)(\bm J_{\bm R}(s)-\bm I)}{s-z}\dd s+\Boh(n^{-1})=\frac{1}{2\pi \ii}\int_{\partial U_0}\frac{\bm J_{\bm R}(s)-\bm I}{s-z}\dd s+\Boh(n^{-2/3}),
$$
and the result follows combining \eqref{eq:jumpJR} and \eqref{eq:errorParOriginfinalform} and performing straightforward calculations.
\end{proof}

All the transformations $\bm Y\mapsto \bm T\mapsto \bm S\mapsto \bm R$ involve only analytic factors in their construction. As such, we can actually slightly deform the contour $\Gamma_{\bm R}$ in all these steps, so that in fact the estimates in the previous result are valid everywhere on $\C$, interpreting them as with boundary values when $z,w\in \Gamma_{\bm R}$. We will use this fact without further warning.

With this theorem at hand, in the next sections we recover the needed asymptotic formulas for the proof of our main results.

\section{Proof of main results}\label{sec:ProofMainResults}

The next step in the direction of concluding our asymptotic analysis is to unravel the transformations $\bm R\mapsto \bm S\mapsto \bm T\mapsto \bm Y$. Introduce
$$
\bm\Lambda_n(x)\deff \bm I+\frac{\ee^{2n\phi_+(x)}}{\sigma_n(x)}\chi_{(-a,0)}(x)\bm E_{21}\quad \text{and}\quad  \bm \Delta_n(x)\deff \left(\bm I+\frac{\chi_{(-a,0)}(x)}{\sigma_n(x)}\bm E_{21}\right),\quad x\in \R,
$$
which are related by
$$
\bm \Lambda_n(x)\ee^{-n\phi_+(x)\sp_3}=\ee^{-n\phi_+(x)\sp_3}\bm \Delta_n(x).
$$
Then the result of the unfolding of the transformations is
\begin{equation}\label{eq:unfoldtransfYR}
\bm Y_+(z)=\ee^{n\ell_V\sp_3}\bm R_+(x)\bm P_+(x)\bm \Lambda_n(x)\ee^{-n(\phi_+(x)-V(x)/2)\sp_3},\quad x\in \R.
\end{equation}
We split the proofs of our results in the next few sections.

\subsection{Proof of Theorem~\ref{thm:limitingkernel}}\hfill 

Thanks to \eqref{eq:unfoldtransfYR} and Theorem~\ref{thm:Rasymptops}, the expression \eqref{eq:KnYxy} reduces to the asymptotic formula
\begin{equation}\label{eq:msfKnunfold}
\ee^{-n(V(x)+V(y))/2}\msf K_n^Q(x,y)=\frac{\ee^{-n(\phi_+(x)+\phi_+(y))}}{2\pi\ii(x-y)}\bm e^\tp_2\bm\Lambda_{n}(y)^{-1}\bm P_+(y)^{-1}\left(\bm I+\Boh\left(\frac{x-y}{n^{1/3}}\right)\right)\bm P_+(x)\bm\Lambda_{n}(x)\bm e_1,
\end{equation}
valid as $n\to\infty$, $x,y\in \R$ and uniformly for $\sad,\tad$ as in Theorem~\ref{thm:Rasymptops}.

For the value $\msf c_V>0$ introduced in \eqref{eq:derivativepsicpsi}, we scale $x=u_n$ and $y=v_n$ as in \eqref{deff:scalingunvn}, 
where $u,v$ are on any given compact of the real axis. For such values of $u$, for $n$ large enough the points $u_n$ are always on the neighborhood $U_0$ of the origin where $\bm P=\bm P^{(0)}$, and from \eqref{deff:ParamOrigin} we simplify
$$
\bm e_2^\tp \bm \Lambda_{n}^{-1}\bm P_+^{-1}=\ee^{n\phi_+}\bm e_2^\tp \bm \Delta_n^{-1}\bm \Phi_{n,+}^{-1}\bm F_n^{-1}\quad \text{and}\quad 
\bm P_+\bm \Lambda_{n}\bm e_1=\bm F_n\bm \Phi_{n,+}\bm \Delta_n\bm e_1\ee^{n\phi_+}.
$$
where all the quantities above are evaluated at $u_n$. The next step is to plug these identities into \eqref{eq:msfKnunfold}, when doing so we also use the estimates
$$
\bm F_n(u_n)=\bm U_0\left(\bm I+\Boh(n^{-2/3})\right)(\msf c_V a)^{-\sp_3/4}n^{-\sp_3/6},\quad n\to\infty,
$$
which follows from \eqref{local_par_L_Phi}, \eqref{eq:def_U0} and \eqref{eq:derivativepsicpsi}, and its immediate consequence
$$
\bm F_n(v_n)^{-1}\bm F_n(u_n)=n^{\sp_3/6}\left(\bm I+\Boh\left(\frac{u-v}{n^{2/3}}\right)\right)n^{-\sp_3/6}=\bm I+\Boh\left(\frac{u-v}{n^{1/3}}\right),\quad n\to\infty,
$$
which is obtained with arguments similar to \eqref{eq:EstRwz}. These estimates are uniform for $u,v$ in compacts of $\R$, and are also uniform for $\sad\geq -\sad_0$ and $\tad\in [\tad_0,1/\tad_0]$. Equation~\eqref{eq:msfKnunfold} then simplifies to
\begin{multline*}
\ee^{-n(V(u_n)+V(v_n))/2}\msf K_n^Q(u_n,v_n)= \\
\frac{\msf c_V n^{2/3}}{2\pi \ii(u-v)}
\bm e_2^\tp
\bm \Delta_n(v_n)^{-1}
\bm \Phi_{n,+}(v_n)^{-1}
\left(\bm I+\Boh\left(\frac{u-v}{n^{1/3}}\right)\right)\bm \Phi_{n,+}(u_n)\bm \Delta_n(u_n)\bm e_1.
\end{multline*}
Now, with the definition of $\bm \Phi_n$ in \eqref{local_par_L_Phi} at hand, the aid of  \eqref{eq:asympPhinPhi0} and the constant $\msf c_{\msf H}$ appearing in \eqref{eq:expansionmsfhnorigin}, we get
$$
\bm\Phi_{n,+}(u_n)=\wh{\bm \Phi}_{n,+}(n^{2/3}\psi(u_n))=\left(\bm I+\Boh(n^{-\nu})\right)\Phiccp(u\mid \sad,\msf c_{\msf H}),\quad n\to\infty,
$$
for any $\nu\in (0,2/3)$, with the error being valid uniformly for $u$ in compacts of $\R$, and also uniformly for $\sad\geq -\sad_0, \tad_0\leq \tad\leq 1/\tad_0$. Also, thanks to Theorem~\ref{prop:existencePhicc} we know that $\Phiccp(u\mid \sad,\msf c_{\msf H})^{\pm 1}$ remain bounded for $u$ in compacts and $\sad\geq -\sad_0, \tad_0\leq \tad\leq 1/\tad_0$. Therefore,
$$
\bm \Phi_{n,+}(v_n)^{-1}\left(\bm I+\Boh\left(\frac{u-v}{n^{1/3}}\right)\right)\bm \Phi_{n,+}(u_n)
=
\left(\bm I+\Boh(n^{-\nu})\right)\Phiccp(v\mid \sad,\msf c_{\msf H})^{-1}\Phiccp(u\mid \sad,\msf c_{\msf H})+\Boh\left(\frac{u-v}{n^{1/3}}\right)
$$
as $n\to\infty$.
In addition, the estimate
\begin{equation}\label{eq:estimatesigmanfinalproofkernel}
\frac{1}{\sigma_n(u_n)}=1+\ee^{-\sad+\tad u/\msf c_V}(1+\Boh(n^{-2/3}))=1+\ee^{-\sad+\msf c_{\msf H} u}(1+\Boh(n^{-2/3})),\quad n\to\infty,
\end{equation}
is valid uniformly for $u$ in compacts, uniformly for $\sad\in \R$ and $\tad_0\leq \tad\leq 1/\tad_0$, and is immediate from \eqref{def:perturbedweight}.
Combining everything, and denoting $\chi_+=\chi_{[0,\infty)}$, we obtained the asymptotic formula
\begin{multline*}
\frac{\ee^{-n(V(u_n)+V(v_n))/2}}{\msf c_V n^{2/3}}\msf K_n^Q(u_n,v_n)= 
\frac{1+\Boh(n^{-\nu})}{2\pi \ii(u-v)}
\bm e_2^\tp
\left(\bm I-(1+\ee^{-\sad+\msf c_{\msf H} v})\chi_{+}(v)\bm E_{21}\right)\\
\times
\Phiccp(v\mid \sad,\msf c_{\msf H})^{-1}\Phiccp(u\mid \sad,\msf c_{\msf H})\left(\bm I+(1+\ee^{-\sad+\msf c_{\msf H} u})\chi_{+}(u)\bm E_{21}\right)\bm e_1+\Boh(n^{-1/3}),\quad n\to\infty,
\end{multline*}
which is valid uniformly for $u,v$ in compacts of $\R$, and also uniformly for $\sad\geq -\sad_0$ and $\tad_0\leq \tad\leq 1/\tad_0$. With $\Dcc$ as in \eqref{deff:Dcc}, this identity rewrites as
\begin{multline*}
\frac{\ee^{-n(V(u_n)+V(v_n))/2}}{\msf c_V n^{2/3}}\msf K_n^Q(u_n,v_n)= 
\frac{1+\Boh(n^{-\nu})}{2\pi \ii(u-v)}\\
\times \left[\left(\Phicc(v\mid \sad,\msf c_{\msf H})\Dcc(v\mid \sad,\msf c_{\msf H})\right)^{-1}\Phicc(u\mid \sad,\msf c_{\msf H})\Dcc(u\mid \sad,\msf c_{\msf H})\right]_{21,+}
+\Boh(n^{-1/3}),\quad n\to\infty,
\end{multline*}
and the proof of Theorem~\ref{thm:limitingkernel} is now completed using Equation~\eqref{eq:transfPhicckernelphicc}.

\subsection{Proof of Theorem~\ref{thm:asymptoticsnormingconstant}}\hfill 

Now that the RHP asymptotic analysis is completed, the proof of Theorem~\ref{thm:asymptoticsnormingconstant} follows standard steps. In our case, there is an additional cancellation that has to be accounted for at a later step, in virtue of the presence of the factor $\msf q$ in both the global and local parametrices, see \eqref{eq:constructionGparam} and \eqref{eq:errorParOriginfinalform}. So we opt for presenting the detailed calculation.

Starting from the representation \eqref{eq:relationnormingY1}, we unravel the transformations and obtain
\begin{equation}\label{eq:gammanasympt1}
-2\pi \ii \left(\msfga_{n-1}^{(n,Q)}\right)^2=\ee^{-2n\ell_V}\left(\bm R_1+\bm G_1\right)_{21}=\ee^{-2n\ell_V}\left((\bm R_1)_{21}- \frac{\ii a}{4} -\frac{1}{n^{1/3}}\frac{\tad^{1/2}a^{1/2}}{4\pi \ii}F_{-1/2}(\sad) +\Boh(n^{-2/3}) \right),
\end{equation}
with $\bm R_1$ as in \eqref{eq:expansionR1infinity}, and where for the second identity we used the definition of $\bm G_1$ in \eqref{eq:definitionG1} and the estimate for $\msf q_0$ from Lemma~\ref{lem:estimate_q_q0}.

It remains to estimate $\bm R_1$, which we do so starting from \eqref{eq:intreprresidueR1}. Using Cauchy-Schwartz, Propositions~\ref{prop:estimateJR} and Theorem~\ref{thm:Rasymptops}, we write
$$
\bm R_1=-\frac{1}{2\pi \ii}\int_{\partial U_0}(\bm J_{\bm R}(s)-\bm I)\dd s+\Boh(n^{-2/3}),\quad n\to\infty.
$$
The matrix $\bm J_\bm R$ is in \eqref{eq:jumpJR}, and combining its explicit expression along $\partial U_0$ with \eqref{eq:errorParOriginfinalform} and \eqref{eq:def_globalparM}, after a cumbersome but straightforward calculation we arrive at
$$
\left(\bm R_1\right)_{21}=-\frac{1}{2\pi \ii n^{1/3}}\int_{\partial U_0}\left(\frac{\ii \msf q^{(1)}(s)(\msf m(s)-1)}{2\msf m(s)^{1/2}}+(\bm L_1(s))_{21}\right)\dd s+\Boh(n^{-2/3}),\quad n\to\infty,
$$
where we recall that $\msf m$, $\msf q^{(1)}$ and $\bm L_1$ are given in \eqref{eq:def_U0}, \eqref{deff:msfqo1} and \eqref{eq:AsymptLlocalpar}, respectively. Using this explicit expression for $\bm L_1$, we see that
$$
(\bm L_1(s))_{21}=\frac{\left({\bm \Phi}^{(1)}_n\right)_{21}}{2\msf m(s)^{1/2}\psi(s)^{1/2}}.
$$
Both functions $\msf m$ and $\psi$ are analytic in a neighborhood of the origin, and vanish linearly therein. Combining in addition with \eqref{eq:derivativepsicpsi}, we see that the product $\msf m(z)^{1/2}\psi(z)^{1/2}$ admits an analytic continuation near the origin, with the expansion
$$
\msf m(s)^{1/2}\psi(s)^{1/2}=\frac{\msf c_V}{a}s(1+\Boh(s)),\quad s\to 0.
$$
Thus, computing residues
$$
\int_{\partial U_0}(\bm L_1(s))_{21}\dd s=-\pi \ii \frac{a }{\msf c_V}\left({\bm \Phi}^{(1)}_n\right)_{21}.
$$
Similarly, using now \eqref{deff:msfqo1} we obtain
$$
\int_{\partial_{U_0}}\frac{\ii \msf q^{(1)}(s)(\msf m(s)-1)}{2\msf m(s)^{1/2}}\dd s=-\msf t^{1/2}a^{1/2} F_{-1/2}(\sad)
$$
Hence,
$$
\left(\bm R_1\right)_{21}=
\frac{1}{n^{1/3}}\left(\frac{a}{2\msf c_V}\left(\bm \Phi_n^{(1)}\right)_{21}+\frac{\tad^{1/2}a^{1/2}}{4\pi\ii} F_{-1/2}(\sad) \right)+\Boh(n^{-2/3}),
$$
and \eqref{eq:gammanasympt1} updates to
$$
\left(\msfga_{n-1}^{(n,Q)}(\sad)\right)^2=\ee^{-2n\ell_V}\left(\frac{a}{8\pi}-\frac{1}{n^{1/3}}\frac{a}{4\pi \ii}\left(\bm \Phi_n^{(1)}\right)_{21}+\Boh(n^{-2/3})\right),\quad n\to\infty,
$$
uniformly for $\sad\geq -\sad_0$ and $\tad_0\leq \tad\leq 1/\tad_0$. We now need only to apply \eqref{eq:asympPhin1Phi01residue} with any $\nu\geq 1/3$ to complete the proof.

\subsection{Proof of Theorem~\ref{thm:asymptoticsqLaplacetransf}}\hfill 

Unlike the already proven major results, the proof of Theorem~\ref{thm:asymptoticsqLaplacetransf} is not a straightforward consequence of the steepest descent analysis concluded with Theorem~\ref{thm:Rasymptops}. It does rely substantially on Theorem~\ref{thm:Rasymptops}, but several other inputs are also needed along the way. Equipped with \eqref{eq:KnYxy}, the idea is to account for the different approximations for $\msf K_n^Q$ on the different components 
$$
\R\setminus (U_0\cup D_\delta(-a)\cup (-a,0)),\quad \R\cup U_0, \quad (-a,-\delta,-a+\delta) \quad \text{and}\quad (-a+\delta,0)\setminus U_0
$$ 
that arise from the RHP, and integrate each such approximation. With Proposition~\ref{prop:intreprLn} in mind, we are thus able to recover asymptotics for $\msf L_n^Q$ itself. As one would expect, it turns out that in this process the terms that arise away from $U_0$ become all exponentially negligible, and only the contribution from $U_0$ survives in the leading contribution. The contribution that arrives this way involves $\bm \Phi_n$, and we further need to split it into different parts and still account for some exact cancellations to arrive at the leading asymptotic contribution. We postpone this analysis to the next section, where it is split into several different lemmas, and summarize the outcome with the next result. For its statement, we recall that $\msf h_n$ and $\wh{\bm \Phi}_n$ were introduced in \eqref{local_par_L_Phi}, and we also set
\begin{equation}\label{deff:Deltan0}
\wh{\bm \Delta}_n(x)\deff \bm I+ (1+\ee^{-\msf h_n(u)})\chi_{(0,+\infty)(x)}\bm E_{21},\quad x\in \R.
\end{equation}

\begin{prop}\label{prop:finalestKnQ}
Fix $\sad_0>0$ and $\tad_0\in (0,1)$. There exists a function $\msf R(u)=\msf R_1(u\mid \tad)$ satisfying
$$
\int_{\sad}^\infty \msf R(u)\dd u=\Boh(n^{-1/3}),\quad n\to\infty,
$$
uniformly for $\sad\geq -\sad_0$ and $\tad_0\leq \tad\leq 1/\tad_0$, and for which the identity
\begin{multline*}
\int_{-\infty}^{\infty}\msf K_n^Q(x,x\mid u)\frac{\omega_n(x\mid u)}{1+\ee^{u+n^{2/3}Q(x)}}\dd x\\ =\frac{1}{2\pi \ii}
\int_{n^{2/3}\zeta_0}^{n^{2/3}\zeta_1}\frac{\ee^{\msf h_n(u)}}{\left(1+\ee^{\msf h_n(u)}\right)^2}\left[ \wh{\bm \Delta}_n(x)^{-1}\wh{\bm \Phi}_{n,+}(x)^{-1}\left(\wh{\bm \Phi}_{n,+} \wh{\bm\Delta}_n\right)'(x)\right]_{21}\dd x+\msf R(u),
\end{multline*}
holds true for every $u\geq -\sad_0$ and every $\tad\in [\tad_0,1/\tad_0]$.
\end{prop}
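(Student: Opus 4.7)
The plan is to start from the representation \eqref{eq:KnYxx} of $\msf K_n^Q(x,x\mid u)$ in terms of $\bm Y$, unfold the transformations $\bm Y\mapsto\bm T\mapsto \bm S\mapsto\bm R$ using \eqref{eq:unfoldtransfYR}, and then split the $x$-integral over $\R$ according to the four natural regions of the steepest-descent analysis: the outer region $J_\infty\deff \R\setminus ([-a,0]\cup U_0\cup D_\delta(-a))$, the disk $J_a\deff D_\delta(-a)\cap \R$ around the regular edge, the bulk piece $J_{\mathrm{bulk}}\deff (-a+\delta,0)\setminus U_0$, and the piece $J_0\deff U_0\cap \R$ near the origin where $\bm P=\bm P^{(0)}=\bm F_n\bm\Phi_n\ee^{n\phi\sp_3}$. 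The contribution from each region to the $x$-integral is estimated separately, and uniformity in $u\geq -\sad_0$ with suitable decay at $u\to+\infty$ is tracked throughout, so that the piece absorbed into $\msf R(u)$ integrates to $\Boh(n^{-1/3})$.

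The three ``error'' regions are disposed of as follows. On $J_\infty$, the global parametrix $\bm G$ is bounded, and Proposition~\ref{prop:functionphitau}--(iv) gives $\re\phi_+>0$, which combined with the decay of $\ee^{-nV}$ and the uniform bound $\omega_n\leq \ee^{-nV}$ yields an exponentially small contribution, even after integration in $u$. On $J_a$, using \eqref{eq:decayparametrixa} the unfolded kernel is bounded, while the weight $\omega_n(x)/(1+\ee^{u+n^{2/3}Q(x)})=\ee^{-nV(x)}\ee^{\msf h_n^{(u)}}/(1+\ee^{\msf h_n^{(u)}})^2$ is $\Boh(\ee^{-\eta n^{2/3}})$ uniformly for $u\geq -\sad_0$ because $\re Q(x)\geq 2\eta>0$ near $-a$ (cf.\ \eqref{eq:EstLogSigman}). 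On $J_{\mathrm{bulk}}$, we have $\bm P=\bm G$ so the kernel is bounded; the weight factor $\ee^{-u-n^{2/3}Q(x)}/(1+\ee^{-u-n^{2/3}Q(x)})^2$ is pointwise $\Boh(\ee^{-u-\eta n^{2/3}})$ since $Q(x)\geq\eta>0$ on $(-a+\delta,-\delta)$, which after integration in $u\in[-\sad_0,+\infty)$ remains $\Boh(\ee^{-\eta n^{2/3}})$.

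The main contribution comes from $J_0$. After the conformal change of variables $\zeta=n^{2/3}\psi(x)$, the relation \eqref{eq:RelHQQpsi} gives $n^{2/3}Q(x)=n^{2/3}\msf H_Q(\zeta/n^{2/3})$, so that
$$
\frac{\omega_n(x\mid u)}{1+\ee^{u+n^{2/3}Q(x)}}
=\ee^{-nV(x)}\,\frac{\ee^{\msf h_n(\zeta\mid u)}}{(1+\ee^{\msf h_n(\zeta\mid u)})^2}.
$$
Unfolding \eqref{eq:unfoldtransfYR} and using $\bm P^{(0)}=\bm F_n\bm\Phi_n\ee^{n\phi\sp_3}$ together with $\ee^{-nV/2}\ee^{n\phi_+}\bm\Lambda_n\ee^{-n\phi_+\sp_3}=\ee^{-nV/2+n\phi_+\sp_3}\bm\Delta_n$ produces an expression for $\msf K_n^Q(x,x\mid u)$ of the form
$$
\frac{1}{2\pi\ii}\bm e_2^\tp \bm\Delta_n^{-1}\bm\Phi_{n,+}^{-1}\bigl(\bm F_n^{-1}(\bm R_+^{-1}\bm R'_+)\bm F_n+\bm F_n^{-1}\bm F_n'+\bm\Phi_{n,+}^{-1}\bm\Phi_{n,+}'\bigr)(\bm\Phi_{n,+}\bm\Delta_n)\bm e_1,
$$
after an absorption $\ee^{n\phi_+}\mapsto \ee^{n\phi_+\sp_3}$ inside $\bm\Phi_{n,+}\wh{\bm\Delta}_n$. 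The term $\bm R_+^{-1}\bm R'_+$ is $\Boh(n^{-1/3})$ by Theorem~\ref{thm:Rasymptops} and, combined with the weight factor whose $u$-integral has tail $\Boh(\ee^{-u})$, contributes $\Boh(n^{-1/3})$ after integrating in $x$ and $u$. The analytic term $\bm F_n^{-1}\bm F_n'$ is shown to produce, after the change of variables and explicit computation of $\bm F_n=\bm U_0\msf m^{\sp_3/4}(n^{2/3}\psi)^{-\sp_3/4}$ from \eqref{local_par_L_Phi}, a contribution that combines with the analytic prefactor $\bm F_n^{-1}\bm F_n$ conjugation to give an exactly analytic quantity over the $\zeta$-disk whose integral along $\partial U_0$ can be bounded by $\Boh(n^{-1/3})$ uniformly in $u$. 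The surviving term yields $[\wh{\bm\Delta}_n^{-1}\wh{\bm\Phi}_{n,+}^{-1}(\wh{\bm\Phi}_{n,+}\wh{\bm\Delta}_n)']_{21}$ in the $\zeta$-variable over the range $n^{2/3}\psi(J_0)=[n^{2/3}\zeta_0,n^{2/3}\zeta_1]$, matching the stated formula.

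The main obstacle is the careful bookkeeping in the $J_0$ piece: one must show that the $\bm F_n$-related and $\bm R$-related pieces, although pointwise comparable to the leading $\bm\Phi_{n,+}^{-1}\bm\Phi_{n,+}'$ term near $\zeta=0$, give contributions whose $x$-integral produces an $\Boh(n^{-\kappa})$-decay that is integrable in $u$ on $[\sad,+\infty)$. This relies on the pointwise decay of the weight $\ee^{\msf h_n}/(1+\ee^{\msf h_n})^2=\Boh(\ee^{-u})$ for large $u$ (uniform in $n,\tad$ by Definition~\ref{deff:admissibleh}--(iii)), and on the analyticity of $\bm F_n$ together with the identity $\bm F_n^{-1}\bm F_n'=\tfrac12\msf m'/\msf m\sp_3-\tfrac{1}{4}(\psi'/\psi)\bm U_0\sp_3\bm U_0^{-1}$ which has an explicit singular structure at the origin. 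A secondary technical point is handling the mismatch between integrating $x$ over $J_0$ versus $\zeta$ over $[n^{2/3}\zeta_0,n^{2/3}\zeta_1]$, and absorbing any boundary mismatches into $\msf R(u)$.
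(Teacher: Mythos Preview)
Your overall strategy is sound and is essentially the paper's approach: decompose the $x$-integral into the outer region, the Airy edge, the bulk, and the neighborhood $U_0$, then show only the $U_0$-contribution survives. Your direct rewriting $\bm P_+\bm\Lambda_n=\bm F_n\bm\Phi_{n,+}\bm\Delta_n\ee^{n\phi_+\sp_3}$ is in fact a slightly cleaner bookkeeping than what the paper does, because it absorbs the $n\phi'_+\sp_3$ term into the outer $\ee^{nV/2\sp_3}$ conjugation (which kills the $(2,1)$-entry) instead of tracking it through a separate integral $\msf I_2$ that must then be shown to cancel exactly (this is the content of Lemma~\ref{lem:estJ2} in the paper).

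That said, there are three genuine gaps. First, your displayed formula on $J_0$ is algebraically incomplete: expanding $(\bm R_+\bm F_n\bm\Phi_{n,+}\bm\Delta_n)^{-1}(\bm R_+\bm F_n\bm\Phi_{n,+}\bm\Delta_n)'$ produces four terms, and you have dropped $\bm\Delta_n^{-1}\bm\Delta_n'$; without it your ``surviving term'' $[\wh{\bm\Delta}_n^{-1}\wh{\bm\Phi}_{n,+}^{-1}(\wh{\bm\Phi}_{n,+}\wh{\bm\Delta}_n)']_{21}$ cannot arise. Second, the treatment of the $\bm R$-piece is too quick. Conjugation by $\bm F_n$ involves $n^{\sp_3/6}$, so a naive bound loses a factor $n^{1/3}$; the paper's Lemma for this term (the $\msf J_1$ estimate) exploits the exact structure of $\wh{\bm R}_1$ in \eqref{deff:whR1}, specifically that $(\wh{\bm R}_1')_{12}\equiv 0$, to show $\bm F_n^{-1}\bm R^{-1}\bm R'\bm F_n=\Boh(n^{-1/3})$ uniformly. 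You do not mention this cancellation, and without it the claimed $\Boh(n^{-1/3})$ after $u$-integration does not follow. Third, your description of the $\bm F_n^{-1}\bm F_n'$ term as producing an ``exactly analytic quantity over the $\zeta$-disk whose integral along $\partial U_0$'' is bounded does not make sense: there is no contour integral here. The actual argument (the paper's $\msf R_1(u)$) uses that $\bm F_n^{-1}\bm F_n'=\tfrac14(\msf m'/\msf m-\psi'/\psi)\sp_3$ is bounded near the origin, then splits $J_0$ into a core $|n^{2/3}\psi|\leq R$ where $\bm\Phi_{n,+}$ is bounded and a tail where the $\zeta\to\infty$ asymptotics of $\bm\Phi_n$ apply, and finally integrates exactly in $u$ to obtain the $\Boh(n^{-2/3})$ bound.

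A minor correction: on $J_{\mathrm{bulk}}$ the kernel is \emph{not} bounded (it contains a term of order $n$ from $n\phi_+'$); what is true is that after multiplication by the weight $\sigma_n/(1+\ee^{u+n^{2/3}Q})$, the strict positivity of $Q$ on $[-a+\delta,-\varepsilon_0]$ makes the product $\Boh(n\ee^{-u-\eta n^{2/3}})$.
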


In words, Proposition~\ref{prop:finalestKnQ} is saying that the major contribution to $\msf L_n^Q$ comes from the neighborhood $U_0$, so from the local parametrix $\wh{\bm\Phi}_n$. But according to the developments from the previous sections, this local parametrix is close to the id-PII parametrix $\Phicc$, and we are ready to conclude our last major result.

\begin{proof}[Proof of Theorem~\ref{thm:asymptoticsqLaplacetransf}]
We combine Propositions~\ref{prop:intreprLn} and \eqref{prop:finalestKnQ} to obtain
$$
\log\msf L_n^Q(\sad)=-\frac{1}{2\pi \ii}\int_{\sad}^\infty
\int_{n^{2/3}\zeta_0}^{n^{2/3}\zeta_1}\frac{\ee^{\msf h_n(u)}}{\left(1+\ee^{\msf h_n(u)}\right)^2}\left[ \wh{\bm \Delta}_n(x)^{-1}\wh{\bm \Phi}_{n,+}(x)^{-1}\left(\wh{\bm \Phi}_{n,+} \wh{\bm\Delta}_n\right)'(x)\right]_{21}\dd x\dd u+\Boh(n^{-1/3}),
$$
valid as $n\to\infty$ and uniformly for $\sad\geq -\sad_0$ and $\tad_0\leq \tad\leq 1/\tad_0$. Next, with the identifications \eqref{local_par_L_Phi}--\eqref{eq:expansionmsfhnorigin} in mind, we estimate the integral on the right-hand side with the help of Theorem~\ref{thm:integralPhitauPhicc} and conclude the proof.
\end{proof}

\section{Technical Lemmas}\label{sec:technicalestimates}

It remains to prove Proposition~\ref{prop:finalestKnQ}, analysis which we split into several technical lemmas in this section. 

Our starting point is the integral representation for $\msf L_n^Q$ from Proposition~\ref{prop:intreprLn} with the asymptotic information for $\msf K_n^Q$ provided by the RHP analysis. For $x\in\R$, set
\begin{equation}\label{deff:msfA}
\msf A(x)\deff \left[ \bm \Lambda_{n}(x)^{-1}\bm P_+(x)^{-1}\bm R_+(x)^{-1}\bm R_+'(x)\bm P_+(x)\bm\Lambda_n(x)+\bm \Lambda_{n}(x)^{-1}\bm P_+(x)^{-1}\bm P_+'(x)\bm \Lambda_{n}(x) \right]_{21}.
\end{equation}
Recalling \eqref{eq:KnYxx}, the unwrap of the transformations of the RHP yields the identity
$$
\msf K_n^Q(x,x\mid u)\frac{\omega_n(x\mid u)}{1+\ee^{u+n^{2/3}Q(x)}}=\frac{\sigma_n(x\mid u)\ee^{-2n\phi_+(x)}}{2\pi \ii (1+\ee^{u+n^{2/3}Q(x)})}\left(\msf A(x)+\left(\frac{\ee^{2n\phi_+(x)}}{\sigma_n(x)}\right)'\chi_{\mcal G^+}(x)\right)
$$
We now need to integrate each of the terms on the right-hand side above, first in $x\in \R$ and then in $u\in (\sad,+\infty)$. Each term will be analyzed individually, also depending on whether we integrate $x$ in the bulk, each of the edges or away from the support $[-a,0]$ of the equilibrium measure. To simplify notation, it is convenient to introduce the additional notation for each relevant integral, and for an arbitrary set $J\subset \R$ denote
\begin{equation}\label{deff:I1int}
\msf I_1(J)\deff \int_J \frac{\sigma_n(x\mid u)\ee^{-2n\phi_+(x)}}{1+\ee^{u+n^{2/3}Q(x)}}\msf A(x) \dd x
\end{equation}
and
\begin{equation}\label{deff:I2int}
\begin{aligned}
\msf I_2(J) \deff & \int_J \frac{\sigma_n(x\mid u)\ee^{-2n\phi_+(x)}}{1+\ee^{u+n^{2/3}Q(x)}}\left(\frac{\ee^{2n\phi_+(x)}}{\sigma_n(x)}\right)'\dd x \\
 =& \int_J \left(2n\phi_{n,+}(x)-\frac{n^{2/3}Q'(x)}{1+\ee^{u+n^{2/3}Q(x)}}\right)\frac{1}{1+\ee^{u+n^{2/3}Q(x)}}\dd x,
\end{aligned}
\end{equation}
which are functions of $u\in \R$ as well. With $\varepsilon_0,\varepsilon_1>0$ being determined by $(-\varepsilon_0,\varepsilon_1)=U_0\cap \R$, the split
\begin{multline}\label{eq:intKnIsplit}
2\pi \ii \int_{-\infty}^{\infty}\msf K_n^Q(x,x\mid u)\frac{\omega_n(x\mid u)}{1+\ee^{u+n^{2/3}Q(x)}}\dd x=
\msf I_1(-\infty,-a-\delta)+\msf I_1(-a-\delta,a+\delta)
\\ +\msf I_1(a+\delta,-\varepsilon_0)+\msf I_1(-\varepsilon_0,\varepsilon_1)+\msf I_1(\varepsilon_1,+\infty)
+\msf I_2(-a,-\varepsilon_0)+\msf I_2(-\varepsilon_0,0)
\end{multline}
is immediate, and with the next series of lemmas we estimate each of the terms on the right-hand side. 

\begin{lemma}\label{lem:I1inftyI1aI2a}
Fix $\sad\in \R$ and $\tad_0\in (0,1)$. There exists $\eta>0$ for which the estimate
$$
\msf I_1(-\infty,-a-\delta)+\msf I_1(-a+\delta,-\varepsilon_0)+\msf I_2(-a,-\varepsilon_0)=\Boh\left(\ee^{-u}\ee^{-n^{2/3}\eta}\right)
$$
holds true uniformly for $u\geq \sad$ and $\tad_0\leq \tad\leq 1/\tad_0$.
\end{lemma}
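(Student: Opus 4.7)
The plan is to bound each of the three integrands using the pointwise estimate
\begin{equation*}
\frac{1}{1+\ee^{u+n^{2/3}Q(x)}}\leq \ee^{-u-n^{2/3}Q(x)},
\end{equation*}
combined with the sign condition of Assumption~\ref{MainAssumption}--(ii). On any compact subset of $(-a,0)$ one has $Q\geq\eta>0$, providing the clean factor $\ee^{-u-n^{2/3}\eta}$; on $(-\infty,-a-\delta)$, where $Q$ need only be non-negative, I will instead combine the weaker bound $\ee^{-u}$ with the exponential decay $\ee^{-2n\phi_+(x)}\leq\ee^{-2nc}$ coming from $\re\phi_+(x)\geq c>0$ there (Proposition~\ref{prop:functionphitau}--(iv)). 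Uniformity of the constant $\eta$ in $\tad\in[\tad_0,1/\tad_0]$ follows because only $Q'(0)=-\tad$ varies while the remaining data of $Q$ are kept fixed.

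For $\msf I_1(-\infty,-a-\delta)$, one has $\bm\Lambda_n\equiv\bm I$ and $\bm P\equiv\bm G$, so $\msf A(x)$ only involves $\bm R,\bm R',\bm G,\bm G'$, which are uniformly bounded here by Theorem~\ref{thm:Rasymptops}, \eqref{eq:constructionGparam}, and Lemma~\ref{lem:estimate_q_q0}. Factoring $\ee^{-2n\phi_+(x)}=\ee^{-2(n-1)\phi_+(x)}\ee^{-2\phi_+(x)}$ and using that $\ee^{-2\phi_+(x)}$ is integrable on $(-\infty,-a-\delta)$ (Proposition~\ref{prop:functionphitau}--(v)) while $\ee^{-2(n-1)\phi_+(x)}\leq\ee^{-2(n-1)c}$, the integral is bounded by $C\ee^{-u}\ee^{-2(n-1)c}$, which is $\Boh(\ee^{-u}\ee^{-n^{2/3}\eta})$ for any $\eta>0$ and $n$ large. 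For $\msf I_1(-a+\delta,-\varepsilon_0)$ the range is compact inside $(-a,0)$, so $\phi_+$ is purely imaginary, $1/\sigma_n=1+\ee^{-u-n^{2/3}Q(x)}\leq 2$ for large $n$ (hence $\bm\Lambda_n^{\pm 1}$ are uniformly bounded), and the remaining factors in $\msf A$ are bounded. The integrand is therefore pointwise $\Boh(\ee^{-u-n^{2/3}\eta})$ and the integral inherits the same bound.

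For $\msf I_2(-a,-\varepsilon_0)$ I will use the second expression in \eqref{deff:I2int}, which circumvents the parametrices and $\bm\Lambda_n$ altogether. Proposition~\ref{prop:functionphitau} together with \eqref{eq:phi_local_beh} gives $\phi'_+(x)=\Boh((x+a)^{1/2})$ near $-a$ and bounded elsewhere on $[-a,-\varepsilon_0]$, so $2n\phi'_+=\Boh(n)$ uniformly; likewise $n^{2/3}Q'(x)=\Boh(n^{2/3})$. Multiplying by $1/(1+\ee^{u+n^{2/3}Q(x)})\leq\ee^{-u-n^{2/3}\eta}$ and integrating yields $\Boh(n\ee^{-u-n^{2/3}\eta})$, which is absorbed into $\Boh(\ee^{-u}\ee^{-n^{2/3}\eta/2})$ after slightly shrinking $\eta$. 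No serious obstacle is expected: the estimate is essentially driven by the sign of $Q$ and the exponential suppression $\ee^{-n^{2/3}\eta}$ dominates all polynomial-in-$n$ growth arising from $\bm R',\bm G',\bm P^{(a)}{}'$, from $n\phi'_+$, and from $1/\sigma_n$ inside $\bm\Lambda_n$, but care is needed to track this polynomial growth uniformly in $(\sad,\tad)$, which follows routinely from Theorem~\ref{thm:Rasymptops} and the uniformity built into the preceding sections.
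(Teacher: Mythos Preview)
Your proposal is correct and follows essentially the same approach as the paper: bound $|\msf A|$ uniformly on each piece using that $\bm\Lambda_n, \bm P=\bm G, \bm R, \bm R'$ are bounded there, then extract the exponential decay either from $\re\phi_+>0$ on $(-\infty,-a-\delta)$ or from $Q\geq\eta>0$ on the compact bulk interval, and for $\msf I_2$ use the explicit second expression in \eqref{deff:I2int}. One minor misattribution: the square-root vanishing of $\phi'_+$ at $x=-a$ does not come from \eqref{eq:phi_local_beh} (which is about $z=0$) but from the spectral-curve identity \eqref{eq:spectralcurvehV}, which gives $\phi'_+(x)=\tfrac12\bigl(x(x+a)\bigr)^{1/2}_+h_V(x)$; the conclusion that $\phi'_+$ is bounded on $[-a,-\varepsilon_0]$ is of course unaffected.
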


\begin{proof}
On the intervals $(-\infty,-a-\delta)$ and $(\varepsilon_1,\infty)$ the function $\bm \Lambda_{n}$ is identically the identity matrix. On the interval $(a+\delta,-\varepsilon_0)$ the nontrivial entry of $\bm \Lambda_{n}$ is $\ee^{2n\phi_+}/\sigma_n$ and, because $Q<0$ and $\phi_{+}$ is purely imaginary in this interval, this quotient is bounded. Also, away from the endpoints $-a$ and $0$ we have $\bm P\equiv \bm G$. Both $\bm P$ and $\bm G$, and their $x$-derivatives, decay as $x\to \pm \infty$ and remain bounded as $n\to\infty$, all uniformly in $u$ and $\tad$ as claimed. All of these facts combined together, we obtain that for some constants $K>0$ and $n_0\geq 1$
\begin{equation}\label{eq:boundmsfA}
|\msf A(x)|\leq K,\quad \text{for all}\; x\in \R\setminus \left((-a-\delta,-a+\delta)\cup (-\varepsilon_0,\varepsilon_1)\right),\; n\geq n_0,\; \tad_0\leq \tad\leq 1/\tad_0, \; u\geq \sad.
\end{equation}
Next, we now use that $Q\geq 0$ on the interval $(-\infty,0)$ to bound
\begin{equation}\label{eq:I1estimate1}
0\leq \frac{\sigma_n(x\mid u)}{1+\ee^{u+n^{2/3}Q(x)}}\leq \frac{1}{1+\ee^{u+n^{2/3}Q(x)}}\leq \ee^{-u},\quad x\leq 0, 
\end{equation}
which is valid for any $n\geq 1, u\in \R,\tad >0$. Thus, combining everything we obtain
$$
|\msf I_1(-\infty,-a-\delta)|\leq K\ee^{-u}\int_{-\infty}^{-a-\delta}\ee^{-2n\phi_+(x)}\dd x,
$$
and using Proposition~\ref{prop:functionphitau}--(iv),(v), this proves the bound for $\msf I_1(-\infty,-a-\delta)$.

For the second integral, we term $\phi_+$ is oscillatory, so to obtain the exponential decay in the $x$-integral we have to argue differently and as follows. The function
$$
v\mapsto \frac{1}{1+\ee^{-v}}\frac{1}{1+\ee^v}=\frac{\ee^{-v}}{(1+\ee^{-v})^2}
$$
is strictly increasing on $(-\infty,0)$ and strictly decreasing on $(0,+\infty)$. Because $Q>0$ on $(-\infty,0)$ and $Q(0)=0$, by reducing $U_0$ if necessary we can assume without loss of generality that $Q(x)\geq Q(-\varepsilon_0)$ for every $x\in [-a,-\varepsilon_0]$. This way, $v=u+n^{2/3}Q(x)\geq v_0\deff u+n^{2/3}Q(-\varepsilon_0)$ and, because $u$ is assumed to be bounded from below, we can make sure that $v_0>0$ for every $u$. Therefore
$$
0\leq \frac{\sigma_n(x\mid u)}{1+\ee^{u+n^{2/3}Q(x)}}\leq \frac{\ee^{-v_0}}{(1+\ee^{-v_0})^2}\leq \ee^{-u-n^{2/3}Q(-\varepsilon_0)},
$$
and upon integration and using Proposition~\ref{prop:functionphitau}--(ii), we obtain
$$
|\msf I_2(-a+\delta,-\varepsilon_0)|\leq (a-\delta+\varepsilon_0)K\ee^{-u-n^{2/3}Q(-\varepsilon_0)}.
$$
for every $n\geq n_0$ and $\tad,u$ as claimed. 

For the estimate for $\msf I_2(-a,-\varepsilon_0)$ we use again the last inequality in \eqref{eq:I1estimate1} and also that both $\phi'_{+}$ and $Q'$ are continuous, and hence bounded, on $(-a,-\varepsilon_0)$. This concludes the proof.
\end{proof}

For the integral over $(\varepsilon_1,+\infty)$, it is easier to actually estimate its $u$-integral directly.

\begin{lemma}\label{lem:estI1epsilon1infty}
Fix $\sad_0>0$ and $\tad_0\in (0,1)$. There exists $\eta>0$ for which the estimate
$$
\int_{\sad}^\infty\msf I_1(\varepsilon_1,+\infty)\dd u=\Boh\left(\ee^{-n\eta}\right)
$$
holds true uniformly for $\sad\geq -\sad_0$ and $\tad_0\leq \tad\leq 1/\tad_0$.
\end{lemma}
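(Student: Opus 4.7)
The plan is to exchange the order of integration and evaluate the $u$-integral in closed form, using the fact that the exponential suppression $\ee^{-2n\phi_+(x)}$ dominates everything on $(\varepsilon_1,+\infty)$.

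First I would record a uniform pointwise bound on $\msf A$: the same argument given in the proof of Lemma~\ref{lem:I1inftyI1aI2a} shows via \eqref{eq:boundmsfA} that there exists $K>0$ such that $|\msf A(x)|\leq K$ on $(\varepsilon_1,+\infty)$, uniformly in $n$ (large), $u\geq \sad$, and $\tad_0\leq \tad\leq 1/\tad_0$. Indeed, on this interval $\bm\Lambda_n\equiv \bm I$, $\bm P\equiv \bm G$, and $\bm R,\bm G$ together with their derivatives are uniformly bounded by Theorem~\ref{thm:Rasymptops} and the explicit form of the global parametrix.

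The second ingredient is the elementary identity
\begin{equation*}
\frac{\sigma_n(x\mid u)}{1+\ee^{u+n^{2/3}Q(x)}} = \frac{\ee^{u+n^{2/3}Q(x)}}{(1+\ee^{u+n^{2/3}Q(x)})^2} = -\partial_u\!\left(\frac{1}{1+\ee^{u+n^{2/3}Q(x)}}\right).
\end{equation*}
Since $\phi_+$ is real-valued on $(0,+\infty)$ by Proposition~\ref{prop:functionphitau}, the factor $\ee^{-2n\phi_+(x)}$ is a positive real number, so the integrand in \eqref{deff:I1int} is, in absolute value, majorized by a nonnegative expression. I would then apply Tonelli to swap the $u$- and $x$-integrations and compute the inner $u$-integral explicitly, obtaining
\begin{equation*}
\int_\sad^\infty |\msf I_1(\varepsilon_1,+\infty)|\,\dd u \leq K\int_{\varepsilon_1}^\infty \ee^{-2n\phi_+(x)}\frac{\dd x}{1+\ee^{\sad+n^{2/3}Q(x)}} \leq K\int_{\varepsilon_1}^\infty \ee^{-2n\phi_+(x)}\,\dd x.
\end{equation*}

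The final step is to show that this remaining $x$-integral is $\Boh(\ee^{-n\eta})$ for some $\eta>0$ independent of $\sad$ and $\tad$. Proposition~\ref{prop:functionphitau}--(iv) gives $\phi_+>0$ on $(\varepsilon_1,+\infty)$, while part (v), combined with the fact that $V$ is a polynomial of even degree with positive leading coefficient, ensures that $\phi_+(x)$ grows at least linearly as $x\to+\infty$. Consequently there exist $\eta_0>0$ and $R>\varepsilon_1$ (depending only on $V$) such that $\phi_+(x)\geq \eta_0$ on $[\varepsilon_1,R]$ and $\phi_+(x)\geq \eta_0 x$ for $x\geq R$, from which a direct split of the integral yields the required exponential decay. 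I do not foresee any significant obstacle; the whole argument amounts to a Fubini manipulation combined with the global properties of $\phi$ already supplied by Proposition~\ref{prop:functionphitau}, and the uniformity is automatic since neither $K$ nor the bounds on $\phi_+$ depend on $\sad,\tad$.
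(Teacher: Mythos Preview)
Your proposal is correct and follows essentially the same route as the paper: bound $|\msf A|$ using \eqref{eq:boundmsfA}, apply Tonelli to the nonnegative majorant, evaluate the $u$-integral explicitly to obtain $\int_{\varepsilon_1}^\infty \ee^{-2n\phi_+(x)}(1+\ee^{\sad+n^{2/3}Q(x)})^{-1}\dd x$, drop the bounded factor $(1+\ee^{\sad+n^{2/3}Q(x)})^{-1}\leq 1$, and conclude via Proposition~\ref{prop:functionphitau}--(iv),(v). The only cosmetic difference is that you phrase the inner integration as the antiderivative identity $-\partial_u(1+\ee^{u+n^{2/3}Q(x)})^{-1}$, while the paper simply says it ``integrated exactly''.
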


\begin{proof}
With the bound \eqref{eq:boundmsfA} we see that it is enough to estimate the integral
$$
\int_{\sad}^\infty \int_{\varepsilon_1}^\infty \frac{\sigma_n(x\mid u)\ee^{-2n\phi(x)}}{1+\ee^{u+n^{2/3}Q(x)}}\dd x\dd u=\int_{\varepsilon_1}^\infty\frac{\ee^{-2n\phi_+(x)}}{1+\ee^{\sad+n^{2/3}Q(x)}}\dd x,
$$
where for the equality we used Tonelli's Theorem to interchange the order of integration, and then integrated exactly. The term $(1+\ee^{\sad+n^{2/3}Q(x)})^{-1}$ is bounded by $1$, and using Proposition~\ref{prop:functionphitau}--(iv),(v) we see that the integral of $\ee^{-2n\phi}$ is $\Boh(\ee^{-\eta n})$ for some $\eta>0$ independent of $\sad$ and $\tad$, which concludes the proof.
\end{proof}

Next, we analyze the contribution coming from a neighborhood of the endpoint $z=a$.

\begin{lemma}\label{lem:I1adelta}
Fix $\sad\in \R$, $\tad_0\in (0,1)$. There exists $\eta>0$ such that the estimate
$$
\msf I_1(-a-\delta,-a+\delta) =\Boh(\ee^{-u-\eta n^{2/3}}),\quad n\to\infty,
$$
is valid uniformly for $u\geq \sad$ and $\tad_0\leq \tad\leq 1/\tad_0$.
\end{lemma}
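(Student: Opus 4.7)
The strategy is to exploit the fact that $Q$ has a strict positive lower bound on a neighborhood of $-a$, which forces the weight factor $(1+\ee^{u+n^{2/3}Q(x)})^{-1}$ to be exponentially small in $n^{2/3}$, while the remaining factors in the integrand $\msf I_1(-a-\delta,-a+\delta)$ can only grow polynomially in $n$. Concretely, by Assumption~\ref{MainAssumption}--(ii) and continuity, there exists $\eta'>0$ (depending on $\delta$) such that $Q(x)\geq 2\eta'$ for all $x\in (-a-\delta,-a+\delta)$, uniformly for $\tad\in[\tad_0,1/\tad_0]$. This immediately yields the pointwise bounds
\begin{equation*}
\sigma_n(x\mid u)\leq 1,\qquad \frac{1}{1+\ee^{u+n^{2/3}Q(x)}}\leq \ee^{-u-2\eta' n^{2/3}},
\end{equation*}
valid on the whole interval for $n$ sufficiently large and every $u\in \R$, $\tad\in[\tad_0,1/\tad_0]$.

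Next I would show that $\msf A(x)$ defined in \eqref{deff:msfA} has at most polynomial growth in $n$ on $(-a-\delta,-a+\delta)$. For this, first note that on this interval $|\ee^{-2n\phi_+(x)}|$ is bounded (equal to $1$ on $(-a,0)$ by Proposition~\ref{prop:functionphitau}--(ii), and bounded by $1$ on $(-\infty,-a)$ by Proposition~\ref{prop:functionphitau}--(iv)), and the only nontrivial entry of $\bm\Lambda_n$ is $\ee^{2n\phi_+}/\sigma_n$, which remains uniformly bounded because $\sigma_n\to 1$ exponentially fast there by \eqref{eq:EstLogSigman} (after shrinking $\delta$ if needed so that the analytic continuation of $\sigma_n$ is well defined). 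By Theorem~\ref{thm:Rasymptops}, $\bm R$ and $\bm R'$ are uniformly bounded in $n$. It remains to control $\bm P^{(a)}$ and its derivative: using the explicit form \eqref{deff:LocalParRegEdge}, the prefactor $\bm F^{(a)}$ is analytic and uniformly bounded, the Airy model $\bm\Psi_{\ai}(n^{2/3}\varphi(x))$ evaluated at points with $n^{2/3}|\varphi(x)|=\Boh(n^{2/3})$ grows at most like $\Boh(n^{1/6})$ (and similarly for its derivative, with an extra factor $\Boh(n^{2/3})$ from the chain rule), while the remaining conjugation factor $\ee^{-\frac12\log\sigma_n\sp_3}\ee^{n\phi\sp_3}$ combined with the oscillatory exponential $\ee^{-\frac23\zeta^{3/2}\sp_3}$ inside the asymptotic form of $\bm\Psi_{\ai}$ produces a bounded quantity. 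Altogether $|\msf A(x)|\leq C n^{M}$ on $(-a-\delta,-a+\delta)$ for some fixed constants $C,M>0$ independent of $u,\tad,n$.

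Combining the two estimates, I would conclude
\begin{equation*}
|\msf I_1(-a-\delta,-a+\delta)|\leq C n^{M} \ee^{-u-2\eta' n^{2/3}}\cdot 2\delta=\Boh\bigl(\ee^{-u-\eta n^{2/3}}\bigr)
\end{equation*}
for any $\eta\in(0,2\eta')$, absorbing the polynomial factor into the exponential for $n$ sufficiently large, uniformly in $u\geq \sad$ and $\tad\in[\tad_0,1/\tad_0]$. The main obstacle in this plan is the polynomial-growth bound on $\msf A(x)$; while each factor is well understood individually (from the explicit construction of the Airy parametrix and Theorem~\ref{thm:Rasymptops}), some care is needed to track the $n$-dependence through the derivatives of $\bm P^{(a)}$ and to ensure that the bound is uniform in both $u$ and $\tad$, which however is immediate from the fact that none of the factors entering $\msf A$ depend on $u$, and the dependence on $\tad$ enters only through $\sigma_n$, which is exponentially close to the identity on the interval under consideration.
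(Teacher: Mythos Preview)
Your proposal is correct and follows essentially the same strategy as the paper: use the strict positivity of $Q$ near $-a$ to extract the factor $\ee^{-u-\eta n^{2/3}}$ from the weight, and verify that the remaining integrand $\msf A(x)$ grows at most polynomially in $n$. The paper carries this out by explicitly splitting the interval into $A_n=\{|n^{2/3}\varphi(x)|\leq R\}$ and its complement $B_n$, treating $A_n$ via boundedness of $\bm\Psi_{\ai}$ on compacts and $B_n$ via the asymptotic expansion \eqref{eq:AsympAiryPar}; you instead argue globally. One imprecision worth noting: your sentence ``$\bm\Psi_{\ai}(n^{2/3}\varphi(x))$ \ldots\ grows at most like $\Boh(n^{1/6})$'' is not literally true, since the second column of $\bm\Psi_{\ai}(\zeta)$ carries the exponential factor $\ee^{\frac23\zeta^{3/2}}$; what is $\Boh(n^{1/6})$ is the product $\bm\Psi_{\ai}(n^{2/3}\varphi(x))\ee^{n\phi(x)\sp_3}$, after the cancellation you correctly identify in the next clause. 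Making this precise essentially forces the $A_n/B_n$ split, since the asymptotic form \eqref{eq:AsympAiryPar} is only valid for $|\zeta|\geq R$.
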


\begin{proof}
On the neighborhood $D_\delta(-a)$ the function $\bm P=\bm P^{(a)}$ is the Airy local parametrix \eqref{deff:LocalParRegEdge}, which involves the function $\bm\Phi_{\ai}$ evaluated at the argument $\zeta=n^{2/3}\varphi(z)$. In the $\zeta$-plane, we fix $R>0$ for which the asymptotic expansion \eqref{eq:AsympAiryPar} is valid for $|\zeta|>R$, and split the analysis into two cases, namely for $|n^{2/3}\varphi(z)|\leq R$ and $|n^{2/3}\varphi(z)|\geq R$, and for
$$
A_n\deff \{x\in (-a-\delta,-a+\delta)\mid |n^{2/3}\varphi(x)|\leq R \},\quad B_n\deff (-a-\delta,-a+\delta)\setminus A_n
$$
we write
$$
\msf I_1(-a-\delta,-a+\delta)=\msf I_1(A_n)+\msf I_1(B_n).
$$
and now analyze each integral on the right-hand side separately.

For $|n^{2/3}\varphi(z)|\leq R$, the terms $\bm \Phi_{\ai,+}(\zeta=n^{2/3}\varphi(z))$ and $\bm \Phi'_{\ai,+}(\zeta=n^{2/3}\varphi(z))$ consist of continuous functions evaluated inside the compact interval $[-R,R]$, and therefore they are bounded. By the same reason, the expression \eqref{eq:FFactorRegPar} shows that $\bm F$ is bounded for $|n^{2/3}\varphi(z)|\leq R$. On the other hand, without further analysis we obtain that $\bm F'$ may grow at most as $\Boh(n^{1/6})$. Finally, combining with \eqref{eq:EstLogSigman} and the fact that the determinant of $\bm P$ is identically $1$, we conclude that for $|n^{2/3}\varphi(x)|\leq R$,
\begin{equation}\label{eq:EstShortRangePplus1}
\bm P_+(x)=\Boh(1)\ee^{n\phi_+(x)\sp_3},\quad \bm P_+(x)^{-1}=\ee^{-n\phi_+(x)\sp_3}\Boh(1)\quad \text{and}\quad \bm P_+'(x)=\Boh(n)\ee^{-n\phi_+(x)\sp_3} 
\end{equation}
which is valid as $n\to\infty$ and uniformly for $x\in (-a-\delta,-a+\delta)$, also uniformly for $u\geq \sad$ and $\tad_0\leq \tad\leq 1/\tad_0$.

We use \eqref{eq:EstShortRangePplus1} back into \eqref{deff:msfA}, and combined with the fact that $|\sigma_n^{-1}|\leq 2$ for $x\leq (-\infty,0)$ the result is that
$$
\msf A(x)=\left[ \bm\Lambda_{n}(x)^{-1}\ee^{-n\phi_+(x)\sp_3}\Boh(n)\ee^{n\phi_+(x)\sp_3} \bm\Lambda_{n}(x)\right]_{21}=\ee^{2n\phi_+(x)}\Boh(n),\quad n\to\infty,
$$
where the last error term is a scalar error, valid uniformly in $x,u,\tad$ as before. Integrating, we obtain that for some absolute constant $K>0$,
\begin{equation}\label{eq:estimateI1An}
|\msf I_1(A_n)|\leq Kn \int_{A_n}  \frac{\sigma_n(x\mid u)}{1+\ee^{u+n^{2/3}Q(x)}}\dd x\leq Kn \int_{A_n}\ee^{-u-n^{2/3}Q(x)}\dd x=\Boh(\ee^{-u-\eta n^{2/3}}),\quad n\to \infty,
\end{equation}
where for the second inequality we used that $0<\sigma_n\leq 1$ and for the last estimate we used that $Q$ is strictly positive on $(-a-\delta,-a+\delta)\supset A_n$.

Next, we consider the case $|n^{2/3}\varphi(z)|\geq R$. In such situation, the asymptotics \eqref{eq:AsympAiryPar} take place, and when combined with \eqref{eq:FFactorRegPar} they yield
\begin{equation}\label{eq:EstLongRangePplus1}
\bm P_+(x)=\bm G_+(x)\ee^{-\sp_3\log\sigma_n(x)/2}\left(\bm I+\Boh(n^{-1})\right)\ee^{\sp_3\log\sigma_n(x)/2}=\bm G_+(x)\left(\bm I+\Boh(n^{-1})\right).
\end{equation}
where for the last equality we used \eqref{eq:EstLogSigman}.

The matrix $\bm P_+$ is bounded as $x\to 0$, whereas $\bm G$ is not, but the cancellation that leads to this boundedness of $\bm P$ is not captured by this asymptotics. Nevertheless, as we need some uniform control over $x$, we now account for the behavior of $x\to 0$ in a rough manner as follows. We are assuming that $|n^{2/3}\varphi(x)|\geq R$, and because $\varphi$ is conformal with $\varphi(0)=0$ this means that $|x+a|\geq c/n^{2/3}$, for some fixed $c>0$. The function $\bm G$ has a fourth-root singularity at $x=0$, and therefore we arrive at the rough estimate
\begin{equation}\label{eq:EstLongRangePplus2}
\bm P_+(x)=\Boh(n^{1/6}),\quad n\to\infty.
\end{equation}
The estimate \eqref{eq:EstLongRangePplus1} can be differentiated term by term. With arguments similar to the ones we just applied, we arrive at the rough estimate
\begin{equation}\label{eq:EstLongRangePplus3}
\bm P'_+(x)=\Boh(n^{7/6}),\quad n\to\infty.
\end{equation}
These latter two estimates are valid uniformly for $u\geq \sad$ and $\tad_0\leq \tad\leq 1/\tad_0$ as $n\to\infty$.

Finally, on the interval $(-a,0)$ the factor $\phi_+$ is purely imaginary, implying that $\bm\Lambda_{n}^{\pm 1}$ remains bounded therein. All combined, we obtained that
$$
\msf A(x)=\Boh(n),\quad n\to\infty,
$$
uniformly for $x\in B_n$ and $u,\tad$ as claimed. Proceeding as in \eqref{eq:estimateI1An} we obtain a bound for $I_1(B_n)$ and complete the proof.
\end{proof}

It remains to analyze the two integrals $\msf I_1(-\varepsilon_0,\varepsilon_1)$ and $\msf I_2(-\varepsilon_0,0)$. The hardest analysis is the integral $\msf I_1$ which, as will turn out, contains both the leading contribution, a term to cancel $\msf I_2$ and asymptotically negligible terms. For ease of presentation, we explore the expression for $\msf A$ in \eqref{deff:msfA} as a sum and split
\begin{equation}\label{eq:splitI1J1J2}
\msf I_1(-\varepsilon_0,\varepsilon_1)=\msf J_1(-\varepsilon_0,\varepsilon_1)+\msf J_2(-\varepsilon_0,\varepsilon_1),
\end{equation}
where, for any measurable set $J\subset \R$,
\begin{equation}\label{deff:J1J2}
\begin{aligned}
\msf J_1(J) & \deff\int_J
\left[ \bm \Lambda_{n}(x)^{-1}\bm P_+(x)^{-1}\bm R(x)^{-1}\bm R'(x)\bm P_+(x)\bm \Lambda_{n}(x)\right]_{21}\frac{\sigma_n(x\mid u)\ee^{-2n\phi_+(x)}}{1+\ee^{u+n^{2/3}Q(x)}} \dd x,\\
\msf J_2(J) & \deff\int_J
\left[ \bm \Lambda_{n}(x)^{-1}\bm P_+(x)^{-1}\bm P_+'(x)\bm \Lambda_{n}(x) \right]_{21}\frac{\sigma_n(x\mid u)\ee^{-2n\phi_+(x)}}{1+\ee^{u+n^{2/3}Q(x)}}\dd x,
\end{aligned}
\end{equation}
and analyze each of these terms separately. For the estimation of $\msf J_1$, it is also easier to perform the $u$-integral, just as we did in Lemma~\ref{lem:estI1epsilon1infty}.

\begin{lemma}
Fix $\sad_0>0$ and $\tad_0\in (0,1)$. The estimate
\begin{equation}
\int_{\sad}^\infty\msf J_1(-\varepsilon_0,\varepsilon_1)\dd u=\Boh(n^{-1/3})
\end{equation}
holds true uniformly for $\sad\geq -\sad_0$ and $\tad_0\leq \tad\leq 1/\tad_0$.
\end{lemma}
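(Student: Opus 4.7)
The proof hinges on extracting the factor $n^{-1/3}$ present in $\bm R(x)^{-1}\bm R'(x)$ by Theorem~\ref{thm:Rasymptops}, and showing that the surrounding matrix factors are uniformly bounded on $(-\varepsilon_0,\varepsilon_1)\subset U_0\cap\R$. Boundedness of $\bm P_+$ at and near $x=0$ is the content of property $\bm P^{(0)}$-4 together with the continuity of the factored expression $\bm P_+=\bm F_n\wh{\bm\Phi}_{n,+}\ee^{n\phi_+\sp_3}$ away from the origin. Since $\det\bm P_+=\det\bm F_n\cdot\det\bm\Phi_n\cdot\det\ee^{n\phi_+\sp_3}=1$, the standard $2\times 2$ inversion formula yields $|\bm P_+^{-1}|=|\bm P_+|$ in the max-norm, so $\bm P_+^{-1}$ is also uniformly bounded. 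On $(-\varepsilon_0,0)$, $\bm\Lambda_n=\bm I+\sigma_n^{-1}\ee^{2n\phi_+}\bm E_{21}$ is bounded because $\phi_+$ is purely imaginary (so $|\ee^{2n\phi_+}|=1$) and $\sigma_n^{-1}=1+\ee^{-u-n^{2/3}Q(x)}\leq 1+\ee^{\sad_0}$ uniformly, thanks to $Q>0$ on $(-\varepsilon_0,0)$ and $u\geq-\sad_0$; on $(0,\varepsilon_1)$, $\bm\Lambda_n\equiv\bm I$ trivially.

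Combining these bounds with $|\bm R^{-1}\bm R'|=\Boh(n^{-1/3})$ from Theorem~\ref{thm:Rasymptops} yields
\begin{equation*}
\left|\left[\bm\Lambda_n(x)^{-1}\bm P_+(x)^{-1}\bm R(x)^{-1}\bm R'(x)\bm P_+(x)\bm\Lambda_n(x)\right]_{21}\right|\leq Cn^{-1/3}
\end{equation*}
uniformly for $x\in(-\varepsilon_0,\varepsilon_1)$, $u\geq-\sad_0$, and $\tad_0\leq\tad\leq 1/\tad_0$. Since $|\ee^{-2n\phi_+(x)}|\leq 1$ throughout---being unitary on $(-\varepsilon_0,0)$ and a decaying real exponential on $(0,\varepsilon_1)$---the integrand of $\msf J_1(-\varepsilon_0,\varepsilon_1)$ is bounded pointwise by $Cn^{-1/3}\sigma_n(x\mid u)(1-\sigma_n(x\mid u))$, where I use the identity $\sigma_n/(1+\ee^{u+n^{2/3}Q(x)})=\sigma_n(1-\sigma_n)$.

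An application of Tonelli's theorem then gives
\begin{equation*}
\left|\int_\sad^\infty\msf J_1(-\varepsilon_0,\varepsilon_1)\dd u\right|\leq Cn^{-1/3}\int_{-\varepsilon_0}^{\varepsilon_1}\int_\sad^\infty\sigma_n(x\mid u)(1-\sigma_n(x\mid u))\dd u\,\dd x,
\end{equation*}
and the elementary calculation
\begin{equation*}
\int_\sad^\infty\sigma_n(x\mid u)(1-\sigma_n(x\mid u))\dd u=\frac{\ee^{-\sad-n^{2/3}Q(x)}}{1+\ee^{-\sad-n^{2/3}Q(x)}}\leq 1,
\end{equation*}
together with the finite length of the $x$-interval, yields the claimed bound $\Boh(n^{-1/3})$, uniformly for $\sad\geq-\sad_0$ and $\tad_0\leq\tad\leq 1/\tad_0$.

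The main technical point is securing the uniform boundedness of $\bm P_+$ at $x=0$: the individual building blocks $\bm F_n$ and $\wh{\bm\Phi}_{n,+}$ of the local parametrix possess canceling fourth-root-type singularities at the origin, and the matching condition $\bm P^{(0)}$-4 of the parametrix construction is exactly what enforces this cancellation. Once this is in place, the remainder of the argument reduces to exploiting the smallness of $\bm R^{-1}\bm R'$ and an elementary integration in $u$.
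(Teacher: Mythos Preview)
Your argument has a genuine gap at the claim that $\bm P_+$ is uniformly bounded in $n$ on $(-\varepsilon_0,\varepsilon_1)$. Property $\bm P^{(0)}$-4 asserts boundedness as $z\to 0$ for each \emph{fixed} $n$; it says nothing about uniformity in $n$, and in fact such uniformity fails. Decomposing $\bm P_+=\bm F_n\bm\Phi_{n,+}\ee^{n\phi_+\sp_3}$, the prefactor $\bm F_n(z)=\bm U_0\msf m(z)^{\sp_3/4}(n^{2/3}\psi(z))^{-\sp_3/4}$ carries the diagonal factor $n^{-\sp_3/6}$, whose entries are $n^{\pm 1/6}$. On the set $C_n=\{|n^{2/3}\psi(x)|\le R\}$ the remaining factors $\bm\Phi_{n,+}$ and $\ee^{n\phi_+\sp_3}$ are $\Boh(1)$, so $|\bm P_+(x)|,\,|\bm P_+(x)^{-1}|=\Boh(n^{1/6})$ there. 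On $D_n$ the asymptotics \eqref{eq:ModelRHPAsymp} give $\bm P_+\approx\bm M$, and $\bm M$ has a fourth-root blow-up at $0$, again reaching size $n^{1/6}$ at the inner edge of $D_n$. Hence your conjugation estimate yields only
\[
\bm P_+^{-1}\bm R^{-1}\bm R'\bm P_+=\Boh(n^{1/6})\cdot\Boh(n^{-1/3})\cdot\Boh(n^{1/6})=\Boh(1),
\]
and after the double integration you get $\Boh(1)$, not $\Boh(n^{-1/3})$.

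The paper recovers the missing factor by exploiting the \emph{structure} of $\bm R'$, not merely its size. From \eqref{eq:ExpRwhR1} one has $\bm R'=n^{-1/3}\bm U_0\wh{\bm R}_1'\bm U_0^{-1}+\Boh(n^{-2/3})$, and conjugating by $\bm F_n$ (on $C_n$) or by $\bm P_+\approx\bm M$ (on $D_n$) amounts to conjugating $\wh{\bm R}_1'$ by a diagonal matrix whose $(1,1)/(2,2)$ ratio is of order $n^{1/3}$. The only entry that can blow up under this conjugation is the $(1,2)$ entry of $\wh{\bm R}_1'$. A direct inspection of \eqref{deff:whR1} shows that $(\wh{\bm R}_1(z))_{12}$ is constant in $z$, so $(\wh{\bm R}_1')_{12}=0$, and the conjugated expression remains $\Boh(n^{-1/3})$. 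This cancellation is essential and is what your argument is missing.
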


\begin{proof}
The idea is similar to the proof of Lemma~\ref{lem:I1adelta}. We fix a number $R>0$ for which the asymptotic expansion \eqref{eq:ModelRHPAsymp} for $\bm\Phi=\wh{\bm \Phi}_n$ is valid for $|\zeta|\geq R$, uniformly in $\tad,\sad$ as claimed (see Remark~\ref{remark:UnifAsympPhitauad}). Then, introduce
\begin{equation}\label{deff:setsCnDn}
C_n\deff \{ x\in (-\varepsilon_0,\varepsilon_1)\mid |n^{2/3}\psi(x)|\leq R \},\quad D_n\deff (-\varepsilon_0,\varepsilon_1)\setminus C_n.
\end{equation}
We now find bounds for the integrands, with separate arguments for each of $C_n$ and $D_n$. 

Recalling \eqref{deff:ParamOrigin}, on the interval $(-\varepsilon_0,\varepsilon_1)$ the parametrix is $\bm P=\bm P^{(0)}=\bm F_n\bm\Phi_n\ee^{n\phi\sp_3}$, with $\bm F_n$ and $\bm \Phi_n$ as in \eqref{local_par_L_Phi}. Using the definition of $\bm F_n$ in \eqref{local_par_L_Phi} and Theorem~\ref{thm:Rasymptops}, we express 
\begin{equation}\label{eq:EstRRprimeF}
\bm F_n^{-1}\bm R^{-1}\bm R'\bm F_n=\bm F_n^{-1}(\bm R^{-1}-\bm I)\bm R'\bm F_n-\bm F_n^{-1}\bm R'\bm F_n=-\bm F_n^{-1}\bm R'\bm F_n+\Boh(n^{-1/3}),\quad n\to\infty,
\end{equation}
valid uniformly when evaluated at $x\in C_n$ and also uniformly for $\sad,\tad$ as claimed. Using again the definition of $\bm F_n$, the fact that $\psi/\msf m$ remains bounded near $z=0$ and \eqref{eq:ExpRwhR1}, 
$$
\bm F_n^{-1}\bm R'\bm F_n=\frac{1}{n^{1/3}}\psi^{\sp_3/4}\msf m^{-\sp_3/4}n^{\sp_3/6}\wh{\bm R}_1' n^{-\sp_3/4}\msf m^{\sp_3/4}\psi^{-\sp_3/4}+\Boh(n^{-1/3}).
$$
A careful inspection on \eqref{deff:whR1} shows that $(\wh{\bm R}_1(z))_{12}$ is independent of $z$, so $R'$ has zero $(1,2)$-entry. Therefore we conclude that
$$
\bm F_n^{-1}\bm R^{-1}\bm R'\bm F_n=\Boh(n^{-1/3}),
$$
stressing that this is valid for $x\in C_n$. Also, from \eqref{eq:asympPhinPhi0} we learn that $\bm\Phi_n(x)=\Boh(1)$ uniformly on $C_n$. Everything we have so far combined yields that 
$$
\left[\bm \Lambda_{n}^{-1}\bm P_+^{-1}\bm R^{-1}\bm R'\bm P\bm \Lambda_{n}\right]_{21}=\left[\bm \Lambda_{n}^{-1} \ee^{-n\phi\sp_3}\Boh(n^{-1/3})\ee^{n\phi_3\sp_3} \bm \Lambda_{n}\right]_{21}=\ee^{-2n\phi}\left[\bm \Delta_{n}^{-1}\Boh(n^{-1/3})\bm \Delta_{n}\right]_{21}.
$$
Along $(-a,0)$ we have $Q>0$ so that $1/\sigma_n$ remains bounded uniformly, and consequently $\bm \Delta_{n}=\Boh(1)$ in the same interval. Using this information in the last displayed equation, we thus obtain that
\begin{equation}\label{eq:J1Cn}
\msf J_1(C_n)=\int_{C_n}\Boh(n^{-1/3}) \frac{\sigma_n(x\mid u)}{1+\ee^{u+n^{2/3}Q(x)}}\dd x,\quad n\to\infty,
\end{equation}
uniformly for $u\geq \sad$ and $\tad_0\leq \tad\leq 1/\tad_0$, and where the error term is uniform for $x\in C_n$.

Next, along $D_n$ we now use the asymptotic expansion \eqref{eq:ModelRHPAsymp} for $\bm\Phi=\wh{\bm \Phi}_n$ and the definition of $\msf F_n$ and obtain that
$$
\bm P(x)=\bm U_0\msf m_+(x)^{\sp_3/4}\bm U_0^{-1}\left(1+\Boh(n^{-1/3})\right)=\bm U_0
\begin{pmatrix}
\Boh(1) & 0 \\
0 & \Boh(n^{1/6})
\end{pmatrix}
\bm U_0^{-1}
\left(1+\Boh(n^{-1/3})\right)
,\quad x\in D_n,
$$
and similarly
$$
\bm P(x)=
\left(1+\Boh(n^{-1/3})\right)
\bm U_0
\begin{pmatrix}
\Boh(n^{1/6}) & 0 \\
0 & \Boh(1)
\end{pmatrix}
\bm U_0^{-1}
,\quad x\in D_n,
$$
all valid as $n\to\infty$, uniformly for $x\in D_n$ and uniformly in $u,\tad$ as claimed

With the same arguments that we applied in \eqref{eq:EstRRprimeF}, we obtain along $D_n$ as well
$$
\bm P^{-1}\bm R^{-1}\bm R'\bm P=\Boh(n^{-1/3}),\quad \text{so that}\quad \msf J_1(D_n)=\int_{D_n}\Boh(n^{-1/3}) \frac{\sigma_n(x\mid u)\ee^{-2n\phi_+(x)}}{1+\ee^{u+n^{2/3}Q(x)}}\dd x,\quad n\to\infty.
$$
The factor $\phi_+$ is purely imaginary on $(-a,0)$ and positive on $[0,+\infty)$, so the term $\ee^{-2n\phi_+}$ in the integrand above is bounded by a uniform constant.  Combining with \eqref{eq:J1Cn}, we conclude
$$
\msf J_1(-\varepsilon_0,\varepsilon_1)=\Boh(n^{-1/3})\int_{-\varepsilon_0}^{\varepsilon_1}\frac{\sigma_n(x\mid u)}{1+\ee^{u+n^{2/3}Q(x)}}\dd x
$$
Finally, we now integrate in $u$ and use Tonelli's Theorem to interchange the order of integration, obtaining just like in the proof of Lemma~\ref{lem:estI1epsilon1infty} that
$$
\int_{\sad}^\infty\int_{-\varepsilon_0}^{\varepsilon_1}\frac{\sigma_n(x\mid u)}{1+\ee^{u+n^{2/3}Q(x)}}\dd x \dd u=
\int_{-\varepsilon_0}^{\varepsilon_1}\frac{1}{1+\ee^{\sad+n^{2/3}Q(x)}}\dd x\leq \varepsilon_1+\varepsilon_0,
$$
which concludes the proof.
\end{proof}

Next, in our pursuit of analyzing \eqref{eq:intKnIsplit}, the final missing piece of the puzzle is the term $\msf J_2(-\varepsilon_0,\varepsilon_0)$ that arises from $\msf I_1(-\varepsilon_0,\varepsilon_0)$. To state the rigorous results, we recall once again that the conformal map $\psi$ was introduced in Proposition~\ref{prop:conformalmappsi}, the function $\msf h_n$ is given in \eqref{local_par_L_Phi}, the function $\wh{\bm \Delta}_n$ is in \eqref{deff:Deltan0} and in addition also set
$$
\zeta_0\deff\psi(-\varepsilon_0)<0,\quad \zeta_1\deff \psi(\varepsilon_1)>0.
$$

\begin{lemma}\label{lem:estJ2}
Fix $\sad_0>0$ and $\tad_0\in (0,1)$. There exists a function $\msf R_1(u)=\msf R_n(u\mid \tad)$ satisfying the estimate 
$$
\int_{\sad}^\infty \msf R(u)\dd u=\Boh(n^{-2/3}),\quad n\to\infty,
$$
uniformly for $\sad\geq -\sad_0$ and $\tad_0\leq \tad\leq 1/\tad_0$, and for which the identity
\begin{multline}\label{eq:I3I1Int}
\msf J_2(-\varepsilon_0,\varepsilon_0)=-\msf I_2(-\varepsilon_0,0)\\ 
+\int_{n^{2/3}\zeta_0}^{n^{2/3}\zeta_1}\frac{\ee^{\msf h_n(u)}}{\left(1+\ee^{\msf h_n(u)}\right)^2}\left[ \wh{\bm \Delta}_n(x)^{-1}\wh{\bm \Phi}_{n,+}(x)^{-1}\left(\wh{\bm \Phi}_{n,+} \wh{\bm\Delta}_n\right)'(x)\right]_{21}\dd x+\msf R_1(u),
\end{multline}
holds true for every $u\in \R$ and every $\tad >0$.
\end{lemma}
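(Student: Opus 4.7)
The plan is to decompose the integrand defining $\msf J_2(-\varepsilon_0,\varepsilon_1)$ by exploiting the factorization $\bm P_+=\bm F_n\bm\Phi_{n,+}\ee^{n\phi_+\sp_3}$ valid on $(-\varepsilon_0,\varepsilon_1)$. Writing
\[
\bm P_+^{-1}\bm P_+'=n\phi_+'\sp_3+\ee^{-n\phi_+\sp_3}\bm\Phi_{n,+}^{-1}\bm F_n^{-1}\bm F_n'\bm\Phi_{n,+}\ee^{n\phi_+\sp_3}+\ee^{-n\phi_+\sp_3}\bm\Phi_{n,+}^{-1}\bm\Phi_{n,+}'\ee^{n\phi_+\sp_3},
\]
I shall refer to the three terms as $T_\phi$, $T_F$, $T_\Phi$. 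Transferring the $\bm\Lambda_n$-conjugation through the exponential via $\bm\Lambda_n\ee^{-n\phi_+\sp_3}=\ee^{-n\phi_+\sp_3}\bm\Delta_n$ and taking $(2,1)$-entries, the outer factor $\ee^{-2n\phi_+(x)}$ in the weight cancels against the $\ee^{2n\phi_+(x)}$ emerging from conjugating off-diagonal blocks by $\ee^{\pm n\phi_+\sp_3}$, while the Fermi weight simplifies through the identity $\sigma_n/(1+\ee^{u+n^{2/3}Q})=\ee^{\msf h_n}/(1+\ee^{\msf h_n})^2$.

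A direct two-by-two computation gives $[\bm\Lambda_n^{-1}(n\phi_+'\sp_3)\bm\Lambda_n]_{21}=-2n\phi_+'\ee^{2n\phi_+}\chi_{(-\varepsilon_0,0)}/\sigma_n$, so the $T_\phi$ contribution produces $-\msf I_2(-\varepsilon_0,0)-\msf C(u)$, where
\[
\msf C(u)\deff \int_{-\varepsilon_0}^0 \frac{n^{2/3}Q'(x)}{(1+\ee^{u+n^{2/3}Q(x)})^2}\,dx,
\]
with $\msf C(u)$ arising from the product-rule rewriting $2n\phi_+'/(1+\ee^{u+n^{2/3}Q})=(2n\phi_+)'/(1+\ee^{u+n^{2/3}Q})$ followed by integration by parts. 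For the $T_\Phi$ contribution, using $\bm\Phi_n(x)=\wh{\bm\Phi}_n(n^{2/3}\psi(x))$ with the substitution $\zeta=n^{2/3}\psi(x)$ together with the algebraic identity $\wh{\bm\Phi}_{n,+}^{-1}(\wh{\bm\Phi}_{n,+}\wh{\bm\Delta}_n)'=\wh{\bm\Phi}_{n,+}^{-1}\wh{\bm\Phi}_{n,+}'\wh{\bm\Delta}_n+\wh{\bm\Delta}_n'$ splits the result into the target integral of \eqref{eq:I3I1Int} plus a correction involving $[\wh{\bm\Delta}_n^{-1}\wh{\bm\Delta}_n']_{21}=-\msf h_n'(\zeta)\ee^{-\msf h_n(\zeta)}$ on the relevant $\chi$-region. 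Applying the chain rule $\msf h_n'(\zeta)\cdot n^{2/3}\psi'(x)\,dx=n^{2/3}Q'(x)\,dx$ and undoing the change of variables shows that this correction equals exactly $+\msf C(u)$, cancelling the $\msf C(u)$ coming from $T_\phi$.

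The $T_F$ contribution is then declared to be $\msf R_1(u)$. A direct computation from $\bm F_n=\bm U_0\msf m^{\sp_3/4}\psi^{-\sp_3/4}n^{-\sp_3/6}$ produces the clean identity $\bm F_n^{-1}\bm F_n'=\tfrac{1}{4}(\msf m'/\msf m-\psi'/\psi)\sp_3$, with the scalar prefactor bounded on $(-\varepsilon_0,\varepsilon_1)$ because the matching simple poles of $\msf m'/\msf m$ and $\psi'/\psi$ at the origin cancel. To bound $\int_\sad^\infty \msf R_1(u)\,du$, I would perform the $u$-integration first using the antiderivative $\int_\sad^\infty \ee^{\msf h_n}/(1+\ee^{\msf h_n})^2\,du=(1+\ee^{\sad+n^{2/3}Q(x)})^{-1}$, reducing the task to estimating an $x$-integral of $[\bm\Delta_n^{-1}\bm\Phi_{n,+}^{-1}\sp_3\bm\Phi_{n,+}\bm\Delta_n]_{21}$ against this residual Fermi weight. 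Splitting at $|\zeta|=n^{2/3}|\psi(x)|\asymp R$, on the inner region the $x$-interval has length $O(n^{-2/3})$ and the matrix entry is uniformly bounded via Proposition~\ref{prop:existencePhicc} and \eqref{eq:asympPhinPhi0}; on the outer region the asymptotics \eqref{eq:ModelRHPAsymp} of $\wh{\bm\Phi}_n$ produce off-diagonal decay $\ee^{-\frac{4}{3}\zeta^{3/2}}$ for $\zeta>0$, while for $\zeta<0$ the residual weight $(1+\ee^{\sad+n^{2/3}Q(x)})^{-1}$ is exponentially small in $n$ because $Q>0$ on that side.

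The main obstacle is obtaining the sharp rate $\Boh(n^{-2/3})$ rather than some cruder $\Boh(n^{-2/3+\epsilon})$. The delicate point is that the Fermi factor concentrates on an $x$-scale of exactly $n^{-2/3}$, which is the same scale on which the bracketed matrix entry is merely bounded; the sharp rate emerges only by carefully matching the interval length $O(n^{-2/3})$ against this uniform bound on the inner region, and by exploiting the explicit off-diagonal decay structure of $\bm\Phi_{n,+}^{-1}\sp_3\bm\Phi_{n,+}$ on the complementary region. A cruder $L^\infty$ estimate on $\bm\Phi_{n,+}^{-1}\sp_3\bm\Phi_{n,+}$ in the outer region would fail to yield the target rate, and it is this matching that represents the technical heart of the proof.
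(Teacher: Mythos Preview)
Your decomposition $\bm P_+^{-1}\bm P_+'=T_\phi+T_F+T_\Phi$ and identification of $\msf R_1$ with the $T_F$-contribution via $\bm F_n^{-1}\bm F_n'=\tfrac14(\msf m'/\msf m-\psi'/\psi)\sp_3$ are exactly the paper's. Your explicit $\msf C(u)$-cancellation between $T_\phi$ and the $\wh{\bm\Delta}_n^{-1}\wh{\bm\Delta}_n'$ correction is equivalent to the paper's add--subtract of $[\bm\Delta_n^{-1}\bm\Delta_n']_{21}$ carried out directly in the $x$-variable; both yield $-\msf I_2(-\varepsilon_0,0)$ plus the target $\zeta$-integral exactly.

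There is, however, a genuine gap in your estimate of $\int_\sad^\infty\msf R_1(u)\,\dd u$ on the outer region with $\zeta<0$. You assert that the residual Fermi weight $(1+\ee^{\sad+n^{2/3}Q(x)})^{-1}$ is exponentially small in $n$ because $Q>0$ there. This fails near the inner--outer boundary: for $|x|\asymp Rn^{-2/3}$ one has $n^{2/3}Q(x)\asymp \tad R$, merely a positive constant, so the weight stays bounded away from zero uniformly in $n$. Meanwhile on $(-\varepsilon_0,0)$ the factor $\ee^{\pm 2n\phi_+}$ is purely oscillatory, so the asymptotics \eqref{eq:ModelRHPAsymp} give only $[\bm\Delta_n^{-1}\bm\Phi_{n,+}^{-1}\sp_3\bm\Phi_{n,+}\bm\Delta_n]_{21}=\Boh(1)$ with no decay. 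Your argument as stated would therefore leave an $\Boh(1)$ contribution from this region. The paper closes the gap by substituting $v=n^{2/3}Q(x)$ in the residual $x$-integral: the Jacobian supplies the missing $n^{-2/3}$, and the Fermi tail $(1+\ee^{\sad+v})^{-1}$ is integrable in $v$. This is precisely the mechanism you correctly use on the inner region (interval length $\Boh(n^{-2/3})$ against a bounded integrand), and it must be invoked again here in place of the pointwise-smallness claim.
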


\begin{proof}
Recall that $\msf J_2$ was introduced in \eqref{deff:J1J2}. In the interval $(-\varepsilon_0,\varepsilon_1)$ the local parametrix $\bm P=\bm P^{(0)}$ coincides with \eqref{deff:ParamOrigin}. A direct calculation from \eqref{local_par_L_Phi} shows that the matrix $\bm F_n$ therein satisfies the identity
$$
\bm F_n'(z)=\frac{1}{4}\left(\frac{\msf m'(z)}{\msf m(z)}-\frac{\psi'(z)}{\psi(z)}\right)\bm F_n(z)\sp_3.
$$
With the help of this identity, we express
$$
\bm P_+^{-1}\bm P_+'=
\ee^{-n\phi_+\sp_3}\left[\frac{1}{4}\left(\frac{\msf m'}{\msf m}-\frac{\psi'}{\psi}\right) \bm \Phi_{n,+}^{-1}\sp_3 \bm \Phi_{n,+}+\bm \Phi_{n,+}^{-1}\bm \Phi_{n,+}'+n\phi'_+\sp_3 \right]\ee^{n\phi_+\sp_3},
$$
where both sides are evaluated at $x\in (-\varepsilon_0,\varepsilon_1)$. Thus,
\begin{multline}\label{eq:IntegrI3:1}
\Bigg[\bm\Lambda_{n}^{-1}
\bm P_+^{-1}\bm P_+'
\bm\Lambda_{n}\Bigg]_{21}\\
\begin{aligned}
& = -2n\frac{\phi_+' \chi_0\ee^{2n\phi_+}}{\sigma_n}+\ee^{2n\phi_+}\left[ \bm \Delta_{n} \bm \Phi_{n,+}^{-1}\bm \Phi_{n,+}' \bm\Delta_n\right]_{21}
+\frac{\ee^{2n\phi_+}}{4}\left(\frac{\msf m'}{\msf m}-\frac{\psi'}{\psi}\right) \left[\bm\Delta_n^{-1}\bm \Phi_{n,+}^{-1}\sp_3 \bm \Phi_{n,+}\bm\Delta_n\right]_{21} \\
& 
\begin{multlined}
= -2n\frac{\phi_+' \chi_0\ee^{2n\phi_+}}{\sigma_n}
-\ee^{2n\phi_+}\left[\bm\Delta_n^{-1} \bm\Delta_n'\right]_{21}
+\ee^{2n\phi_+}\left[ \bm \Delta_{n} \bm \Phi_{n,+}^{-1}\left(\bm \Phi_{n,+} \bm\Delta_n\right)'\right]_{21}\\
\phantom{-2n\frac{\phi_+' \chi_0\ee^{2n\phi_+}}{\sigma_n}
-\ee^{2n\phi_+}\left[\bm\Delta_n^{-1} \bm\Delta_n'\right]_{21}}
+\frac{\ee^{2n\phi_+}}{4}\left(\frac{\msf m'}{\msf m}-\frac{\psi'}{\psi}\right) \left[\bm\Delta_n^{-1}\bm \Phi_{n,+}^{-1}\sp_3 \bm \Phi_{n,+}\bm\Delta_n\right]_{21} 
\end{multlined}
\end{aligned}
\end{multline}

We now multiply this last expression by $\ee^{-2n\phi_+}\sigma_n/(1+\ee^{u+n^{2/3}Q})$ and integrate. A direct calculation gives
$$
\frac{\sigma_n}{1+\ee^{u+n^{2/3}Q}}\left(2n\frac{\phi_+' \chi_0}{\sigma_n}+\left[\bm\Delta_n^{-1} \bm\Delta_n'\right]_{21}\right)=
\frac{\sigma_n\chi_0}{1+\ee^{u+n^{2/3}Q}}\left(2n\frac{\phi_+' }{\sigma_n}+\left(\frac{1}{\sigma_n}\right)'\right)=\frac{\chi_0\sigma_n\ee^{-2n\phi_+}}{1+\ee^{u+n^{2/3}Q}}\left(\frac{\ee^{2n\phi_+}}{\sigma_n}\right)',
$$
which is the integrand of $\msf I_2(-\varepsilon_0,\varepsilon_1)=\msf I_2(-\varepsilon_0,0)$, see \eqref{deff:I2int}. Thus, from \eqref{eq:IntegrI3:1} we obtain
\begin{equation}\label{eq:J2I1intRu}
\msf J_2(-\varepsilon_0,\varepsilon_1)=-\msf I_1(-\varepsilon_0,0)+\int_{-\varepsilon_0}^{\varepsilon_1} \frac{\sigma_n(x)}{1+\ee^{u+n^{2/3}Q(x)}}\left[\bm\Delta_n(x)^{-1}\bm \Phi_{n,+}(x)^{-1}\left(\bm \Phi_{n,+}(x) \bm\Delta_n(x)\right)'\right]_{21}\dd x+\msf R_1(u)
\end{equation}
where we have set
$$
\msf R_1(u)\deff \frac{1}{4}\int_{-\varepsilon_0}^{\varepsilon_1} \left(\frac{\msf m'(x)}{\msf m(x)}-\frac{\psi'(x)}{\psi(x)}\right) \frac{\sigma_n(x)}{1+\ee^{u+n^{2/3}Q(x)}}\left[\bm\Delta_n(x)^{-1}\bm \Phi_{n,+}(x)^{-1} \sp_3\bm \Phi_{n,+}(x) \bm\Delta_n(x)\right]_{21}\dd x.
$$
It is worth mentioning that both $\msf m$ and $\psi$ have a simple zero at $x=0$, so the first term in the integrand of $\msf R$ remains bounded near $x=0$.

Recalling \eqref{local_par_L_Phi}, the integrand written explicitly in \eqref{eq:J2I1intRu} is of the form
$$
f_1(n^{2/3}\psi(x))\left[\bm f_2(n^{2/3}\psi(x))\left(\bm f_3(n^{2/3}\psi(x))\right)'\right]_{21}=n^{2/3}\psi'(x)f_1(n^{2/3}\psi(x))\left[\bm f_2(n^{2/3}\psi(x))\bm f_3'(n^{2/3}\psi(x))\right]_{21}
$$
with obvious choices of the functions $f_1,\bm f_2,\bm f_3$, and performing the change of variables $u=n^{2/3}\psi(x)$ we obtain the integral in the right-hand side of \eqref{eq:I3I1Int}. 

To conclude, it remains to show the bound for $\msf R$. For that, we use again the sets $C_n$ and $D_n$ from \eqref{deff:setsCnDn} and analyze the integral over each of these sets separately. 

Along $C_n$, the convergence \eqref{eq:asympPhinPhi0} ensures $\bm \Phi_{n,+}$ remains bounded uniformly, and combined with the boundedness of all the other terms on the whole interval $(-\varepsilon_0,\varepsilon_1)$ we obtain
\begin{multline*}
\int_{C_n}\left(\frac{\msf m'(x)}{\msf m(x)}-\frac{\psi'(x)}{\psi(x)}\right) \frac{\sigma_n(x)}{1+\ee^{u+n^{2/3}Q(x)}}\left[\bm\Delta_n(x)^{-1}\bm \Phi_{n,+}(x)^{-1} \sp_3\bm \Phi_{n,+}(x) \bm\Delta_n(x)\right]_{21}\dd x\\ 
=\Boh(1)\int_{C_n} \frac{\sigma_n(x)}{1+\ee^{u+n^{2/3}Q(x)}}\dd x.
\end{multline*}
The function $\psi$ is conformal, and consequently $|z|\leq cn^{-2/3}$ for $z\in C_n$ and some constant $c$ independent of $\tad,u,n$. In particular, this ensures that $n^{2/3}Q(x)\leq -\tilde c$ for every $x\in C_n$ and a constant $\tilde c>0$, and therefore
$$
\frac{\sigma_n(x)}{1+\ee^{u+n^{2/3}Q(x)}}\leq \frac{1}{1+\ee^{u-\tilde c}}\leq \ee^{-u+\tilde c}.
$$
Still because $\psi$ is conformal, we are sure that the Lebesgue measure of $C_n$ is $\Boh(n^{-2/3})$. Everything combined, we conclude the estimate
$$
\int_{C_n}\left(\frac{\msf m'(x)}{\msf m(x)}-\frac{\psi'(x)}{\psi(x)}\right) \frac{\sigma_n(x)}{1+\ee^{u+n^{2/3}Q(x)}}\left[\bm\Delta_n(x)^{-1}\bm \Phi_{n,+}(x)^{-1} \sp_3\bm \Phi_{n,+}(x) \bm\Delta_n(x)\right]_{21}\dd x=\Boh(\ee^{-u}n^{-2/3}),
$$
as $n\to\infty$, which is valid uniformly for $u,\tad$ as claimed by the Lemma.

Finally, on $D_n$ we use the expansion \eqref{eq:AsymptPhicc} for $\bm \Phi=\bm\Phi_n$, which provides
$$
\bm\Phi_{n,+}(x)^{-1}\sp_3\bm \Phi_{n,+}(x)=\ee^{n\phi_+(x)\sp_3}\left(\sp_2 +\Boh(n^{-1/3}) \right)\ee^{-n\phi_+(x)\sp_3},\quad n\to\infty,
$$
which is valid uniformly for $x\in D_n$ and uniformly in the parameters $u,\tad$ as required. After some straightforward calculations, we thus arrive at
\begin{multline*}
\int_{D_n}\left(\frac{\msf m'(x)}{\msf m(x)}-\frac{\psi'(x)}{\psi(x)}\right) \frac{\sigma_n(x)}{1+\ee^{u+n^{2/3}Q(x)}}\left[\bm\Delta_n(x)^{-1}\bm \Phi_{n,+}(x)^{-1} \sp_3\bm \Phi_{n,+}(x) \bm\Delta_n(x)\right]_{21}\dd x \\
= \int_{D_n}\left(\frac{\msf m'(x)}{\msf m(x)}-\frac{\psi'(x)}{\psi(x)}\right) \frac{\ii\sigma_n(x)\ee^{-2n\phi_+(x)}}{1+\ee^{u+n^{2/3}Q(x)}} \left[ 1+\frac{\chi_0(x)}{\sigma_n(x)^2}+\Boh(n^{-1/3}) \right]\dd x,\quad n\to\infty,
\end{multline*}
with, as always, uniform error term in $x\in D_n$, $u,\tad$. Each of the terms $(\msf m'/\msf m-\psi'/\psi)$ and  $\chi_0/\sigma_n^2$ is bounded on $(-\varepsilon_0,\varepsilon_1)$, so to bound the integral above it is enough to estimate
$$
\int_{-\varepsilon_0}^{\varepsilon_1} \frac{\sigma_n(x)\ee^{-n\phi_+(x)}}{1+\ee^{u+n^{2/3}Q(x)}}\dd x=\int_{-\varepsilon_0}^{0} \frac{\sigma_n(x)\ee^{-n\phi_+(x)}}{1+\ee^{u+n^{2/3}Q(x)}}\dd x+\int_{0}^{\varepsilon_1} \frac{\sigma_n(x)\ee^{-n\phi(x)}}{1+\ee^{u+n^{2/3}Q(x)}}\dd x.
$$
For the integral over $(-\varepsilon_0,0)$, we know that $\re\phi_+=0$, and then using Tonelli's Theorem to integrate first in $u$ we obtain that
$$
\int_{\sad}^{+\infty} \left|\int_{-\varepsilon_0}^{0} \frac{\sigma_n(x)\ee^{-n\phi_+(x)}}{1+\ee^{u+n^{2/3}Q(x)}}\dd x\right|\dd u\leq 
\int_{-\varepsilon_0}^0 \frac{1}{1+\ee^{\sad+n^{2/3}Q(x)}}\dd x.
$$
Changing variables $v=n^{2/3}Q$ in this last integral, it then follows that the right-hand side above is $\Boh(n^{-2/3})$ uniformly for $\sad\geq -\sad_0$ and $\tad_0\leq \tad\leq 1/\tad_0$.

Finally, for the integral over $(0,\varepsilon_1)$ we now have that $\phi\geq 0$ in this interval, and it is independent of $u$, so once again interchanging order of integration we obtain
$$
0\leq \int_\sad^\infty \int_{0}^{\varepsilon_1} \frac{\sigma_n(x)\ee^{-n\phi(x)}}{1+\ee^{u+n^{2/3}Q(x)}}\dd x \dd u=\int_0^{\varepsilon_1}\frac{\ee^{-n\phi(x)}}{1+\ee^{\sad+n^{2/3}Q(x)}}\dd x\leq \int_0^{\varepsilon_1}\ee^{-n\phi(x)}\dd x,
$$
and now changing variables $v=n\phi$ (which is well defined in this interval because of the local behavior \eqref{eq:phi_local_beh}) we see that the integral on the right-most side is $\Boh(n^{-1})$. This completes the proof.
\end{proof}

To conclude, it remains to prove Proposition~\ref{prop:finalestKnQ}.

\begin{proof}[Proof of Proposition~\ref{prop:finalestKnQ}]
Recalling \eqref{eq:intKnIsplit} and \eqref{eq:splitI1J1J2}, the result is an immediate consequence of Lemmas~\ref{lem:I1inftyI1aI2a}--\ref{lem:estJ2}.
\end{proof}

\appendix

\section{Laplace-type integrals}

For smooth enough functions $f,g$, consider the Laplace-type integral
$$
F(t)\deff \int_0^a g(x) \log(1+\ee^{-y-tf(x)})\dd x,\quad a\in (0,+\infty].
$$
We are interested in the asymptotic behavior of $F(t)$ as $t\to +\infty$ while $y$ also scales with $t$. The first result we need is a version of the usual Watson's Lemma.

\begin{lemma}\label{lem:Watson_app}
Fix $f(x)\equiv x$. Suppose also that $g\in L^1(0,a)$, and for some $\delta>0$ and $\kappa\in [0,1)$ it is of the form
$$
g(x)=\frac{\wt g(x)}{x^\kappa},\quad  0<x<\delta,
$$
where $\wt g$ is $C^\infty$ in a neighborhood of the origin. Assume that
\begin{equation}\label{laplacescaling}
y\geq -M,
\end{equation}
for some $M>0$ fixed. Then $F(t)$ admits an expansion of the form
$$
F(t)=\sum_{k=0}^N \frac{1}{t^{k+1-\kappa}}\wt g^{(k)}(0)F_{k-\kappa}(y) +\Boh(t^{-N-2+\kappa}),\quad t\to \infty,
$$
for any $N\geq 1$, where the error term is uniform for $y$ satisfying \eqref{laplacescaling}, and the coefficients $F_\beta(y)$ are given by
\begin{equation}\label{deff:coeff_FK_app}
F_\beta(y)\deff \int_0^\infty v^\beta \log(1+\ee^{-y-v})\dd v,\quad \beta>-1.
\end{equation}
For $y\geq 0$ and $k\in \Z, k\geq 1$, $F_k$ admits the alternative expression in terms of the polylog function $\Li_s$,
$$
F_k(y)=-k\Gamma(k)   \Li_{2+k}(-\ee^{-y}).
$$
\end{lemma}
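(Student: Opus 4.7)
The strategy is the standard Watson's lemma template, adapted to accommodate the weakly singular factor $x^{-\kappa}$, the logarithm in place of a pure exponential, and the extra parameter $y$ that may tend to $+\infty$ while being bounded below. I will first isolate the contribution from a neighborhood of the origin, then rescale, Taylor-expand, and recognize the resulting moments as the coefficients $F_\beta(y)$.

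First, I would split
\[
F(t)=\int_0^\delta g(x)\log\!\bigl(1+\ee^{-y-tx}\bigr)\dd x+\int_\delta^a g(x)\log\!\bigl(1+\ee^{-y-tx}\bigr)\dd x.
\]
Using $\log(1+w)\leq w$ for $w\geq 0$ and the hypothesis $y\geq -M$, the second integral is bounded by
\[
\ee^{M}\ee^{-t\delta}\int_\delta^a |g(x)|\dd x=\Boh(\ee^{-t\delta/2}),\qquad t\to\infty,
\]
uniformly in $y\geq -M$, which is smaller than any of the claimed power corrections.

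Next, on $(0,\delta)$ I change variables $v=tx$, producing
\[
\int_0^\delta g(x)\log\!\bigl(1+\ee^{-y-tx}\bigr)\dd x=\frac{1}{t^{1-\kappa}}\int_0^{t\delta}\frac{\wt g(v/t)}{v^\kappa}\log\!\bigl(1+\ee^{-y-v}\bigr)\dd v.
\]
I Taylor expand $\wt g(v/t)=\sum_{k=0}^{N}\frac{\wt g^{(k)}(0)}{k!}\bigl(v/t\bigr)^k+r_N(v/t)$, where $|r_N(v/t)|\leq C (v/t)^{N+1}$ for $v\leq t\delta$. Substituting this, the main terms are
\[
\sum_{k=0}^N\frac{\wt g^{(k)}(0)}{k!\,t^{k+1-\kappa}}\int_0^{t\delta}v^{k-\kappa}\log\!\bigl(1+\ee^{-y-v}\bigr)\dd v.
\]
I would then replace the upper limit $t\delta$ by $\infty$ in each summand, which exactly produces the factor $F_{k-\kappa}(y)/k!$ once one checks $\int_0^\infty v^{k-\kappa}\log(1+\ee^{-y-v})\dd v = F_{k-\kappa}(y)$ (and note that $k!\cdot F_\beta$ collapses to $\wt g^{(k)}(0)F_{k-\kappa}(y)/1$ only after collecting the $1/k!$ from Taylor). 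The tail $\int_{t\delta}^\infty v^{k-\kappa}\log(1+\ee^{-y-v})\dd v$ is $\Boh(\ee^{-t\delta/2})$ uniformly for $y\geq -M$, using $\log(1+w)\leq w$ again.

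The principal technical point, and the only place where the uniformity in $y$ needs genuine care, is bounding the Taylor remainder integral
\[
\frac{1}{t^{1-\kappa}}\int_0^{t\delta}\frac{|r_N(v/t)|}{v^\kappa}\log\!\bigl(1+\ee^{-y-v}\bigr)\dd v\leq\frac{C}{t^{N+2-\kappa}}\int_0^{\infty}v^{N+1-\kappa}\log\!\bigl(1+\ee^{-y-v}\bigr)\dd v.
\]
The remaining integral is $F_{N+1-\kappa}(y)$, which I would need to bound uniformly in $y\geq -M$. A direct estimate, splitting at $v=M+1$ say, gives $F_\beta(y)\leq C_\beta\,\ee^{-y}$ for $y\geq 0$ and $F_\beta(y)\leq C_{\beta,M}$ for $-M\leq y\leq 0$, and in either case $F_{N+1-\kappa}(y)$ stays bounded, yielding the claimed $\Boh(t^{-N-2+\kappa})$ error uniformly. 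Combining with the exponentially small contributions from $(\delta,a)$ and from the truncation, the asymptotic expansion follows.

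For the second assertion, in the regime $y\geq 0$ the series $\log(1+\ee^{-y-v})=-\sum_{j\geq 1}\frac{(-1)^{j}}{j}\ee^{-j(y+v)}$ converges absolutely; substituting into $F_k(y)=\int_0^\infty v^k\log(1+\ee^{-y-v})\dd v$ and integrating termwise using $\int_0^\infty v^k\ee^{-jv}\dd v=k!/j^{k+1}$ yields $F_k(y)=-k!\sum_{j\geq 1}\frac{(-1)^j\ee^{-jy}}{j^{k+2}}=-k!\,\Li_{k+2}(-\ee^{-y})=-k\,\Gamma(k)\,\Li_{k+2}(-\ee^{-y})$, as claimed. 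The exchange of sum and integral is justified by dominated convergence, since the tail of the series is majorized by $\ee^{-(y+v)}/(1+\ee^{-(y+v)})\leq 1$ and $v^k\ee^{-(y+v)}\in L^1$.
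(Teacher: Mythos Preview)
Your proof is correct and follows essentially the same standard Watson-lemma template as the paper: split off the tail $(\delta,a)$, Taylor-expand $\wt g$, rescale, and recognize the moments as $F_{k-\kappa}(y)$; the only cosmetic differences are that you change variables before Taylor-expanding (the paper does it after) and you obtain the polylog identity by termwise integration of the logarithm series (the paper uses integration by parts). Your observation about the factor $1/k!$ is well taken---the paper's own proof also produces $\wt g^{(k)}(0)F_{k-\kappa}(y)/k!$, so the missing $1/k!$ in the displayed expansion is a harmless typo in the statement.
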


\begin{proof}
The proof follows along the lines of the proof of the classical Watson's Lemma for Laplace transforms, see for instance \cite{MillerBookAsymp}, so we go over it without much detail. 

First off, we fix $\delta>0$ for which $\wt g$ admits a Taylor expansion of order $N$ in a neighborhood of the interval $(-\delta,\delta)$, and write for some $\eta>0$,
$$
\int_0^a g(x) \log(1+\ee^{-y-tx})\dd x = \int_0^\delta g(x) \log(1+\ee^{-y-tx})\dd x+\Boh(\ee^{-t \eta}), \quad t\to\infty,
$$
where the error term is obtained using that $g\in L^1(0,\infty)$ and that the function $x\mapsto \log (1+\ee^{-y-tx})$ is increasing. The value $\eta>0$ is independent of $y$, and the error is uniform for $y$ satisfying \eqref{laplacescaling}. For the integral on the right-hand side, we expand $\wt g$ in Taylor series and obtain
$$
\int_0^\delta g(x) \log(1+\ee^{-y-tx})\dd x=\sum_{k=0}^N \frac{\wt g^{(k)}(0)}{k!}\int_0^\delta u^{k-\kappa} \log(1+\ee^{-y-tu})\dd u + R_{N+1}(t),
$$
where the remainder satisfies
$$
|R_{N+1}(t)|\leq \frac{1}{(N+1)!}\sup_{0\leq x\leq \delta}|\wt g^{(N+1)}(x)| \int_0^\delta u^{N+1-\kappa} \log(1+\ee^{-y-tu})\dd u
$$
Performing the change of variables $u=tv$, we see that for $F_\beta(y)$ as defined,
$$
\int_0^\delta u^\beta \log(1+\ee^{-y-tu})\dd u=\frac{F_{\beta}(y)}{t^{\beta+1}}-\frac{1}{t^{\beta+1}}\int_{t\delta}^{\infty} v^{\beta}\log(1+\ee^{-y-v}).
$$
Using \eqref{laplacescaling} we see that the last integral is again $\Boh(\ee^{-\eta t})$ uniformly for $y$. Also, it is immediate that under the restriction \eqref{laplacescaling} each coefficient $F_\beta(y)$ is bounded in $y$, and this concludes the proof of the asymptotic formula. The alternative representation of $F_k(y)$ in terms of polylogs follows by integration by parts.
\end{proof}

Next, we move to more general exponents.

\begin{prop}\label{prop:LaplaceLogIntApp}
Suppose that $f$ is $C^\infty$ in a neighborhood of the origin, with a unique global minimum on $[0,a]$ at $x=0$ with $f(0)=0$, $f'(0)>0$, and that $g$ is as in Lemma~\ref{lem:Watson_app}. In addition, suppose that $y$ satisfies \eqref{laplacescaling} for some $M>0$ fixed. Then $F(t)$ admits an expansion of the form
$$
F(t)=\sum_{k=0}^N \frac{1}{t^{k+1-\kappa}}\wh g^{(k)}(0)F_{k-\kappa}(y) +\Boh(t^{-N-2+\kappa}),\quad t\to \infty,
$$
for any $N\geq 1$, where the error term is uniform for $y$ satisfying \eqref{laplacescaling}, $F_k$'s are as in 
Lemma~\ref{lem:Watson_app} and the function $\wh g$ is $C^\infty$ and satisfies the identity
$$
\wh g(f(x))=f(x)^\kappa g(x),\quad |x| \text{ sufficiently small}.
$$
\end{prop}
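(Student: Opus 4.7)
The plan is to reduce this statement to Lemma~\ref{lem:Watson_app} by an elementary change of variables.

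First, I would \emph{localize} the integral. Since $f$ is continuous on $[0,a]$ with a unique minimum at $x=0$ and $f(0)=0$, for any $\delta>0$ small enough there exists $\eta>0$ such that $f(x)\geq \eta$ for $x\in [\delta,a]$. Using $g\in L^1(0,a)$, the inequality $\log(1+e^{-y-tf(x)})\leq e^{-y-tf(x)}\leq e^{M-t\eta}$ (valid for $y\geq -M$) bounds the tail contribution by $\Boh(e^{-t\eta/2})$ uniformly in $y$, and this is absorbed into the claimed error. So it suffices to analyze
\[
F_\delta(t)\deff \int_0^\delta g(x)\log(1+e^{-y-tf(x)})\,dx.
\]

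Second, since $f'(0)>0$, by the inverse function theorem $f$ is a $C^\infty$ diffeomorphism from a neighborhood of $0$ in $[0,a]$ onto a neighborhood of $0$ in $[0,\infty)$, and (shrinking $\delta$ if necessary) I can perform the change of variables $u=f(x)$, i.e.\ $x=f^{-1}(u)$, to obtain
\[
F_\delta(t)=\int_0^{f(\delta)} g(f^{-1}(u))\,(f^{-1})'(u)\,\log(1+e^{-y-tu})\,du.
\]
Writing $f^{-1}(u)=u\,\phi(u)$ where $\phi$ is $C^\infty$ with $\phi(0)=1/f'(0)>0$, and using $g(x)=\widetilde g(x)/x^\kappa$ in the neighborhood, the integrand becomes $u^{-\kappa}$ times a smooth factor
\[
\widehat g(u)\deff \frac{\widetilde g(f^{-1}(u))\,(f^{-1})'(u)}{\phi(u)^\kappa},
\]
which is $C^\infty$ near $u=0$ (here $\phi(u)^\kappa$ makes sense by continuity since $\phi(0)>0$). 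The relation stated in the proposition, $\widehat g(f(x))=f(x)^\kappa g(x)$, characterizes $\widehat g$ uniquely up to the smooth Jacobian factor and is verified by direct substitution.

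Third, I would apply Lemma~\ref{lem:Watson_app} with $\widehat g$ in place of $\widetilde g$ to the integral
\[
\int_0^{f(\delta)} \frac{\widehat g(u)}{u^\kappa}\log(1+e^{-y-tu})\,du,
\]
which is precisely of the form covered there (after extending $\widehat g$ by zero smoothly beyond $f(\delta)$, which only introduces an exponentially small error by the same localization argument as in step one). Since the error estimates in Lemma~\ref{lem:Watson_app} are uniform for $y\geq -M$, the uniformity in $y$ in the present statement is inherited.

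The only point requiring care is the verification that $\widehat g$ is genuinely $C^\infty$ at $u=0$ and that the extraction of the $u^{-\kappa}$ singularity matches the form needed by Lemma~\ref{lem:Watson_app}; this is entirely a matter of combining the smoothness of $f^{-1}$ near $0$ (guaranteed by $f'(0)>0$) with the structure $g(x)=\widetilde g(x)/x^\kappa$ and the smoothness of $\phi^\kappa$ away from zero. No serious obstacle arises; the proposition is really a coordinate-change reformulation of Lemma~\ref{lem:Watson_app}.
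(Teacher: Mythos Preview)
Your proposal is correct and follows essentially the same approach as the paper: localize the integral to a neighborhood of the origin where $f$ is invertible, change variables $u=f(x)$ to reduce to the case $f(x)\equiv x$, and then apply Lemma~\ref{lem:Watson_app}. You are in fact slightly more careful than the paper's write-up in keeping track of the Jacobian factor $(f^{-1})'(u)$ when defining $\wh g$.
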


\begin{proof}
Because $f$ is assumed to be $C^\infty$ and to have a unique global minimum at $x=0$, by the Inverse Function Theorem it admits an inverse $f^{-1}$ in a neighborhood of the origin. With the change of variables
$$
u=f(x),\quad \text{and setting} \quad u_0\deff f(\delta), \quad \wh g(x)\deff \frac{x^\kappa \wt g(f^{-1}(x))}{(f^{-1}(x))^\kappa}=x^\kappa g(f^{-1}(x)), 
$$
for $\delta>0$ sufficiently small the integral is turned into
\begin{align*}
F(t)& =\int_0^\delta g(x)\log(1+\ee^{-y-tf(x)})\dd x+\int_\delta^a g(x)\log(1+\ee^{-y-tf(x)})\dd x \\
& = \int_0^{u_0} \frac{1}{x^\kappa}\wh g(x)\log(1+\ee^{-y-tu})\dd x +\Boh(\ee^{-\eta t}),
\end{align*}
for some $\eta>0$. Applying Lemma~\ref{lem:Watson_app} to the remaining integral concludes the proof.
\end{proof}

Next we move to a different scaling regime on $y$.

\begin{lemma}\label{lem:Watson_app_2}
Let $f$ and $g$ be as in Proposition~\ref{prop:LaplaceLogIntApp}. For any fixed constants $M>0$ and $\alpha\in [0,1)$, assume that
\begin{equation}\label{laplacescaling2}
-Mt^\alpha\leq  y\leq -\frac{1}{M}.
\end{equation}
Then $F(t)$ admits an expansion of the form
\begin{equation}\label{eq:laplaceform2app}
F(t)=\sum_{k=0}^N \frac{1}{t^{k+1-\kappa}}\wh g^{(k)}(0)\varphi_{k-\kappa}(y) +\Boh(t^{-N-2+\kappa}|y|^{N+3-\kappa}),\quad t\to \infty,
\end{equation}
for any $N\geq 1$, where the error term is uniform for $y$ satisfying \eqref{laplacescaling2}, and the coefficients $\varphi_k(y)$ are of the form
\begin{align*}\label{deff:coeff_FK_app2}
\varphi_\beta(y)& \deff \varphi^{(1)}_\beta(y)+\varphi^{(2)}_\beta(y)+\varphi^{(3)}_\beta(y),\quad \text{with}\\
\varphi_\beta^{(1)}(y) & \deff \frac{1}{(\beta+1)(\beta+2)}|y|^{\beta+2},\\
\varphi_\beta^{(2)}(y) &\deff \int_0^\infty \left(|y-u|^\beta+|y+u|^\beta\right)\log(1+\ee^{-u})\dd u,\\
\varphi_\beta^{(3)}(y) &\deff -\int_{|y|}^\infty |y+u|^\beta\log(1+\ee^{-u})\dd u,
\end{align*}
and the function $\wh g$ satisfies the identity
$$
\wh g(f(x))=f(x)^\kappa g(x),\quad |x| \text{ sufficiently small}.
$$
\end{lemma}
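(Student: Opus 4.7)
The plan is to mimic the proofs of Lemma~\ref{lem:Watson_app} and Proposition~\ref{prop:LaplaceLogIntApp}, but with careful control on the dependence on $y$ which is now allowed to grow like $t^\alpha$. First I would perform the change of variables $u=f(x)$ exactly as in the proof of Proposition~\ref{prop:LaplaceLogIntApp}, reducing to
\[
\int_0^{u_0} \frac{\wh g(u)}{u^\kappa} \log(1+\ee^{-y-tu})\, du,
\]
with the piece from $(\delta,a)$ contributing an error bounded by $\|g\|_{L^1}\log(1+\ee^{|y|-t\eta})=\Boh(\ee^{-t\eta/2})$, using that $|y|\leq Mt^\alpha$ with $\alpha<1$. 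After rescaling $v=tu$ and Taylor-expanding $\wh g(v/t)=\sum_{k=0}^N \wh g^{(k)}(0)v^k/(k!\,t^k) + R_{N+1}(v/t)$, the main asymptotic terms appear in the required form, with coefficient integrals $\int_0^{tu_0} v^{\beta}\log(1+\ee^{-y-v})\,dv$ for $\beta=k-\kappa$; extending these to $\int_0^\infty$ adds only exponentially small terms since for $v\geq tu_0>|y|$ (valid once $t$ is large, using $\alpha<1$) one has $\log(1+\ee^{-y-v})\leq \ee^{|y|-v}$, whose integral from $tu_0$ is $\Boh(\ee^{-tu_0/2})$.

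To obtain the three-term decomposition of the coefficient $\varphi_\beta(y)=\int_0^\infty v^\beta\log(1+\ee^{-y-v})\,dv$, I would use the elementary identity $\log(1+\ee^w)=w_{+}+\log(1+\ee^{-|w|})$ applied with $w=-y-v$. The $w_+$ piece, supported on $v\in[0,|y|]$, integrates exactly to $|y|^{\beta+2}/((\beta+1)(\beta+2))=\varphi_\beta^{(1)}(y)$. The residual integral $\int_0^\infty v^\beta\log(1+\ee^{-|y+v|})\,dv$ splits naturally at $v=|y|$, where $|y+v|$ transitions from $|y|-v$ to $v-|y|$; substituting $u=|y|-v$ on $(0,|y|)$ and $u=v-|y|$ on $(|y|,\infty)$ and then regrouping the two resulting integrals so that the symmetric combination $|y-u|^\beta+|y+u|^\beta$ appears produces $\varphi_\beta^{(2)}(y)$, and the compensating boundary piece is precisely the $\varphi_\beta^{(3)}(y)$ in the statement.

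Finally, the Taylor remainder $R_{N+1}$ contributes an error of order $t^{-N-2+\kappa}\varphi_{N+1-\kappa}(y)$, so the $|y|$-dependence in the claimed error must be read off from $\varphi_{N+1-\kappa}(y)$. The leading growth in $|y|$ comes from $\varphi_\beta^{(1)}(y)=\Boh(|y|^{\beta+2})$; for $\varphi_\beta^{(2)}$, integrating against the rapidly decaying factor $\log(1+\ee^{-u})$ produces only $\Boh(|y|^\beta)$; and $\varphi_\beta^{(3)}(y)$ is exponentially small in $|y|$ since $\log(1+\ee^{-(w+|y|)})\leq \ee^{-|y|-w}$. With $\beta=N+1-\kappa$ this yields the stated $\Boh(t^{-N-2+\kappa}|y|^{N+3-\kappa})$. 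The main obstacle I anticipate is not a single deep step but rather the bookkeeping of all the tail errors from each truncation --- the $(\delta,a)$-piece, the extension of the rescaled integral to $(0,\infty)$, and the bounds on $\varphi_\beta^{(2)},\varphi_\beta^{(3)}$ --- which must each be shown to be dominated by the main Taylor remainder. This crucially relies on $\alpha<1$, so that prefactors of the form $\ee^{|y|}\leq \ee^{Mt^\alpha}$ are systematically defeated by the available $\ee^{-ct}$ decay coming from the localization arguments.
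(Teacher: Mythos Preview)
Your proposal is correct and follows essentially the same approach as the paper. Both arguments localize to a neighborhood of the origin, reduce to $f(x)=x$ via the conformal change of variables, Taylor-expand the smooth prefactor, and obtain the three-term decomposition of the coefficient integrals by splitting at the point where the exponent $-y-v$ changes sign; your use of the identity $\log(1+\ee^{w})=w_{+}+\log(1+\ee^{-|w|})$ is exactly the manipulation the paper carries out by hand when it writes $\log(1+\ee^{-y-tu})=-(y+tu)+\log(1+\ee^{y+tu})$ on the interval where $y+tu<0$. The only organizational difference is that you extend the coefficient integrals to $(0,\infty)$ first and then decompose, whereas the paper decomposes at finite range and extends each piece at the end; both orderings are fine and lead to the same bounds.
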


To illustrate the previous result, assume $\kappa=0$, write $y=-Mt^{\alpha}$ and send $t\to \infty$. When $\alpha=0$ the three functions $\varphi_k^{(j)}$ are all $\Boh(1)$. When $0<\alpha<1$, the function $\varphi_k^{(3)}$ decays exponentially fast, whereas $\varphi^{(1)}_k=\Boh(t^{(k+2)\alpha})$ while $\varphi^{(1)}_k=\Boh(t^{k\alpha})$. In either case, the error term is $\Boh(t^{-(1-\alpha)N+3\alpha-2})$, which is in fact decaying only when $\alpha<(N+2)/(N+3)$. In other words, for a given $\alpha$ the asymptotic formula \eqref{eq:laplaceform2app} only becomes truly useful (that is, with a small error) when $N$ is chosen sufficiently large.

\begin{proof}
We prove the result for $f(x)=x$, the general case follows along the same lines of the proof of Proposition~\ref{prop:LaplaceLogIntApp}.

Fix $\delta>0$ so that $g$ is $C^\infty$ on $[0,\delta]$. By replacing $g$ with $g\chi_{[0,a]}$ we may assume that $a=+\infty$. We can then bound
$$
\left|\int_\delta^\infty g(u)\log(1+\ee^{-y-tu})\dd u \right|\leq \|g\|_{L^1(0,\infty)}\ee^{-\eta t^\alpha-t \delta}=\Boh(\ee^{-\delta t/2}),
$$
as we are assuming that $\alpha<1$. Proceeding as in the proof of Proposition~\ref{lem:Watson_app}, we obtain
\begin{equation}\label{eq:appeq1}
\int_0^\delta g(x) \log(1+\ee^{-y-tx})\dd x=\sum_{k=0}^N \frac{\wt g^{(k)}(0)}{k!}\int_0^\delta u^{k-\kappa} \log(1+\ee^{-y-tu})\dd u + R_{N+1}+\Boh(\ee^{-\delta t/2}),
\end{equation}
with error term satisfying
\begin{equation}\label{eq:appeq2}
|R_{N+1}|\leq \frac{1}{(N+1)!}\|\wt g^{(k+1)}\|_{L^\infty (0,\delta)}\int_0^\delta u^{N+1-\kappa} \log(1+\ee^{-y-tu})\dd u.
\end{equation}
The integrals appearing in the sum and in the bound for the error above are of the same form, and now we study them. From \eqref{laplacescaling2} we see that by taking $t$ large enough we can always assume that the point $u_t\deff -y/t$ where the exponent $y+ut$ changes sign belongs to the interval $(0,\delta)$. We split the integrals on $(0,\delta)$ into the integrals over $(0,u_t)$ and $(u_t,\delta)$. For the first one, we then express it as
\begin{align*}
\int_0^{u_t}u^\beta \log(1+\ee^{-y-tu})\dd u & =-\int_0^{u_t}(y+tu)u^\beta\dd u+\int_0^{u_t}u^\beta \log(1+\ee^{y+t u})\dd u \\
& = \frac{1}{(\beta+1)(\beta+2)}\frac{|y|^{\beta+2}}{t^{\beta+1}}+\frac{1}{t^{\beta+1}}\int_0^{|y|}|v+y|^\beta\log(1+\ee^{-v})\dd v \\
& 
\begin{multlined}
= \frac{1}{t^{\beta+1}}\varphi_\beta^{(1)}(y)+\frac{1}{t^{\beta+1}}\int_{0}^{y+t\delta}|y+v|^\beta\log(1+\ee^{-v})\dd v 
\\ -\frac{1}{t^{\beta+1}}\int_{|y|}^{y+t\delta}|y+v|^\beta\log(1+\ee^{-v})\dd v,
\end{multlined}
\end{align*}
where for the second equality we changed variables $v=-y-tu$. Likewise,
$$
\int_{u_t}^\delta u^\beta \log(1+\ee^{-y-tu})\dd u =\frac{1}{t^{\beta+1}}\int_0^{y+t\delta}|y-v|^\beta \log(1+\ee^{-v})\dd v,
$$
and combining them
\begin{multline*}
\int_{0}^\delta u^\beta \log(1+\ee^{-y-tu})\dd u=\frac{1}{t^{\beta+1}}\varphi_\beta^{(1)}(y) +\\
\frac{1}{t^{\beta+1}} \left(\int_0^{y+t\delta}(|y-v|^\beta+|y+v|^\beta)\log(1+\ee^{-v})\dd v-\int_{|y|}^{y+t\delta}|v+y|^\beta\log(1+\ee^{-v})\dd v\right) 
\end{multline*}
Using that $y+t\delta \to +\infty$ and the bounds \eqref{laplacescaling2}, it is straightforward to see that the first integral on the right-hand side is exponentially close to $\varphi_k^{(2)}$ while the second integral is exponentially close to $\varphi_k^{(3)}$. The result then follows using these relations back into \eqref{eq:appeq1}--\eqref{eq:appeq2}.
\end{proof}

\bibliographystyle{abbrv}  
\bibliography{bibliography}

\begin{thebibliography}{10}

\bibitem{AkemannStrahovWurfel2020}
G.~Akemann, E.~Strahov, and T.~R. W\"{u}rfel.
\newblock Averages of products and ratios of characteristic polynomials in
  polynomial ensembles.
\newblock {\em Ann. Henri Poincar\'{e}}, 21(12):3973--4002, 2020.

\bibitem{AmirCorwinQuastel2011}
G.~Amir, I.~Corwin, and J.~Quastel.
\newblock Probability distribution of the free energy of the continuum directed
  random polymer in {$1+1$} dimensions.
\newblock {\em Comm. Pure Appl. Math.}, 64(4):466--537, 2011.

\bibitem{Aptekarev2002sharp}
A.~I. Aptekarev.
\newblock Sharp constants for rational approximations of analytic functions.
\newblock {\em Mat. Sb.}, 193(1):3--72, 2002.

\bibitem{BaikBuckinghamDiFrancoIts2009}
J.~Baik, R.~Buckingham, J.~DiFranco, and A.~Its.
\newblock Total integrals of global solutions to {P}ainlev\'{e} {II}.
\newblock {\em Nonlinearity}, 22(5):1021--1061, 2009.

\bibitem{baik_deift_johansson}
J.~Baik, P.~Deift, and K.~Johansson.
\newblock On the distribution of the length of the longest increasing
  subsequence of random permutations.
\newblock {\em J. Amer. Math. Soc.}, 12(4):1119--1178, 1999.

\bibitem{BaikDeiftStrahov2003}
J.~Baik, P.~Deift, and E.~Strahov.
\newblock Products and ratios of characteristic polynomials of random
  {H}ermitian matrices.
\newblock volume~44, pages 3657--3670. 2003.
\newblock Integrability, topological solitons and beyond.

\bibitem{BKMMbook}
J.~Baik, T.~Kriecherbauer, K.~T.-R. McLaughlin, and P.~D. Miller.
\newblock {\em Discrete orthogonal polynomials}, volume 164 of {\em Annals of
  Mathematics Studies}.
\newblock Princeton University Press, Princeton, NJ, 2007.
\newblock Asymptotics and applications.

\bibitem{BBCW18}
G.~Barraquand, A.~Borodin, I.~Corwin, and M.~Wheeler.
\newblock Stochastic six-vertex model in a half-quadrant and half-line open
  asymmetric simple exclusion process.
\newblock {\em Duke Math. J.}, 167(13):2457--2529, 2018.

\bibitem{bessis_itzykson_zuber}
D.~Bessis, C.~Itzykson, and J.~B. Zuber.
\newblock Quantum field theory techniques in graphical enumeration.
\newblock {\em Adv. in Appl. Math.}, 1(2):109--157, 1980.

\bibitem{BeteaBouttier2019}
D.~Betea and J.~Bouttier.
\newblock The periodic {S}chur process and free fermions at finite temperature.
\newblock {\em Mathematical Physics, Analysis and Geometry}, 22(1):3, 2019.

\bibitem{BeteaOccelli2021}
D.~Betea and A.~Occelli.
\newblock Peaks of cylindric plane partitions.
\newblock {\em ArXiv:2111.15538}, 2021.

\bibitem{bleher_deano_partition_function}
P.~Bleher and A.~Dea\~{n}o.
\newblock Painlev\'{e} {I} double scaling limit in the cubic random matrix
  model.
\newblock {\em Random Matrices Theory Appl.}, 5(2):1650004, 58, 2016.

\bibitem{BleherEynard2003}
P.~Bleher and B.~Eynard.
\newblock Double scaling limit in random matrix models and a nonlinear
  hierarchy of differential equations.
\newblock volume~36, pages 3085--3105. 2003.
\newblock Random matrix theory.

\bibitem{BleherIts2005}
P.~M. Bleher and A.~R. Its.
\newblock Asymptotics of the partition function of a random matrix model.
\newblock {\em Ann. Inst. Fourier (Grenoble)}, 55(6):1943--2000, 2005.

\bibitem{Borodin_2018}
A.~Borodin.
\newblock Stochastic higher spin six vertex model and {M}acdonald measures.
\newblock {\em J. Math. Phys.}, 59(2), 2018.

\bibitem{BorodinCorwinMacProc}
A.~Borodin and I.~Corwin.
\newblock Macdonald processes.
\newblock {\em Probab. Theory Related Fields}, 158(1-2):225--400, 2014.

\bibitem{BorodinCorwinSasamoto2014}
A.~Borodin, I.~Corwin, and T.~Sasamoto.
\newblock From duality to determinants for {$q$}-{TASEP} and {ASEP}.
\newblock {\em Ann. Probab.}, 42(6):2314--2382, 2014.

\bibitem{BorodinGorin2016}
A.~Borodin and V.~Gorin.
\newblock Moments match between the {KPZ} equation and the {A}iry point
  process.
\newblock {\em SIGMA Symmetry Integrability Geom. Methods Appl.}, 12:Paper No.
  102, 7, 2016.

\bibitem{BorodinOlshanski2017}
A.~Borodin and G.~Olshanski.
\newblock The {ASEP} and determinantal point processes.
\newblock {\em Comm. Math. Phys.}, 353(2):853--903, 2017.

\bibitem{Borodin_Rains_2005}
A.~Borodin and E.~M. Rains.
\newblock Eynard-{M}ehta theorem, {S}chur process, and their {P}faffian
  analogs.
\newblock {\em J. Stat. Phys.}, 121(3-4):291--317, 2005.

\bibitem{GuionnetBorot2013}
G.~Borot and A.~Guionnet.
\newblock Asymptotic expansion of {$\beta$} matrix models in the one-cut
  regime.
\newblock {\em Comm. Math. Phys.}, 317(2):447--483, 2013.

\bibitem{Bothner2021}
T.~Bothner.
\newblock On the origins of {R}iemann{\textendash}{H}ilbert problems in
  mathematics.
\newblock {\em Nonlinearity}, 34(4):R1--R73, feb 2021.

\bibitem{BothnerCafassoTarricone2021}
T.~Bothner, M.~Cafasso, and S.~Tarricone.
\newblock Momenta spacing distributions in anharmonic oscillators and the
  higher order finite temperature {A}iry kernel.
\newblock {\em ArXiv:2101.03557}, 2021.

\bibitem{CafassoClaeys2021}
M.~Cafasso and T.~Claeys.
\newblock A {R}iemann-{H}ilbert approach to the lower tail of the
  {K}ardar-{P}arisi-{Z}hang equation.
\newblock {\em Comm. Pure Appl. Math.}, n/a(n/a), 2021, to appear.

\bibitem{CafassoClaeysRuzza2021}
M.~Cafasso, T.~Claeys, and G.~Ruzza.
\newblock Airy kernel determinant solutions to the {K}d{V} equation and
  integro-differential {P}ainlevé equations.
\newblock {\em Comm. Math. Phys.}, 386(2):1107--1153, 2021.

\bibitem{CalabreseLeDoussalRosso2010}
P.~Calabrese, P.~L. Doussal, and A.~Rosso.
\newblock Free-energy distribution of the directed polymer at high temperature.
\newblock {\em {EPL} (Europhysics Letters)}, 90(2):20002, apr 2010.

\bibitem{CharlierClaeysRuzza2021}
C.~Charlier, T.~Claeys, and G.~Ruzza.
\newblock Uniform tail asymptotics for {A}iry kernel determinant solutions to
  {K}d{V} and for the narrow wedge solution to {KPZ}.
\newblock {\em ArXiv:2111.14569}, 2021.

\bibitem{Claeys2012}
T.~Claeys.
\newblock Pole-free solutions of the first {P}ainlev\'{e} hierarchy and
  non-generic critical behavior for the {K}d{V} equation.
\newblock {\em Phys. D}, 241(23-24):2226--2236, 2012.

\bibitem{ClaeysGlesner2021}
T.~{Claeys} and G.~{Glesner}.
\newblock {Determinantal point processes conditioned on randomly incomplete
  configurations}.
\newblock {\em arXiv e-prints}, page arXiv:2112.10642, Dec. 2021.

\bibitem{CorwinGhosal2020}
I.~Corwin and P.~Ghosal.
\newblock Lower tail of the {KPZ} equation.
\newblock {\em Duke Math. J.}, 169(7):1329--1395, 2020.

\bibitem{CundenMezzadriOConnell2018}
F.~D. Cunden, F.~Mezzadri, and N.~O'Connell.
\newblock Free fermions and the classical compact groups.
\newblock {\em J. Stat. Phys.}, 171(5):768--801, 2018.

\bibitem{DeanLeDoussalMajumdarSchehr2015}
D.~S. Dean, P.~Le~Doussal, S.~N. Majumdar, and G.~Schehr.
\newblock Finite-temperature free fermions and the kardar-parisi-zhang equation
  at finite time.
\newblock {\em Phys. Rev. Lett.}, 114:110402, Mar 2015.

\bibitem{deift_book}
P.~Deift.
\newblock {\em Orthogonal Polynomials and Random Matrices: A
  {R}iemann-{H}ilbert Approach}.
\newblock Number~3 in Courant Lecture Notes. American Mathematical Society,
  2000.

\bibitem{DeiftParkCity}
P.~Deift.
\newblock Riemann-{H}ilbert problems.
\newblock In {\em Random matrices}, volume~26 of {\em IAS/Park City Math.
  Ser.}, pages 1--40. Amer. Math. Soc., Providence, RI, 2019.

\bibitem{Deiftetal1999}
P.~Deift, T.~Kriecherbauer, K.~T.-R. McLaughlin, S.~Venakides, and X.~Zhou.
\newblock Uniform asymptotics for polynomials orthogonal with respect to
  varying exponential weights and applications to universality questions in
  random matrix theory.
\newblock {\em Comm. Pure Appl. Math.}, 52:1335--1425, 1999.

\bibitem{deift_kriecherbauer_mclaughlin}
P.~Deift, K.~T.-R. McLaughlin, and T.~Kriecherbauer.
\newblock New results on the equilibrium measure for logarithmic potentials in
  the presence of an external field.
\newblock {\em J. Approx. Theory}, 95:388--475, 1998.

\bibitem{DeiftZhou93mKdV}
P.~Deift and X.~Zhou.
\newblock A steepest descent method for oscillatory {R}iemann-{H}ilbert
  problem, {A}symptotics for the {MKdV} equation.
\newblock {\em Comm. Pure Appl. Math.}, 48:277--337, 1993.

\bibitem{DesrosiersLiu2014}
P.~Desrosiers and D.-Z. Liu.
\newblock Asymptotics for products of characteristic polynomials in classical
  {$\beta$}-ensembles.
\newblock {\em Constr. Approx.}, 39(2):273--322, 2014.

\bibitem{Dotsenko2010}
V.~Dotsenko.
\newblock Bethe ansatz derivation of the {T}racy-{W}idom distribution for
  one-dimensional directed polymers.
\newblock {\em {EPL} (Europhysics Letters)}, 90(2):20003, apr 2010.

\bibitem{duits_painleve_kernels}
M.~Duits.
\newblock Painlev\'e kernels in {H}ermitian matrix models.
\newblock {\em Constr. Approx.}, 39(1):173--196, 2014.

\bibitem{ercolani_mclaughlin_partition_function}
N.~M. Ercolani and K.~D. T.-R. McLaughlin.
\newblock Asymptotics of the partition function for random matrices via
  {R}iemann-{H}ilbert techniques and applications to graphical enumeration.
\newblock {\em Int. Math. Res. Not.}, (14):755--820, 2003.

\bibitem{FokasItsKitaev92}
A.~S. Fokas, A.~R. Its, and A.~V. Kitaev.
\newblock The isomonodromy approach to matrix models in 2d quantum gravity.
\newblock {\em Comm. Math. Phys.}, 147(2):395--430, 1992.

\bibitem{ImamuraSasamoto2019}
T.~Imamura and T.~Sasamoto.
\newblock Fluctuations for stationary {$q$}-{TASEP}.
\newblock {\em Probab. Theory Related Fields}, 174(1-2):647--730, 2019.

\bibitem{johansson_2000}
K.~Johansson.
\newblock Shape fluctuations and random matrices.
\newblock {\em Comm. Math. Phys.}, 209(2):437--476, 2000.

\bibitem{Johansson2007}
K.~Johansson.
\newblock From {G}umbel to {T}racy-{W}idom.
\newblock {\em Probab. Theory Related Fields}, 138(1-2):75--112, 2007.

\bibitem{Krajenbrink2020}
A.~Krajenbrink.
\newblock From {P}ainlev\'{e} to {Z}akharov-{S}habat and beyond: {F}redholm
  determinants and integro-differential hierarchies.
\newblock {\em J. Phys. A}, 54(3):Paper No. 035001, 51, 2021.

\bibitem{Krasovsky07}
I.~V. Krasovsky.
\newblock Correlations of the characteristic polynomials in the {G}aussian
  unitary ensemble or a singular {H}ankel determinant.
\newblock {\em Duke Math. J.}, 139(3):581--619, 2007.

\bibitem{Kuijlaars2003lecnotes}
A.~B.~J. Kuijlaars.
\newblock Riemann-{H}ilbert analysis for orthogonal polynomials.
\newblock In {\em Orthogonal polynomials and special functions ({L}euven,
  2002)}, volume 1817 of {\em Lecture Notes in Math.}, pages 167--210.
  Springer, Berlin, 2003.

\bibitem{KuijlaarsMcLaughlin2000}
A.~B.~J. Kuijlaars and K.~T.-R. McLaughlin.
\newblock Generic behavior of the density of states in random matrix theory and
  equilibrium problems in the presence of real analytic external fields.
\newblock {\em Comm. Pure Appl. Math.}, 53(6):736--785, 2000.

\bibitem{kuijlaars_silva_s_curves}
A.~B.~J. Kuijlaars and G.~L.~F. Silva.
\newblock S-curves in polynomial external fields.
\newblock {\em J. Approx. Theory}, 191:1--37, 2015.

\bibitem{LiechtyWang2020}
K.~Liechty and D.~Wang.
\newblock Asymptotics of free fermions in a quadratic well at finite
  temperature and the {M}oshe-{N}euberger-{S}hapiro random matrix model.
\newblock {\em Ann. Inst. Henri Poincar\'{e} Probab. Stat.}, 56(2):1072--1098,
  2020.

\bibitem{MartinezOriveRakhmanov2015}
A.~Martínez-Finkelshtein, R.~Orive, and E.~A. Rakhmanov.
\newblock Phase transitions and equilibrium measures in random matrix models.
\newblock {\em Comm. Math. Phys.}, 333(3):1109--1173, 2015.

\bibitem{mehta_book}
M.~L. Mehta.
\newblock {\em Random matrices}.
\newblock Elsevier, San Diego, 2004 - Third Edition.

\bibitem{MillerBookAsymp}
P.~D. Miller.
\newblock {\em Applied asymptotic analysis}, volume~75 of {\em Graduate Studies
  in Mathematics}.
\newblock American Mathematical Society, Providence, RI, 2006.

\bibitem{SpohnPrahofer2002}
M.~Pr\"{a}hofer and H.~Spohn.
\newblock Scale invariance of the {PNG} droplet and the {A}iry process.
\newblock volume 108, pages 1071--1106. 2002.
\newblock Dedicated to David Ruelle and Yasha Sinai on the occasion of their
  65th birthdays.

\bibitem{Saff_book}
E.~B. Saff and V.~Totik.
\newblock {\em Logarithmic potentials with external fields}, volume 316 of {\em
  Fundamental Principles of Mathematical Sciences}.
\newblock Springer-Verlag, Berlin, 1997.
\newblock Appendix B by Thomas Bloom.

\bibitem{SasamotoSpohn2010}
T.~Sasamoto and H.~Spohn.
\newblock One-dimensional {K}ardar-{P}arisi-{Z}hang equation: An exact solution
  and its universality.
\newblock {\em Phys. Rev. Lett.}, 104:230602, Jun 2010.

\bibitem{TW94}
C.~A. Tracy and H.~Widom.
\newblock Level-spacing distributions and the {A}iry kernel.
\newblock {\em Comm. Math. Phys.}, 159(1):151--174, 1994.

\end{thebibliography}

\end{document}